\documentclass{article}
\usepackage{graphicx} 
\usepackage{fullpage}
\usepackage{amssymb}
\usepackage{amsmath}
\usepackage{amsthm}
\usepackage{thmtools}
\usepackage[hidelinks, colorlinks = true]{hyperref}
\usepackage{xcolor}
\usepackage{enumitem}
\usepackage{cleveref}
\usepackage{quantikz}
\usepackage{physics}
\usepackage{booktabs}
\usepackage{array}
\usepackage{makecell}
\usepackage{setspace}

\newcommand{\NC}{\textsf{NC}}
\newcommand{\AC}{\textsf{AC}}
\newcommand{\TC}{\textsf{TC}}

\newcommand{\UQAC}{\textsf{Q$_\textsf{U}$AC}}
\newcommand{\UQNC}{\textsf{Q$_\textsf{U}$NC}}
\newcommand{\UQTC}{\textsf{Q$_\textsf{U}$TC}}

\newcommand{\HQAC}{\textsf{Q$_\textsf{R}$AC}}
\newcommand{\HQNC}{\textsf{Q$_\textsf{R}$NC}}
\newcommand{\HQTC}{\textsf{Q$_\textsf{R}$TC}}

\newcommand{\QAC}{\textsf{QAC}}
\newcommand{\QNC}{\textsf{QNC}}
\newcommand{\QTC}{\textsf{QTC}}

\newcommand{\QACZ}{\textsf{QAC}$^0$}

\newcommand{\poly}{\textnormal{poly}}
\newcommand{\eps}{\varepsilon}

\newcommand{\Rz}{R_z(\theta)}

\providecommand{\Parity}{\operatorname{Parity}}

\providecommand{\poly}{\operatorname{poly}}
\providecommand{\depth}{\operatorname{depth}}

\newcommand{\BQP}{\textsf{BQP}}

\hypersetup{
    colorlinks=true,
    linkcolor=magenta,
    filecolor=magenta,
    urlcolor=cyan,
    citecolor=cyan
}

\newcommand{\thresh}{f_M^m}
\newcommand{\x}{\vec{x}}
\newcommand{\M}{\vec{M}}
\newcommand{\val}{\operatorname{int}}

\newcommand{\FANOUT}{\textnormal{Fan-Out}}
\newcommand{\GTof}{\operatorname{GToffoli}}
\newcommand{\cnot}{\textnormal{CNOT}}
\newcommand{\toff}{\textnormal{Toffoli}}
\newcommand{\be}{\mathcal{B}}

\newtheorem{theorem}{Theorem}
\newtheorem{lemma}{Lemma}
\newtheorem{fact}{Fact}
\newtheorem{definition}{Definition}
\newtheorem{corollary}{Corollary}
\newtheorem{proposition}{Proposition}
\newtheorem{open}{Open Question}

\title{Depth-Optimal Quantum Compilation}
\author{Francisca Vasconcelos \vspace{0.05in}\\  UC Berkeley \vspace{0.05in}\\ \small \url{francisca@berkeley.edu}}
\date{}

\begin{document}

\maketitle

\begin{abstract}
We achieve the first constant-depth circuit for arbitrary single-qubit gate synthesis. Unlike prior approaches, the construction is fully unitary and requires no pre-supplied catalyst. For any constant $\delta>0$, it $\varepsilon$-approximates an arbitrary single-qubit gate using $O(\log^{1+\delta}(1/\varepsilon))$ clean ancillae, Hadamard and $T$ single-qubit gates, $O(\log(1/\varepsilon))$-width generalized Toffoli gates, and sublogarithmic-width Fan-Out gates. We further eliminate Fan-Out entirely, showing that Hadamard, $T$, and generalized Toffoli gates alone suffice for constant-depth synthesis. When restricted to the standard bounded-width gate model, our construction has depth $O(\log\log(1/\varepsilon))$, and we prove a matching $\Omega(\log\log(1/\varepsilon))$-depth lower bound. Overall, we establish that $\Theta(\log\log(1/\varepsilon))$-depth is unavoidable with only bounded-width gates, yet allowing even logarithmic-width multi-qubit gates suffices to achieve constant-depth synthesis.

These results also reveal new structure in shallow quantum circuit complexity. We give a depth-preserving real simulation of bounded-error decision computation, showing that every depth-$d$ $\QAC$ circuit can be simulated in depth $O(d)$ using only Hadamard, $X$, and generalized Toffoli gates. Thus arbitrary single-qubit rotations and complex amplitudes do not increase the bounded-error decision power of $\QAC$, even at constant depth. In particular, this reduces the long-standing conjecture $\Parity\notin\QAC^0$ to proving a Parity lower bound against circuits consisting only of Hadamard, $X$, and generalized Toffoli gates. More generally, this real normal form exposes a direct correspondence between the standard shallow-depth quantum circuit hierarchy and a hierarchy of Forrelation circuits with restricted oracle families.
\end{abstract}
\thispagestyle{empty}
\newpage
\begin{spacing}{0.95}
\small
\tableofcontents
\end{spacing}
\thispagestyle{empty}
\clearpage
\pagenumbering{arabic}

\section{Introduction}
\label{sec:introduction}

Quantum computation is remarkably robust to gate-set restrictions. The celebrated Solovay--Kitaev theorem established that arbitrary quantum gates can be efficiently approximated to any desired precision over a fixed, finite universal gate set~\cite{kitaev1997,dawson2006the}. Moreover, Bernstein and Vazirani established that quantum computation can be simulated using only real amplitudes~\cite{berstein1993quantum}. Building on this observation, Shi proved that the real gate set consisting solely of Hadamard and Toffoli gates is universal, and Aharonov later gave an explicit gate-by-gate simulation construction~\cite{shi2003both,aharonov2003simple}. Thus, at the level of polynomial-time quantum computation, neither arbitrary single-qubit gates nor complex amplitudes are essential resources.

These fundamental robustness results, however, do not preserve circuit depth. For standard gate-by-gate compilation, preserving the behavior of a polynomial-size bounded-error circuit requires approximating each gate to inverse-polynomial precision, $\varepsilon=1/\poly(n)$. Solovay--Kitaev compilation, therefore, incurs a $\poly\log n$ depth overhead per layer of the original circuit. Meanwhile, Aharonov's Hadamard--Toffoli real simulation encodes the real and imaginary parts of the state using a single shared ancilla. Consequently, parallel non-real gates in the original circuit must interact sequentially with this ancilla in the gate-by-gate simulation, potentially serializing a depth-one layer into linear depth.

These depth overheads matter whenever circuit depth is itself a resource. In practical quantum implementations, circuit depth equates to sequential runtime, which often proves to be a major bottleneck and motivates the use of additional ancillae to increase parallelism~\cite{beverland2022}. Such space--time tradeoffs are especially natural for reconfigurable neutral-atom arrays, which combine large qubit counts with highly parallel entangling operations~\cite{evered2023high}. The same overheads also matter for circuit complexity. Although they are negligible at the coarse-grained level of $\BQP$, they can transform a constant-depth circuit into one of polylogarithmic- or even linear-depth, thereby failing to preserve shallow-depth complexity classes. Moreover, depth-preserving real simulation can expose a substantially simpler underlying circuit structure, potentially making questions of computational power and circuit lower-bounds more tractable.  Together, these practical and complexity-theoretic considerations motivate the central question of this work:
\begin{quote}
\centering
\textit{Can quantum computation be restricted to a fixed gate set, without increasing circuit depth?}
\end{quote}
\noindent We show that the answer depends on the available multi-qubit primitives. With $O(\log(1/\varepsilon))$-width gates, we give an explicit constant-depth procedure for arbitrary single-qubit gate synthesis, yielding fully depth-preserving universal gate-set compilation. We further show that real simulation can likewise be made depth-preserving using larger-width multi-qubit gates. However, we show that constant depth is impossible to attain in the standard bounded-width gate model. There, we achieve $\Theta(\log\log(1/\varepsilon))$-depth synthesis and prove this dependence to be optimal. Overall, some depth-overhead is unavoidable without superconstant-width multi-qubit gates, while logarithmic-width gates suffice for constant-depth single-qubit gate synthesis.

\paragraph{Shallow-Depth Single-Qubit Gate Synthesis.} We first make this width--depth tradeoff precise for gate-set compilation. For every constant $\delta\in(0,1)$, we give a constant-depth circuit for $\varepsilon$-approximate single-qubit gate synthesis using $O_\delta(\log^{1+\delta}(1/\varepsilon))$ ancillae. Specifically, we demonstrate that any single-qubit unitary can be synthesized in constant depth using only $H$ and $T$ gates together with $O(\log(1/\varepsilon))$-width generalized Toffoli gates and sublogarithmic-width Fan-Out. Unlike recent catalytic approaches~\cite{kim2026catalytic,kim2025clifford}, our construction is fully end-to-end and requires no pre-supplied catalyst.\footnote{In particular, Kim--Laakkonen obtain constant $T$-depth online synthesis using an $O(\log(1/\varepsilon))$-qubit catalyst. However, they do not give an end-to-end constant-depth unitary circuit for synthesizing this catalyst state.} To our knowledge, our construction therefore gives the first \emph{end-to-end constant-depth procedure for arbitrary single-qubit gate synthesis}.

The same construction also reveals a sharp width--depth tradeoff. When the multi-qubit operations are compiled into the standard bounded-width gate set $\{H,T,\mathrm{CNOT}\}$, the depth becomes $O(\log\log(1/\varepsilon))$. We prove a matching $\Omega(\log\log(1/\varepsilon))$-depth lower bound, establishing optimality of this depth-dependence in the bounded-width gate model. Kitaev, Shen, and Vyalyi gave an earlier phase-kickback construction with end-to-end polylogarithmic depth~\cite{kitaev2002classical}, which Kim~\cite{kim2026catalytic} recently observed can be implemented end-to-end in $O(\log\log(1/\varepsilon))$ depth. Our construction achieves the same asymptotic depth through a substantially different and arguably simpler route, based on block encoding and oblivious amplitude amplification, while making the width--depth tradeoff explicit.

Finally, using techniques from shallow-depth quantum circuit complexity, we eliminate the need for Fan-Out altogether, at the cost of allowing wider generalized Toffoli gates. This yields constant-depth synthesis with only $H$, $T$, and generalized Toffoli gates (a restriction of the \QAC$^0$ circuit family).

\paragraph{Shallow-Depth Real Simulation.}
We next consider whether real simulation can similarly be achieved without sacrificing depth. While Shi and Aharonov showed that Hadamard and Toffoli gates suffice for real simulation of quantum computation~\cite{shi2003both,aharonov2003simple}, we show that Hadamard and \emph{generalized Toffoli} gates suffice for \emph{depth-preserving} real simulation of quantum computation.

The key observation is that, for decision computation, it suffices to preserve the final acceptance probability rather than the full complex state throughout the simulation. This allows the phase information associated with different gates to be handled in parallel and, together with our fixed-gate compilation, avoids the serialization inherent in the standard real-simulation construction. Prior work of McKague, Mosca, and Gisin showed that real simulation can preserve locality by distributing the real encoding across the system~\cite{mckague2009simulating}, but their construction requires access to general real local gates. Our result instead compiles all the way to Hadamard, $X$, and generalized Toffoli gates, while preserving depth.

\newcommand{\vpad}[1]{%
  \raisebox{0pt}[\dimexpr\height+4pt\relax][\dimexpr\depth+4pt\relax]{#1}%
}

\begin{table}[t]
\centering\small
\begin{tabular}{
    >{\centering\arraybackslash}m{0.12\linewidth}|
    >{\centering\arraybackslash}m{0.20\linewidth}|
    >{\centering\arraybackslash}m{0.17\linewidth}
    >{\centering\arraybackslash}m{0.20\linewidth}
    >{\centering\arraybackslash}m{0.17\linewidth}
}
\vpad{\textbf{Hierarchy}}
&
\vpad{\textbf{Single-Qubit Gates}}
&
\vpad{\shortstack{\textbf{Bounded Toffoli}\\ \textbf{(with Fan-Out)}}}
&
\shortstack{\textbf{Unbounded Toffoli}\\ \textbf{(with Fan-Out)}}
&
\shortstack{\textbf{Threshold}\\ \textbf{(with Fan-Out)}} \\
\hline
\vpad{\textsf{Standard}}
&
\vpad{Arbitrary}
&
\vpad{\shortstack{$\QNC$\\($\QNC_f$)}}
&
\vpad{\shortstack{$\QAC$\\($\QAC_f$)}}
&
\vpad{\shortstack{$\QTC$\\($\QTC_f$)}} \\
\hline
\vpad{\textsf{Universal}}
&
\vpad{$H,T,X$}
&
\vpad{\shortstack{$\UQNC$\\($\UQNC_f$)}}
&
\vpad{\shortstack{$\UQAC$\\($\UQAC_f$)}}
&
\vpad{\shortstack{$\UQTC$\\($\UQTC_f$)}} \\
\hline
\vpad{\textsf{Real}}
&
\vpad{$H,X$}
&
\vpad{\shortstack{$\HQNC$\\($\HQNC_f$)}}
&
\vpad{\shortstack{$\HQAC$\\($\HQAC_f$)}}
&
\vpad{\shortstack{$\HQTC$\\($\HQTC_f$)}}
\end{tabular}
\caption{The standard, universal, and real circuit hierarchies.
Within each column, the multiqubit primitives are unchanged and only
the single-qubit gate set is restricted.}
\label{tab:shallow-gate-hierarchies}
\end{table}

\paragraph{Implications for Circuit Complexity.}
Finally, we use our depth-preserving synthesis and real-simulation results to revisit the standard quantum circuit hierarchy. For a circuit class $\mathcal C$ and depth bound $D(n)$, we write $\mathcal C[D]$ for its restriction to depth $O(D(n))$, and use $\mathcal C^0$ for constant depth. The standard hierarchy $\QNC\subseteq\QAC\subseteq\QTC$ allows arbitrary single-qubit gates while varying the available multiqubit primitives, from bounded-width Toffoli gates in $\QNC$, to generalized Toffoli gates in $\QAC$, to Threshold gates in $\QTC$. The corresponding Fan-Out hierarchy additionally permits unbounded Fan-Out. We introduce a \emph{universal hierarchy} ($\UQNC\subseteq\UQAC\subseteq\UQTC$), in which single-qubit gates are restricted to $\{H,T,X\}$, and a \emph{real hierarchy} ($\HQNC\subseteq\HQAC\subseteq\HQTC$),  in which they are further restricted to $\{H,X\}$. The resulting classes are summarized in Table~\ref{tab:shallow-gate-hierarchies}.

These restricted hierarchies expose a feature hidden by the standard model. Fan-Out has long appeared to be an unusually powerful primitive for shallow quantum computation~\cite{hoyerspalek}. Takahashi and Tani sharpened this picture by proving $\QNC_f^0=\QAC_f^0=\QTC_f^0$,
when arbitrary single-qubit gates are available~\cite{takahashi2016collapse}. Thus, in the standard hierarchy, Fan-Out appears powerful enough to subsume generalized Toffoli and even Threshold gates at constant depth. Their reductions, however, rely on continuously parameterized single-qubit rotations. This dependence is particularly notable in light of our synthesis lower bound, which shows that approximating arbitrary single-qubit rotations to inverse-polynomial accuracy over any fixed finite bounded-arity gate set requires $\Omega(\log\log n)$-depth in the worst case. Although this does not rule out a different constant-depth simulation in the presence of Fan-Out, it raises the possibility that some of the apparent power of Fan-Out comes from its interaction with freely available local rotations.

Our results determine how much of this hierarchy survives after arbitrary local rotations, and subsequently complex amplitudes, are removed. For every polynomially bounded depth function $D(n)\geq1$,
\[
    \QAC[D]=\UQAC[D]=\HQAC[D],
\]
where the final equality concerns bounded-error decision computation. With Fan-Out or Threshold gates, we obtain the broader collapse
\[
\begin{aligned}
    \QAC_f[D]
    &=\UQAC_f[D]
     =\HQAC_f[D]
     =\QTC[D]
     =\UQTC[D]
     =\HQTC[D] \\
    &=\QTC_f[D]
     =\UQTC_f[D]
     =\HQTC_f[D].
\end{aligned}
\]
For bounded-arity circuits with Fan-Out, our catalytic construction incurs only an additive $O(\log\log n)$ depth cost, so whenever $D(n)=\Omega(\log\log n)$,
\[
    \QNC_f[D]=\UQNC_f[D]=\HQNC_f[D].
\]
An important unresolved case, which we leave as an interesting direction for future work, is whether $\QNC_f^0=\UQNC_f^0=\HQNC_f^0$.
A positive answer would show that the standard Fan-Out collapse survives even after arbitrary local rotations and complex amplitudes are removed. A negative answer would reveal a genuine constant-depth separation arising solely from the allowed single-qubit gate sets.

The same tension between Fan-Out and generalized Toffoli underlies the longstanding conjecture $\Parity\notin\QAC^0$~\cite{moore1999qac0}. Recent work has made substantial progress toward this conjecture in restricted regimes~\cite{rosenthal2021bounds,nadimpalli2024pauli,anshu2025computational,joshi2026improved,gretta2026fourier}, but the general case of arbitrary constant-depth, polynomial-size $\QAC^0$ circuits remains open. Resolving the conjecture would substantially sharpen our understanding of the computational power of $\QAC^0$, with implications for recent results on the learnability of these circuits, pseudorandom unitary generation, quantum state synthesis, and even computing constant-fold Forrelation~\cite{vasconcelos2025learning,foxman2026random,joshi2026constant,vasconcelos2026forrelation}. Since coherent Parity and Fan-Out are equivalent under Hadamard conjugation, the conjecture asks whether generalized-Toffoli circuits can recover the power supplied directly by Fan-Out. Our established equality
\[
    \QAC^0=\UQAC^0=\HQAC^0
\]
substantially simplifies this question by showing that arbitrary single-qubit rotations and complex amplitudes can be removed entirely. Thus, it suffices to prove a Parity lower bound against constant-depth circuits containing only Hadamard, Pauli-$X$, and generalized Toffoli gates.

The real hierarchy also exposes a simple normal form for shallow quantum circuits. Because Hadamard is the only basis-changing gate, every real circuit can be written as an alternation of Hadamard layers and diagonal Boolean phase layers. The phase functions are highly structured since their allowed nonlinear terms are determined directly by the available multiqubit primitives, such as bounded-width AND, unbounded AND, Threshold, or controlled Parity. This is precisely the architecture underlying Forrelation, which alternates Boolean phase operators with global Hadamard transforms~\cite{aaronson2015forrelation}. We therefore obtain a \emph{structured Forrelation} normal form for the real hierarchy and show that the acceptance bias of any single-output circuit can be expressed as a scalar structured Forrelation quantity with only a constant-factor increase in the number of layers. Consequently, proving that these restricted Forrelation expressions cannot realize sufficiently high-order correlations would immediately yield lower bounds for the corresponding shallow quantum circuit classes. This suggests that techniques from Forrelation and Fourier-growth lower bounds~\cite{bansal2021forrelation,girish2021lower} may provide a new route towards circuit lower bounds, such as $\Parity\notin\QAC^0$.

\subsection{The Standard, Universal, and Real Circuit Hierarchies}
\label{sec:hierarchy-gate-sets}

We briefly formalize the circuit models used throughout the paper, previously summarized in Table~\ref{tab:shallow-gate-hierarchies}. All circuit families are nonuniform and polynomial-size. They do not allow for intermediate measurements and are not given access to pre-supplied quantum advice or catalytic resource states.

For a gate model $\mathcal C$ and depth bound $D(n)\geq1$, let $\mathcal C[D]$ denote the set of languages $L\subseteq\{0,1\}^*$ for which there exists a family of depth-$O(D(n))$ circuits $\{C_n\}_{n\geq1}$ over the gate set of $\mathcal C$. On input $x\in\{0,1\}^n$, the circuit $C_n$ acts on $\ket{x}\ket*{0^{a(n)}}$, where $a(n)=\poly(n)$. Let $\Pi_{\mathrm{out}}:=\ket1\!\bra1_{\mathrm{out}}\otimes I$ denote the projector onto outcome $1$ of the designated output qubit. We define the acceptance probability of $C_n$ on input $x$ as
\begin{align}
    p_{C_n}(x)
    &:=
    \left\|
        \Pi_{\mathrm{out}}
        C_n
        \ket{x}\ket*{0^{a(n)}}
    \right\|_2^2.
    \label{eq:intro-acceptance-probability}
\end{align}
The family $\{C_n\}$ therefore decides $L$ with bounded error if, for every $x\in\{0,1\}^n$,
\begin{align}
    x\in L
    &\quad\Longleftrightarrow\quad
    p_{C_n}(x)\geq\frac23,
    \notag\\
    x\notin L
    &\quad\Longleftrightarrow\quad
    p_{C_n}(x)\leq\frac13.
    \label{eq:intro-bounded-error-condition}
\end{align}
Equivalently, $L\in\mathcal C[D]$ if such a circuit family exists. Finally, for fixed integer $k\geq0$, we write $\mathcal C^k:=\mathcal C[\log^k n]$, so that $\mathcal C^0$ denotes constant depth.

We first define the multiqubit primitives that distinguish the circuit hierarchies. A generalized Toffoli gate with control set $S$ and target qubit $t$ acts as
\begin{align}
    \ket{z_S,z_t}
    \longmapsto
    \ket{z_S}\ket{z_t\oplus\prod_{i\in S}z_i},
    \label{eq:intro-generalized-toffoli}
\end{align}
so the target qubit is flipped if and only if every control qubit is in state $\ket1$. A Threshold gate with control set $S$, target qubit $t$, and hardwired threshold $u$ acts as
\begin{align}
    \ket{z_S,z_t}
    \longmapsto
    \ket{z_S}\ket{z_t\oplus\left[\sum_{i\in S}z_i\geq u\right]},
    \label{eq:intro-threshold}
\end{align}
so the target qubit is flipped if and only if at least $u$ of the control qubits are in state $\ket1$. Finally, Fan-Out with control qubit $c$ and target set $S$ acts as
\begin{align}
    \ket{z_c}\ket{z_S}
    \longmapsto
    \ket{z_c}\ket{z_S\oplus z_c\mathbf 1},
    \label{eq:intro-fanout}
\end{align}
so the value of the control qubit is coherently XORed into every target qubit.

We now define the standard, universal, and real quantum circuit hierarchies. These are quantum analogues of the classical $\NC\subseteq\AC\subseteq\TC$ circuit hierarchy, whose three levels are distinguished by their available multiqubit gates: bounded-arity Toffoli, generalized Toffoli of unbounded arity, and Threshold gates, respectively. The standard, universal, and real hierarchies then differ only in their allowed single-qubit gates. Although the classical classes natively include Fan-Out, in the quantum setting we use the subscript $f$ to denote additional access to the quantum Fan-Out gate.

\begin{definition}[Standard hierarchy]
The \emph{standard hierarchy} consists of the classes
$\QNC\subseteq\QAC\subseteq\QTC$, together with their Fan-Out variants
$\QNC_f$, $\QAC_f$, and $\QTC_f$, and allows arbitrary single-qubit gates.
\end{definition}

\begin{definition}[Universal hierarchy]
The \emph{universal hierarchy} is obtained from the corresponding standard classes by restricting the single-qubit gate set to the fixed universal set $\{H,T,X\}$. We denote the resulting classes by
$\UQNC\subseteq\UQAC\subseteq\UQTC$, with Fan-Out variants
$\UQNC_f$, $\UQAC_f$, and $\UQTC_f$.
\end{definition}

\begin{definition}[Real hierarchy]
The \emph{real hierarchy} is obtained from the corresponding universal classes by further restricting the single-qubit gate set to $\{H,X\}$. We denote the resulting classes by
$\HQNC\subseteq\HQAC\subseteq\HQTC$, with Fan-Out variants
$\HQNC_f$, $\HQAC_f$, and $\HQTC_f$.
\end{definition}
\noindent As previously mentioned, the circuit classes above are defined as bounded-error \emph{decision} classes, and therefore capture the complexity of computing classical decision functions. We will also consider the corresponding \emph{unitary-synthesis setting}, where the goal is instead to approximate the full action of a target unitary on arbitrary quantum inputs. For a circuit model $\mathcal C$ and depth bound $D(n)$, we write $\textsf{Unitary}(\mathcal C[D])$ for the families of unitaries that admit polynomial-size depth-$O(D(n))$ implementations over $\mathcal C$ to every inverse-polynomial accuracy, using $\ket{0}$-initialized ancillae that are approximately restored at the end, up to an input-independent global phase.

This distinction will be important throughout. Our standard-to-universal compilation results generally hold at the stronger level of unitary synthesis, immediately implying the corresponding bounded-error decision-class equalities. By contrast, our real simulations preserve the acceptance probabilities of the original computation through an encoded real representation, establishing equalities of decision classes rather than equivalences of the underlying complex unitaries.

\subsection{Main Results and Technical Overview}
\label{sec:main-results}
\label{sec:proof-overview}

We now give a technical overview of our main results and the ideas behind their proofs. We begin with single-qubit synthesis, where we obtain an optimal depth bound in the bounded-arity setting and show that wider gates remove this obstruction entirely (\Cref{sec:intro-synthesis}). We then develop two complementary tools for extending these single-gate results to whole circuits. Reusable catalyst states allow a one-time preparation cost to be shared across many gates, while shallow implementations of the limited Fan-Out operations introduced by our synthesis procedure remove the need to treat Fan-Out as a primitive (\Cref{sec:intro-catalyst,sec:intro-small-fanout}). We next turn to real simulation, where the challenge is to remove complex amplitudes without destroying the parallel structure of the original circuit (\Cref{sec:intro-real-simulation}). Finally, we show that the resulting real circuits admit a structured Fourier description and a scalar Forrelation representation of their acceptance probabilities (\Cref{sec:intro-fourier-hierarchy}).

\subsubsection{Depth-Optimal Single-Qubit Synthesis}
\label{sec:intro-synthesis}

Our first result characterizes the depth required for arbitrary single-qubit gate synthesis based on the available multi-qubit gates. In the bounded-arity setting, we obtain an $O(\log\log(1/\varepsilon))$-depth construction and prove this dependence to be optimal. Allowing wider multi-qubit gates reduces the synthesis depth to a constant. The formal upper bounds appear in \Cref{thm:rz-synthesis-qnc} and \Cref{cor:rz-synthesis-qacf}, together with \Cref{lem:single-qubit-to-rz} for the reduction from arbitrary single-qubit unitaries to $R_z$ rotations.

\begin{theorem}[Single-qubit synthesis, informal]
\label{thm:intro-single-qubit-synthesis}
For every single-qubit unitary $U$ and precision $\varepsilon\in(0,1)$, there exists a depth-$O(\log\log(1/\varepsilon))$ circuit over $\{H,T,\operatorname{CNOT}\}$ that cleanly $\varepsilon$-approximates $U$ using $O(\log(1/\varepsilon))$ gates and clean ancillae. Moreover, for every fixed $\gamma\in(0,1)$, there exists a constant-depth circuit with additional access to width-$O(\log(1/\varepsilon))$ generalized Toffoli gates and width-$O_\gamma(\log^{1/2+\gamma}(1/\varepsilon))$ Fan-Out gates that cleanly $\varepsilon$-approximates $U$ using $O(\log(1/\varepsilon))$ gates and $O_\gamma(\log^{1+\gamma}(1/\varepsilon))$ clean ancillae.
\end{theorem}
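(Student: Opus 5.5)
We first reduce to synthesizing a single $R_z(\theta)$ rotation. By an Euler decomposition, every single-qubit $U$ equals, up to global phase, $R_z(\alpha)HR_z(\beta)HR_z(\gamma)$. It therefore suffices to $\varepsilon/3$-approximate each $R_z$ factor; this costs only a constant-factor blowup in depth and resources (this is the content of \Cref{lem:single-qubit-to-rz}).

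For $R_z(\theta)$, write $\theta/(2\pi)$ in binary to $m=O(\log(1/\varepsilon))$ bits, $\theta\approx 2\pi\sum_{j=1}^m b_j2^{-j}$. The truncation error is then $O(2^{-m})\le\varepsilon/2$. The plan is not to apply the $m$ dyadic rotations $R_z(2\pi 2^{-j})$ sequentially, since each needs its own synthesis. Instead we build a block encoding of $R_z(\theta)$ from exactly implementable Clifford$+T$ pieces. Concretely:

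\begin{itemize}
\item Prepare on $O(m)$ clean ancillae a state whose amplitudes encode a linear combination $\sum_k c_kV_k$ of Clifford$+T$ unitaries $V_k$ that equals $R_z(\theta)$ up to error $\varepsilon/2$. For example, use the expansion $\cos(\theta/2)I-i\sin(\theta/2)Z$, with $\cos$ and $\sin$ realized as products of rotations by dyadic angles that are cheap in $\{H,T\}$ (angles $\pi/4$ and multiples). Alternatively, use a phase-kickback style register whose $m$ qubits carry the bits $b_j$.
\item Apply LCU-style selection, controlled on these ancillae. Each control is a width-$O(m)$ generalized Toffoli, or a tree of Toffolis of depth $O(\log m)$ in the bounded-arity model.
\item Since the target is unitary, apply a constant number of rounds of oblivious amplitude amplification to boost the success amplitude to $1-O(\varepsilon)$. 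This returns the ancillae approximately to $\ket{0}$, which gives clean approximation.
\end{itemize}

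Each step consists of $O(m)$ gates arranged as a constant number of layers of width-$O(m)$ generalized Toffolis and single-qubit $H,T$ gates. The Fan-Out is needed to copy the target (or a control bit) to the many selection branches simultaneously.

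\textbf{Bounded-arity depth.} In the bounded-arity model, every width-$O(m)$ generalized Toffoli and Fan-Out is compiled with clean ancillae into an $O(\log m)$-depth binary tree over $\{H,T,\mathrm{CNOT}\}$. This gives depth $O(\log m)=O(\log\log(1/\varepsilon))$ with $O(m)$ gates and ancillae overall, after reusing tree ancillae.

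\textbf{Constant-depth version.} The Fan-Out width must be reduced to $O(m^{1/2+\gamma})$. A single Fan-Out of width $m$ can be replaced by a constant-depth ($O(1/\gamma)$-level) tree of Fan-Outs of width $m^{\gamma}$ each. We would arrange the selection so that only about $\sqrt m$ copies are needed at once: split the $m$ bits into $\sqrt m$ blocks of $\sqrt m$, and handle blocks in parallel with a two-level product structure. Each level of the tree is uncomputed afterwards. The extra copies account for the $O(m^{1+\gamma})$ ancilla count.

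\textbf{Main obstacle.} We expect the hardest step to be making the block encoding's success amplitude known exactly and independent of $\theta$, so that a fixed constant number of oblivious amplitude amplification rounds boosts it to $1-O(\varepsilon)$ rather than merely to a constant. Our first attempt would be to pad the linear combination with a Hadamard-test ancilla. This adjusts the amplitude to exactly $1/2$ (or $\sin(\pi/6)$), so that a single round of oblivious amplitude amplification is exact up to the truncation error.
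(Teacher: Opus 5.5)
\textbf{There is a genuine gap.} Your Euler-angle reduction and your overall plan (block-encode $\tfrac12R_z(\theta)$, then one round of oblivious amplitude amplification) match the paper. However, the step that carries the whole construction is missing, and both concrete options you give for it are circular.

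An LCU for $\cos(\theta/2)I-i\sin(\theta/2)Z$ needs a PREPARE state with $\theta$-dependent amplitudes such as $\sqrt{|\cos(\theta/2)|/\lambda}$, and preparing that state is itself an arbitrary single-qubit rotation. Products of dyadic rotations do not help: $R_z(2\pi2^{-j})$ is exactly Clifford$+T$ only for $j\leq3$, and composing the $m$ factors would be sequential anyway. The phase-kickback alternative needs a Fourier-state register $\bigotimes_t(\ket0+e^{2\pi i2^t/2^m}\ket1)/\sqrt2$, which again requires synthesizing $m$ arbitrary dyadic rotations. This is exactly why the paper later builds its catalyst factory out of \Cref{thm:rz-synthesis-qnc}.

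The idea you need is to make PREPARE exact and $\theta$-independent:
\begin{enumerate}
\item Take the nonnegative weights $p_0,p_1=(1\pm\cos(\theta/2))/4$ and $p_2,p_3=(1\pm\sin(\theta/2))/4$. They sum to $1$ and satisfy $(p_0-p_1)+i(p_2-p_3)=\tfrac12e^{i\theta/2}$.
\item Round them to integers $n_j$ with $\sum_jn_j=2^m$.
\item Prepare the \emph{uniform} superposition $H^{\otimes m}\ket{0^m}$, and realize the weights as lengths of contiguous address intervals.
\item Compute the interval label by reversible comparisons $\mathbf 1[\val(x)<M_j]$ against three hardwired thresholds. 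SELECT then applies only the Clifford branches $I,-I,-iZ,iZ$.
\end{enumerate}
All $\theta$-dependence thus moves into classical thresholds. This also resolves your ``main obstacle''. The distinguished block is $\operatorname{diag}(\overline z,z)$ with $|z-\tfrac12e^{i\theta/2}|=O(2^{-m})$, so the normalization is $\tfrac12$ by construction and no Hadamard-test padding is needed. \Cref{lem:robust-oaa-rz} then shows that one amplification round keeps the error linear in the rounding error, and the radial part is even suppressed to second order.

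Because your circuit is never pinned down, your resource claims have nothing to attach to. In the paper, all nonconstant depth comes from the comparator and from the reflection $I-2\ketbra{0^m}{0^m}$.

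In the bounded-arity case, the $O(m)$ gate count relies on a balanced tree of two-bit summaries $(E,L)$, merged by $L_{r+1,j}=L_{r,2j+1}\oplus(E_{r,2j+1}\wedge L_{r,2j})$. A naive OR-of-ANDs comparator compiled into Toffoli trees would have $\Theta(m^2)$ size.

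In the wide-gate case, the $O_\gamma(m^{1/2+\gamma})$ Fan-Out width and $O_\gamma(m^{1+\gamma})$ workspace do not come from a $\sqrt m\times\sqrt m$ split of bits. They come from a $k$-level comparison tree with branching factors $b_i=\lceil m^{2^{i-1}/(2^k-1)}\rceil$. Choosing $1/(2^k-1)\leq\gamma$ makes the top factor $O(m^{1/2+\gamma})$. Each merge evaluates $L_{i,j}=\bigvee_s\bigl(L_{i-1,jb_i+s}\wedge\bigwedge_{t>s}E_{i-1,jb_i+t}\bigr)$. Fan-Out supplies private copies of each equality bit to the up to $b_i-1$ clauses that read it, which costs $O(b_i^2)$ temporary qubits per merge. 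Finally, your remark that Fan-Out copies ``the target'' to the selection branches cannot work for an unknown quantum target.
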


By the standard Euler-angle decomposition, it suffices to synthesize arbitrary $R_z(\theta)$ rotations, since any single-qubit unitary can be written, up to global phase, as a constant-length sequence of $R_z$ rotations and Hadamard gates~\cite[Theorem~4.1]{nielsen2010quantum}. At a high level, our $R_z(\theta)$ synthesis procedure has two main steps. We first construct an approximate block encoding of $\frac12R_z(\theta)$ and then apply a single round of oblivious amplitude amplification to remove the factor of $1/2$, yielding an approximation of $R_z(\theta)$. The circuit depth is therefore governed by the reversible operations required to construct the block encoding and implement the amplification reflections.

For the block-encoding step, we begin with a dyadic approximation of the complex number $\frac12e^{i\theta/2}$. For $m=\Theta(\log(1/\varepsilon))$, we choose nonnegative integers $n_0,n_1,n_2,n_3$ summing to $2^m$ such that
\[
    \frac{(n_0-n_1)+i(n_2-n_3)}{2^m}
    \approx
    \frac12e^{i\theta/2}
\]
to error $O(2^{-m})$ (\Cref{thm:int_approx}). We realize these coefficients coherently by preparing a uniform superposition over $m$ address qubits, partitioning its $2^m$ basis states into four intervals of lengths $n_0,n_1,n_2,n_3$, and conditionally applying $I$, $-I$, $-iZ$, or $iZ$. After uncomputing the interval label, the all-zero ancilla block is
\begin{align}
    K
    &=
    \frac{(n_0-n_1)I-i(n_2-n_3)Z}{2^m}
    \approx
    \frac12R_z(\theta).
    \label{eq:intro-dyadic-block}
\end{align}
The only nontrivial circuit task is therefore to determine, coherently, which interval contains a given $m$-bit address. This is where the available multiqubit gates determine the depth. With bounded-arity Toffoli gates, the required comparisons are implemented by balanced trees in $O(\log m)$-depth, whereas wider Toffoli and Fan-Out gates evaluate them in constant depth (\Cref{lem:comparison-no-fanout,lem:comparison-fanout}).

The second step removes the factor of $1/2$ using a single round of oblivious amplitude amplification. Aside from calls to the block-encoding circuit and its inverse, the only additional operation is a reflection about the all-zero $m$-qubit ancilla state. This is the second point at which the available multiqubit gates determine the depth. A generalized Toffoli implements the reflection in constant depth, whereas bounded-arity Toffoli gates require a balanced tree of depth $O(\log m)$. \Cref{lem:robust-oaa-rz} shows that the same procedure works for the approximate block encoding above, yielding an approximation of $R_z(\theta)$ on every input while leaving only negligible amplitude on nonzero ancilla states. Since $m=\Theta(\log(1/\varepsilon))$, the complete bounded-arity construction has depth $O(\log\log(1/\varepsilon))$. We next show that this dependence is optimal. The formal statement appears in \Cref{thm:depth_lower_bound}.

\begin{theorem}[Synthesis depth lower bound, informal]
\label{thm:intro-synthesis-lower-bound}
For every fixed finite bounded-arity gate set, approximating every
single-qubit $R_z$ rotation to error $\varepsilon$ requires worst-case depth
$\Omega(\log\log(1/\varepsilon))$, even with arbitrarily many initialized ancillae and without requiring those ancillae to be restored at the end.
\end{theorem}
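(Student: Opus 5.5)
The plan is a counting (light-cone) argument. Fix a finite gate set $\mathcal G$ whose gates each act on at most $k=O(1)$ qubits. The key observation is that in a depth-$d$ circuit, the output qubit's reduced behavior depends on at most $k^d$ qubits in its backward light cone. Gates outside that cone are irrelevant, and so are any ancillae outside it.

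\textbf{Step 1: reduce to a bounded-size circuit.} Consider any depth-$d$ circuit $C$ on the input qubit plus arbitrarily many $\ket0$ ancillae. Let the designated output wire carry the synthesized qubit. The map
\[
\rho\mapsto\operatorname{Tr}_{\neq\mathrm{out}}\!\left[C(\rho\otimes\ket{0}\!\bra{0}^{\otimes a})C^\dagger\right]
\]
is a single-qubit channel, and it is determined only by the gates in the backward light cone of the output wire. If the input qubit is outside that cone, the channel is constant, which is far from every $R_z(\theta)$. The light cone has at most $k^d$ qubits and at most $N_d:=\sum_{j<d}k^j\le k^d$ gates, arranged in $d$ layers. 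Since the ancillae need not be restored, this channel is the right object to compare against $R_z(\theta)$: $\varepsilon$-approximation of the unitary (in any reasonable sense, e.g.\ diamond norm) implies the channel is $O(\varepsilon)$-close to conjugation by $R_z(\theta)$.

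\textbf{Step 2: count the distinct channels.} Within the light cone, each gate position specifies a gate from $\mathcal G$ and an ordered choice of at most $k$ wires among at most $k^d$ wires, with the input wire labeled. This gives at most $(|\mathcal G|\,k^{dk})^{k^d}$ choices per gate position. Hence the number of distinct single-qubit channels realizable in depth $d$ is at most
\[
M_d\le \exp\!\big(O(k^{d}\cdot d\log k)\big)=\exp\!\big(2^{O(d)}\big).
\]

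\textbf{Step 3: packing.} The rotations $R_z(\theta)$ for $\theta\in[0,2\pi)$ include $\Omega(1/\varepsilon)$ values whose channels are pairwise $3\varepsilon$-separated in diamond norm. Indeed, $R_z(\theta)$ and $R_z(\theta')$ acting on $\ket{+}$ give output states at trace distance $\Theta(|\theta-\theta'|)$ for small differences, with global phase automatically quotiented out at the channel level. Each realizable channel can then be $\varepsilon$-close to at most one of these rotations, so $M_d\ge\Omega(1/\varepsilon)$. This gives $2^{O(d)}\ge \log(1/\varepsilon)-O(1)$, i.e.\ $d=\Omega(\log\log(1/\varepsilon))$.

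\textbf{Main obstacle.} The step needing care is the reduction in Step 1: justifying that arbitrarily many ancillae, with no restoration requirement, cannot help, and fixing a notion of approximation under which closeness transfers to the reduced channel. I would first take the paper's clean-approximation notion (error $\varepsilon$ on the joint state up to global phase), derive diamond-norm $O(\varepsilon)$-closeness of the reduced channel via monotonicity of trace distance under partial trace, and then apply the counting argument above.
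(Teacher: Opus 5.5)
Your proposal is correct and follows essentially the same route as the paper. You restrict to the backward light cone of the output qubit (at most $k^d$ wires and $O(k^d)$ gates), count the light-cone circuits to get at most $\exp\bigl(O((d+1)k^d)\bigr)$ distinct output channels, and compare this with a packing of $\Omega(1/\varepsilon)$ rotations $R_z(\theta_j)$ that are pairwise more than $2\varepsilon$ apart in diamond norm, the separation being witnessed on the input $\ket{+}$. Two small differences from the paper: it defines approximation directly as diamond-norm closeness of the reduced output channel, which is exactly your Step 1 object, so no separate clean-approximation reduction is needed; and it includes each gate's layer index in the count, contributing a factor $d^{O(k^d)}$ that your bound already absorbs.
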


\noindent The proof combines a light-cone argument with counting. If every gate has arity at most $r$, then the backward light cone of one output qubit in depth $d$ contains only $O(r^d)$ gates. A finite gate alphabet therefore induces at most $\exp(O((d+1)r^d))$ distinct output channels. On the other hand, approximating all $R_z$ rotations to precision $\varepsilon$ requires $\Omega(1/\varepsilon)$ distinct channels. Comparing these bounds gives $d=\Omega(\log\log(1/\varepsilon))$.

For polynomial-size circuit families and inverse-polynomial target accuracy, the single-qubit synthesis bounds can be applied independently to every single-qubit gate. Thus, for any depth bound $D(n)\geq1$, we obtain
$\textsf{Unitary}(\QNC[D])\subseteq\textsf{Unitary}(\UQNC[O(D\log\log n)])$,
with the same inclusion for $\QAC$, $\QTC$, and $\QNC_f$. When both generalized Toffoli and Fan-Out are available, the constant-depth synthesizer instead preserves depth up to a constant factor, giving
$\textsf{Unitary}(\QAC_f[D])=\textsf{Unitary}(\UQAC_f[D])$
and
$\textsf{Unitary}(\QTC_f[D])=\textsf{Unitary}(\UQTC_f[D])$
(\Cref{lem:direct-unitary-compilation}).

For bounded-error decision computation, bounded-arity constant-depth circuits admit a stronger conclusion. The measured output qubit of a $\QNC^0$ circuit has only a constant-size backward light cone, so only constantly many arbitrary single-qubit gates can affect its acceptance probability. These gates therefore need only constant-accuracy approximation, yielding $\QNC^0=\UQNC^0$.
This argument is specific to decision computation and does not give a constant-depth approximation of the full unitary. It also fails in the presence of Fan-Out, where the backward light cone of a single output qubit may contain polynomially many qubits. In particular, whether
$\QNC_f^0=\UQNC_f^0$
remains open. 

\subsubsection{Depth-Preserving Compilation with Restricted Multi-Qubit Gates}
\label{sec:intro-one-wide-primitive}

The constant-depth synthesizer of \Cref{thm:intro-single-qubit-synthesis} uses both generalized Toffoli and Fan-Out. We now ask how much of this constant-depth behavior survives when only one of these primitives is available? With generalized Toffoli alone, we show that the auxiliary Fan-Out gates introduced by the synthesizer can themselves be eliminated, in constant-depth. This yields fully depth-preserving compilation for $\QAC$ and $\QTC$. With Fan-Out alone, we instead build on the reusable catalyst of Kim--Laakkonen~\cite{kim2025clifford}. Once the catalyst is supplied, their online catalytic procedure can be implemented in constant total depth in $\UQNC_f$. We, thus, give a fully coherent $O(\log\log(1/\varepsilon))$-depth $\UQNC_f$ procedure for preparing the catalyst from $\ket{0}$-initialized ancillae and then reuse the resulting catalyst bank throughout the computation, so that this preparation cost is incurred only once.

\paragraph{Generalized Toffoli without Fan-Out.}
\label{sec:intro-small-fanout}
The only obstacle to implementing our constant-depth synthesizer directly in $\UQAC$ is its use of polylogarithmic-width Fan-Out. Although such Fan-Out is already known to admit constant-depth implementations in the standard shallow-depth hierarchy, existing constructions rely on unrestricted single-qubit rotations. Rosenthal gave an exponential-size constant-depth $\QAC^0$ approximation of Fan-Out~\cite{rosenthal2021bounds}, and Grier, Morris, and Wu later made this construction exact~\cite{grier2026mathsf}. Restricting these constructions to polylogarithmic width yields polynomial size, but they still require width-dependent single-qubit rotations, including rotations with angles scaling inversely with the Fan-Out width. Therefore, they do not directly yield implementations over a fixed universal gate set. Nevertheless, we show that such Fan-Out can be implemented in constant depth using only Hadamard and generalized Toffoli gates. The formal statement is \Cref{cor:inverse-poly-polylog-fanout}.

\begin{lemma}[Polylogarithmic Fan-Out, informal]
\label{lem:intro-small-fanout}
For every fixed $c,\beta>0$, Fan-Out on $O(\log^c n)$ targets has a polynomial-size constant-depth implementation over Hadamard and generalized Toffoli gates with coherent error at most $n^{-\beta}$.
\end{lemma}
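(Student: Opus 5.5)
Fan-Out on $k=O(\log^c n)$ targets is Hadamard-conjugated coherent Parity. Conjugating the control and all $k$ targets by $H$ turns Fan-Out into the map
\[
\ket{z_0,z_1,\dots,z_k}\mapsto\ket{z_0\oplus z_1\oplus\cdots\oplus z_k,\,z_1,\dots,z_k}.
\]
So it suffices to compute Parity of $k$ bits into a target, coherently and cleanly, in constant depth. The gate set is only $H$, generalized Toffoli, and $X$ (where $X$ is a Toffoli with no controls). The error may be at most $n^{-\beta}$.

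\textbf{Step 1 (classical reversible Parity of $k$ bits).} Parity of $k$ bits has a depth-$d$ $\AC^0$ circuit of size $2^{O(k^{1/(d-1)})}$. For $k=\log^c n$ and $d$ a constant with $d-1>c$, this size is $2^{o(\log n)}\le \poly(n)$. In fact, for $k=O(\log n)$, even a single DNF of size $2^k$ already works.

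The problem is that $\AC^0$ needs fan-out of the input bits to many gates, and that is exactly what we lack. I would therefore build the $\AC^0$ circuit reversibly from generalized Toffolis and $X$. Each AND/OR gate writes its output onto a fresh ancilla. OR is obtained via De Morgan using $X$ gates. The cost is that each input wire feeds polynomially many gates. Toffolis sharing a control cannot be applied in the same layer, so naive serialization gives polynomial depth.

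\textbf{Step 2 (approximate fan-out of a control bit using $H$ and Toffoli).} This is the main obstacle: producing $N=\poly(n)$ copies of one input bit $z$ in constant depth. Here I would exploit the freedom to make error $n^{-\beta}$ and use a ``cat-state from GHZ via Toffoli'' trick in the Hadamard basis.

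Prepare $\ket{+}^{\otimes N}$ using $H$. Apply a generalized Toffoli controlled on all $N$ ancillas, conjugated appropriately, so that the phase $(-1)^{z\cdot[\text{all ones}]}$ lands only on a negligible branch. That on its own does not copy.

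A better route is to target copies of only $\ell=O(\log n)$-size blocks. Fan-Out of width $w=O(\log n)$ is Parity of $O(\log n)$ bits, which has a polynomial-size CNF/DNF. Its fan-out requirement of polynomially many copies of each input can be met recursively. So I would use recursion on the fan-out width: Fan-Out$(w)$ from Parity$(w)$ via $H$-conjugation, and Parity$(w)$ from a constant-depth $\AC^0$ circuit whose input fan-out is $\poly(n)$.

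To break the circularity, I would use Rosenthal's idea that a ``cat-state'' $\tfrac{1}{\sqrt2}(\ket{0^N}+\ket{1^N})$ approximately lets a $\QAC^0$ circuit act as Fan-Out. I expect that making this rotation-free (no width-dependent angles) is the crux. Only $H$ is available, so the natural fix is amplitude-amplification-free sampling: use the equal superposition over $2^w$ strings and generalized Toffolis to select the uniform-parity branches, and take failure amplitude $2^{-\Omega(\text{repetitions})}$ by repeating $O(\beta\log n)$ times in parallel and OR-ing.

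\textbf{Step 3 (cleanup and error).} I would uncompute all garbage by running the compute circuit in reverse after copying the result. Parallel repetition drives the coherent error to $n^{-\beta}$. Composition at constant depth and polynomial size follows because $k=O(\log^c n)$ keeps every exponential-in-$k^{1/(d-1)}$ term polynomial.
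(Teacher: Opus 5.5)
There is a genuine gap: the proposal never gives a constant-depth unitary for the primitive it needs. Step~1 is circular in the wrong direction. Even the size-$2^k$ DNF for $k=O(\log n)$ needs each input bit copied $2^{k-1}=\poly(n)$ times, i.e.\ Fan-Out of polynomial width, which is harder than the target. (A constant-height tree of logarithmic-width Fan-Outs reaches only polylogarithmically many copies.) So your recursion ``Fan-Out$(w)$ from Parity$(w)$ from $\AC^0$'' increases the width instead of decreasing it.

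You correctly identify the rotation-free preparation of a cat-like state as the crux, but the proposed fix is not a coherent construction. Selecting the desired branches of a uniform superposition is postselection. OR-ing $O(\beta\log n)$ parallel trials neither routes one good register to the output nor disentangles the failed trials' garbage. If the desired branches are $0^w,1^w$, they carry only $2^{1-w}=1/\poly(n)$ of the weight, so that many trials would not succeed even classically. Parallel repetition also cannot reduce coherent error here, because running independent copies on one input already requires fanning that input out.

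The missing ingredient is the Grier--Morris grid with singleton marked set $\{1^q\}$. Your abandoned first attempt (Hadamards plus a phase on the all-ones string) is its seed:
\begin{itemize}
\item $I-2\ketbra{1^q}{1^q}$ is a Hadamard-conjugated generalized Toffoli.
\item $\ket{s_q}:=H^{\otimes q}(I-2\ketbra{1^q}{1^q})H^{\otimes q}\ket{0^q}$ satisfies $|\braket{1^q}{s_q}|^2=4^{1-q}$.
\item Prepare $t=\Theta(4^q)$ columns in $(I-2\ketbra{s_q}{s_q})\ket{1^q}$ and AND them row-wise into a target column with generalized Toffolis. The target is then in $0^q$ and in $1^q$ each with probability $1/2\pm O(2^{-q})$.
\end{itemize}
This is a nekomata with infidelity $2^{3-q}$. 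You do not need a clean cat state: Rosenthal's coherent nekomata-to-Parity reduction tolerates arbitrary garbage. It uses two calls to the preparation, two to its inverse, and Hadamard/Toffoli gates. The result is clean Parity, hence Fan-Out after Hadamard conjugation, with error $16\cdot 2^{-q/2}$. The accuracy comes from taking $q=\Theta(\beta\log n)$, at cost $O(q4^q)=\poly(n)$, not from repetition.

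The extension to $O(\log^c n)$ targets also needs a different argument from your Step~3. The $2^{O(k^{1/(d-1)})}$ size bound is irrelevant once wide fan-out is unavailable. Instead, use the logarithmic-width gadgets as the gates of a copying tree with branching factor $B=\Theta(\log n)$. It makes $L+1=O(\log^c n)$ copies of the control in $\lceil\log_B(L+1)\rceil=O(c)$ levels. A CNOT layer from the private copies to the targets, followed by the inverse tree, then implements $\FANOUT_L$. Give each of the $K\le L$ gadgets error $n^{-\beta}/(2K)$ and use the intertwining $DV(\ket{0}\ket{\psi})=V(\ket{0}\FANOUT_L\ket{\psi})$. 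This bounds the error of $\widetilde V^\dagger D\widetilde V$ by $n^{-\beta}$ without assuming the approximate forward tree leaves its workspace clean.
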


\noindent The proof uses a recent construction of Grier and Morris~\cite{grier2025quantum} for preparing an approximate \emph{nekomata}, a cat-like state whose target register is concentrated on the two strings $\ket{0^q}$ and $\ket{1^q}$. We specialize their construction so that its only nontrivial phase operation marks $\ket{1^q}$, which can be implemented exactly by a generalized Toffoli gate.  Overall, the nekomata preparation uses only Hadamard and generalized Toffoli gates. To turn this state into Fan-Out, we use a reduction of Rosenthal~\cite{rosenthal2021bounds} that converts any such nekomata-preparation circuit into a coherent Parity gate using only the preparation circuit, its inverse, and additional Hadamard and generalized Toffoli gates. Thus the entire reduction remains within the same fixed gate set. Conjugating Parity by Hadamards then gives Fan-Out. Choosing $q=\Theta(\log n)$ yields logarithmic-width Fan-Out with inverse-polynomial error and polynomial resources, and a constant-depth copying tree extends the construction to any fixed polylogarithmic width (\Cref{lem:uqac-logarithmic-fanout} and \Cref{fact:nekomata-to-parity}).

Because the approximation holds coherently on arbitrary inputs and approximately restores the implementation ancillae, these Fan-Out gadgets can be substituted directly into the single-qubit synthesizer. This removes primitive Fan-Out while preserving constant depth and yields the fixed-gate compiler of \Cref{thm:qac-fixed-gate-simulation}. Consequently, arbitrary single-qubit gates can be compiled into the fixed universal basis without asymptotically increasing the depth of $\QAC$ or $\QTC$ circuits. The formal statements are
\Cref{cor:qac-uqac-collapse,cor:qtc-uqtc-collapse}.

\begin{theorem}[Depth-preserving universalization, informal]
\label{thm:intro-depth-preserving-universalization}
For every depth bound $D(n)\geq1$,
\begin{align*}
    \textsf{Unitary}(\QAC[D])
    &=
    \textsf{Unitary}(\UQAC[D]),\\
    \textsf{Unitary}(\QTC[D])
    &=
    \textsf{Unitary}(\UQTC[D]).
\end{align*}
\end{theorem}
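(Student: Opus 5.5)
The inclusions $\textsf{Unitary}(\UQAC[D])\subseteq\textsf{Unitary}(\QAC[D])$ and $\textsf{Unitary}(\UQTC[D])\subseteq\textsf{Unitary}(\QTC[D])$ are immediate, since $H,T,X$ are particular single-qubit gates. The substance is the reverse inclusion. The plan is a gate-by-gate compilation carried out layer by layer.

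Fix a depth-$d=O(D(n))$, size-$s=\poly(n)$ circuit $C$ over arbitrary single-qubit gates together with generalized Toffoli (respectively Threshold) gates, and a target accuracy $n^{-\beta}$. Each layer of $C$ splits into at most two sublayers: one of multiqubit primitives, which are kept verbatim, and one of single-qubit gates $U_1,\dots,U_k$ acting on disjoint qubits. We replace each $U_j$ by the constant-depth synthesizer of \Cref{thm:intro-single-qubit-synthesis} with precision $\varepsilon=n^{-\beta}/s$. This synthesizer uses $H$, $T$, generalized Toffoli gates of width $O(\log n)$, Fan-Out of width $O(\log^{1/2+\gamma} n)$, and $O(\log^{1+\gamma} n)$ fresh clean ancillae. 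Every such Fan-Out gate is then replaced by the gadget of \Cref{lem:intro-small-fanout}, which uses only $H$ and generalized Toffoli gates, has constant depth and polynomial size, and has coherent error $n^{-\beta'}$. We choose $\beta'$ large enough that the total number of Fan-Out invocations, which is $\poly(n)$, contributes at most $n^{-\beta}$ in total. Each compiled layer therefore has constant depth, and the whole circuit has depth $O(d)=O(D(n))$ and polynomial size. Gadgets acting on disjoint qubits use disjoint ancilla registers, so they run in parallel. The gate set now consists of $\{H,T,X\}$ plus generalized Toffoli (and Threshold) gates, i.e., it lies in $\UQAC$ (respectively $\UQTC$).

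For correctness, each gadget $G$ acts on its data qubit together with its clean ancillae. It satisfies $\|G(\ket{\psi}\ket{0^a}) - e^{i\phi}(U\ket{\psi})\ket{0^a}\|\le\varepsilon'$ for all $\ket{\psi}$, possibly entangled with the rest of the system, with an input-independent phase $\phi$. We rely here on the fact that both \Cref{thm:intro-single-qubit-synthesis} and \Cref{lem:intro-small-fanout} are stated as clean, coherent approximations on arbitrary inputs, which is exactly what the paper emphasizes. A standard hybrid argument, using the triangle inequality and unitary invariance, then bounds the total error by the sum of the per-gadget errors. That sum is at most $s\cdot\varepsilon+(\#\text{Fan-Outs})\cdot n^{-\beta'}=O(n^{-\beta})$. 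The global phases multiply to a single input-independent phase, and the ancillae are approximately restored, as the definition of $\textsf{Unitary}(\cdot)$ requires.

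The main obstacle is the error accounting when the Fan-Out gadget is substituted inside the synthesizer. Its error is coherent and its ancillae are only approximately clean. Moreover, the synthesizer invokes its block encoding and the inverse of that block encoding during oblivious amplitude amplification, so the approximate Fan-Out appears both forward and in inverted form. We will handle this by treating the inverse gadget as the inverse circuit, which has the same operator-norm error, and by applying the hybrid argument at the level of full unitaries on data plus all ancillae. Robustness of oblivious amplitude amplification then follows from \Cref{lem:robust-oaa-rz}, applied to a block encoding whose error includes these additive Fan-Out errors. We must also check that $\varepsilon=1/\poly(n)$ keeps every width at $O(\log n)$ or polylogarithmic. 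This holds, and it keeps the gadget sizes polynomial.
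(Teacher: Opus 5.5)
Your proposal is correct and is essentially the paper's argument (\Cref{thm:qac-fixed-gate-simulation}, then \Cref{cor:qac-uqac-collapse,cor:qtc-uqtc-collapse}): replace each single-qubit gate by the constant-depth wide-gate synthesizer, replace its sublogarithmic-width Fan-Outs by the Hadamard--generalized-Toffoli nekomata gadget, and bound each stage by a hybrid over clean-input errors, with the reverse inclusions immediate from the gate sets. The differences are minor: the paper applies \Cref{lem:uqac-logarithmic-fanout} directly with $q\geq L$ instead of the copying tree, and your appeal to \Cref{lem:robust-oaa-rz} for a Fan-Out-perturbed block is unnecessary (that lemma requires an exactly diagonal block), since your full-circuit hybrid already absorbs the Fan-Out errors occurring inside $W_z$ and $W_z^\dagger$.
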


\paragraph{Fan-Out without generalized Toffoli.}
\label{sec:intro-catalyst}
Takahashi and Tani showed that $\QNC_f^0=\QAC_f^0$, in the standard hierarchy, by implementing generalized Toffoli gates from Fan-Out in constant depth~\cite{takahashi2016collapse}. Their construction, however, relies on arbitrary single-qubit gates and therefore does not directly imply the corresponding equality over a fixed universal gate set. With only bounded-arity gates and Fan-Out, we do not know how to obtain the same constant-depth compilation. Instead, we build on the reusable eigenstate catalyst of Kim--Laakkonen~\cite{kim2025clifford}. Once the catalyst is supplied, a controlled-linear-map implementation of (\Cref{lem:controlled-linear-map}) realizes their grid-phase kickback procedure in constant total depth in $\UQNC_f$ and returns the catalyst unchanged after each use. However, their construction does not provide a fully coherent preparation of the required eigenstate catalyst from standard $\ket{0}$-initialized ancillae. We supply this missing end-to-end ingredient by giving a fully coherent $O(\log\log(1/\varepsilon))$-depth $\UQNC_f$ procedure for preparing the catalyst from $\ket0$ ancillae. Since the catalyst is reusable, this preparation cost is incurred once, for the entire computation.

The key observation is that, before a fixed basis relabeling, the catalyst is exponentially close to the product state
\begin{align}
    \ket{\chi_b}
    &=
    \frac1{\sqrt{2^b}}
    \sum_{j=0}^{2^b-1}e^{-2\pi i j/(2^b-1)}\ket j
    =
    \bigotimes_{t=0}^{b-1}
    \frac{\ket0+e^{-2\pi i2^t/(2^b-1)}\ket1}{\sqrt2}.
    \label{eq:intro-catalyst-product}
\end{align}
Its phase factors can therefore be synthesized in parallel. We then implement the required basis relabeling in depth $O(\log b)$ in $\UQNC_f$ (\Cref{lem:log-width-permutation}). Combining these two steps results in a catalyst factory (\Cref{lem:kim-catalyst-factory}), which prepares a bank large enough to service one parallel circuit layer in
$O(\log\log((w+2)/\varepsilon))$ depth, where $w$ is the maximum number of simultaneous phase gates. The same catalyst bank can then be reused throughout the computation. Since the ideal catalyst is restored after each use, its preparation error is incurred only once rather than accumulating with the number of circuit layers. \Cref{thm:catalytic-universalization} formalizes this amortization and gives an additive-depth compiler, with the catalyst-preparation cost added only once to the depth of the original circuit. For polynomial-size families and inverse-polynomial accuracy, this gives
$\textsf{Unitary}(\QNC_f[D])=\textsf{Unitary}(\UQNC_f[D])$
whenever $D(n)=\Omega(\log\log n)$
(\Cref{cor:qnc-f-universalization}).

Combining this with the exact constant-depth simulations of Takahashi and Tani~\cite{takahashi2016collapse} recovers the collapse of the entire Fan-Out hierarchy over a fixed universal gate set. Their simulations reduce generalized Toffoli and Threshold gates to bounded-arity gates, Fan-Out, and arbitrary single-qubit gates, while our catalytic universalization removes the remaining arbitrary single-qubit gates once the additive $O(\log\log n)$ cost can be absorbed. The formal statement is \Cref{cor:fanout-hierarchy-universalization}. 

\begin{theorem}[Collapse of the Universal Fan-Out hierarchy, informal]
\label{thm:intro-universal-fanout-hierarchy}
For every depth bound $D(n)=\Omega(\log\log n)$,
\[
    \QNC_f[D]
    =
    \UQNC_f[D]
    =
    \QAC_f[D]
    =
    \UQAC_f[D]
    =
    \QTC_f[D]
    =
    \UQTC_f[D].
\]
\end{theorem}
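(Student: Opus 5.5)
The plan is a chain of inclusions closed into a cycle. The trivial inclusions come from the definitions. Every universal class sits inside its standard counterpart, since $\{H,T,X\}$ gates are particular single-qubit gates. Also $\QNC_f\subseteq\QAC_f\subseteq\QTC_f$, and the same chain holds in the universal hierarchy, because bounded-arity Toffoli is a special case of generalized Toffoli, which is in turn a Threshold gate with $u=|S|$. So it suffices to prove $\QTC_f[D]\subseteq\UQNC_f[D]$ for $D=\Omega(\log\log n)$.

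I would prove this in two steps. \emph{Step 1} removes wide gates using the standard hierarchy. By Takahashi--Tani~\cite{takahashi2016collapse}, each generalized Toffoli or Threshold gate of polynomial width has an exact constant-depth, polynomial-size implementation over bounded-arity gates, Fan-Out, and arbitrary single-qubit gates, with clean ancillae. Substituting these gadgets layer by layer (parallel gates use disjoint fresh ancillae) turns a depth-$O(D)$ $\QTC_f$ circuit into a depth-$O(D)$ $\QNC_f$ circuit of polynomial size computing the same unitary exactly. This gives $\QTC_f[D]\subseteq\QNC_f[D]$, in fact at the level of $\textsf{Unitary}$.

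\emph{Step 2} removes arbitrary rotations, via $\QNC_f[D]\subseteq\UQNC_f[D]$. Here I would invoke \Cref{cor:qnc-f-universalization}, built on \Cref{thm:catalytic-universalization}. The circuit has size $s=\poly(n)$, so I target per-gate accuracy $\varepsilon=1/\poly(n)$ with total error at most $1/6$, which preserves bounded error. The catalyst factory (\Cref{lem:kim-catalyst-factory}) prepares, once, a bank sized for the maximum layer width $w\le s$, in depth $O(\log\log((w+2)/\varepsilon))=O(\log\log n)$. Each layer's single-qubit gates are first reduced to $R_z$ rotations and $H$ by \Cref{lem:single-qubit-to-rz}. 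They are then implemented in constant depth per layer via the controlled-linear-map realization of grid-phase kickback, and the catalyst is returned after each use. The total depth is $O(D)+O(\log\log n)=O(D)$ because $D=\Omega(\log\log n)$. The catalyst preparation error is incurred once, while per-use errors add over at most $s$ gate uses and are therefore bounded by $s\varepsilon$. Finally I would uncompute the catalyst bank by running the factory in reverse, or simply leave it, since decision classes only need the output probability.

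The main obstacle I expect is the error accounting in Step 2. An approximate catalyst, reused across many layers, must not let errors compound multiplicatively. I would handle this as \Cref{thm:catalytic-universalization} does: compare against the ideal catalyst, which is exactly restored, and use the fact that the real circuit is a fixed unitary applied to the approximate catalyst. The deviation then equals the initial preparation error plus the sum of per-gate synthesis errors, by the triangle inequality and unitary invariance. A secondary check is that the Takahashi--Tani gadgets need only polynomially many single-qubit gates, so the per-gate precision stays at $1/\poly(n)$.
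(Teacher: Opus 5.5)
Your proposal is correct and follows the paper's proof essentially step for step. The gate-set inclusions reduce everything to $\QTC_f[D]\subseteq\UQNC_f[D]$, which you obtain from the exact Takahashi--Tani simulation into $\QNC_f[D]$ followed by the catalytic universalization of \Cref{cor:qnc-f-universalization}, whose one-time $O(\log\log n)$ catalyst-preparation depth is absorbed because $D=\Omega(\log\log n)$. One small slip in your error budget: a total acceptance error of $1/6$ turns the $2/3$ versus $1/3$ thresholds into $1/2$ versus $1/2$, so, as the paper does, first amplify with constantly many parallel repetitions and a reversible majority, then compile with a strictly smaller constant error.
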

\noindent Thus, the standard Fan-Out hierarchy collapse survives the restriction to a fixed universal gate set at doubly logarithmic depth and above. At constant depth, we already have
$\QAC_f^0=\UQAC_f^0$
and
$\QTC_f^0=\UQTC_f^0$
from the depth-preserving compilation results above. The remaining gap is therefore specific to bounded-arity circuits with Fan-Out, where it remains open whether
$\QNC_f^0=\UQNC_f^0$ and, consequently, whether the full Takahashi--Tani collapse persists over a fixed universal gate set.

\subsubsection{Depth-Preserving Real Simulation}
\label{sec:intro-real-simulation}

We next turn to depth-preserving real simulation. As discussed above, the standard realification of quantum computation does not preserve parallelism because its shared ancilla can serialize otherwise disjoint gates. Prior work of McKague, Mosca, and Gisin showed that real simulation can instead preserve locality by distributing the encoding across the system~\cite{mckague2009simulating}. Their construction, however, replaces each complex local gate by a corresponding real local transformation and assumes that this induced real transformation can be implemented directly. It therefore does not compile the simulation into a fixed finite real gate set such as Hadamard and Toffoli. We now show that real simulation can preserve the parallel structure of the original circuit while compiling all the way to the fixed real gate sets defining our shallow circuit classes. The formal statements are
\Cref{thm:h-only-depth-preserving-simulation,thm:bounded-arity-real-decision-simulation}.

\begin{theorem}[Depth-preserving real simulation, informal]
\label{thm:intro-real-simulation}
For every $\varepsilon\in(0,1)$, a depth-$d$ circuit of gate count and width at most $s\geq2$ in
$\UQAC$, $\UQAC_f$, $\UQTC$, or $\UQTC_f$ has a size-$\operatorname{poly}(s,1/\varepsilon)$ simulation in the corresponding real model
$\HQAC$, $\HQAC_f$, $\HQTC$, or $\HQTC_f$, respectively, of depth $O(d+1)$ that preserves the acceptance probability of every classical input up to error $\varepsilon$. For $\UQNC$ and $\UQNC_f$, the corresponding simulations in $\HQNC$ and $\HQNC_f$ have depth
$O(d+\log\log(s/\varepsilon))$
and polynomial size and workspace.
\end{theorem}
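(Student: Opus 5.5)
The plan is to simulate the universal circuit with a distributed real encoding. The only non-real single-qubit gate in $\{H,T,X\}$ is $T=\mathrm{diag}(1,e^{i\pi/4})$. All multiqubit primitives (generalized Toffoli, Threshold, Fan-Out) and $H,X$ are already real permutations or real matrices. So the whole task is to handle the $T$ gates in parallel without a shared phase ancilla.

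\textbf{Step 1: replace each $T$ by a local real gadget.} I will attach to each $T$ gate its own fresh ancilla register. The phase $e^{i\pi/4}$ is the eighth root of unity $\omega$, so I give each $T$ a private cyclic "clock" register holding a $\mathbb Z_8$ counter. A controlled increment then records the phase exponent instead of applying it, and this increment is a real permutation built from bounded Toffolis.

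At the end, we need the acceptance probability
\[
\|\Pi_{\mathrm{out}} C\ket{x}\|^2=\sum_{y}\bigl|\sum_{\text{paths}}\prod\omega^{k}\bigr|^2 .
\]
The counters must be combined so that this quantity is reproduced by a real circuit. Note that $\omega^k$ depends only on $\sum_j k_j \bmod 8$, the sum over all $T$ gates.

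The cleaner route is the "preserve only the acceptance probability" idea from the overview. First write the acceptance probability as a Hadamard test / swap-type overlap
\[
p=\bra{x}C^\dagger\Pi C\ket{x}.
\]
Then use the real-part identity
\[
\mathrm{Re}\bra{\psi}U\ket{\psi}=\bra{\psi_{\mathbb R}}U_{\mathbb R}\ket{\psi_{\mathbb R}}.
\]
Here $U_{\mathbb R}$ replaces each complex entry $a+ib$ by the $2\times2$ real block $\begin{pmatrix}a&-b\\ b&a\end{pmatrix}$.

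\textbf{Step 2: realify each $T$ with its own qubit.} Instead of one shared realification qubit, I give every $T$ gate its own qubit $r_j$. The gate $T$ becomes the real rotation $R(\pi/4)$ on $r_j$, controlled on the target being $1$. The product of all these rotations equals the rotation by the total phase, because planar rotations commute.

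To recombine them, use the isomorphism of $\mathbb Z_8$-phases with rotations. Measure only the total phase by a final parity-like "addition" layer: a reversible adder of the counters modulo 8 into one register, then one $R(\text{total})$ applied there. A mod-8 adder of $w$ three-bit counters is a constant-depth Threshold/generalized-Toffoli computation in $\QTC$/$\QAC_f$. For $\QAC$ without Fan-Out I would use the polylogarithmic Fan-Out lemma (\Cref{lem:intro-small-fanout}) plus counting. That may fail: the sum mod 8 is not in $\AC^0$.

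So the better plan is to avoid summation entirely. Interpret each private rotation qubit pair as a real representation of $\mathbb C$. The tensor product of $w$ such real representations contains the complex product as a sub-representation, via the real-linear map $\mathbb C^{\otimes_{\mathbb R} w}\to\mathbb C$. This map is implemented once at the end by a Hadamard-conjugated projection measured jointly. Its acceptance bias picks the desired real part up to a factor $2^{-w}$ per layer. That factor is the source of the $\mathrm{poly}(s,1/\varepsilon)$ size, with amplification via parallel repetition.

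\textbf{Step 3: implement the rotations over $\{H,X\}$.} Each rotation $R(\pi/4)$ is not in $\{H,X,\text{Toffoli}\}$ directly. I synthesize it with the real analogue of the block-encoding/oblivious amplitude amplification synthesizer (\Cref{thm:intro-single-qubit-synthesis}). That analogue uses dyadic coefficients $n_0-n_1$ and $n_2-n_3$ on $I$ and $XZ$ instead of $iZ$, and these are real. This gives constant depth with generalized Toffoli, and $O(\log\log(s/\varepsilon))$ depth in the bounded-arity case, which matches the stated additive term.

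\textbf{Main obstacle.} The hardest step is Step 2: recombining the distributed per-gate real phases into the correct acceptance probability in constant depth without summation or serialization. I expect to need the decision-only trick, namely computing $\bra{x}C^\dagger\Pi C\ket{x}$ as a real overlap of an amplified Hadamard-test circuit, with error control from the $\varepsilon$ budget.
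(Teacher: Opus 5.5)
\textbf{There is a genuine gap, and it is exactly the step you flag as the main obstacle; none of your candidate fixes closes it.}

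\emph{Private registers destroy interference.} Rotations on different private qubits combine by tensor product, not by composition in one plane. So per-gate registers lose the interference the simulation must reproduce.
A private $\mathbb{Z}_8$ counter incremented by a $T$ gate just copies that gate's target bit. That is a which-path record, so branches a later Hadamard should recombine become orthogonal. Adding the counters at the end cannot repair this: they depend on intermediate rather than final data, so they can never be uncomputed.
Private rotation qubits do the same thing more mildly. Two paths with the same total phase $\omega$ but different $T$ gates firing carry ancilla states $R\ket0\otimes\ket0$ and $\ket0\otimes R\ket0$, whose overlap is $\tfrac12$ rather than $1$.

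\emph{The multiplication-map repair is exponentially lossy.} The map $\mu:(\mathbb{R}^2)^{\otimes W}\to\mathbb{R}^2\cong\mathbb{C}$ sends $\ket b\mapsto i^{|b|}$, so $\mu\mu^{T}=2^{W-1}I_2$. The event ``output $1$ and ancillas in the image of $\mu^{T}$'' therefore has probability $2^{-(W-1)}p_C(x)$. Here $W$ is the total number of $T$ gates in the circuit, up to $\Theta(s)$, not a per-layer quantity. Parallel repetition amplifies a constant gap, but separating $\tfrac23\cdot2^{-(W-1)}$ from $\tfrac13\cdot2^{-(W-1)}$ needs $2^{\Omega(W)}$ repetitions. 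The size is then not $\operatorname{poly}(s,1/\varepsilon)$.

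\emph{The real-part identity does not help either.} $U_{\mathbb R}$ is precisely the single-shared-qubit realification whose serialization you wanted to avoid.

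(A single \emph{shared} $\mathbb{Z}_8$ counter would avoid both problems when the per-layer count is constant-depth computable, e.g.\ with Fan-Out. Start it in the real two-dimensional invariant subspace of the cyclic shift, and advance it once per layer by that layer's $T$-count mod $8$. But the low bit of that count is the Parity of the layer's $T$ targets, which is not known to be in $\QAC^0$. With bounded arity it also costs $\Omega(\log w)$ depth per layer of $w$ parallel $T$ gates.)

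\textbf{The missing idea is to change what is encoded.} The paper never realifies the state vector. It stores the real Pauli coefficients of $\rho_t=\ket{\psi_t}\bra{\psi_t}$ as amplitudes of two label qubits per original qubit (\Cref{lem:real-pauli-encoding}). Conjugation by an $r$-qubit gate is then a real orthogonal map on only its $2r$ label qubits, so there is no phase register to share or recombine:
\begin{itemize}
\item $T$ becomes $\Lambda(R_y(\pi/2))$ on its own qubit's labels;
\item $H$ becomes SWAP followed by CZ;
\item Fan-Out becomes a signed permutation;
\item acceptance is read from the output qubit's two labels via purity, $\bra{\rho}M_\Pi\ket{\rho}_{\mathrm P}=\operatorname{Tr}(\Pi\rho^2)=\operatorname{Tr}(\Pi\rho)$.
\end{itemize}
The price is that generalized Toffoli, which is free in any state-vector realification, stops being a label permutation. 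After Hadamard conjugation its transfer matrix is the reflection $\ket a\otimes(I-2Q_a)$ of \Cref{prop:encoded-generalized-cz}. The paper implements it with local $\ket{-}$ checks plus a randomized paired-bucket test for agreement of the active $Y$-eigenvalues. Threshold is routed through Takahashi--Tani and the Grier--Morris Fan-Out.

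\textbf{Your bounded-arity depth bound is also off.} Your Step~3 matches \Cref{lem:h-only-real-rotations}. However, synthesizing a fresh rotation for every $T$ gate costs $O(\log\log(s/\varepsilon))$ in every layer. That gives $O(d\log\log(s/\varepsilon))$ total, not the additive bound in the statement. The uncontrolled $\pi/4$ rotation is exactly $HZ$, so only the control is hard. The paper writes $O_T=\Lambda(H)\operatorname{CZ}$ and realizes controlled-$H$ by controlled-SWAPs into a Hadamard eigenstate $\ket{h_+}$. This catalyst bank is prepared once and reused in every layer (\Cref{thm:bounded-arity-real-decision-simulation}).
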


To preserve the parallel structure of the original circuit, we need a real encoding in which local gates remain local. We achieve this by representing the density matrix through its Pauli coefficients, using the standard Pauli-transfer representation~\cite{wood2015tensor,huang2022bloch,kunold2023vectorization}. For a pure state $\rho=\ket\psi\bra\psi$, define
\begin{align}
    \ket\rho_{\mathrm P}
    &:=
    2^{-q/2}
    \sum_{P\in\{I,X,Y,Z\}^{\otimes q}}
    \operatorname{Tr}(P\rho)\ket P.
    \label{eq:intro-pauli-encoding}
\end{align}
The amplitudes are real, and purity makes the encoded state normalized. If a gate $U$ acts on $r$ original qubits, then conjugation by $U$ induces the real orthogonal transformation
\begin{align}
    [O_U]_{P,Q}
    &:=
    2^{-r}\operatorname{Tr}(P\,UQU^\dagger),
    \label{eq:intro-pauli-transfer}
\end{align}
on only the corresponding $2r$ Pauli-label qubits. Hence disjoint gates remain disjoint after encoding, avoiding the serialization of the standard shared-ancilla realification. Computational-basis inputs have depth-one product-state encodings, and the original acceptance probability can be recovered exactly by measuring only the two label qubits corresponding to the output qubit
(\Cref{lem:real-pauli-encoding}).

The remaining challenge is to implement each induced Pauli-transfer transformation using only gates available in the corresponding real circuit class. Clifford gates such as Hadamard and Fan-Out act as signed permutations of the Pauli labels, and these permutations can be realized directly within $\HQAC_f$ or the relevant real model
(\Cref{prop:encoded-hadamard,lem:real-encoded-fanout}). The non-Clifford $T$ gate instead becomes a controlled real rotation
(\Cref{prop:encoded-t}). In the wide-gate models, we implement this rotation using our Hadamard-only $R_y$ synthesis, keeping the simulation within $\HQAC$ (\Cref{lem:h-only-real-rotations}). In the bounded-arity models, every gate other than $T$ acts on only constantly many original qubits, so its Pauli-transfer action involves only constantly many label qubits and can therefore be implemented exactly by a constant-size real circuit. Encoded $T$ is handled instead using a reusable Hadamard-eigenstate catalyst. Preparing this catalyst once and reusing it throughout the computation yields only an additive $O(\log\log(s/\varepsilon))$ depth overhead (\Cref{thm:bounded-arity-real-decision-simulation}).

The main difficulty is generalized Toffoli. After Hadamard-conjugating its target, it becomes a generalized controlled-$Z$, whose Pauli-transfer action we show is a reflection with a particularly simple negative eigenspace (\Cref{prop:encoded-generalized-cz}). Fixing the first Pauli-label string partitions the positions into those carrying $I/Z$ labels and those carrying $X/Y$ labels. The reflection applies a minus sign exactly when every $I/Z$ position satisfies a local $\ket{-}$ condition and all $X/Y$ positions have the same $Y$-eigenvalue, either all $\ket{+i}$ or all $\ket{-i}$. The local conditions can be checked independently, so the only nonlocal task is testing agreement among the $Y$-eigenvalues. We do this without comparing every pair. Instead, at several bucket scales, we randomly assign the active positions to paired buckets and look for a pair of singleton buckets whose isolated $Y$-labels disagree. Whenever both signs are present, an appropriate scale exposes such a pair with constant probability, and parallel repetition reduces the failure probability. The bucket tests, routing, and $Y$-label comparisons can all be implemented coherently in constant depth, yielding the encoded generalized-Toffoli construction of \Cref{thm:real-encoded-toffoli}.

Finally, we handle Threshold circuits indirectly through the Fan-Out model. Takahashi and Tani give the standard constant-depth reduction $\QTC[D]\subseteq\QNC_f[D]$~\cite{takahashi2016collapse}, while Grier and Morris give constant-depth real Fan-Out within $\HQTC$~\cite{grier2025quantum}. Together with our fixed-gate real simulation, these reductions yield
$\UQTC[D]=\HQTC[D]$ and $\UQTC_f[D]=\HQTC_f[D]$
(\Cref{fact:real-threshold-fanout} and \Cref{cor:real-threshold-simulation}).

These real simulations, together with our universalization results, give the following relations among the standard, universal, and real hierarchies. The formal statements are
\Cref{cor:qac-uqac-hqac-collapse,cor:real-wide-hierarchies,cor:full-real-fanout-hierarchy}.

\begin{theorem}[Real shallow-depth hierarchy, informal]
\label{thm:intro-real-hierarchy}
For every depth bound $D(n)\geq1$,
\[
    \QAC[D]=\UQAC[D]=\HQAC[D],
\]
and
\begin{align*}
    \QAC_f[D]
    &=
    \UQAC_f[D]
    =
    \HQAC_f[D]
    =
    \QTC[D]
    =
    \UQTC[D]
    =
    \HQTC[D]\\
    &=
    \QTC_f[D]
    =
    \UQTC_f[D]
    =
    \HQTC_f[D].
\end{align*}
If $D(n)=\Omega(\log\log n)$, then the bounded-arity Fan-Out classes also join this collapse,
\[
    \QNC_f[D]
    =
    \UQNC_f[D]
    =
    \HQNC_f[D].
\]
\end{theorem}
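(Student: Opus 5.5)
We assemble the statement from the three ingredients already described: depth-preserving universalization (\Cref{thm:intro-depth-preserving-universalization}), the catalytic Fan-Out collapse (\Cref{thm:intro-universal-fanout-hierarchy}), and depth-preserving real simulation (\Cref{thm:intro-real-simulation}). Throughout, the inclusions $\HQAC[D]\subseteq\UQAC[D]\subseteq\QAC[D]$ (and their analogues for every column of Table~\ref{tab:shallow-gate-hierarchies}) hold by definition, because they only restrict the single-qubit gate set. So in each chain it suffices to prove the reverse inclusions.

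For the first chain, a bounded-error decision family in $\QAC[D]$ has polynomial size $s$. Approximating its unitary to accuracy $1/10$ changes acceptance probabilities by at most $1/10$, so \Cref{thm:intro-depth-preserving-universalization} gives $\QAC[D]\subseteq\UQAC[D]$ after standard amplification of the gap. The amplification is a constant number of parallel copies followed by a majority; a majority on constantly many bits has a constant-size Toffoli implementation, so depth is preserved. Next we apply \Cref{thm:intro-real-simulation} with $\varepsilon=1/10$. This produces an $\HQAC$ circuit of depth $O(D+1)=O(D)$ (since $D\ge1$) and size $\operatorname{poly}(s)$ whose acceptance probabilities are within $1/10$ of the original. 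The resulting thresholds $\ge 2/3-1/5$ and $\le 1/3+1/5$ still separate, and amplification restores $2/3$ versus $1/3$. Hence $\UQAC[D]\subseteq\HQAC[D]$.

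For the second chain, the same argument gives $\QAC_f[D]=\UQAC_f[D]=\HQAC_f[D]$ and $\QTC_f[D]=\UQTC_f[D]=\HQTC_f[D]$, using the $\QAC_f$ and $\QTC_f$ unitary equalities of \Cref{lem:direct-unitary-compilation} and the corresponding real simulations. It also gives $\QTC[D]=\UQTC[D]=\HQTC[D]$. It remains to link the three triples, via the cycle
\[
\QTC_f[D]\subseteq\QAC_f[D]\subseteq\QTC[D]\subseteq\QTC_f[D].
\]
\begin{itemize}
\item The first inclusion uses the Takahashi--Tani exact constant-depth simulation of Threshold gates by Fan-Out, bounded gates and single-qubit gates, so $\QTC_f[D]\subseteq\QNC_f[D]\subseteq\QAC_f[D]$.
\item The second inclusion replaces each Fan-Out gate. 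Conjugating by Hadamards turns it into coherent Parity, and Parity is computable exactly in constant depth by Threshold gates (e.g., comparing the count against each odd value with constantly many layers of Threshold and Toffoli gates). Hence Fan-Out lies in $\QTC^0$ and $\QAC_f[D]\subseteq\QTC[D]$.
\item The third inclusion is trivial.
\end{itemize}
We expect this bridging step to be the main bookkeeping obstacle: the constant-depth Parity-from-Threshold gadget must be clean, with its ancillae uncomputed, so that it substitutes coherently for Fan-Out. Our first attempt would compute all indicators $[\sum z_i\ge u]$ for $u$ up to $|S|$ in parallel, XOR-combine adjacent pairs via Toffolis into the target through a single Threshold-style gadget, and then uncompute. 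Alternatively, we can cite the \Cref{fact:real-threshold-fanout} route.

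For the third chain, assume $D=\Omega(\log\log n)$. \Cref{thm:intro-universal-fanout-hierarchy} already gives $\QNC_f[D]=\UQNC_f[D]=\QAC_f[D]$. The bounded-arity case of \Cref{thm:intro-real-simulation} with $\varepsilon=1/10$ and $s=\operatorname{poly}(n)$ gives $\UQNC_f[D]\subseteq\HQNC_f[O(D+\log\log n)]=\HQNC_f[D]$, where the additive $\log\log n$ is absorbed by the assumption on $D$. The reverse inclusion is trivial.
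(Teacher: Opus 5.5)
Your overall assembly matches the paper's. You universalize, realify, close the Threshold/Fan-Out cycle using Takahashi--Tani plus a Fan-Out-from-Threshold step, and bring in the bounded-arity Fan-Out classes via the catalytic collapse and the bounded-arity real simulation. However, your primary argument for the bridging inclusion $\QAC_f[D]\subseteq\QTC[D]$ does not work.

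The classical $\TC^0$ circuit for Parity evaluates all indicators $[\sum_{i\in S}z_i\geq u]$ on the same inputs simultaneously. In a quantum circuit, gates in one layer act on disjoint qubits, so each $z_i$ can control only one Threshold gate per layer. Running the $|S|$ indicators in parallel therefore needs $|S|$ private copies of every $z_i$, which is exactly the Fan-Out you are trying to eliminate. Without it you are left with $|S|$ sequential layers, or a CNOT copying tree of depth $\Theta(\log|S|)$; neither is constant. The phase form $(-1)^{|z|}=\prod_u(-1)^{[|z|\geq u]}$ has the same problem, since all factors share their controls.

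So the obstacle is not keeping the gadget clean. It is that quantum circuits have no free fan-out, and obtaining Fan-Out from Threshold gates in constant depth is precisely the nontrivial content of Grier--Morris. The paper uses no exact construction. \Cref{fact:real-threshold-fanout} works as follows:
\begin{enumerate}
\item It prepares an approximate nekomata whose only phase is the central Hamming slice $(-1)^{[|z|=q/2]}=(-1)^{[|z|\geq q/2]}(-1)^{[|z|\geq q/2+1]}$, implemented by two sequential Threshold kickbacks.
\item It converts that nekomata to coherent Parity via Rosenthal's reduction.
\item It obtains Fan-Out in $\HQTC$ only up to error $\eta$, at cost $\operatorname{poly}(m,1/\eta)$.
\end{enumerate}

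You must therefore take your fallback, and the inclusion then holds only at the level of bounded-error decision classes. Replace each of the $M=\operatorname{poly}(n)$ Fan-Out gates with clean-input error $\Theta(1/M)$ on fresh ancillae, and absorb the total error by a hybrid argument, as in \Cref{cor:real-threshold-simulation}. The paper does this directly on real circuits: it proves $\HQAC_f[D]\subseteq\HQTC[D]$ and closes the chain $\HQAC_f[D]\subseteq\HQTC[D]\subseteq\UQTC[D]\subseteq\QTC[D]\subseteq\QTC_f[D]=\QAC_f[D]=\HQAC_f[D]$. Your cycle works equally well once its middle link uses this approximate replacement.

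Separately, fix the constants in your first chain. Since $2/3-1/5=7/15<8/15=1/3+1/5$, two losses of $1/10$ destroy the gap. Either amplify first (for example to $9/10$ versus $1/10$, as the paper does) or choose a smaller $\varepsilon$.
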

\noindent At constant depth without Fan-Out, we additionally obtain the decision-specific equality
$\QNC^0=\UQNC^0=\HQNC^0$
(\Cref{prop:real-qnc-zero-decision}). Fan-Out removes the constant-size light-cone property underlying this argument, and it remains open whether
$\QNC_f^0=\UQNC_f^0=\HQNC_f^0$.

\subsubsection{Structured Fourier Circuits and Scalar Forrelation}
\label{sec:intro-fourier-hierarchy}
\label{sec:intro-forrelation-hierarchy}

Because Hadamard is the only basis-changing gate in the real hierarchy, every circuit can be viewed as alternating Hadamard layers with reversible Boolean transformations. Let $S$ be a set of control qubits, let $t$ be a target qubit, and let $f:\{0,1\}^{|S|}\to\{0,1\}$ be a Boolean function. We denote by $V_f$ the reversible unitary that computes $f$ into the target by XOR,
\begin{align}
    V_f\ket{z_S,z_t}
    &=
    \ket{z_S,z_t\oplus f(z_S)}.
    \label{eq:intro-reversible-boolean-gate}
\end{align}
Writing $H_t$ for a Hadamard on the target qubit, conjugating $V_f$ by $H_t$ converts it into the diagonal phase
\begin{align}
    H_tV_fH_t\ket{z_S,z_t}
    &=
    (-1)^{z_tf(z_S)}\ket{z_S,z_t}.
    \label{eq:intro-gate-phase}
\end{align}
Fan-Out satisfies an analogous identity after conjugating all of its targets. Thus each multiqubit primitive in the real hierarchy corresponds to a restricted Boolean phase, as summarized in \Cref{tab:structured-phase-families}.

\begin{table}[t]
\centering
\small
\setlength{\tabcolsep}{4pt}
\renewcommand{\arraystretch}{1.5}
\begin{tabular}{@{}
    >{\raggedright\arraybackslash}m{0.20\linewidth}
    >{\centering\arraybackslash}m{0.40\linewidth}
    >{\centering\arraybackslash}m{0.34\linewidth}@{}}
\toprule
\textbf{Multiqubit gate}
& \textbf{Associated block phase $g(z_B)$}
& \textbf{Generated phase family}\\
\midrule
Bounded Toffoli
& $\displaystyle\prod_{i\in B}z_i$, \quad $2\leq|B|\leq3$
& $\mathcal A_3=\operatorname{Phase}(\mathcal G_3)$\\
Generalized Toffoli
& $\displaystyle\prod_{i\in B}z_i$, \quad $|B|\geq2$
& $\mathcal A=\operatorname{Phase}(\mathcal G_{\mathrm{AND}})$\\
Threshold
& $\displaystyle z_t\left[\sum_{i\in S}z_i\geq u\right]$
& $\mathcal T=\operatorname{Phase}(\mathcal G_{\mathrm{THR}})$\\
Fan-Out
& $\displaystyle z_c\left(\bigoplus_{i\in S}z_i\right)$
& $\operatorname{Phase}(\mathcal G_{\mathrm{FO}})$\\
\bottomrule
\end{tabular}
\caption{Block phases obtained by Hadamard conjugation of the
multiqubit gates. Each $\mathcal G$ denotes the collection of block
functions in its row. For Threshold, $z_t$ is the target bit,
$S$ indexes the controls, $B=S\mathbin{\dot\cup}\{t\}$, and
$1\leq u\leq|S|$. For Fan-Out, $z_c$ is the control bit,
$S\neq\varnothing$ indexes the targets, and
$B=\{c\}\mathbin{\dot\cup}S$.}
\label{tab:structured-phase-families}
\end{table}

A parallel layer of reversible gates therefore becomes a single diagonal phase whose Boolean function is the XOR of the corresponding block phases. Since gates within one layer act on disjoint supports, the nonlinear terms act on pairwise disjoint blocks. This motivates the following structured Fourier model.

\begin{definition}[Structured Fourier hierarchy]
\label{def:intro-structured-fourier}
A structured phase function has the form
\begin{align}
    h(z)
    &=
    \alpha_0
    \oplus
    \bigoplus_{i=1}^m \alpha_i z_i
    \oplus
    \bigoplus_{j=1}^p g_j(z_{B_j}),
    \label{eq:intro-structured-phase}
\end{align}
where the blocks $B_j$ are pairwise disjoint and each $g_j$ belongs to one of the block-function families in \Cref{tab:structured-phase-families}. Adding the Fan-Out blocks gives the corresponding families $\mathcal A_{3,f}$, $\mathcal A_f$, and $\mathcal T_f$. For any such phase family $\mathcal F$, let
$\mathsf{FH}_{\mathcal F}[D]$ denote the bounded-error single-output decision class represented by polynomial-size circuits
\begin{align}
    U
    &=
    H_{S_\ell}D_{h_\ell}\cdots
    H_{S_1}D_{h_1}H_{S_0},
    \qquad
    D_h\ket z=(-1)^{h(z)}\ket z,
    \label{eq:intro-structured-fourier-circuit}
\end{align}
with $\ell=O(D(n))$ and $h_j\in\mathcal F$.
\end{definition}
\noindent 
The disjoint-block restriction records the parallel structure of the original circuit and distinguishes these classes from Fourier hierarchies that permit unrestricted classical computation between Hadamard transforms~\cite{shi2005tradeoffs}. The formal statements are
\Cref{lem:forrelation-diagonal-primitives}.

\begin{theorem}[Structured Fourier normal forms, informal]
\label{thm:intro-structured-fourier}
For every depth bound $D(n)\geq1$,
\begin{align*}
    \HQNC[D]
    &=
    \mathsf{FH}_{\mathcal A_3}[D],
    &
    \HQAC[D]
    &=
    \mathsf{FH}_{\mathcal A}[D],
    &
    \HQTC[D]
    &=
    \mathsf{FH}_{\mathcal T}[D],
    \\
    \HQNC_f[D]
    &=
    \mathsf{FH}_{\mathcal A_{3,f}}[D],
    &
    \HQAC_f[D]
    &=
    \mathsf{FH}_{\mathcal A_f}[D],
    &
    \HQTC_f[D]
    &=
    \mathsf{FH}_{\mathcal T_f}[D].
\end{align*}
The correspondence is exact and preserves depth up to a constant factor in both directions.
\end{theorem}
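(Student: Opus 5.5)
We prove both inclusions for each of the six pairs by a direct gate-by-gate translation, using the conjugation identity \eqref{eq:intro-gate-phase} and its Fan-Out analogue. The core observation is that a real circuit uses only $H$, $X$, and the multiqubit primitive of its class, and that $H^2=I$ lets us insert $H_tH_t$ freely.

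\textbf{Direction $\HQAC[D]\subseteq\mathsf{FH}_{\mathcal A}[D]$ (and analogues).} Fix a depth-$d$ real circuit and process it layer by layer. Within a layer, the $X$ gates on a set $R$ become $H_R Z_R H_R$, and $Z_R$ is the phase $\bigoplus_{i\in R} z_i$, which is a linear term. A gate $V_f$ with target $t$ becomes $H_t\,(H_tV_fH_t)\,H_t$, where the middle factor is the block phase $z_t f(z_S)$ on the block $B=S\cup\{t\}$. For generalized Toffoli this phase is $\prod_{i\in B}z_i$. For Threshold it is exactly the tabulated $g$. For Fan-Out with targets $S$ we conjugate every target, and $H_S\,\mathrm{FO}\,H_S$ is the phase $z_c\bigoplus_{i\in S}z_i$.

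Write each layer as $L=H_{A}\,D\,H_{A}\,H_{A'}$, where:
\begin{itemize}[nosep]
\item $A$ is the set of qubits to be conjugated;
\item $A'$ is the set of qubits carrying an original $H$ gate;
\item $D$ is the product of the disjoint-block phases and the linear terms.
\end{itemize}
These factors commute because they act on disjoint qubits, except that a qubit carrying an $H$ gate carries no other gate in that layer. So each layer has the form $H_{S}D_hH_{S'}$ with $h$ in the family. Because blocks in a layer are disjoint, $h$ is structured. Adjacent Hadamard layers then merge via $H_{S}H_{S'}=H_{S\triangle S'}$, giving the alternating form \eqref{eq:intro-structured-fourier-circuit} with $\ell\le d$. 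Bounded Toffoli of width $\le 3$ gives blocks of size $2$ or $3$ (the CNOT case gives $|B|=2$), which matches $\mathcal A_3$. Input preparation, ancilla initialization, and the output measurement are untouched, so acceptance probabilities are identical.

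\textbf{Direction $\mathsf{FH}_{\mathcal F}[D]\subseteq$ real class.} Each $D_h$ is decomposed into depth $O(1)$ as follows.
\begin{itemize}[nosep]
\item The constant $\alpha_0$ is a global phase and is discarded; since we only need acceptance probabilities, it can be ignored. If an exact representation is wanted, it is $XZX Z$ on any qubit.
\item The linear terms form one layer of $Z=HXH$, which takes depth $3$.
\item Each block phase $g_j(z_{B_j})$ of the form $z_t f(z_S)$ is implemented as $H_tV_fH_t$. A generalized or bounded Toffoli implements $\prod_{B}z_i$ by choosing any $t\in B$. Threshold uses its designated target. Fan-Out uses $H_S\,\mathrm{FO}\,H_S$.
\end{itemize}
All blocks are disjoint, so they fit in one layer. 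The linear $Z$'s may overlap a block, however. We handle this by placing them in a separate layer: diagonal operators commute, so splitting $D_h$ into two diagonal layers is harmless. Each $D_h$ therefore costs $O(1)$ depth, and each $H_{S_j}$ costs depth $1$.

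\textbf{Main obstacle.} The delicate point is the bookkeeping in the first direction. A qubit used as a Toffoli target in one layer and a control in the next gets conjugated in one layer and not the next. Similarly, a qubit may carry both an original $H$ gate and the conjugating $H$'s from adjacent layers. Merging via the symmetric difference handles both cases correctly. One must also check that the phase for a $|B|=1$ Toffoli (a CNOT with no controls, i.e. $X$) is treated as a linear term. Finally, the phase $z_tf(z_S)$ for Threshold and Fan-Out must be seen to match the table exactly. Polynomial size is preserved in both directions, since each gate maps to $O(1)$ gates or blocks.
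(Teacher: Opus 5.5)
Your proposal is correct and follows essentially the same route as the paper's proof. That proof is a gate-by-gate translation: Hadamard-conjugate the targets (all targets for Fan-Out) to get disjoint block phases within each layer, and conversely split each phase layer into parallel block gates, a separate affine $Z=HXH$ layer, and $(XZ)^2=-I$ for the sign, with acceptance probabilities unchanged since the input and output qubit are retained; your symmetric-difference merging of Hadamard layers even gives exactly one phase layer per original layer, slightly sharper than the paper's $O(d+1)$ with inserted identity layers.
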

\noindent Combining this exact correspondence with the hierarchy equalities above gives corresponding normal forms for the standard and universal classes
(\Cref{thm:standard-forrelation-normal-forms}). For example, for every $D(n)\geq1$,
$\QAC[D]=\UQAC[D]=\HQAC[D]=\mathsf{FH}_{\mathcal A}[D]$.

We now connect this normal form to Forrelation, introduced by Aaronson and Ambainis as a canonical quantum problem built from alternating Boolean phase oracles and global Hadamard transforms~\cite{aaronson2015forrelation}. At the circuit level, a $k$-fold Forrelation instance consists of diagonal Boolean phases interleaved with full $H^{\otimes m}$ layers, and its value is the resulting all-zero matrix element. Our structured Fourier circuits already have the same alternating phase--Hadamard form, except that their Hadamard layers may act on only subsets of the qubits. To obtain a genuine Forrelation circuit, we use the Aaronson--Ambainis construction~\cite{aaronson2015forrelation} to replace each partial Hadamard layer by full Hadamards together with additional diagonal phases. These added phases remain within the same structured families and increase the number of layers by only a constant factor. For Boolean phase functions $h_1,\ldots,h_k$, the resulting scalar Forrelation quantity is
\begin{align}
    \Phi_m(h_1,\ldots,h_k)
    &:=
    \bra{0^m}
    H^{\otimes m}D_{h_k}H^{\otimes m}
    \cdots D_{h_1}H^{\otimes m}
    \ket{0^m}.
    \label{eq:intro-scalar-forrelation}
\end{align}
\noindent The remaining step is to relate the usual single-output acceptance probability to the all-zero amplitude that defines Forrelation. We show that this can be done without losing the decision gap. The formal statement is \Cref{prop:forrelation-acceptance-bias}. 

\begin{proposition}[Scalar Forrelation representation, informal]
\label{prop:intro-forrelation-bias}
Let $U$ be a structured Fourier circuit with $\ell$ phase layers. For every classical input $x$, one can construct a Forrelation instance with $O(\ell+1)$ phase functions from the same structured family whose scalar value is exactly the acceptance bias of $U$,
\begin{align}
    \Phi_m(\boldsymbol g_x)
    &=
    1-2p_U(x).
    \label{eq:intro-forrelation-bias}
\end{align}
Moreover, the nonlinear phase functions are fixed by the circuit and do not depend on $x$. The input $x$ appears only through affine terms in the first and last phase layers.
\end{proposition}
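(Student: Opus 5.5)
The plan is to express the acceptance bias as an expectation of $Z_{\mathrm{out}}$ and then convert that expectation into an all-zero amplitude of a single real circuit. Write $\ket{x}\ket{0^{a}}=X^{x}\ket{0^m}$ and set $\psi_x=U X^x\ket{0^m}$. Since $Z_{\mathrm{out}}=I-2\Pi_{\mathrm{out}}$, we have
\[
1-2p_U(x)=\bra{\psi_x}Z_{\mathrm{out}}\ket{\psi_x}=\bra{0^m}X^xU^\dagger Z_{\mathrm{out}}UX^x\ket{0^m}.
\]
This is already a matrix element of the form $\bra{0^m}W\ket{0^m}$ with
\[
W=X^xU^\dagger Z_{\mathrm{out}}UX^x .
\]
The amplitudes of a structured Fourier circuit are real (Hadamards and $\pm1$ phases only), so no complex conjugation issues arise, and $U^\dagger$ is simply the reversed product $H_{S_0}D_{h_1}H_{S_1}\cdots D_{h_\ell}H_{S_\ell}$. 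Concatenating, $W$ is an alternation of $2\ell+1$ phase layers and Hadamard layers. The middle layer is $H_{S_\ell}Z_{\mathrm{out}}H_{S_\ell}$, which merges with the neighboring Hadamards: if $\mathrm{out}\in S_\ell$ it becomes $X_{\mathrm{out}}$, and otherwise it is the affine phase $z_{\mathrm{out}}$.

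Next I would move the input into affine phases. I conjugate $X^x$ by a full Hadamard layer, using $X^x=H^{\otimes m}D_{x\cdot z}H^{\otimes m}$, so that $X^x$ becomes the affine phase $z\mapsto x\cdot z$ sandwiched between full Hadamards. This gives an $x$-dependent affine term at the two ends, with $H^{\otimes m}\ket{0^m}$ at the extremes. That is exactly the Forrelation boundary, because $\Phi_m$ starts and ends with $H^{\otimes m}$ acting on $\ket{0^m}$. A lone $X_{\mathrm{out}}$ in the middle is handled the same way, as full Hadamards around the affine phase $z_{\mathrm{out}}$. Affine phases are allowed in every structured family, since \eqref{eq:intro-structured-phase} has free linear terms $\alpha_iz_i$ and a constant.

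The remaining and main step is to replace each partial layer $H_{S_j}$ by full $H^{\otimes m}$ layers interleaved with phases from the same family, using only $O(1)$ extra layers per partial layer. This is the Aaronson--Ambainis trick cited in the overview. I would first try an identity on a single qubit outside $S$: the identity gate equals a product of full Hadamards alternated with phases, for instance
\[
I=H\,D_{0}\,H \quad\text{and}\quad I=(HSH)\cdots,
\]
but only real $\pm1$ phases are allowed. The cleaner route is to pad with fresh ancilla qubits. On the qubits outside $S_j$ I apply $H$ twice across two consecutive full layers, with the trivial phase in between; this gives
\[
H^{\otimes m}D_{0}H^{\otimes m}=I \quad\text{on } \overline{S_j},
\]
but that also undoes $H$ on $S_j$. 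So instead I use three full Hadamard layers with a middle phase that is linear on $\overline{S_j}$ only in a controlled way. Concretely, I introduce for each qubit $i\notin S_j$ a partner ancilla $i'$ initialized to $\ket0$ and use the swap-like identity
\[
H_iH_{i'}\,D_{z_iz_{i'}}\,H_iH_{i'}D_{z_iz_{i'}}H_iH_{i'}D_{z_iz_{i'}}=\mathrm{SWAP}_{i,i'}
\]
up to Hadamards (CZ conjugated appropriately yields SWAP). This moves the untouched qubits aside while the full Hadamard acts. The phases $z_iz_{i'}$ are disjoint width-2 AND blocks, which lie in every family including $\mathcal A_3$, and the extra layers merge with adjacent phases. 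I expect verifying that these blocks can be merged with the existing disjoint blocks without violating disjointness to be the main obstacle. If direct merging fails, I would separate them into their own layers, which still costs only $O(1)$ per original layer.

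Finally I would check the ancilla bookkeeping: partner qubits start and end in $\ket0$, so the all-zero matrix element is preserved. The nonlinear phases are those of $U$ together with the fixed $z_iz_{i'}$ blocks and are independent of $x$, while $x$ enters only through the first and last affine phases. This yields $O(\ell+1)$ phase functions and $\Phi_m(\boldsymbol g_x)=1-2p_U(x)$.
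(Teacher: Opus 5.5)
Your outer structure matches the paper's. You use the echo identity $1-2p_U(x)=\bra{0^m}X^{u_x}U^\dagger Z_{\mathrm{out}}UX^{u_x}\ket{0^m}$, absorb the input into affine boundary phases via $X^{u_x}=H^{\otimes m}D_{u_x\cdot z}H^{\otimes m}$, and convert the partial Hadamard layers by the Aaronson--Ambainis trick. The gap is in that conversion, which you correctly call the crux but do not carry out.

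Your identity $(H\otimes H)\,\mathrm{CZ}\,(H\otimes H)\,\mathrm{CZ}\,(H\otimes H)\,\mathrm{CZ}=\mathrm{SWAP}$ is correct, and it holds exactly, not merely up to Hadamards. The partner-ancilla scheme you build on it, however, does not yield your conclusion, for two reasons.

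\begin{itemize}
\item Each full layer $H^{\otimes m}$ also acts on the partner ancillas. In the three-layer gadget for $H_{S_j}$, every ancilla not currently paired receives $H^3=H$; this includes the partner of a qubit in $S_j$ and any partner reserved for another layer. So the assertion that partner qubits start and end in $\ket0$ is false as stated. Exactness of $\Phi_m(\boldsymbol g_x)$ would need an argument that every ancilla stays unentangled and sees an even number of Hadamards in total, and you do not give one.
\item The gadget leaves a genuine $\mathrm{SWAP}_{i,i'}$; it does not temporarily move qubit $i$ aside. Afterwards the data of qubit $i$ sits on wire $i'$. Every later phase block, every later pairing, and the wire carrying $Z_{\mathrm{out}}$ must therefore be relabeled, and you never do this.
\end{itemize}

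The issue you flag as the main obstacle, merging the new $z_iz_{i'}$ blocks, is actually the easy part. Overlapping phase layers can always be separated by inserting $H^{\otimes m}D_0H^{\otimes m}=I$.

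The repair is to drop per-qubit partners and pair the wires that must \emph{not} receive a Hadamard among themselves.
\begin{itemize}
\item Pad to odd width with one extra $\ket0$ wire.
\item If $|\overline{S_j}|$ is even, pair up $\overline{S_j}$ and place CZ on the pairs before and between three full layers. By your identity the pairs undergo exactly a SWAP, i.e.\ a relabeling, while the unpaired wires, namely $S_j$, receive $H^3=H$.
\item If $|\overline{S_j}|$ is odd, then $|S_j|$ is even. Write $H_{S_j}=H^{\otimes m}H_{\overline{S_j}}$ and apply the same gadget to $H_{\overline{S_j}}$.
\end{itemize}
The padding wire never receives a net Hadamard or phase, so the all-zero amplitude is preserved exactly. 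All added phases are disjoint width-two AND blocks, which lie in every family.

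It is also cleaner to convert $U$ \emph{before} forming the echo, as the paper does. The resulting full-Hadamard circuit $W(h_1,\dots,h_r)$ begins with $H^{\otimes m}$, so $H^{\otimes m}X^{u_x}=D_{u_x\cdot z}H^{\otimes m}$ folds $x$ directly into $h_1$. The list $(h_1^{(u_x)},h_2,\dots,h_r,h_{\mathrm{out}},h_r,\dots,h_2,h_1^{(u_x)})$ is then automatically symmetric.

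Your partner idea could be rescued the same way. The SWAP identity holds for any partner state, so ancillas only ever undergo Hadamards and relabelings in the forward circuit, and the echo undoes them. That argument, however, is exactly what your proposal is missing.
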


\noindent The idea is to express the acceptance bias as the expectation of $Z$ on the output qubit. After converting the circuit to full-Hadamard form, write its unitary as $W$. Then
\begin{align}
    1-2p_U(x)
    &=
    \bra{u_x}W^\dagger Z_{\mathrm{out}}W\ket{u_x}.
\end{align}
The forward circuit, the output $Z$ phase, and the reversed circuit give the phase sequence of a Forrelation instance. To put this expression into the standard all-zero Forrelation form, we absorb the basis input $\ket{u_x}$ into the boundary phases. Thus each input $x$ gives a slightly different Forrelation instance $\boldsymbol g_x$. Crucially, however, the nonlinear structured phases are completely independent of $x$. The input dependence is confined to an affine boundary phase consisting only of single-qubit $Z$ phases, and increases the number of phase layers by at most a constant. In particular, a constant-depth circuit still gives a constant-fold Forrelation instance.

This distinction is important for interpreting the normal form. It does not represent the decision problem by one fixed Forrelation circuit that takes $x$ as an ordinary input. Nevertheless, any lower bound or structural characterization that applies uniformly to this class of structured Forrelation circuits, even in the presence of arbitrary affine boundary phases, would immediately constrain the corresponding shallow decision circuits. The reduction also preserves the bounded-error gap. If $U$ accepts with probability at least $2/3$, then $\Phi_m(\boldsymbol g_x)\leq-1/3$, while if it accepts with probability at most $1/3$, then $\Phi_m(\boldsymbol g_x)\geq1/3$
(\Cref{prop:forrelation-acceptance-bias} and \Cref{cor:forrelation-scalar-normal-form}).

\subsection{Discussion and Open Problems}
\label{sec:discussion}

Taken together, our results show that much of the apparent flexibility of the standard shallow-depth quantum circuit model is not essential. Arbitrary single-qubit rotations can be compiled away without increasing depth in the wide-gate models and complex amplitudes can likewise be removed for bounded-error decision computation. The resulting real circuits have a substantially more rigid structure, admitting a structured Fourier and Forrelation description, in which the available multiqubit primitives directly determine the allowed phase functions. This raises two natural directions: understanding how far the depth-preserving simulations can be extended, and determining whether the resulting normal forms can be leveraged to prove new shallow quantum circuit lower bounds. We conclude with several open questions along these lines.

\begin{open}
\label{open:intro-fixed-fanout}
Do the equalities
$\QNC_f^0=\UQNC_f^0=\HQNC_f^0$
hold?
\end{open}
\noindent This question has two parts. First, the equality
$\QNC_f^0=\UQNC_f^0$ asks whether the Takahashi--Tani collapse survives over a fixed universal gate set. A positive answer would imply
\[
    \QNC_f^0=\UQNC_f^0=\QAC_f^0=\UQAC_f^0=\QTC_f^0=\UQTC_f^0.
\]
One possible route is a constant-depth, polynomial-size approximation of generalized Toffoli using bounded-arity gates, Fan-Out, and $\{H,T,X\}$. Another is to construct a reusable phase catalyst that can itself be prepared coherently in constant depth from $\ket{0}$-initialized ancillae. The Takahashi--Tani construction does not give the former because it uses unrestricted single-qubit rotations (\Cref{fact:takahashi-tani-simulation}), while our current catalyst preparation has depth $O(\log\log n)$ rather than $O(1)$.

Second, the equality $\UQNC_f^0=\HQNC_f^0$ asks whether complex amplitudes can also be removed at constant depth in the bounded-arity Fan-Out model. Together, the two equalities would give
\[
    \QNC_f^0=\UQNC_f^0=\HQNC_f^0=\QAC_f^0=\UQAC_f^0=\HQAC_f^0=\QTC_f^0=\UQTC_f^0=\HQTC_f^0.
\]
Conversely, a separation between the standard and fixed-gate decision classes would show that unrestricted single-qubit rotations or complex amplitudes provide a genuine constant-depth computational resource.

\begin{open}
\label{open:intro-parity}
Can one prove
$\mathrm{Parity},\mathrm{Majority}\notin\HQAC^0=\mathsf{FH}_{\mathcal A}[1]$?
\end{open}

\noindent By our real-simulation and structured Fourier reductions, establishing $\QAC^0=\HQAC^0$, either lower bound would immediately imply the corresponding longstanding lower bound $\mathrm{Parity},\mathrm{Majority}\notin\QAC^0$. The advantage of working in $\mathsf{FH}_{\mathcal A}[1]$ is that the circuit has a much more structuted form. It consists only of a constant number of Hadamard layers separated by phase layers built from disjoint AND blocks, with no arbitrary single-qubit gates or complex amplitudes. The scalar Forrelation representation further reduces its decision bias to a constant-fold Forrelation quantity whose nonlinear phase blocks are fixed independently of the input and whose input dependence appears only through affine boundary phases. For example, if such a circuit computes Parity with error at most $1/3$, then $(-1)^{\mathrm{Parity}(x)}\Phi_m(\boldsymbol g_x)\geq\frac13$ for every $x$.
One possible route is to establish Fourier-growth or correlation bounds for these structured Forrelation quantities, combining ideas from Forrelation lower bounds with Fourier-analytic techniques for $\QAC^0$~\cite{gretta2026fourier}. Unlike general Forrelation, however, the phase functions here have the additional disjoint-AND structure inherited from the circuit. Exploiting this restriction may make it possible to show that constant-fold structured Forrelation cannot maintain the high-order correlations required by Parity or Majority.

\begin{open}
\label{open:intro-resources}
Can the depth reductions developed here lead to practical reductions in fault-tolerant runtime or space--time cost?
\end{open}

\noindent Recent progress in magic-state preparation suggests that logical $T$ gates may become substantially cheaper, making $T$-depth, rather than only $T$-count, an increasingly relevant measure of fault-tolerant runtime~\cite{fowler2013time,litinski2019game,chamberland2020very,itogawa2024even,gidney2024magic,daguerre2025code}. This raises the possibility that the parallel synthesis techniques developed here could yield practical savings beyond their asymptotic depth bounds. A significant obstacle, however, is that our constant-depth constructions rely on generalized Toffoli gates and other wide reversible operations. When compiled into fault-tolerant Clifford+$T$ circuits, these operations may themselves incur substantial $T$-depth, $T$-count, routing, or magic-state costs. Working out the constants and end-to-end fault-tolerant implementations could identify regimes in which the increased parallelism genuinely reduces physical runtime or space--time volume.

\section{Acknowledgements and AI Statement}
The author is supported by NSF grant 2311733 and DOE grant DE-SC0024124. The author would like to thank Joe Aulicio, Adam Bene-Watts, Thiago Bergamaschi, John Bostanci, Arjan Cornelissen, Yongshan Ding, Uma Girish, Jackson Morris, Chris Pattison, and Gregory Rosenthal for insightful discussions. In particular, Gregory Rosenthal initially posed the question of a constant-depth Solovay-Kitaev gate-synthesis procedure, Thiago Bergamaschi suggested the question of studying ``fault-tolerant" \QACZ, and Adam Bene-Watts highlighted the connection to ``structured" constant-fold Forrelation. The author would also like to thank Angelos Pelecanos, Jack Spileki, Umesh Vazirani, and John Wright for helpful feedback on the manuscript.

\paragraph{AI Disclosure.} GPT-5.6 Sol and GPT-6 Astra were used throughout the project for technical discussion, literature searches, proof checking, working through details of author-proposed ideas, and writing support. The research questions, gate-synthesis constructions, broader complexity-theoretic directions, and overall organization of the paper originated with the author. The main exception is the depth-preserving real-simulation result. The author identified the goal of extending real simulation to shallow circuits and the shared-ancilla serialization in existing constructions as the central obstacle. GPT-6 Astra suggested the core parallel-simulation approach for avoiding this bottleneck and helped develop the technical ingredients needed to preserve depth. The result was subsequently refined and checked through further interaction with the author. GPT-6 was also used in writing the manuscript, largely through iterative exchanges in which the author supplied the technical content, structure, and intended emphasis while the models suggested revisions or alternative formulations. The final manuscript was heavily written, edited, and revised by the author, who independently verified all mathematical arguments, citations, and claims included in the paper.

\section{Block-Encoding Based Single-Qubit Gate Synthesis}
\label{sec:be_gate_synth}

We now develop our single-qubit synthesis construction. Rather than
approximating the target rotation directly, we first realize
$\frac12 R_z(\theta)$ as a block of a larger unitary and then apply a
single round of oblivious amplitude amplification to remove the factor
of $1/2$. This separates the synthesis problem into a simple
block-encoding construction followed by a fixed amplification
procedure.

Our approach is closely related to two standard paradigms for quantum
gate synthesis. Phase-kickback constructions, originating with
Kitaev--Shen--Vyalyi and subsequently developed in the fault-tolerant
synthesis literature, realize dyadic approximations to phase rotations
using Fourier-state ancillae and reversible arithmetic. More generally,
linear-combination-of-unitaries constructions embed weighted sums of
simple unitaries into a larger unitary, after which oblivious amplitude
amplification can recover the desired operation
~\cite{BCCKS14,BCCKS15}. 

Here we use a particularly simple dyadic
block encoding tailored to shallow circuits. Namely, since
\begin{align}
    R_z(\theta)
    &=
    \begin{pmatrix}
        e^{-i\theta/2}&0\\
        0&e^{i\theta/2}
    \end{pmatrix},
\end{align}
it suffices to approximate the single complex number
$\frac12e^{i\theta/2}$. We express this number using four nonnegative
dyadic weights associated with the fixed unitaries
\begin{align}
    I,\qquad -I,\qquad -iZ,\qquad iZ.
\end{align}
Because the four weights sum to one, they can be realized as the
relative sizes of four intervals partitioning a uniform superposition
over $2^m$ computational-basis addresses. If the corresponding
interval sizes are $n_0,n_1,n_2,n_3$, then the resulting distinguished
block is
\begin{align}
    \frac1{2^m}
    \left(
        n_0 I-n_1 I-i n_2 Z+i n_3 Z
    \right)
    &=
    \begin{pmatrix}
        \overline z&0\\
        0&z
    \end{pmatrix},
    \qquad
    z:=
    \frac{(n_0-n_1)+i(n_2-n_3)}{2^m}.
\end{align}
Thus choosing
$z\approx\frac12e^{i\theta/2}$ yields an approximate block encoding of
$\frac12R_z(\theta)$. The normalization by $1/2$ is chosen precisely so
that one round of oblivious amplitude amplification maps the ideal
block to $R_z(\theta)$.

The construction therefore has three ingredients. First, we choose
appropriate dyadic interval sizes $n_0,n_1,n_2,n_3$. Second, we
implement the corresponding interval-controlled SELECT operation using
reversible comparisons. Finally, we amplify the resulting block
encoding. As we will see, the synthesis depth is consequently governed
by the depth required for a constant number of reversible comparisons
and computational-basis reflections.

\subsection{Dyadic Approximation of \texorpdfstring{$\frac{1}{2}e^{i\theta/2}$}{the Complex Value}}

We begin with the purely numerical part of the construction, i.e. choosing
four integer interval sizes whose associated dyadic complex coefficient
approximates $\frac12e^{i\theta/2}$. The implementation of these
intervals as a block encoding is deferred to the later
subsections.

\begin{lemma}[Integer Approximation] \label{thm:int_approx}
Fix $\theta \in \mathbb{R}$ and $\eps\in(0,1)$. For any integer
$m \geq \lceil\log_2(2\sqrt{2}/\eps)\rceil$, there exist nonnegative
integers $n_0,n_1,n_2,n_3$ satisfying
\begin{align}
    n_0+n_1+n_2+n_3=2^m
\end{align}
such that
\begin{align}
    \left|
        \frac{(n_0-n_1)+i(n_2-n_3)}{2^m}
        -\frac{1}{2}e^{i\theta/2}
    \right|
    \leq \eps.
\end{align}
\end{lemma}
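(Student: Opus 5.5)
Our plan is to round the real and imaginary parts of the target separately to the dyadic grid, and then split each rounded integer into a difference of two nonnegative integers. The key point is that the target has modulus $\tfrac12$, which leaves enough total budget $2^m$ to absorb the split.

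Write $a:=\tfrac12\cos(\theta/2)$ and $b:=\tfrac12\sin(\theta/2)$. Set $N:=2^m$ and choose integers $A:=\operatorname{round}(Na)$ and $B:=\operatorname{round}(Nb)$. Then
\[
|A/N-a|\le \tfrac{1}{2N},\qquad |B/N-b|\le \tfrac{1}{2N}.
\]
The complex error is therefore at most $\tfrac{\sqrt2}{2N}=\tfrac{1}{\sqrt2\,N}$. The hypothesis $m\ge\lceil\log_2(2\sqrt2/\eps)\rceil$ gives $N\ge 2\sqrt2/\eps$, so the error is at most $\eps/4\le\eps$. This leaves slack.

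Next we realize $A=n_0-n_1$ and $B=n_2-n_3$ with nonnegative integers summing to $N$. Take $n_0=\max(A,0)$, $n_1=\max(-A,0)$, and similarly for $n_2,n_3$. This gives $n_0+n_1+n_2+n_3=|A|+|B|$. We need $|A|+|B|\le N$ with $N-(|A|+|B|)$ even, so that the remainder $R:=N-|A|-|B|$ can be added as $R/2$ to both $n_0$ and $n_1$ without changing $A$.

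For the size bound, note $|a|+|b|\le \tfrac{\sqrt2}{2}$, so $|A|+|B|\le N\tfrac{\sqrt2}{2}+1\le N$ once $N\ge 4$. This holds since $\eps<1$ forces $m\ge 2$.

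The parity condition is the only real obstacle. Since $N$ is even for $m\ge1$, we need $A+B$ even. If $A+B$ is odd, we replace $A$ by $A\pm1$, choosing the sign toward $Na$, so that $|A/N-a|\le \tfrac{3}{2N}$. This breaks the round-to-nearest guarantee, so the error bound must be redone. The complex error becomes at most $\sqrt{(3/2)^2+(1/2)^2}/N=\sqrt{10}/(2N)$. With $N\ge2\sqrt2/\eps$ this is at most $\tfrac{\sqrt{10}}{4\sqrt2}\eps=\tfrac{\sqrt5}{4}\eps<\eps$, which is fine. The sum bound becomes $|A|+|B|\le N\tfrac{\sqrt2}{2}+2\le N$, which needs $N\ge 7$, i.e. $m\ge 3$. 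The hypothesis gives $m\ge\lceil\log_2 2\sqrt2\rceil=2$, so the case $m=2$ must be checked directly, or we can argue more carefully there. With $N=4$ the tightest case is $|A|+|B|\le 4$, which holds because $|A|,|B|\le \operatorname{round}(2)+1$ and the adjustment moves only one coordinate.

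Setting $n_0=\max(A,0)+R/2$, $n_1=\max(-A,0)+R/2$, $n_2=\max(B,0)$ and $n_3=\max(-B,0)$ then completes the construction.
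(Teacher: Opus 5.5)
Your construction is correct and takes a different route from the paper.

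The paper pads \emph{before} rounding. It writes $\tfrac12e^{i\theta/2}=(p_0-p_1)+i(p_2-p_3)$ exactly, using the probability vector $p_{0,1}=(1\pm\cos(\theta/2))/4$ and $p_{2,3}=(1\pm\sin(\theta/2))/4$. It then applies largest-remainder rounding to $(2^mp_j)_j$. Nonnegativity and the exact total $2^m$ are then automatic, and each $n_j$ is within $1$ of $2^mp_j$. The error is therefore below $2\sqrt2/2^m$, which is exactly where the threshold $\lceil\log_2(2\sqrt2/\eps)\rceil$ comes from.

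You pad \emph{after} rounding the two coordinates. This buys a better constant: your error is at most $\tfrac{\sqrt5}{4}\eps$, and in fact at most $\sqrt5/(2N)$, since one step from $\operatorname{round}(Na)$ toward $Na$ costs at most $1/N$. The price is bookkeeping the paper avoids. You need the budget check $|A|+|B|\le N$, which uses $|a|+|b|\le\sqrt2/2<1$. You also need the parity repair, so that the slack $R$ splits evenly between $n_0$ and $n_1$. The offsets $1/4$ in the paper's $p_j$ play the role of your $R/2$. Because the paper rounds the four nonnegative parts rather than the differences, no parity constraint arises there.

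One step does not go through as written: your treatment of $m=2$. Bounding $|A|$ and $|B|$ individually (by $2$ before the repair, $3$ for the moved coordinate after) only gives $|A|+|B|\le5$, not $4$. The fix is one line, in either of two ways. First, after the repair $|A|+|B|$ is even, so $\le5$ forces $\le4$. Second, and better, keep integrality throughout. Before the repair $|A|+|B|\le\lfloor N\sqrt2/2\rfloor+1$, so after it $|A|+|B|\le\lfloor N\sqrt2/2\rfloor+2$. This is at most $N$ for every $N\ge4$: since $N(1-\sqrt2/2)>1$, the integer $N-\lfloor N\sqrt2/2\rfloor$ is at least $2$. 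The second version removes the need for a separate small-$m$ case entirely.
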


\begin{proof}[Proof of \Cref{thm:int_approx}]
The proof has two steps. We first express
$\frac12e^{i\theta/2}$ exactly via four nonnegative coefficients
that sum to one. We then round these coefficients to multiples of
$2^{-m}$ while preserving their total sum.

To begin, we define
\begin{align}
    p_0&:=\frac{1+\cos(\theta/2)}{4},
    &
    p_1&:=\frac{1-\cos(\theta/2)}{4},
    \notag\\
    p_2&:=\frac{1+\sin(\theta/2)}{4},
    &
    p_3&:=\frac{1-\sin(\theta/2)}{4},
\end{align}
such that the differences $p_0-p_1$ and
$p_2-p_3$ encode the real and imaginary parts of the desired
complex number, i.e.
\begin{align}
    p_0-p_1
    &=
    \frac12\cos(\theta/2),
    &
    p_2-p_3
    &=
    \frac12\sin(\theta/2).
\end{align}
At the same time, each $p_j$ is nonnegative and $p_0+p_1+p_2+p_3=1$.
Hence,
\begin{align}
    (p_0-p_1)+i(p_2-p_3)
    =
    \frac12e^{i\theta/2}.
    \label{eq:exact-four-phase-decomposition}
\end{align}
Thus it remains only to approximate the probability distribution
$(p_0,p_1,p_2,p_3)$ by one whose probabilities are integer multiples
of $2^{-m}$.

Set $Q:=2^m$. We choose nonnegative integers
$n_0,n_1,n_2,n_3$ summing exactly to $Q$ by the standard
largest-remainder rounding procedure. Namely, first set
\begin{align}
    a_j:=\lfloor Qp_j\rfloor,
\end{align}
and let
\begin{align}
    r:=Q-\sum_{j=0}^3 a_j\in\{0,1,2,3\}.
\end{align}
Increase by one the $r$ values $a_j$ having the largest fractional
parts of $Qp_j$, and call the resulting integers $n_j$. By
construction,
\begin{align}
    \sum_{j=0}^3 n_j=Q
    \qquad\text{and}\qquad
    |n_j-Qp_j|<1
    \label{eq:coordinate-rounding-error}
\end{align}
for every $j\in\{0,1,2,3\}$. The real and imaginary parts of the desired approximation therefore
each incur error less than $2/Q$:
\begin{align}
    \left|
        \frac{n_0-n_1}{Q}-(p_0-p_1)
    \right|
    &<
    \frac{2}{Q},
    \\
    \left|
        \frac{n_2-n_3}{Q}-(p_2-p_3)
    \right|
    &<
    \frac{2}{Q}.
\end{align}
Combining these two errors with
\Cref{eq:exact-four-phase-decomposition} gives
\begin{align}
    \left|
        \frac{(n_0-n_1)+i(n_2-n_3)}{2^m}
        -
        \frac12e^{i\theta/2}
    \right|
    &<
    \frac{2\sqrt2}{2^m}
    \leq
    \eps,
\end{align}
where the last inequality follows from
$m\geq\lceil\log_2(2\sqrt2/\eps)\rceil$.
\end{proof}

\subsection{Computing Value Thresholding Function \texorpdfstring{$\thresh(\x)=\mathbf{1}[\val(\x)< M]$}{f}}

For an $m$-bit binary string $\x=x_{m-1}\cdots x_1x_0\in\{0,1\}^m$
where $x_{m-1}$ is the most significant bit, define the integer encoded by the Boolean string as
\begin{equation}
    \val(\x):=\sum_{j=0}^{m-1}2^j x_j.
\end{equation}
For any hardwired integer $M\in\{0,1,\ldots,2^m\}$, we define the Boolean value thresholding function\footnote{This is not the typical thresholding function studied in circuit complexity, which instead considers whether the Hamming weight of the $m$-bit string $\x$ is less than an integer $M \in [m]$, i.e. $\text{thresh}_M^m(x)=\mathbf{1}[|\x|<M]$.} as
\begin{equation} \label{eq:lt-definition}
    \thresh(x) := \mathbf{1}[\val(\x)<M].
\end{equation}
A reversible circuit cleanly computes
$\thresh$ if, for every
$\x\in\{0,1\}^m$ and $b\in\{0,1\}$, it implements
\begin{equation}
    \ket{\x}\ket{b}\ket{0^a} \longmapsto \ket{\x}\ket{b\oplus \thresh(\x)}
    \ket{0^a}
    \label{eq:clean-lt}
\end{equation}
for some number of ancillae $a$. By linearity, this clean
circuit computes the value threshold coherently on arbitrary
superpositions.

\subsubsection{\texorpdfstring{$O(\log(m))$-Depth, $O(m)$-Space Implementation via $\{X, \cnot,\toff\}$}{Log-Depth Implementation}}

We first implement $\thresh$ exactly and cleanly over $\{X,\cnot,\toff\}$ in depth $O(\log m)$ and space $O(m)$. Standard Clifford+$T$ decompositions give the same asymptotic bounds over $\{H,T,CZ\}$.

\begin{lemma}[$\{X, \cnot, \toff\}$ Value Thresholding]
\label{lem:comparison-no-fanout}
For every $m\geq 1$ and every hardwired
$M\in\{0,1,\ldots,2^m\}$, the value thresholding function
$\thresh$ can be computed exactly and cleanly by a
reversible circuit containing only the gates $\{X, \cnot, \toff\}$ with $O(\log m)$ depth, $O(m)$ size, and $O(m)$ space.
\end{lemma}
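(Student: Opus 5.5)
Our plan is to implement a standard comparator against a hardwired constant as a parallel-prefix (balanced-tree) computation over the bits of $\x$. We dispose of the edge cases first. If $M=2^m$ then $\thresh\equiv 1$ and a single $X$ on the target suffices. If $M=0$ then $\thresh\equiv 0$ and the empty circuit suffices.

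For $0<M<2^m$, write $M=M_{m-1}\cdots M_0$ in binary. Then $\val(\x)<M$ holds exactly when there is a most significant position $j$ at which $\x$ and $M$ differ, and at that position $x_j=0$, $M_j=1$. Position by position, we form two bits. The \emph{equality bit} is $e_j:=\mathbf 1[x_j=M_j]$, which is either $x_j$ or $\bar x_j$. The \emph{less-than bit} is $\ell_j:=\bar x_j M_j$, which is either $\bar x_j$ or $0$. Since $M$ is hardwired, both are obtained in depth $O(1)$ from the input bits using $X$ gates and CNOT copies into fresh ancillae.

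Next we combine blocks with the associative operator
\[(L,E)\circ(L',E') = (L\oplus E L',\,EE'),\]
where $(L,E)$ describes the more significant block. The XOR is legitimate because $L$ and $EL'$ are mutually exclusive: $L=1$ forces $E=0$. For a block, $(L,E)$ records whether the block is already decided ``less'' and whether it is entirely equal. Each combine step writes its outputs into fresh ancillae: one Toffoli computes $EE'$, and one Toffoli from $(E,L')$ into a fresh copy of $L$ (made by CNOT) computes $L\oplus EL'$. This takes constant depth per level.

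Arranging these combines as a balanced binary tree over the $m$ positions gives $\lceil\log_2 m\rceil$ levels and $O(m)$ combine nodes. The root's $L$ equals $\thresh(\x)$; we CNOT it into the output qubit $b$. We then run the whole tree computation in reverse to uncompute every ancilla to $\ket{0}$. Because the forward pass only writes into fresh zeroed ancillae and never modifies $\x$, the mirror image restores them exactly. The total is $O(\log m)$ depth, $O(m)$ gates and $O(m)$ ancillae.

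The step that needs care is fan-out within the tree. If every node's outputs are consumed only by its parent and each level writes into fresh registers, then each wire is read by at most one gate per layer. The only reuse is reading $E$ twice in one combine (once for $EE'$ and once for $EL'$), which we serialize into two consecutive layers. Hence no unbounded Fan-Out is needed, and the depth per level stays constant. Correctness reduces to checking associativity of $\circ$ and the block semantics by induction on the tree.
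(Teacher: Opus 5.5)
Your proposal is correct and matches the paper's proof essentially step for step. Both use the same two-bit (equality, less-than) leaf summaries determined by the hardwired bits of $M$, and the same merge rule $L\oplus EL'$, justified by the mutual exclusivity of $L$ and $E$. Both implement each node with one Toffoli for $EE'$ plus a CNOT-then-Toffoli for $L\oplus EL'$, arranged in a balanced tree with $\lceil\log_2 m\rceil$ levels, and then copy the root into the output and reverse the computation. The only cosmetic difference is that the paper explicitly pads $\x$ and $M$ with leading zeroes to length $2^{\lceil\log_2 m\rceil}$, so the tree is perfectly balanced, whereas you handle a non-power-of-two $m$ implicitly.
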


\begin{proof}[Proof of \Cref{lem:comparison-no-fanout}]
The usual way to compare two binary integers is to scan from the most
significant bit downward until finding the first position at which they
differ. This is inherently sequential if implemented directly. Our
construction parallelizes this procedure using a balanced binary tree. The key observation is that, for any contiguous block of bits, we do
not need to remember the values of the two substrings themselves. It
suffices to retain a two-bit \emph{comparison summary} recording whether
the two substrings are equal and whether the substring of $\x$ is
smaller than the corresponding substring of $\M$. Two neighboring
summaries can then be combined in constant depth to obtain the summary
of their union. Repeating this combination in a balanced tree yields
the desired logarithmic depth.

The cases $M=0$ and $M=2^m$ give the constant functions zero and one,
respectively, so assume $0<M<2^m$. Set
$\ell:=\lceil\log_2m\rceil$ and $m':=2^\ell$, and pad $\x$ and the
binary expansion $\M$ of $M$ with leading zeroes to length $m'$.
For $0\leq r\leq\ell$ and $0\leq j<m'/2^r$, define the contiguous
block
\begin{align}
    B_{r,j}
    &:=
    \{j2^r,\ldots,(j+1)2^r-1\}.
\end{align}
Thus level $r$ consists of blocks of $2^r$ consecutive bits. For each
block, let
\begin{align}
    \val_{r,j}(\x)
    &:=
    \sum_{t=0}^{2^r-1}2^t x_{j2^r+t},
    \label{eqn:int}
\end{align}
and define its comparison summary by
\begin{align}
    E_{r,j}
    &:=
    \mathbf1[\val_{r,j}(\x)=\val_{r,j}(\M)],
    \label{eq:block-equal}\\
    L_{r,j}
    &:=
    \mathbf1[\val_{r,j}(\x)<\val_{r,j}(\M)].
    \label{eq:block-less}
\end{align}
The pair $(E_{r,j},L_{r,j})$ therefore encodes all three possible
relations between the two block values:
\[
    (1,0)\equiv {=},
    \qquad
    (0,1)\equiv {<},
    \qquad
    (0,0)\equiv {>}.
\]
At the root, the block contains the entire padded input, so
\begin{equation}
    L_{\ell,0}=\thresh(\x).
    \label{eq:root-lt}
\end{equation}

For a one-bit block, the comparison summary is immediate:
\begin{equation}
    (E_{0,j},L_{0,j})
    =
    \begin{cases}
        (\neg x_j,0),&M_j=0,\\
        (x_j,\neg x_j),&M_j=1.
    \end{cases}
    \label{eq:lt-base-case}
\end{equation}
Because $\M$ is hardwired, all leaf summaries can be computed in
parallel and in constant depth using only $X$ and CNOT gates.

For merging comparison summaries, consider two adjacent child blocks whose union forms a parent block.
The child with index $2j+1$ contains the more significant bits, while
the child with index $2j$ contains the less significant bits. Their
summaries determine the parent summary via
\begin{align}
    E_{r+1,j}
    &=
    E_{r,2j+1}\wedge E_{r,2j},
    \label{eq:eq-recursion}\\
    L_{r+1,j}
    &=
    L_{r,2j+1}
    \vee
    \bigl(E_{r,2j+1}\wedge L_{r,2j}\bigr).
    \label{eq:lt-recursion}
\end{align}
These equations are exactly the usual lexicographic comparison rule.
The two parent blocks are equal only when both corresponding halves are
equal. For the less-than relation, the more significant halves decide
the comparison whenever they differ. Only when those halves are equal
does the comparison defer to the less significant halves. This is the central reason that the comparison can be parallelized. Namely,
once a block has been replaced by its two-bit summary, its internal
bits are irrelevant to all subsequent levels of the computation. For every block, equality and strict inequality are mutually exclusive,
so
\begin{equation}
    E_{r,j}\wedge L_{r,j}=0.
    \label{eq:eq-lt-disjoint}
\end{equation}
In particular, the two alternatives on the right-hand side of
\Cref{eq:lt-recursion} can never simultaneously equal one. Hence their
OR is also their XOR:
\begin{equation}
    L_{r+1,j}
    =
    L_{r,2j+1}
    \oplus
    \bigl(E_{r,2j+1}\wedge L_{r,2j}\bigr).
    \label{eq:lt-recursion-xor}
\end{equation}
This form is convenient because it can be implemented reversibly
without computing an OR gate.

\paragraph{Circuit implementation.}
We now evaluate this comparison tree bottom-up. For every parent block,
append two fresh qubits initialized to zero to store
$(E_{r+1,j},L_{r+1,j})$. One Toffoli controlled by
$E_{r,2j+1}$ and $E_{r,2j}$ computes the parent equality bit.
To compute the parent less-than bit, first CNOT
$L_{r,2j+1}$ into its output qubit and then apply a Toffoli controlled
by $E_{r,2j+1}$ and $L_{r,2j}$. By
\Cref{eq:lt-recursion-xor}, the resulting target contains exactly
$L_{r+1,j}$.

The equality Toffoli and the first CNOT act on disjoint qubits and may
be performed simultaneously. The second Toffoli then forms the next
layer. Moreover, different parent blocks at the same tree level depend
on disjoint child summaries, so all merges at that level run in
parallel. Thus each level of the comparison tree costs only constant
depth. After $\ell$ merge levels, the root bit
$L_{\ell,0}=\thresh(\x)$ is available. We CNOT this bit into the
designated output register and then reverse the entire tree computation.
This restores every summary and ancilla to $\ket{0}$, while
preserving the output.

\paragraph{Resource analysis.}
The balanced tree has $m'$ leaves and $m'-1$ internal nodes. Each node
stores only a constant-size summary and requires constantly many gates,
so the forward computation uses $O(m')=O(m)$ gates and ancillae.
There are $\ell=\lceil\log_2m\rceil$
merge levels, each of constant depth. Reversing the computation changes
the depth and gate count by only a constant factor. Hence the resulting
clean reversible circuit requires $O(\log(m+1))$-depth $O(m)$-size, and $O(m)$-ancillae, as claimed.
\end{proof}

\subsubsection{Constant-Depth Implementation with Large Fan-Out and Toffoli}

\newcommand{\arity}{q_m(k)}

We next give an exact and clean implementation of $\thresh$ in constant depth, by leveraging many-qubit \FANOUT~and \toff~gates.

\begin{lemma}[Value Thresholding with Wide Gates]
\label{lem:comparison-fanout}
For every integer $m\geq 1$ and constant integer $k\geq 1$, let
\begin{align}
    b_{\max}
    &:=
    \left\lceil
    m^{2^{k-1}/(2^k-1)}
    \right\rceil.
\end{align}
For every $M\in\{0,1,\ldots,2^m\}$, the value
thresholding function $\thresh$ can be computed exactly and
cleanly by a reversible circuit containing only $X$ gates,
\FANOUT~gates with at most $b_{\max}$ targets, and \toff~gates
with at most $b_{\max}$ controls. The circuit uses $O(k)$ depth, $O_k(m)$ size, and $O_k(
m^{1+1/(2^k-1)}
)$ space.
\end{lemma}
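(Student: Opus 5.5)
The plan is to replace the binary comparison tree of \Cref{lem:comparison-no-fanout} with a $k$-level tree of \emph{increasing} branching factors, merging many child summaries at once in constant depth with wide Toffoli and Fan-Out gates. Concretely, I will pick branching factors $c_1,\ldots,c_k$ with $\prod_r c_r\geq m$, pad $\x$ and $\M$ to length $\prod_r c_r$, and keep at each node the same two-bit summary $(E,L)$ as before. Leaf summaries are computed by $X$/CNOT exactly as in \Cref{eq:lt-base-case}; the CNOT is a Fan-Out with one target.

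\textbf{Multiway merge.} For a parent with children $0,\ldots,c-1$, where child $c-1$ is most significant, the lexicographic rule generalizes \Cref{eq:eq-recursion,eq:lt-recursion} to
\[
E_{\mathrm{par}}=\bigwedge_{i} E_i,\qquad
L_{\mathrm{par}}=\bigvee_{i=0}^{c-1}\Bigl(L_i\wedge\bigwedge_{i'>i}E_{i'}\Bigr).
\]
The terms in the OR are mutually exclusive, since at most one position can be the most significant disagreement. I will implement the merge in four constant-depth steps.
\begin{enumerate}
\item Fan each $E_{i'}$ out into $c$ fresh copies, using Fan-Out with at most $c$ targets.
\item Compute each term $t_i=L_i\wedge\bigwedge_{i'>i}E_{i'}$ into its own ancilla with a Toffoli of at most $c$ controls, all in parallel on disjoint copies. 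In the same layer, compute $E_{\mathrm{par}}$ with one $c$-control Toffoli.
\item Compute $L_{\mathrm{par}}=\neg\bigwedge_i\neg t_i$ by conjugating one $c$-control Toffoli with $X$ gates on the $t_i$ and on the target.
\item After the root is reached, CNOT the root $L$ into the output register and run the whole forward computation in reverse. This restores every ancilla, exactly as in the earlier lemma.
\end{enumerate}
Every level costs $O(1)$ depth, so the total depth is $O(k)$.

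\textbf{Choice of branching factors.} A parent at level $r$ uses $O(c_r^2)$ ancillae for the copies and terms, and $O(c_r)$ gates. There are $m/(c_1\cdots c_r)$ such parents. Level $r$ therefore costs $O(m\,c_r/(c_1\cdots c_{r-1}))$ space and $O(m/(c_1\cdots c_{r-1}))$ gates; the gate counts sum to $O_k(m)$. To balance the space across levels I set $c_r=c_1\cdots c_{r-1}\cdot c_1$, which gives $c_r=c_1^{2^{r-1}}$. Then $\prod_r c_r=c_1^{2^k-1}$, so I choose $c_1=\lceil m^{1/(2^k-1)}\rceil$. This makes the largest width $c_k\approx m^{2^{k-1}/(2^k-1)}=b_{\max}$, and every level costs $O(m\,c_1)=O(m^{1+1/(2^k-1)})$ space, as claimed.

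\textbf{Main obstacle.} The delicate point is the integer rounding. I need every $c_r\leq b_{\max}$ while still having $\prod c_r\geq m$, and the padding must inflate the space by only a constant factor depending on $k$. I expect to handle this by taking the ceiling only at $c_1$ and defining the later $c_r$ by the recursion, or by truncating the top level so that its branching factor is $\lceil m/(c_1\cdots c_{k-1})\rceil$, and then checking the resulting inequalities directly. The other point to verify carefully is that the mutually exclusive OR is computed exactly by the De~Morgan Toffoli, so the circuit uses no parity gadget and only $X$ as a single-qubit gate.
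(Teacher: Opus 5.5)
The gap is the step you flagged as the main obstacle, and neither remedy you suggest closes it. Your merge gadget (Fan-Out copies of the $E_{i'}$, parallel clause Toffolis, a De Morgan OR, and a final global reversal) and your doubly exponential branching schedule are essentially the paper's construction. Your depth, size and space accounting is also right up to rounding. But the statement requires every Fan-Out and Toffoli to have width at most $b_{\max}$ exactly, not $O_k(b_{\max})$.

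\begin{itemize}
\item \emph{Rounding only at $c_1$.} Setting $c_r=c_1^{2^{r-1}}$ overshoots, because $\lceil z\rceil^{2^{k-1}}$ can be much larger than $\lceil z^{2^{k-1}}\rceil$. For $k=2$ and $m=9$ you get $c_1=3$ and $c_2=9$, while $b_{\max}=\lceil 9^{2/3}\rceil=5$. This is not a small-$m$ artifact: for $k=2$ and $m=n^3+1$ you get $c_2=(n+1)^2$ but $b_{\max}=n^2+1$.
\item \emph{Truncating only the top level.} Setting the top factor to $\lceil m/(c_1\cdots c_{k-1})\rceil$ repairs level $k$, since $c_1\cdots c_{k-1}\geq m^{(2^{k-1}-1)/(2^k-1)}$. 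It does not repair the intermediate levels. For $k=3$ and $m=5$ you get $c_1=2$ and $c_2=4$, whereas $b_{\max}=\lceil 5^{4/7}\rceil=3$, so level $2$ needs a $4$-control Toffoli. With this truncation the intermediate factors are guaranteed to stay within $b_{\max}$ only once $m^{1/(2^k-1)}\geq 2$, so the small-$m$ regime would need a separate argument.
\end{itemize}

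The clean fix, which is what the paper does, is to round each level independently and give up exact balancing: set $c_r:=\lceil m^{2^{r-1}/(2^k-1)}\rceil$. Then:
\begin{itemize}
\item The factors are non-decreasing in $r$, so every gate has width at most $c_k=b_{\max}$.
\item The exponents sum to one, so $\prod_r c_r\geq m$.
\item Since $\lceil z\rceil\leq 2z$ for $z\geq1$, we get $\prod_r c_r\leq 2^k m$, so padding costs only a factor $2^k$.
\item The per-level ratio satisfies $c_r/(c_1\cdots c_{r-1})\leq 2m^{2^{r-1}/(2^k-1)}/m^{(2^{r-1}-1)/(2^k-1)}=2m^{1/(2^k-1)}$. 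This recovers your per-level space bound $O_k(m^{1+1/(2^k-1)})$, and summing over $k$ levels still gives $O_k(\cdot)$ even if you defer all cleanup to the final reversal.
\end{itemize}
With this choice the rest of your argument goes through unchanged. You should also dispatch the constant cases $M=0$ and $M=2^m$ separately, since $2^m$ has no $m$-bit expansion.
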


\begin{proof}[Proof of \Cref{lem:comparison-fanout}]
As in the proof of \Cref{lem:comparison-no-fanout}, the cases
$M=0$ and $M=2^m$ are immediate. We therefore assume that
$0<M<2^m$.

We use the same block-comparison strategy as in
\Cref{lem:comparison-no-fanout}, but replace each binary merge by
a wider merge. For every level $i\in\{1,\ldots,k\}$, define the
branching factor
\begin{align}
b_i
&:=
\left\lceil
m^{2^{i-1}/(2^k-1)}
\right\rceil,
\label{eq:k-level-branching-factor}
\end{align}
and define the corresponding block widths by $w_0:=1$ (for the base case $i=0$) and, for every $i>0$,
\begin{align}
w_i
:=
b_iw_{i-1}
=
\prod_{t=1}^{i}b_t.
\label{eq:k-level-block-width}
\end{align}
We pad $\x$ and $\M$ with leading zeroes so that their common
length is $m':=w_k$.
Since
\begin{align}
\sum_{i=1}^{k}
\frac{2^{i-1}}{2^k-1}
&=
1,
\end{align}
and $\lceil z\rceil\leq 2z$ for $z\geq 1$, we have $m\leq m' \leq 2^k m$. Thus, for fixed $k$, the padding changes the input length by only a constant factor.

For every $i\in\{0,\ldots,k\}$ and
$j\in\{0,\ldots,m'/w_i-1\}$, define
\begin{align}
B_{i,j}
&:=
\{jw_i,jw_i+1,\ldots,(j+1)w_i-1\}.
\label{eq:k-level-block}
\end{align}
We define $\val_{i,j}$, $E_{i,j}$, and $L_{i,j}$ exactly as in
the proof of \Cref{lem:comparison-no-fanout}, using the blocks
$B_{i,j}$ above. In particular, the desired threshold value is simply $\thresh(\x)=L_{k,0}(\x,\M)$.

\paragraph{Base case)}
The one-bit summaries $(E_{0,j},L_{0,j})$ are computed exactly as
in \Cref{eq:lt-base-case}. Since $\M$ is hardwired, all base
summaries can be computed in constant depth using $X$ and CNOT
gates, where a CNOT is a \FANOUT~gate with one target.

\paragraph{Recursive step)}
Fix $i\in \{1,\ldots,k\}$. Each parent block $B_{i,j}$ is the
union of the $b_i$ consecutive child blocks:
\begin{align}
B_{i,j} = B_{i-1,jb_i} \cup B_{i-1,jb_i+1} \cup \ldots \cup B_{i-1,jb_i+b_i-1},
\end{align}
where larger child indices correspond to more significant
sub-blocks. The parent summaries satisfy
\begin{align}
E_{i,j}
&=
\bigwedge_{s=0}^{b_i-1}
E_{i-1,jb_i+s},
\label{eq:k-level-equality-recursion}
\\
L_{i,j}&=
\bigvee_{s=0}^{b_i-1}
\left(
L_{i-1,jb_i+s}
\wedge
\bigwedge_{t=s+1}^{b_i-1}
E_{i-1,jb_i+t}
\right),
\end{align}
where an empty conjunction is interpreted as one. The first
identity holds because the parent blocks are equal exactly when
all child blocks are equal. The second follows from the same
most-significant-disagreement argument used in
\Cref{lem:comparison-no-fanout}---i.e. the parent block is smaller
exactly when its most significant unequal child is smaller.

\paragraph{Circuit implementation.}
At level $i$, all parent merges are performed in parallel. For a
fixed parent $B_{i,j}$, append two fresh qubits that will store
$(E_{i,j},L_{i,j})$. For each $s\in\{0,\ldots,b_i-1\}$, define the intermediate clause
\begin{align}
C_{i,j,s}
&:=
L_{i-1,jb_i+s}
\wedge
\bigwedge_{t=s+1}^{b_i-1}
E_{i-1,jb_i+t},
\label{eq:k-level-clause}
\end{align}
such that the output less-than bit can be composed as
\begin{align}
L_{i,j}
&=
\bigvee_{s=0}^{b_i-1}
C_{i,j,s}.
\label{eq:k-level-clause-or}
\end{align}

The equality bit $E_{i-1,jb_i+t}$ is used in the parent equality
computation and in the $t$ clauses corresponding to less
significant children. We therefore apply a \FANOUT gate with $t$
targets, producing one physical copy for each additional
simultaneous use. The total number of copied equality bits for
one parent merge is
\begin{align}
\sum_{t=0}^{b_i-1}t
&=
\frac{b_i(b_i-1)}{2}.
\label{eq:k-level-copy-count}
\end{align}
Using these copies, we compute $E_{i,j}$ with one
$b_i$-controlled Toffoli gate and compute the $b_i$ clause bits
$C_{i,j,s}$ with at most $b_i$-controlled Toffoli gates. Because
the \FANOUT step supplies separate copies of every shared
equality bit, these Toffoli gates act on disjoint qubits and can
be applied in parallel.

To compute $L_{i,j}$, we negate the clause qubits and the fresh
less-than qubit, apply one $b_i$-controlled Toffoli gate, and then
undo the negations on the clause qubits. By De Morgan's law, the
target is mapped according to
\begin{align}
0
&\longmapsto
1
\oplus
\bigwedge_{s=0}^{b_i-1}
\neg C_{i,j,s}=
\bigvee_{s=0}^{b_i-1}
C_{i,j,s}
=
L_{i,j}.
\end{align}
We then reverse the clause computations and the local Fan-Out
operations. This returns all clause and copied-control ancillae
to $\ket{0}$ while preserving the child summaries and the newly
computed parent summary.

Since the branching factors are non-decreasing, every FAN-OUT gate
has at most $b_{\max}$ targets and every Toffoli gate has at
most $b_{\max}$ controls, where
\begin{align}
b_{\max}
&:=
b_k
=
\left\lceil
m^{2^{k-1}/(2^k-1)}
\right\rceil.
\end{align}
After completing all $k$ levels, we copy
$L_{k,0}=\thresh(\x)$ into the designated output qubit and run
the complete hierarchy in reverse. This returns every remaining
work qubit to $\ket{0}$ while preserving the threshold output.

\paragraph{Resource analysis.}
Each merge uses a constant number of layers, $O(b_i)$ gates, and
$O(b_i^2)$ temporary qubits---i.e., the $b_i$ clause bits and the
$b_i(b_i-1)/2$ control copies counted in \Cref{eq:k-level-copy-count}.
There are $m'/w_i$ merges at level $i$, so the gate count at that
level is $O(m'b_i/w_i)=O(m'/w_{i-1})=O_k(m)$, while its temporary
workspace is
\begin{align}
    O\!\left(\frac{m'}{w_i}b_i^2\right)
    &=O\!\left(\frac{m'b_i}{w_{i-1}}\right)
      =O_k\!\left(m^{1+1/(2^k-1)}\right).
    \label{eq:k-level-space}
\end{align}
The last equality uses
$b_i=O(m^{2^{i-1}/(2^k-1)})$ and
$w_{i-1}\geq m^{(2^{i-1}-1)/(2^k-1)}$.
Temporary ancilla workspace is cleaned and reused between levels. The
persistent summaries occupy
$2\sum_{i=0}^k m'/w_i=O_k(m)$ qubits. Including the base level,
output copy, and reverse computation therefore gives depth $O(k)$,
gate count $O_k(m)$, and workspace
$O_k(m^{1+1/(2^k-1)})$.

\end{proof}

\subsection{Oblivious Amplitude Amplification}
We will now give brief background on oblivious amplitude amplification (OAA), originally introduced and leveraged in the works of \cite{BCCKS14,BCCKS15}. Conceptually, OAA generalizes standard amplitude amplification to settings in which the desired output state itself need not be known. Namely, it suffices that the desired component can be identified by a fixed ancilla flag. By alternating the underlying unitary with reflections about this flagged ``good'' subspace, OAA coherently amplifies its amplitude without requiring any reflection about, or knowledge of, the state stored in the data register. This obliviousness is precisely what makes OAA useful for block encodings, where the desired operator appears in a distinguished ancilla block. We will then show that the same procedure can be applied robustly to the approximate block encoding required by our construction.

\subsubsection{\texorpdfstring{Exact OAA: $\frac12U\mapsto U$ up to a fixed sign}{Exact OAA}}

Let $A$ be an $a$-qubit ancilla register and $S$ the system register. Define the isometry that embeds the system into the all-zero ancilla subspace, together with the corresponding projector and reflection, as
\begin{align}
    V_0&:=\ket{0^a}_A\otimes I_S,
    &\Pi&:=V_0V_0^\dagger,
    &\mathcal R_\Pi&:=I-2\Pi.
\end{align}
Thus, $\mathcal R_\Pi$ applies a phase of $-1$ precisely when the ancilla register is $\ket{0^a}$. Equivalently,
\begin{align}
    \mathcal R_\Pi
    &=\left[X^{\otimes a}
       \left(I-2\ketbra{1^a}{1^a}\right)X^{\otimes a}\right]\otimes I_S.
\end{align}

For a unitary $W$, let its distinguished block be
$B:=\be(W):=V_0^\dagger W V_0$. We use the following convention for one step of oblivious amplitude amplification:
\begin{align}
    \mathsf{Amp}(W)
    &:=W\mathcal R_\Pi W^\dagger\mathcal R_\Pi W
      =W-2\Pi W-2W\Pi+4W\Pi W^\dagger\Pi W.
    \label{seq:def-one-step-oaa}
\end{align}
Taking the distinguished block yields the exact cubic transformation
\begin{align}
    \be\bigl(\mathsf{Amp}(W)\bigr)
    &=-3B+4BB^\dagger B.
    \label{eq:oaa-cubic-identity}
\end{align}
In particular, if $B=U/2$ for a unitary $U$, then the amplified block is $-U$. The minus sign is a fixed global phase, independent of the input, and may be removed exactly by composing with $(XZ)^2=-I$ on any one qubit.

\begin{proposition}[Exact one-step oblivious amplitude amplification]
\label{prop:exact-one-step-oaa}
If $U$ is unitary and $\be(W)=U/2$, then
\begin{align}
    \mathsf{Amp}(W)V_0=-V_0 U.
    \label{eqn:desired_amp}
\end{align}
In particular, the ancillas return exactly to $\ket{0^a}$, without measurement or postselection.
\end{proposition}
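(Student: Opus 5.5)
The plan is to avoid working with the distinguished block alone. Instead I will compute $\mathsf{Amp}(W)V_0$ directly, using the standard two-dimensional invariant-subspace picture of oblivious amplitude amplification. Fix any system state $\ket\psi$ and write
\[
    W V_0\ket\psi=\tfrac12 V_0U\ket\psi+\tfrac{\sqrt3}{2}\ket{\Phi^\perp_\psi},
\]
where $\Pi\ket{\Phi^\perp_\psi}=0$ and $\ket{\Phi^\perp_\psi}$ is a unit vector. This decomposition uses only $\be(W)=U/2$ and the fact that $WV_0\ket\psi$ has norm one: since $\|\Pi WV_0\ket\psi\|=\tfrac12$, the orthogonal remainder has norm $\sqrt3/2$.

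The main obstacle is the obliviousness step, which establishes that $\Pi W^\dagger\ket{\Phi^\perp_\psi}$ is itself a fixed multiple of $V_0\ket\psi$. Here I will use unitarity of $U$.
\begin{itemize}
    \item From $V_0^\dagger W^\dagger V_0=\be(W)^\dagger=U^\dagger/2$, we get
    \[
        V_0^\dagger W^\dagger\bigl(V_0U\ket\psi\bigr)=\tfrac12\ket\psi.
    \]
    \item Applying $V_0^\dagger W^\dagger$ to the decomposition above gives
    \[
        \ket\psi=\tfrac14\ket\psi+\tfrac{\sqrt3}{2}V_0^\dagger W^\dagger\ket{\Phi^\perp_\psi}.
    \]
    \item Hence $V_0^\dagger W^\dagger\ket{\Phi^\perp_\psi}=\tfrac{\sqrt3}{2}\ket\psi$.
\end{itemize}
Thus $\Pi W^\dagger\ket{\Phi^\perp_\psi}=\tfrac{\sqrt3}{2}V_0\ket\psi$, independent of the unknown state, so the dynamics stay in a state-dependent but uniformly rotated two-dimensional picture.

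With this in hand, the remaining steps are routine bookkeeping.
\begin{enumerate}
    \item Apply $\mathcal R_\Pi$ to $WV_0\ket\psi$. This negates the good component, giving $-\tfrac12 V_0U\ket\psi+\tfrac{\sqrt3}{2}\ket{\Phi^\perp_\psi}$.
    \item Express this vector as $WV_0\ket\psi-V_0U\ket\psi$ and apply $W^\dagger$. This yields $V_0\ket\psi-W^\dagger V_0U\ket\psi$.
    \item Compute $\Pi W^\dagger V_0U\ket\psi=V_0\,\be(W)^\dagger U\ket\psi=\tfrac12 V_0\ket\psi$. Reflecting with $\mathcal R_\Pi$ then gives
    \[
        -V_0\ket\psi+2V_0\ket\psi-W^\dagger V_0U\ket\psi-\ldots,
    \]
    which I will organize cleanly as $-2V_0\ket\psi+W^\dagger V_0U\ket\psi$ after carefully tracking signs.
    \item Apply $W$ to obtain $-2WV_0\ket\psi+V_0U\ket\psi$.
    \item Substitute the decomposition. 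The result equals $-V_0U\ket\psi$ provided the orthogonal parts cancel. This forces tracking whether the last vector is $-2WV_0\ket\psi+V_0U\ket\psi$ or $2WV_0\ket\psi-3V_0U\ket\psi$; the correct sign pattern should follow from the expanded identity $\mathsf{Amp}(W)=W-2\Pi W-2W\Pi+4W\Pi W^\dagger\Pi W$.
\end{enumerate}

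As an alternative and cleaner route, I would right-multiply that expanded identity by $V_0$ directly:
\[
    \mathsf{Amp}(W)V_0=WV_0-2\Pi WV_0-2WV_0+4W\Pi W^\dagger\Pi WV_0.
\]
Then I use the following consequences of $\be(W)=U/2$ and $UU^\dagger=I$:
\begin{align*}
    \Pi WV_0&=\tfrac12 V_0U,\\
    \Pi W^\dagger V_0&=\tfrac12 V_0U^\dagger,\\
    4W\Pi W^\dagger\Pi WV_0&=2W\Pi W^\dagger V_0U=WV_0.
\end{align*}
The sum is therefore $WV_0-V_0U-2WV_0+WV_0=-V_0U$, which proves $\mathsf{Amp}(W)V_0=-V_0U$.

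Since the output lies entirely in the range of $V_0$, the ancillas are exactly $\ket{0^a}$ for every input, with no measurement or postselection. The global sign $-1$ is input-independent, as claimed.
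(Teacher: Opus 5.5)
Your alternative route (right-multiplying the expanded form of $\mathsf{Amp}(W)$ by $V_0$) is correct and complete, but it is not quite the paper's argument. The paper takes only the distinguished block, $\be(\mathsf{Amp}(W))=-3B+4BB^\dagger B=-U$. It then rules out leakage indirectly: the all-zero-ancilla component of $\mathsf{Amp}(W)V_0\ket\psi$ already has norm one, so unitarity of $\mathsf{Amp}(W)$ leaves no room for an orthogonal component.

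You instead keep the full isometry and watch the off-block terms cancel, $WV_0-2WV_0+WV_0=0$, so no norm argument is needed. The same manipulation works for an arbitrary $W$ with $B=\be(W)$ and gives the identity $\mathsf{Amp}(W)V_0=WV_0\bigl(4B^\dagger B-I\bigr)-2V_0B$, which exhibits the leakage explicitly. For example, when $B^\dagger B=r^2I$ the leakage has squared norm $(4r^2-1)^2(1-r^2)=1-(3r-4r^3)^2$, exactly the quantity that appears in the robust lemma. The paper's route is shorter once the cubic block identity is in hand, and it is the template the paper reuses for the approximate block.

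You should discard or repair your first route, because its bookkeeping is wrong and the sketch does not close. In step 3, applying $\mathcal R_\Pi$ to $V_0\ket\psi-W^\dagger V_0U\ket\psi$ gives $V_0\ket\psi-2V_0\ket\psi+V_0\ket\psi-W^\dagger V_0U\ket\psi=-W^\dagger V_0U\ket\psi$. Here you use $2\Pi W^\dagger V_0U\ket\psi=V_0\ket\psi$. The final $W$ then yields $-V_0U\ket\psi$ at once.

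Your intermediate $-2V_0\ket\psi+W^\dagger V_0U\ket\psi$ is incorrect. Neither candidate you list in step 5 is even a unit vector; for instance, $-2WV_0\ket\psi+V_0U\ket\psi=-\sqrt3\ket{\Phi^\perp_\psi}$. The obliviousness computation in that route is correct but is never actually used.
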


\begin{proof}
By \Cref{eq:oaa-cubic-identity},
$\be(\mathsf{Amp}(W))=-3U/2+U/2=-U$. Hence, on every unit input, the projection of the amplified output onto the all-zero ancilla subspace already has norm one. Since $\mathsf{Amp}(W)$ is unitary, there can be no component orthogonal to this subspace. This gives
$\mathsf{Amp}(W)V_0=-V_0U$, which is the standard one-step OAA argument~\cite{BCCKS14,BCCKS15}.
\end{proof}

\subsubsection{\texorpdfstring{Robust OAA for an approximate $\frac12R_z(\theta)$ block}{Robust OAA}}

Our block encoding is only approximate, so the exact argument above is not by itself sufficient. We must control both the error within the distinguished block and the amplitude that leaks outside the all-zero ancilla subspace. The special diagonal form of our block allows both quantities to be bounded linearly in the original approximation error.

\begin{lemma}[Robust amplification of an approximate $\frac12R_z(\theta)$ block]
\label{lem:robust-oaa-rz}
Let $\theta\in\mathbb R$, $\delta\in[0,1/8]$, and let $W_z$ be a unitary with distinguished block
\begin{align}
    \be(W_z)=B_z:=\begin{pmatrix}\overline z&0\\0&z\end{pmatrix},
    \qquad
    \left|z-\tfrac12e^{i\theta/2}\right|\leq\delta.
    \label{eq:z-approximation-assumption}
\end{align}
Then the fixed phase $\varphi=\pi$ satisfies
\begin{align}
    \left\|\be\bigl(\mathsf{Amp}(W_z)\bigr)+R_z(\theta)\right\|
       &\leq\frac{13}{4}\delta,
    \label{eq:amplified-block-error}\\
    \left\|\mathsf{Amp}(W_z)V_0+V_0 R_z(\theta)\right\|_{\mathrm{op}}
       &\leq\sqrt{13}\,\delta\leq5\delta.
    \label{eq:coherent-rz-implementation}
\end{align}
Moreover, for every unit input, the squared norm of the component outside the all-zero ancilla subspace is at most $13\delta^2$.
\end{lemma}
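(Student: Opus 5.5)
The plan is to use the exact cubic identity \Cref{eq:oaa-cubic-identity} for the distinguished block, together with unitarity of $\mathsf{Amp}(W_z)$ for the leakage bound. The argument reduces entirely to scalar estimates, because $B_z$ is diagonal with entries $\overline z, z$ and $B_zB_z^\dagger B_z=\operatorname{diag}(|z|^2\overline z,|z|^2z)$.

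First, the in-block error. By \Cref{eq:oaa-cubic-identity},
\begin{align}
    \be(\mathsf{Amp}(W_z))=\operatorname{diag}\bigl(g(\overline z),g(z)\bigr),
    \qquad g(w):=-3w+4|w|^2w .
\end{align}
Since $R_z(\theta)=\operatorname{diag}(e^{-i\theta/2},e^{i\theta/2})$ and $g$ commutes with conjugation, the operator norm equals $|g(z)+e^{i\theta/2}|$. Write $z=z_0+\Delta$ with $z_0=\frac12e^{i\theta/2}$ and $|\Delta|\le\delta$, and note $g(z_0)=-e^{i\theta/2}$. It then suffices to bound the Lipschitz behaviour of $g$ on the disk $|w-z_0|\le\delta$.

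Expanding $4|z|^2z-4|z_0|^2z_0$ gives $4(|z|^2-\tfrac14)z+\Delta$. Using $|z|\le\frac12+\delta$ and $\bigl||z|^2-\tfrac14\bigr|\le\delta(1+\delta)$, we get $|g(z)-g(z_0)|\le 3\delta+\delta+4\delta(1+\delta)(\tfrac12+\delta)$. This crude bound gives about $6\delta$, which is too weak, so a sharper computation is needed.

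For the sharper bound I would treat $g$ as a real map $h(w)=-3w+4|w|^2w$ and compute its real Jacobian at $z_0$. In polar coordinates, a radial perturbation gives derivative $-3+12|w|^2=0$ at $|w|=\frac12$, and a tangential one gives $-3+4|w|^2=-2$. So to first order the error is $2|\Delta_\perp|\le 2\delta$. The second-order terms, bounded using $\delta\le1/8$, should give the remaining slack up to $\frac{13}{4}\delta$. Concretely, I would write $z=z_0(1+\eta)$ with $|\eta|\le2\delta\le\frac14$, expand $g(z)=z_0\bigl(-3(1+\eta)+|1+\eta|^2(1+\eta)\bigr)$ exactly as a polynomial in $\eta,\overline\eta$, and bound the quadratic and cubic terms using $|\eta|\le1/4$. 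Getting exactly the constant $13/4$ is the fiddly step, and I expect it to be the main obstacle.

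Second, the leakage and coherent error. For a unit input $\ket\psi$, unitarity gives
\begin{align}
    \|(I-\Pi)\mathsf{Amp}(W_z)V_0\ket\psi\|^2=1-\|\be(\mathsf{Amp})\ket\psi\|^2\le 1-\min_{w\in\{z,\overline z\}}|g(w)|^2 .
\end{align}
Using the same expansion, I would show $|g(z)|\ge 1-c\delta^2$. The linear term in $|g|$ vanishes because the radial derivative is zero and the tangential derivative is a pure rotation preserving the modulus to first order. This yields leakage at most $13\delta^2$, for instance via the bound $1-|g|^2\le 2(1-|g|)$ and a quadratic estimate on $1-|g|$.

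Finally, $\|\mathsf{Amp}(W_z)V_0+V_0R_z(\theta)\|^2$ splits orthogonally into an in-block part, bounded by $(\tfrac{13}{4}\delta)^2$ or a sharper tangential bound, and the leakage, bounded by $13\delta^2$. Combining these to reach $\sqrt{13}\,\delta$ may require using the sharper tangential estimate $|g(z)+e^{i\theta/2}|^2\le$ (an appropriate multiple of) $\delta^2$ rather than $(13/4)^2\delta^2$. I would adjust constants there. The bound $\sqrt{13}\le5$ is immediate.
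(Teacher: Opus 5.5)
Your bounds for the in-block error and for the leakage are correct. The step you flag as the main obstacle is actually short. With $z=z_0(1+\eta)$ and $\eta=x+iy$, one has exactly $g(z)-g(z_0)=z_0\bigl(-2iy+|\eta|^2+2x\eta+\eta|\eta|^2\bigr)$. Using $|\eta|\le2\delta\le\frac14$, this gives $|g(z)+e^{i\theta/2}|\le 2\delta+6\delta^2+4\delta^3\le\frac{45}{16}\delta<\frac{13}{4}\delta$.

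This differs from the paper's route. The paper writes $z=re^{i\phi}$ and uses $B_zB_z^\dagger=r^2I$ to get the amplified block $-(3r-4r^3)V_\phi$ exactly. It then bounds the radial error $1-\alpha(r)=\Delta^2(6+4\Delta)$ and the angular error $|e^{i\phi}-e^{i\theta/2}|\le\frac{4}{\sqrt3}\delta$ separately.

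For the leakage you can drop the perturbative argument. Since $|g(z)|=3r-4r^3$ depends only on $r=|z|$, you get $1-|g(z)|^2\le2\Delta^2(6+4\Delta)\le13\delta^2$ with $\Delta=r-\frac12$.

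The genuine gap is the bound $\|\mathsf{Amp}(W_z)V_0+V_0R_z(\theta)\|_{\mathrm{op}}\le\sqrt{13}\,\delta$. Your orthogonal split (in-block error squared plus leakage) is an exact identity. However, bounding the two terms separately by their worst cases cannot reach $13\delta^2$, however you adjust the constants. A purely angular perturbation ($r=\frac12$) gives in-block error $2\delta$, i.e.\ $4\delta^2$. A purely radial one ($\Delta=\pm\delta$) gives leakage $\approx12\delta^2$. So the sum of the separate suprema is $\approx16\delta^2$. Your stated constants do give $\sqrt{(13/4)^2+13}\,\delta=\frac{\sqrt{377}}{4}\delta<5\delta$, which is all that later sections use, but not $\sqrt{13}\,\delta$.

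The missing idea is that both error sources draw on one budget, $|z-z_0|^2=\Delta^2+\frac r2|e^{i\phi}-e^{i\theta/2}|^2\le\delta^2$, so the sum must be bounded jointly. The paper does this as follows:
\begin{itemize}
\item For $K:=-\mathsf{Amp}(W_z)V_0$ it computes the exact Gram identity $(K-V_0R_z(\theta))^\dagger(K-V_0R_z(\theta))=\bigl(2-2\alpha(r)\cos(\phi-\theta/2)\bigr)I$.
\item It rewrites the scalar as $(6-8r^2)e^2+8(r+\tfrac12)^2\Delta^2$ with $e:=|z-z_0|$.
\item It then uses $e,|\Delta|\le\delta$ to get $(12+8\Delta)\delta^2\le13\delta^2$.
\end{itemize}
In your coordinates the same point appears as follows. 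To leading order the total squared error is $y^2+3x^2\le3|\eta|^2\le12\delta^2$, and this needs the joint constraint $x^2+y^2\le4\delta^2$ rather than separate bounds on $|y|$ and $|x|$.
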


\begin{proof}
Write $z=re^{i\phi}$, set $t:=\theta/2$, and define
$V_\phi:=\operatorname{diag}(e^{-i\phi},e^{i\phi})$, so that
$B_z=rV_\phi$. Since $z$ is $\delta$-close to $\frac12e^{it}$, the reverse triangle inequality gives
\begin{align}
    \Delta:=r-\tfrac12,\qquad |\Delta|\leq\delta,
    \qquad \tfrac38\leq r\leq\tfrac58.
    \label{eq:r-range}
\end{align}
Let
$e:=|re^{i\phi}-e^{it}/2|$. Separating the radial and angular errors gives
\begin{align}
    e^2=\Delta^2+r\bigl(1-\cos(\phi-t)\bigr)
       =\Delta^2+\frac r2|e^{i\phi}-e^{it}|^2.
    \label{eq:radial-angular-decomposition}
\end{align}
Since $e\leq\delta$, this implies
\begin{align}
    |e^{i\phi}-e^{it}|
    &\leq\sqrt{2/r}\,\delta
       \leq\frac4{\sqrt3}\delta.
    \label{eq:phase-error-bound}
\end{align}

We first bound the distinguished block after amplification. Since
$B_zB_z^\dagger=r^2I$, \Cref{eq:oaa-cubic-identity} gives
\begin{align}
    \be\bigl(\mathsf{Amp}(W_z)\bigr)&=-\alpha(r)V_\phi,
    \qquad \text{where~~} \alpha(r):=3r-4r^3.
    \label{eq:amplified-kz-exact}
\end{align}
The polynomial $\alpha(r)$ equals one at the ideal value $r=1/2$. Writing $r=1/2+\Delta$ gives
\begin{align}
    1-\alpha(r)=\Delta^2(6+4\Delta),\qquad
    0\leq1-\alpha(r)\leq\frac{13}{2}\delta^2,
    \qquad 0<\alpha(r)\leq1.
    \label{eq:q-error-bound}
\end{align}
Thus the amplification suppresses the radial error to second order, while the remaining first-order contribution comes from the phase error. Using \Cref{eq:phase-error-bound},
\begin{align}
    \|\alpha(r)V_\phi-R_z(\theta)\|
    &\leq\frac{13}{2}\delta^2+\frac4{\sqrt3}\delta
     \leq\left(\frac{13}{16}+\frac4{\sqrt3}\right)\delta
     <\frac{13}{4}\delta.
    \label{eq:proof-amplified-block-error}
\end{align}
This proves \Cref{eq:amplified-block-error}, with the fixed global sign $-1$.

It remains to control the complete output isometry, including leakage outside the distinguished ancilla subspace. Define the sign-corrected isometry
$K:=-\mathsf{Amp}(W_z)V_0$ and set $U:=R_z(\theta)$. Both $K$ and $V_0U$ are isometries, while
$V_0^\dagger K=\alpha(r)V_\phi$. Therefore
\begin{align}
    (K-V_0 U)^\dagger(K-V_0 U)
    &=\bigl(2-2\alpha(r)\cos(\phi-t)\bigr)I.
\end{align}
Substituting the radial--angular decomposition from
\Cref{eq:radial-angular-decomposition} gives
\begin{align}
    \|K-V_0 U\|^2
    &=(6-8r^2)e^2+8(r+\tfrac12)^2\Delta^2\\
    &\leq\bigl(6-8r^2+8(r+\tfrac12)^2\bigr)\delta^2\\
    &=(12+8\Delta)\delta^2\leq13\delta^2.
\end{align}
Both coefficients in the first line are nonnegative for
$r\in[3/8,5/8]$. Taking square roots proves
\Cref{eq:coherent-rz-implementation}. In particular, this is an operator-norm statement with one fixed global phase, so it holds uniformly for all inputs, including inputs entangled with an arbitrary reference system.

Finally, for any unit vector $\ket\psi$, decompose the amplified output into its all-zero-ancilla component and its orthogonal remainder, i.e.
\begin{align}
    \mathsf{Amp}(W_z)V_0\ket\psi
      &=-V_0\alpha(r)V_\phi\ket\psi+\ket{\mathrm{bad}_\psi},
    \label{eq:oaa-good-bad-decomposition}
\end{align}
where $\Pi\ket{\mathrm{bad}_\psi}=0$. By unitarity and orthogonality,
\begin{align}
    \|\ket{\mathrm{bad}_\psi}\|^2
       &=1-\alpha(r)^2
       \leq2(1-\alpha(r))
       \leq13\delta^2.
    \label{eq:oaa-leakage-bound}
\end{align}
This gives the claimed leakage bound.
\end{proof}

\subsubsection{Coherent Implementation of the OAA Procedure}

The OAA procedures in
\Cref{prop:exact-one-step-oaa} and \Cref{lem:robust-oaa-rz} are fully coherent, i.e.
they require neither measurement nor postselection. Their only
additional primitive is the reflection about the all-zero ancilla
subspace,
\begin{align}
    \mathcal R_0
    &:=
    I-2\ketbra{0^a}{0^a}
    =
    X^{\otimes a}
    \left(I-2\ketbra{1^a}{1^a}\right)
    X^{\otimes a}.
\end{align}
Thus, implementing $\mathcal R_0$ reduces to applying a phase of $-1$
to the all-one basis state. The opposite reflection
$-\mathcal R_0$ may be used equivalently, since the OAA iterate contains
two reflections and the two additional minus signs cancel.

With generalized Toffoli gates, the all-one phase
$I-2\ketbra{1^a}{1^a}$ is simply an $a$-qubit controlled-$Z$, obtained
by Hadamard-conjugating the target of a generalized Toffoli. Hence the
reflection has constant depth in $\QAC$. With bounded-arity gates, we
instead compute the AND of the $a$ ancilla bits using a balanced
Toffoli tree, apply a $Z$ gate to the resulting root bit, and uncompute
the tree. This gives an exact implementation of $\mathcal R_0$ using
depth $O(\log(a+1))$ and $O(a)$ gates and workspace.

There is one important distinction between the exact and approximate
settings. In the exact case of
\Cref{prop:exact-one-step-oaa}, the amplification ancillas return
exactly to $\ket{0^a}$. In the approximate setting of
\Cref{lem:robust-oaa-rz}, a small component may remain outside the
all-zero ancilla subspace. Namely, by \Cref{eq:oaa-leakage-bound}, its squared
norm is at most $13\delta^2$. Equivalently, measuring the ancillae
would produce a nonzero outcome with probability at most
$13\delta^2$. We do not perform such a measurement. Instead, the
construction is analyzed through the coherent isometry guarantee of
\Cref{eq:coherent-rz-implementation}, which controls the complete
joint state of the system and ancillas.

\subsection{Efficient Synthesis of Single-Qubit Unitaries}

We now assemble the preceding block-encoding and amplification primitives into complete single-qubit synthesis circuits. We first treat $R_z$ rotations and then extend the construction to arbitrary single-qubit unitaries.

\subsubsection{Approximate Synthesis of Arbitrary \texorpdfstring{$\Rz$}{Rotation} Gates}

For $R_z$ rotations, the dyadic approximation determines a block encoding of $\frac12R_z(\theta)$, reversible comparisons implement the corresponding SELECT operation, and one step of oblivious amplitude amplification recovers $R_z(\theta)$. With bounded-width gates, the comparison and reflection steps have depth $O(\log m)$ for $m=\Theta(\log(1/\eps))$, yielding the following $O(\log\log(1/\eps))$-depth construction. 
\begin{theorem}[Approximate $\Rz$ Synthesis with Bounded-Width Gates]
\label{thm:rz-synthesis-qnc}
Fix $\theta\in\mathbb{R}$ and $\eps\in(0,1)$. Define the bounded-width gate set
\begin{align}
    \mathcal{G}_{\textnormal{\QNC}}
    :=
    \{
        H,X,Z,S,S^\dagger,\cnot,CZ,\toff
    \},
\end{align}
where $\toff$ denotes the ordinary three-qubit Toffoli gate.
Then, there exists an
angle-dependent, fully unitary circuit $C_{\theta,\eps}$ over the
gate set $\mathcal{G}_{\textnormal{\QNC}}$, using
$a=O(\log(1/\eps))$ ancilla qubits initialized to $\ket{0^a}$,
such that the single input-independent phase $\varphi=\pi$ satisfies
\begin{align}
    \left\|
        C_{\theta,\eps}(\ket{0^a}\otimes I)
        +
        \ket{0^a}\otimes\Rz
    \right\|_{\mathrm{op}}
    &\leq \eps.
    \label{eq:qnc-rz-coherent-error}
\end{align}
The circuit $C_{\theta,\eps}$  has depth $O(\log\log(1/\eps))$, gate count
$O(\log(1/\eps))$, and space $O(\log(1/\eps))$.
\end{theorem}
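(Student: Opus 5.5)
The proof assembles the three ingredients developed above: the dyadic approximation, a comparison-based SELECT circuit, and one round of robust OAA. Set $\delta:=\eps/5$, so that $\delta\le 1/8$ because $\eps<1$. Choose $m:=\lceil\log_2(2\sqrt2/\delta)\rceil=\Theta(\log(1/\eps))$. \Cref{thm:int_approx} then supplies nonnegative integers $n_0,n_1,n_2,n_3$ with $\sum_j n_j=2^m$ and $|z-\tfrac12e^{i\theta/2}|\le\delta$, where $z=((n_0-n_1)+i(n_2-n_3))/2^m$. Set $N_1:=n_0$, $N_2:=n_0+n_1$, $N_3:=n_0+n_1+n_2$; the four intervals are $[0,N_1)$, $[N_1,N_2)$, $[N_2,N_3)$, $[N_3,2^m)$.

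Next build $W_z$ over $\mathcal G_{\QNC}$. Apply $H^{\otimes m}$ to the address register $A$. Then, using \Cref{lem:comparison-no-fanout} three times in parallel on separate copies of the address, cleanly compute the bits $c_k=\mathbf1[\val(\x)<N_k]$ for $k=1,2,3$. The copies are made by CNOT, or the three comparators are run sequentially; either way the cost is constant overhead, depth $O(\log m)$, and $O(m)$ gates and space.

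The interval label is then a fixed Boolean function of the monotone bits $(c_1,c_2,c_3)$, and the required controlled operations are applied with constant-size circuits from $\mathcal G_{\QNC}$. The plan is to write the selected operator as $(-1)^{s}(-iZ)^{u}$, where $u=\mathbf1[\text{label}\in\{2,3\}]=\neg c_2$ and $s$ is the sign bit.

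\begin{itemize}
\item A controlled $-I$ is a $Z$ (or $S^2$-type phase) on a control bit.
\item A controlled $-iZ$ is $S^\dagger$ on the control combined with $CZ$ onto the system, since $\mathrm{diag}(1,-i)\otimes$ with $CZ$ gives $-iZ$ on the branch.
\item The correct sign bit $s$ is computed into an ancilla by a constant number of Toffoli and CNOT gates, and that ancilla is uncomputed afterwards.
\end{itemize}

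Then uncompute the $c_k$ and apply $H^{\otimes m}$ again. Because $\bra{0^m}H^{\otimes m}$ averages uniformly over addresses, the distinguished block is $\sum_j (n_j/2^m)U_j=\mathrm{diag}(\bar z,z)=B_z$. All intermediate ancillae return to $\ket0$ exactly, so they can be counted inside the register on which $\Pi$ tests for all zeros.

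Form $\mathsf{Amp}(W_z)=W_z\mathcal R_\Pi W_z^\dagger\mathcal R_\Pi W_z$. Each $\mathcal R_\Pi$ is implemented exactly, in depth $O(\log(a+1))$ with $O(a)$ gates, by the balanced Toffoli tree, a $Z$ on the root, and uncomputation, as described above. \Cref{lem:robust-oaa-rz} then gives $\|\mathsf{Amp}(W_z)V_0+V_0R_z(\theta)\|\le5\delta=\eps$. This is exactly \eqref{eq:qnc-rz-coherent-error} with $C_{\theta,\eps}:=\mathsf{Amp}(W_z)$ and phase $\pi$. Summing resources, there are three calls to $W_z$ and two reflections, each of depth $O(\log m)=O(\log\log(1/\eps))$ and with $O(m)$ gates and ancillae, so $a=O(\log(1/\eps))$.

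\emph{Main obstacle.} The step I expect to need the most care is the exact SELECT: verifying that the block is exactly $\mathrm{diag}(\bar z,z)$ with the stated signs, that the phases $S,S^\dagger,Z$ correctly realize $\pm iZ$ as controlled operations, and that no comparator garbage survives (otherwise the block would change). I would settle this first by writing the label-to-operator table explicitly and checking it branch by branch.
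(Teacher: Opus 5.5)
Your proposal is the paper's proof in essentially every step: the same $m$ and cumulative thresholds, three clean comparators from \Cref{lem:comparison-no-fanout}, a constant-size SELECT inside $W_z=H_AL^\dagger\operatorname{SELECT}LH_A$, and one round of $\mathsf{Amp}$ analyzed by \Cref{lem:robust-oaa-rz}, with Toffoli-tree reflections. Two of your choices differ from the paper only cosmetically and are both sound. First, your SELECT factors the branch operator as $(-1)^s(-iZ)^u$, with $u=\neg c_2$ and $s=1\oplus c_1\oplus c_2\oplus c_3$. This is an equivalent repackaging of the paper's one-hot SELECT $Z_{e_1}$, $S^\dagger_{e_2}CZ_{e_2,s}$, $S_{e_3}CZ_{e_3,s}$. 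Second, you reflect about all-zero on every ancilla rather than on the address register alone, which is harmless because the comparator workspace is restored exactly.

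The one error is the parameter choice. The condition $\eps<1$ does not give $\eps/5\le 1/8$; that requires $\eps\le 5/8$. So for $\eps\in(5/8,1)$ your $\delta$ violates the hypothesis $\delta\le 1/8$ of \Cref{lem:robust-oaa-rz}, and you cannot invoke the lemma as stated. Set $\delta:=\min\{\eps/5,1/8\}$, as the paper does. This still gives $5\delta\le\eps$ and $m=\Theta(\log(1/\eps))$.
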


\begin{proof}[Proof of \Cref{thm:rz-synthesis-qnc}]
Define, respectively, the error and space parameters:
\begin{align}
    \delta
    &:=
    \min\left\{
        \frac{\eps}{5},
        \frac18
    \right\}
    \quad \text{and} \quad
    m:=
    \left\lceil
        \log_2\left(
            \frac{2\sqrt{2}}{\delta}
        \right)
    \right\rceil.
    \label{eq:rz-choice-m}
\end{align}
Let $A$ be an $m$-qubit address register and let $s$ denote the
single system qubit. The input is $\ket{0^a}\ket{\psi}$ for an arbitrary system state $\ket{\psi}$, where $a=O(m)$.

\paragraph{Precomputation.} By \Cref{thm:int_approx}, there exist nonnegative integers
$n_0,n_1,n_2,n_3$ satisfying
\begin{align}
    n_0+n_1+n_2+n_3
    =
    2^m
\end{align}
such that the complex number
\begin{align}
    z
    &:=
    \frac{
        (n_0-n_1)+i(n_2-n_3)
    }{2^m}
\end{align}
satisfies
\begin{align}
    \left|
        z-\frac12e^{i\theta/2}
    \right|
    \leq
    \delta.
    \label{eq:rz-z-approximation}
\end{align}
The integers $n_0,n_1,n_2,n_3$, and hence the comparison
thresholds below, are classically computed based on the angle $\theta$ and hardwired into the circuit. We will also use these integers to define the cumulative thresholds
\begin{align}
    M_0
    &:=
    n_0, \quad 
    M_1:=
    n_0+n_1,
    \quad \text{and}\quad
    M_2:=
    n_0+n_1+n_2.
\end{align}

\paragraph{Circuit Implementation.} With these parameters, we can now describe the circuit implementation. Begin by applying $H^{\otimes m}$ to $m$-qubits initialized to the state $\ket{0^m}$ in the ancilla register to prepare the uniform superposition state
\begin{align}
    \frac{1}{\sqrt{2^m}}
    \sum_{x\in\{0,1\}^m}
    \ket{x}_A.
\end{align}
We will next leverage our Boolean value thresholding circuit from  \Cref{lem:comparison-no-fanout} to compute, for $j\in\{0,1,2\}$,
\begin{align}
    t_j(x)
    &:=
    f^m_{M_j}(x)
    =
    \mathbf{1}[\val(x)<M_j],
\end{align}
into three fresh ancillae initialized to $\ket*{0^3}$ to obtain the state $\ket{t_0}\ket{t_1}\ket{t_2}$. Recall that each threshold computation
uses only $X$, $\cnot$, and $\toff$ gates.Since there are only three thresholds, they may be computed sequentially
using a common clean comparison workspace without changing the
asymptotic resource bounds. Since,
for every $x$,
$t_0(x)\leq t_1(x)\leq t_2(x)$ we can compute the four interval-indicator bits in
constant depth using only $X$ and $\cnot$ gates:
\begin{align}
    e_0
    &:=
    t_0,
    &
    e_1
    &:=
    t_1\oplus t_0,
    &
    e_2
    &:=
    t_2\oplus t_1,
    &
    e_3
    &:=
    1\oplus t_2.
    \label{eq:rz-one-hot-definitions}
\end{align}
Computing these bits into four fresh ancillae gives the one-hot register
$\ket{e_0}\ket{e_1}\ket{e_2}\ket{e_3}$. Exactly one of
$e_0,e_1,e_2,e_3$ equals one, with
\begin{align}
    e_j(x)=1
    \quad\Longleftrightarrow\quad
    x\in I_j,
\end{align}
where
\begin{align}
    I_0
    :=
    [0,n_0),
    \quad
    I_1
    :=
    [n_0,n_0+n_1),
    \quad
    I_2
    :=
    [n_0+n_1,n_0+n_1+n_2),
    \quad
    I_3
    :=
    [n_0+n_1+n_2,2^m).
\end{align}

We next apply a constant-depth SELECT operation controlled by the
one-hot interval register $E=(e_0,e_1,e_2,e_3)$. Specifically, apply
a $Z$ gate to $e_1$, apply an $S^\dagger$ gate to $e_2$ followed by
$CZ_{e_2,s}$, and apply an $S$ gate to $e_3$ followed by
$CZ_{e_3,s}$. No gate is associated with $e_0$. Thus, the SELECT
operation is
\begin{align}
    \operatorname{SELECT}
    &:=
    Z_{e_1}
    \left(
        S^\dagger_{e_2}CZ_{e_2,s}
    \right)
    \left(
        S_{e_3}CZ_{e_3,s}
    \right).
    \label{eq:rz-select-gate-implementation}
\end{align}
We now verify the induced operation on the system register $s$. Since the
interval register $(e_0,e_1,e_2,e_3)$ is one-hot, for each $j\in\{0,1,2,3\}$ and every
system state $\ket{\psi}$,
\begin{align}
    \operatorname{SELECT}
    \left(
        \ket{e_j}_E\ket{\psi}_S
    \right)
    =
    \ket{e_j}_E U_j\ket{\psi}_S,
\end{align}
where
\begin{align}
    U_0
    &:=
    I,
    &
    U_1
    &:=
    -I,
    &
    U_2
    &:=
    -iZ,
    &
    U_3
    &:=
    iZ.
    \label{eq:rz-select-unitaries}
\end{align}
Indeed, conditioned on $e_0=1$ no gate acts nontrivially. On the $e_1=1$
branch, $Z_{e_1}$ contributes the global branch phase $-1$. On the
$e_2=1$ branch, $S^\dagger_{e_2}$ contributes the phase $-i$, while
$CZ_{e_2,s}$ applies $Z$ to the system qubit, yielding $-iZ$.
Similarly, on the $e_3$ branch, $S_{e_3}$ contributes the phase $i$
and $CZ_{e_3,s}$ applies $Z$ to the system qubit, yielding $iZ$.

Let $L$ denote the reversible computation of the threshold and
interval-indicator bits, and write $H_A:=H^{\otimes m}$ on the
address register. The complete block-encoding circuit is
\[
    W_z:=H_A L^\dagger\operatorname{SELECT}L H_A.
\]
For every address $x$, SELECT applies $U_{j(x)}$ without changing
$x$ or its labels, where $j(x)$ is the unique interval containing
$\val(x)$. Thus $L^\dagger$ restores every threshold, indicator, and
comparison-workspace qubit exactly to zero, leaving
\begin{align}
    \frac1{\sqrt{2^m}}
    \sum_{x\in\{0,1\}^m}\ket x_A U_{j(x)}\ket\psi_S
       \otimes\ket{0^{a_{\mathrm{work}}}}
    \label{eq:state-after-select-uncomputation}
\end{align}
before the final Hadamards. Suppressing this clean workspace and
using $\bra{0^m}H_A\ket x=2^{-m/2}$, the distinguished block is
\begin{align}
    \be(W_z)=\frac1{2^m}\sum_{j=0}^3 n_jU_j=\frac{(n_0-n_1)I-i(n_2-n_3)Z}{2^m}
      =\begin{pmatrix}\overline z&0\\0&z\end{pmatrix}.
    \label{eq:rz-block-encoding}
\end{align}
By \Cref{eq:rz-z-approximation}, this block is $\delta$-close to
$\frac12R_z(\theta)$.

We now apply one step of oblivious amplitude amplification. Define
\begin{align}
    C_{\theta,\eps}
    &:=
    \mathsf{Amp}(W_z):= W_z\mathcal{R}_0W_z^\dagger\mathcal{R}_0W_z,
    \label{eq:qnc-main-amplification}
\end{align}
where $\mathcal{R}_0:=I-2\ketbra{0^m}{0^m}_A$.
All comparison and interval workspace is clean at the points at
which $\mathcal{R}_0$ is applied. We may therefore regard $W_z$ as
a unitary on the address and system registers, suppressing the
clean workspace from the block-encoding notation. By \Cref{lem:robust-oaa-rz}, and specifically by choosing the fixed
global phase $\varphi=\pi$, for every unit vector $\ket{\psi}$,
\begin{align}
    \left\|
        C_{\theta,\eps}
        \ket{0^m}_A\ket{\psi}_S
        +
        \ket{0^m}_A\Rz(\theta)\ket{\psi}_S
    \right\|
    \leq
    5\delta
    \leq
    \eps.
    \label{eq:qnc-rz-oaa-error}
\end{align}
The same phase $\varphi=\pi$ applies to every input
$\ket{\psi}$. Appending the clean comparison and interval workspace
to both states in \Cref{eq:qnc-rz-oaa-error} proves
\Cref{eq:qnc-rz-coherent-error}.

It remains to implement $\mathcal{R}_0$. First apply
$X^{\otimes m}$ to the address register. Using a balanced binary
tree of ordinary Toffoli gates, compute the AND of the $m$ address
bits into a clean root ancilla. Apply $Z$ to the root, reverse the
Toffoli tree, and apply $X^{\otimes m}$ again. This implements $I-2\ketbra{0^m}{0^m}_A$
exactly using only $X$, $\toff$, and $Z$ gates.

\paragraph{Resource Analysis.} We conclude by verifying the resource bounds. By
\Cref{lem:comparison-no-fanout}, each clean threshold computation
has $O(\log m)$ depth, $O(m)$ gate count, and  $O(m)$ space.
Computing three thresholds and their inverses changes these bounds
by only a constant factor. The preparation of the address state,
the interval-indicator computation and uncomputation, and the
SELECT operation have constant depth, $O(m)$ gate count, and
$O(m)$ space.

Each reflection $\mathcal{R}_0$ uses a balanced Toffoli tree of
 $O(\log m)$ depth,  $O(m)$ gate count, and $O(m)$ clean workspace.
One amplification step makes three calls to $W_z$ or $W_z^\dagger$
and applies two reflections. Since $m=O(\log(1/\eps))$,
\begin{align}
    \text{depth}(C_{\theta,\eps})
    &=
    O(\log m) = O(\log\log(1/\eps)),
    \\
    \text{size}(C_{\theta,\eps})
    &=
    O(m)=O(\log(1/\eps)),
    \\
    \text{space}(C_{\theta,\eps})
    &=
    O(m)=O(\log(1/\eps)).
\end{align}
\end{proof}

\noindent We now show how the procedure can be made constant-depth with access to multi-qubit gates.

\begin{corollary}[Constant-Depth Approximate $\Rz$ Synthesis with
Wide Gates]
\label{cor:rz-synthesis-qacf}
Fix a constant $\gamma\in(0,1)$, an angle
$\theta\in\mathbb{R}$, and $\eps\in(0,1)$. Define the wide gate set
\begin{align}
    \mathcal{G}_{\textnormal{\QAC}_f}
    :=
    \{
        H,X,Z,S,S^\dagger,\cnot,CZ,\FANOUT,\GTof
    \},
\end{align}
where, here, \FANOUT~denotes up to $O_\gamma(
\log^{1/2+\gamma}(1/\eps))$-size \FANOUT~gate
targets and GToffoli denotes up to $O(\log(1/\eps))$-size Toffoli gates. Then there exists an angle-dependent, fully unitary
circuit $C^{(\gamma)}_{\theta,\eps}$ over
$\mathcal{G}_{\textnormal{\QAC}_f}$, using $a=O_\gamma(\log^{1+\gamma}(1/\eps))$
ancillae initialized to $\ket{0^a}$, such that the same fixed phase
$\varphi=\pi$ works for all inputs, i.e.
\begin{align}
    \left\|
        C^{(\gamma)}_{\theta,\eps}(\ket{0^a}\otimes I)
        +\ket{0^a}\otimes\Rz(\theta)
    \right\|_{\mathrm{op}}
    &\leq\eps.
    \label{eq:qacf-rz-coherent-error}
\end{align}

Overall, the circuit has depth $O_\gamma(1)$, gate count
$O_\gamma(\log(1/\eps))$, and space
$O_\gamma(\log^{1+\gamma}(1/\eps))$.
\end{corollary}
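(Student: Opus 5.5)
The plan is to rerun the proof of \Cref{thm:rz-synthesis-qnc} essentially verbatim, swapping in wide-gate primitives for the two places where depth was $O(\log m)$. We keep
\[
\delta:=\min\{\eps/5,1/8\},\qquad m:=\lceil\log_2(2\sqrt2/\delta)\rceil=O(\log(1/\eps)),
\]
and the same dyadic integers $n_0,\dots,n_3$ from \Cref{thm:int_approx}. We also keep the thresholds $M_0,M_1,M_2$, the one-hot indicators \Cref{eq:rz-one-hot-definitions}, the SELECT gadget \Cref{eq:rz-select-gate-implementation}, and the block encoding $W_z=H_AL^\dagger\,\mathrm{SELECT}\,LH_A$ with distinguished block $\mathrm{diag}(\overline z,z)$. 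Since none of this changes, the error analysis carries over unchanged. \Cref{lem:robust-oaa-rz} applied to $C^{(\gamma)}_{\theta,\eps}:=\mathsf{Amp}(W_z)$ gives operator-norm error at most $5\delta\le\eps$ with the fixed phase $\varphi=\pi$. So what remains is the resource accounting.

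The first replacement is the threshold computation $L$. Here we invoke \Cref{lem:comparison-fanout} with a constant $k=k(\gamma)$ in place of \Cref{lem:comparison-no-fanout}. Its gate widths are bounded by
\[
b_{\max}=\bigl\lceil m^{2^{k-1}/(2^k-1)}\bigr\rceil .
\]
Since $2^{k-1}/(2^k-1)=\tfrac12+\tfrac{1}{2(2^k-1)}$, we choose $k$ large enough that $\tfrac{1}{2(2^k-1)}\le\gamma$, and also $\tfrac1{2^k-1}\le\gamma$. With this choice, Fan-Out has $O_\gamma(\log^{1/2+\gamma}(1/\eps))$ targets and the Toffolis have $O(\log^{1/2+\gamma}(1/\eps))\le O(\log(1/\eps))$ controls. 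Each comparison then has depth $O(k)=O_\gamma(1)$, size $O_\gamma(m)$ and space $O_\gamma(m^{1+1/(2^k-1)})=O_\gamma(\log^{1+\gamma}(1/\eps))$. The three comparisons can share one clean workspace, computed either sequentially or in parallel with three copies; both cost only a constant factor. The indicator and SELECT layers are already constant depth over $\{X,\cnot,Z,S,S^\dagger,CZ\}$.

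The second replacement is the reflection $\mathcal R_0=I-2\ketbra{0^m}{0^m}_A$. We implement it as
\[
\mathcal R_0=X^{\otimes m}\,H_t\,\GTof_{A\to t}\,H_t\,X^{\otimes m},
\]
an $m$-controlled $Z$ with target $t$ on the last address qubit, controlled by the other $m-1$ address qubits. That is one generalized Toffoli of width $m=O(\log(1/\eps))$, within the allowed width, in constant depth and with no ancilla.

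Finally, $\mathsf{Amp}$ uses three calls to $W_z^{\pm1}$ and two reflections. This gives total depth $O_\gamma(1)$, gate count $O_\gamma(\log(1/\eps))$ and space $O_\gamma(\log^{1+\gamma}(1/\eps))$. The only delicate step is choosing the parameter $k$ so that the Fan-Out width, the Toffoli width and the space bound hold simultaneously. I expect this to be routine given \Cref{lem:comparison-fanout}, because both exponents shrink geometrically in $k$. Cleanliness of the comparison workspace at the reflection times holds exactly as before, since \Cref{lem:comparison-fanout} computes exactly and cleanly.
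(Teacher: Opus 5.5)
Your proposal is correct and follows the paper's own proof essentially step for step. It reuses the construction of \Cref{thm:rz-synthesis-qnc} with the same $\delta,m$, swaps in \Cref{lem:comparison-fanout} with a constant $k$ satisfying $1/(2^k-1)\le\gamma$ (your other condition on $k$ is implied by this one), and implements $\mathcal R_0$ as an $X$-conjugated, Hadamard-target-conjugated generalized Toffoli. One small correction: the Toffoli-width bound should read $b_{\max}=\lceil m^{2^{k-1}/(2^k-1)}\rceil\le m$, as in the paper, because your inequality $\log^{1/2+\gamma}(1/\eps)\le O(\log(1/\eps))$ fails when $\gamma>1/2$.
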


\begin{proof}
Use the construction of \Cref{thm:rz-synthesis-qnc}, with the same
parameters $\delta,m$ from \Cref{eq:rz-choice-m}. Only the
comparison circuits and ancilla reflections are changed.

Choose a fixed integer $k=k(\gamma)$ such that
\begin{equation}
    \frac1{2^k-1}\leq\gamma.
    \label{eq:qacf-choice-k}
\end{equation}
By \Cref{lem:comparison-fanout}, each comparison has depth
$O_\gamma(1)$, gate count $O_\gamma(m)$, and workspace
\begin{equation}
    O_k\!\left(m^{1+1/(2^k-1)}\right)
       =O_\gamma(m^{1+\gamma}).
    \label{eq:qacf-threshold-space}
\end{equation}
Its Fan-Out gates have at most
\begin{equation}
    b_{\max}=\left\lceil m^{2^{k-1}/(2^k-1)}\right\rceil
       =O_\gamma(m^{1/2+\gamma})
    \label{eq:qacf-threshold-arity}
\end{equation}
targets, and its generalized Toffoli gates have at most
$b_{\max}\leq m$ controls. Each ancilla reflection is $R_0=X^{\otimes m}C^{m-1}Z\,X^{\otimes m}$, where $C^{m-1}Z=H_t(C^{m-1}X)H_t$ and $t$ is one address qubit. Thus a reflection has constant depth,
$O(m)$ gates, and no additional workspace.

These substitutions leave the block encoding unchanged. Hence
$C^{(\gamma)}_{\theta,\eps}:=\mathsf{Amp}(W_z)$ satisfies
\Cref{eq:qacf-rz-coherent-error} by \Cref{lem:robust-oaa-rz}, with
error $5\delta\leq\eps$ and the same fixed phase $\varphi=\pi$.
There are only constantly many comparisons, reflections, and calls
to $W_z$ or $W_z^\dagger$. All remaining steps have constant depth
and $O(m)$ gates. The total depth, gate count, and workspace are
therefore $O_\gamma(1)$, $O_\gamma(m)$, and
$O_\gamma(m^{1+\gamma})$, respectively. Substituting
$m=O(\log(1/\eps))$ gives all claimed resource bounds.

\end{proof}

\subsubsection{Approximate Synthesis of Any Single-Qubit Unitary}

We now extend the preceding $R_z$ synthesis results to arbitrary
single-qubit unitaries using the standard Euler-angle decomposition.
The three synthesized rotations may reuse the same workspace; a
telescoping argument below shows that their clean-input errors simply
add, even though the workspace is only approximately restored after
each rotation.

\begin{lemma}[Reduction from Single-Qubit Synthesis to $\Rz$ Synthesis]
\label{lem:single-qubit-to-rz}
Let $\mathcal{G}$ be a gate set containing $H$. Suppose that, for
every $\theta\in\mathbb{R}$ and $\eta\in(0,1)$, there exists a fully
unitary circuit $C^z_{\theta,\eta}$ over $\mathcal{G}$, using
$a(\eta)$ clean ancillae, with depth $d(\eta)$ and gate count
$s(\eta)$, such that there exists a phase
$\varphi_{\theta,\eta}\in\mathbb{R}$ satisfying
\begin{align}
    \left\|
        C^z_{\theta,\eta}\ket*{0^{a(\eta)}}\ket{\psi}
        -
        e^{i\varphi_{\theta,\eta}}
        \ket*{0^{a(\eta)}}\Rz(\theta)\ket{\psi}
    \right\|
    \leq
    \eta
    \label{eq:rz-synthesis-reduction-assumption}
\end{align}
for every state $\ket{\psi}$. Then, for every single-qubit unitary $U$
and $\eps\in(0,1)$, there exists a fully unitary circuit
$C_{U,\eps}$ over $\mathcal{G}$ using $a(\eps/3)$ clean ancillae such
that, for some $\varphi\in\mathbb{R}$ and every state $\ket{\psi}$,
\begin{align}
    \left\|
        C_{U,\eps}\ket*{0^{a(\eps/3)}}\ket{\psi}
        -
        e^{i\varphi}
        \ket*{0^{a(\eps/3)}}U\ket{\psi}
    \right\|
    \leq
    \eps.
    \label{eq:arbitrary-single-qubit-error}
\end{align}
Moreover,
\begin{align}
    \operatorname{depth}(C_{U,\eps})
    &\leq
    3d(\eps/3)+2,
    \\
    \operatorname{size}(C_{U,\eps})
    &\leq
    3s(\eps/3)+2,
    \\
    \operatorname{space}(C_{U,\eps})
    &=
    a(\eps/3)+1.
    \label{eq:single-qubit-reduction-resources}
\end{align}
\end{lemma}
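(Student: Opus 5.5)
We will use the ZYZ-type Euler decomposition, written with Hadamards in place of $R_y$. Every single-qubit unitary can be written, up to a global phase $e^{i\alpha}$, as $U=e^{i\alpha}R_z(\beta)R_y(\gamma)R_z(\delta)$ \cite[Theorem~4.1]{nielsen2010quantum}. Since $HR_z(\gamma)H=R_x(\gamma)$, we prefer the form
\begin{align}
    U
    &=
    e^{i\alpha}R_z(\beta)\,H R_z(\gamma) H\,R_z(\delta),
\end{align}
which uses $R_x$ in place of $R_y$; a ZXZ Euler decomposition also exists for every $U\in U(2)$. The circuit $C_{U,\eps}$ is then
\begin{align}
    C_{U,\eps}
    &:=
    C^z_{\beta,\eps/3}\,H_s\,C^z_{\gamma,\eps/3}\,H_s\,C^z_{\delta,\eps/3},
\end{align}
with all three synthesizers acting on the same $a(\eps/3)$-qubit workspace. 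This gives three rotation blocks and two Hadamard layers, so the depth is at most $3d(\eps/3)+2$ and the size at most $3s(\eps/3)+2$. The space bound $a(\eps/3)+1$ counts the workspace plus the system qubit.

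For correctness, write $V_j:=\ket*{0^{a}}\otimes R_j$ for the ideal isometries. Each $C_j$ satisfies $\|C_j V_0-e^{i\varphi_j}V_0R_j\|_{\mathrm{op}}\leq\eps/3$; this is \eqref{eq:rz-synthesis-reduction-assumption} applied to every $\ket\psi$, and $H_s$ commutes with the workspace embedding. Telescope the product $C_3H C_2 H C_1 V_0$ as
\begin{align}
    &C_3HC_2H(C_1V_0-e^{i\varphi_1}V_0R_1)\\
    &\quad+e^{i\varphi_1}C_3H(C_2V_0-e^{i\varphi_2}V_0R_2)HR_1\\
    &\quad+e^{i(\varphi_1+\varphi_2)}(C_3V_0-e^{i\varphi_3}V_0R_3)HR_2HR_1,
\end{align}
plus the ideal term $e^{i(\varphi_1+\varphi_2+\varphi_3)}V_0R_3HR_2HR_1$. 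In each error term the surrounding operators are unitary, or are isometries acting on unit vectors, so each term has norm at most $\eps/3$. The ideal term equals $e^{i\varphi}V_0U$ with $\varphi=\varphi_1+\varphi_2+\varphi_3-\alpha$. The key point is that each error term applies the later circuits $C_j$ to the exact clean-workspace state $V_0(\cdot)$, never to an approximately restored one. Hence no bound on how later synthesizers act on dirty workspace is needed, and the errors simply add to $\eps$.

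The main obstacle is minor: justifying the ZXZ decomposition and handling its phases. We obtain it from ZYZ by conjugating the middle rotation with $S$, or more directly by noting that $H R_z H=R_x$ together with the Euler theorem for any two orthogonal rotation axes.
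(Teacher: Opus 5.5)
Your proposal is correct and follows the paper's proof essentially step for step. It uses the same ZXZ Euler form $U=e^{i\alpha}R_z(\beta)HR_z(\gamma)HR_z(\delta)$, the same circuit reusing one $a(\eps/3)$-qubit workspace, and the same three-step hybrid comparing each synthesizer against the clean-workspace isometry $V_0$ (with $HV_0=V_0H$), giving the phase $\varphi_\beta+\varphi_\gamma+\varphi_\delta-\alpha$ and errors that add to $\eps$; your derivation of the ZXZ form from Nielsen--Chuang's ZYZ theorem by $S$-conjugating the middle rotation is a detail the paper leaves implicit, and it is correct.
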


\begin{proof}
By the Euler-angle decomposition
\cite[Theorem~4.1]{nielsen2010quantum} and
$R_x(\gamma)=HR_z(\gamma)H$, there exist real
$\alpha,\beta,\gamma,\delta$ such that
\begin{equation}
    U=e^{i\alpha}R_z(\beta)H R_z(\gamma)H R_z(\delta).
    \label{eq:zxz-euler-decomposition}
\end{equation}
Set $\eta:=\eps/3$ and let
$V_0:=\ket{0^{a(\eta)}}\otimes I$ denote the embedding into the
all-zero workspace subspace. For $\theta\in\{\beta,\gamma,\delta\}$,
define
\begin{equation}
    \widetilde R_\theta
    :=
    e^{i\varphi_{\theta,\eta}}R_z(\theta).
\end{equation}
Then \Cref{eq:rz-synthesis-reduction-assumption} is equivalently
\begin{equation}
    \left\|
        C^z_{\theta,\eta}V_0-V_0\widetilde R_\theta
    \right\|_{\mathrm{op}}
    \leq\eta.
    \label{eq:rz-clean-isometry-bound}
\end{equation}
Using the same $a(\eta)$-qubit workspace for all three synthesized
rotations, define
\begin{equation}
    C_{U,\eps}
    :=
    C^z_{\beta,\eta}H C^z_{\gamma,\eta}H C^z_{\delta,\eta}.
    \label{eq:arbitrary-single-qubit-circuit}
\end{equation}
A three-step hybrid argument, together with $HV_0=V_0H$, gives
\begin{align}
    &\left\|
        C_{U,\eps}V_0
        -
        V_0\widetilde R_\beta H
        \widetilde R_\gamma H
        \widetilde R_\delta
    \right\|_{\mathrm{op}}\leq
    \left\|
        C^z_{\delta,\eta}V_0
        -
        V_0\widetilde R_\delta
    \right\|_{\mathrm{op}}
    +
    \left\|
        C^z_{\gamma,\eta}V_0
        -
        V_0\widetilde R_\gamma
    \right\|_{\mathrm{op}}
    +
    \left\|
        C^z_{\beta,\eta}V_0
        -
        V_0\widetilde R_\beta
    \right\|_{\mathrm{op}}\leq
    3\eta
    =
    \eps.
    \label{eq:single-qubit-hybrid}
\end{align}
Here unitary invariance of the operator norm ensures that the error from
each replacement is unchanged by the surrounding gates. In particular,
the hybrid compares each synthesized rotation with its ideal action on
the all-zero workspace subspace, so the same workspace may be reused
throughout. Finally,
\begin{align}
    \widetilde R_\beta H
    \widetilde R_\gamma H
    \widetilde R_\delta
    &=
    e^{i(
        \varphi_{\beta,\eta}
        +\varphi_{\gamma,\eta}
        +\varphi_{\delta,\eta}
        -\alpha)}
    U.
\end{align}
Thus \Cref{eq:arbitrary-single-qubit-error} holds with
\begin{align}
    \varphi
    &:=
    \varphi_{\beta,\eta}
    +\varphi_{\gamma,\eta}
    +\varphi_{\delta,\eta}
    -\alpha.
\end{align}
The circuit makes three sequential calls to the $R_z$ synthesizer and
applies two Hadamard gates. Since the same workspace is reused for all
three calls, this gives the depth, gate-count, and space bounds in
\Cref{eq:single-qubit-reduction-resources}.
\end{proof}
\section{A Lower-Bound Establishing \texorpdfstring{\UQNC}{Q-U-NC} Depth Optimality}
\label{sec:depth_lower_bound}

We will now prove that the $O(\log\log(1/\eps))$-depth achieved by our construction is asymptotically optimal
for circuits over any finite set of constant-width gates, in the worst case over the target rotation. It is not a lower bound for every individual angle. Arbitrarily many clean input ancillas are allowed, and they need not be returned to their initial state.

In terms of notation, we will consider a circuit $C$ with one input qubit, a designated output qubit $\textsf{out}$, and
$a$ ancillae initialized to $\ket{0^a}$. We will define the induced output channel, where all registers other than $\textsf{out}$ are discarded, as
\begin{align}
    \mathcal{E}_C(\rho)
    :=
    \operatorname{Tr}_{\overline{\textsf{out}}}
    \left[
        C
        \left(
            \rho\otimes\ketbra{0^a}{0^a}
        \right)
        C^\dagger
    \right].
    \label{eqn:induced_output_channel}
\end{align}
Furthermore, we will denote the $\Rz$-rotation channel as 
\begin{align}
    \mathcal{R}_\theta(\rho)
    :=
    R_z(\theta)\rho R_z(\theta)^\dagger.
\end{align}

\begin{theorem}[Depth Lower-Bound]
\label{thm:depth_lower_bound}
Let $\mathcal{G}$ denote a finite set of constant-width gates. In particular, let the total number of gates be bounded as $|\mathcal{G}|\leq g$, for some constant $g\geq 1$. Furthermore, assume each gate in $\mathcal{G}$ acts on at most $r\geq 2$ qubits. 
Let $\mathcal{C}_d(\mathcal{G})$ denote the
set of depth-at-most-$d$ circuits over $\mathcal{G}$ with clean ancillae
and a designated output qubit. For error $\eps\in(0,1/16]$, define the minimum depth circuit over alphabet $\mathcal{G}$ required to approximate every $\Rz$ gate to diamond-norm error at most $\eps$, as
\begin{align}
    D_\eps
    :=
    \min\left\{
        d :
        \forall\theta\in[0,2\pi),\
        \exists C_\theta\in
        \mathcal{C}_d(\mathcal{G})
        \text{ s.t. }
        \left\|
            \mathcal{E}_{C_\theta}-\mathcal{R}_\theta
        \right\|_\diamond
        \leq\eps
    \right\}.
\end{align}
If no such depth exists, set $D_\eps=+\infty$. Then,
\begin{align}
    D_\eps
    =
    \Omega_{g,r}
    \left(
        \log\log\frac{1}{\eps}
    \right).
\end{align}
\end{theorem}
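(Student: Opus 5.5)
Our plan is the light-cone-plus-counting argument sketched in the informal version (\Cref{thm:intro-synthesis-lower-bound}). We compare two quantities: an upper bound on how many distinct output channels $\mathcal{E}_C$ depth-$d$ circuits over $\mathcal{G}$ can induce, and a lower bound on how many distinct channels are needed to $\eps$-approximate every $\mathcal{R}_\theta$.

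\textbf{Step 1 (light-cone reduction).} Fix $C\in\mathcal{C}_d(\mathcal{G})$. We trace back from $\textsf{out}$ layer by layer. The backward light cone at layer $d$ is $\{\textsf{out}\}$. Each gate acts on at most $r$ qubits, so stepping back one layer multiplies the set size by at most $r$. The light cone therefore contains at most $\sum_{t=0}^{d} r^t\leq 2r^d$ qubits and at most $2r^d$ gates. Gates outside the light cone commute past the partial trace, and so do ancillae outside it (each starts in $\ket0$ and is traced out). Hence $\mathcal{E}_C=\mathcal{E}_{C'}$, where $C'$ consists only of the light-cone gates acting on $N\leq 2r^d$ qubits. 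Those qubits are the input qubit (if it lies in the cone) and fresh $\ket0$ ancillae. If the input qubit is not in the cone, $\mathcal{E}_C$ is a constant channel, which we can still count.

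\textbf{Step 2 (counting channels).} Up to relabeling wires, $C'$ is determined by the following data. For each of at most $d$ layers and each of at most $2r^d$ gate slots, we choose a gate from $\mathcal{G}$ or ``none'', and an ordered tuple of at most $r$ wires among $N\leq 2r^d$. We also choose which wire is the input and which is $\textsf{out}$. The number of distinct channels is therefore at most
\[
\bigl((g+1)(2r^d)^r\bigr)^{2r^d}\cdot(2r^d)^2=\exp\bigl(O_{g,r}(d\,r^d)\bigr).
\]
To be safe, we order gates so that each gate is listed once with a layer index. The total count is then at most $\bigl((g+1)\,d\,(2r^d)^r\bigr)^{2r^d+2}$, which is still $\exp(O(d r^d))=\exp(\exp(O(d)))$.

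\textbf{Step 3 (many channels are needed).} We show $\|\mathcal{R}_\theta-\mathcal{R}_{\theta'}\|_\diamond\geq c\,|\theta-\theta'|$ for $|\theta-\theta'|\leq\pi$. Evaluate both channels on $\ket{+}\bra{+}$: the outputs are pure states with overlap $|\cos((\theta-\theta')/2)|$. The trace distance is then $2|\sin((\theta-\theta')/2)|\geq\frac{2}{\pi}|\theta-\theta'|$. Now take $K=\lfloor 1/(8\eps)\rfloor$ equally spaced angles $\theta_k\in[0,\pi]$, whose pairwise separations are $\geq\pi/K\geq 8\pi\eps$. Their pairwise diamond distances are $>2\eps$. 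If a single channel $\mathcal{E}$ were within $\eps$ of two of them, the triangle inequality would give distance $\leq 2\eps$, a contradiction. So the approximating circuits must induce at least $K=\Omega(1/\eps)$ distinct channels. The hypothesis $\eps\leq1/16$ guarantees $K\geq2$.

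\textbf{Step 4 (conclude).} Combining Steps 2 and 3 gives $\exp(C_{g,r}\,d\,r^d)\geq\Omega(1/\eps)$. Taking logarithms twice yields $d\log r+\log d\geq\log\log(1/\eps)-O_{g,r}(1)$, hence $D_\eps=\Omega_{g,r}(\log\log(1/\eps))$. The case $D_\eps=+\infty$ is trivial.

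The only genuinely delicate step is Step 1. We must verify that discarding gates outside the cone, and ancilla wires untouched by the cone, leaves the reduced output channel unchanged, even though ancillae need not be restored. This follows because the partial trace over the complement is invariant under unitaries acting only on the complement, and the gates removed after the cone's support is determined act only on wires that are subsequently traced out. Everything else is routine bookkeeping.
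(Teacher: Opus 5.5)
Your proposal is correct and follows essentially the same route as the paper. Both arguments use a backward light-cone reduction bounding the number of induced output channels by $\exp(O_{g,r}((d+1)r^d))$, a packing of $\Theta(1/\eps)$ rotations pairwise more than $2\eps$ apart in diamond norm (witnessed on $\ket{+}$), and two logarithms to conclude.

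The differences are cosmetic: you pack $\lfloor 1/(8\eps)\rfloor$ angles in $[0,\pi]$ rather than $\lfloor 1/(4\eps)\rfloor$ on the full circle, and you justify more carefully why discarding non-cone gates and untouched ancillae preserves the reduced channel. One small fix: write $d+1$ in place of $d$ in your count, since as written it vanishes at $d=0$.
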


\begin{proof}
The proof consists of three main steps: 1) a light-cone argument to upper-bound the number of distinct channels that can be induced on the output qubit by depth-$d$ circuits over $\mathcal{G}$, 2) a counting argument to lower-bound the number of distinct channels necessary to approximate all $\Rz$ gates to diamond norm $\eps$, and 3) combining the upper- and lower-bounds to achieve an overall depth lower-bound relative to the desired accuracy $\eps$.

\paragraph{Light-Cone Upper-Bound.} Consider the backwards light-cone of the designated output qubit \textsf{out}.
Since every gate in $\mathcal{G}$ is of width at most $r$, at distance $\ell$ from the output, the light-cone contains at most $r^\ell$ wires. Therefore, the light-cone of any depth-$d$ circuit contains at most $r^d$
input wires and at most
\begin{align}
    1+r+\cdots+r^{d-1}
    \leq
    \frac{r^d}{r-1} = O_r(r^d)
\end{align}
 total gates. Importantly, gates outside the light-cone may be removed without changing the induced output channel.

After canonically relabeling the light-cone wires, also record which of its at most $r^d$ input wires is the distinguished data input, or record that this input is absent. This contributes a factor at most $r^d+1$, absorbed below. Each gate is specified by its layer, type, and support. There are at most $d$ choices for
the layer, $g$ choices for the gate type, and
\begin{align}
    \sum_{k=1}^r (r^d)^k
    \leq
    r^{dr+1}
\end{align}
choices for its ordered support. Therefore, the total number of ways a single gate occurrence can be specified within the light-cone is upper-bounded by 
\begin{align}
    dg\,r^{dr+1}
    =
    \exp\bigl(O_{g,r}(d+1)\bigr).
\end{align}
Since the light-cone contains only $O_r(r^d)$ gates, the total number of unique depth-$d$ circuit light-cones can be upper-bounded as
\begin{align} \label{eqn:number_depth_d_circuits}
    M_d
    \leq
    \exp\left(
        O_{g,r}\bigl((d+1)r^d\bigr)
    \right).
\end{align}

\paragraph{Packing and conclusion.}
We now lower-bound the number of distinct output channels required to approximate every $R_z$ rotation to error $\eps$. The family $\{\mathcal R_\theta:\theta\in[0,2\pi)\}$ forms a one-parameter continuum of channels, and at precision $\eps$ we can choose $\Theta(1/\eps)$ rotations that are pairwise more than $2\eps$ apart in diamond norm. A single channel cannot be $\eps$-close to two such rotations, so any collection of circuits that approximates every $R_z(\theta)$ must induce at least $\Theta(1/\eps)$ distinct output channels.

To witness the separation between two rotations, consider the input state $\rho_+:=\ketbra{+}{+}$. This choice is natural because $R_z(\theta)$ changes only the relative phase between $\ket0$ and $\ket1$. Computational-basis inputs are therefore insensitive to $\theta$, whereas
$R_z(\theta)\ket+=(e^{-i\theta/2}\ket0+e^{i\theta/2}\ket1)/\sqrt2$
depends explicitly on the rotation angle. Hence the trace distance between the two outputs on $\ket+$ lower-bounds the diamond distance between the corresponding channels:
\begin{align}
    \|\mathcal R_\theta-\mathcal R_\phi\|_\diamond
    &\geq
    \|\mathcal R_\theta(\rho_+)-\mathcal R_\phi(\rho_+)\|_1
    =
    |e^{i\theta}-e^{i\phi}|
    =
    2\left|\sin\frac{\theta-\phi}{2}\right|.
    \label{eqn:rotation_channel_separation}
\end{align}

Now choose
$N:=\lfloor1/(4\eps)\rfloor$ and
$\theta_j:=2\pi j/N$ for $0\leq j<N$. Since $\eps\leq1/16$, we have $N\geq4$. For distinct $j,k$, the smallest angular separation is achieved by neighboring points. Using $\sin x\geq2x/\pi$ on $[0,\pi/2]$:
\begin{align}
    \|\mathcal R_{\theta_j}-\mathcal R_{\theta_k}\|_\diamond
    &\geq
    2\sin\frac{\pi}{N}
    \geq
    \frac{4}{N}
    \geq
    16\eps
    >
    2\eps.
    \label{eqn:pairwise_rotation_separation}
\end{align}
Thus no channel can be within diamond distance $\eps$ of two rotations in this collection. Therefore, any depth-$d$ circuit family that $\eps$-approximates every $R_z$ rotation must induce at least
$N=\Theta(1/\eps)$ distinct output channels.

If $D_\eps=+\infty$, the theorem is immediate. Otherwise, set $d:=D_\eps$. Combining the packing lower bound with the light-cone counting bound from \Cref{eqn:number_depth_d_circuits} yields
\begin{align}
    \left\lfloor\frac{1}{4\eps}\right\rfloor
    =N
    \leq
    M_d
    \leq
    \exp\!\left(O_{g,r}\bigl((d+1)r^d\bigr)\right).
\end{align}
The left-hand side grows as $\Theta(1/\eps)$, while the number of channels realizable at depth $d$ is at most doubly exponential in $d$. Taking logarithms twice therefore gives
\begin{align}
    \log\log\frac{1}{\eps}
    &\leq
    O_{g,r}(d+1),
\end{align}
and hence
$d=\Omega_{g,r}(\log\log(1/\eps))$.
\end{proof}

\section{Preparing the Kim--Laakkonen Catalyst}
\label{sec:kim-comparison}

Kim showed that, given a suitable reusable catalytic resource, a large family of phase rotations can be implemented by a constant-depth $\QNC_f^0$ circuit~\cite{kim2026catalytic}. Kim and Laakkonen later gave a related single-catalyst construction with the same basic feature~\cite{kim2025clifford}. Thus, once the catalyst is available, the online gate-synthesis procedure is already constant depth. The controlled-linear-map implementation of \Cref{lem:controlled-linear-map} realizes the online phase-kickback step in constant total depth over $\UQNC_f$. The remaining resource to construct is therefore the catalyst itself. To turn these catalytic procedures into ordinary circuit constructions starting from standard $\ket{0}$-initialized ancillae, we must understand the circuit complexity of preparing the required catalytic state.

Kim's original construction uses an angle-independent bank of $b$ eigenstates, occupying $b^2$ qubits in total. Kim and Laakkonen later showed that the same family of phase gates can instead be implemented using a single eigenstate $\ket{\psi_k}$, provided $\gcd(k,N)=1$. We work with this more economical single-catalyst formulation, reducing the preparation problem to synthesizing one $b$-qubit eigenstate rather than an entire bank. In particular, we fix $k=1$ and give a direct shallow preparation of $\ket{\psi_1}$, deriving its depth, size, and workspace complexity.

In \Cref{sec:catalytic-circuit-simulation}, we combine this preparation with the constant-depth catalytic phase-kickback procedure to obtain an ordinary circuit simulation with no pre-supplied catalyst, and then derive the resulting relationships between the standard and universal shallow-depth hierarchies. We first recall the eigenstate construction and how a single catalyst implements the desired family of grid phases.

\subsection{The Kim--Laakkonen Eigenstate Catalyst}
Fix $b\geq3$, let $N:=2^b-1$ and $\omega_N:=e^{2\pi i/N}$, and let
$C_f\in\mathrm{GL}_b(\mathbb F_2)$ be the companion matrix of a
primitive degree-$b$ polynomial. For any fixed $v\neq0$, the map
\begin{align}
U_f\ket z
&:=
\ket{C_fz}
\end{align}
cycles through all $N$ nonzero computational-basis states. Its
eigenstates are
\begin{align}
\ket{\psi_k}
&:=
\frac{1}{\sqrt N}
\sum_{j=0}^{N-1}
\omega_N^{-jk}
\ket{C_f^jv},
\label{eq:kim-eigenstates}
\end{align}
for $k\in\mathbb Z_N$. Shifting the summation index gives
\begin{align}
U_f^q\ket{\psi_k}
&=
\omega_N^{qk}\ket{\psi_k}.
\label{eq:kim-eigenvalue}
\end{align}
Thus a controlled application of $U_f^q$ applies the phase
$\omega_N^{qk}$ to the control while returning the eigenstate
unchanged.

For any integer $a$ coprime to $N$, Kim originally proposed an
angle-independent bank of $b$ such catalyst states,
\begin{align}
\ket{\Gamma_{a,b}}
&:=
\bigotimes_{t=0}^{b-1}
\ket{\psi_{a2^t\bmod N}},
\label{eq:kim-bank}
\end{align}
occupying $b^2$ qubits in total~\cite{kim2026catalytic}. Later, Kim
and Laakkonen observed that a single eigenstate $\ket{\psi_k}$ with
$\gcd(k,N)=1$ suffices~\cite{kim2025clifford}. Specifically, to implement a grid
phase $\omega_N^d$, one applies $U_f^q$ for
$q=dk^{-1}\bmod N$. Their preparation procedure uses phase estimation
and measurement to obtain a random label $k$, repeating until it is
coprime to $N$.

We instead fix $k=1$ from the outset and prepare $\ket{\psi_1}$
directly. The key observation is that, before applying the orbit
relabeling $j\mapsto C_f^jv$, the phases appearing in
\Cref{eq:kim-eigenstates} form an almost-product state. Define
\begin{align}
\ket{\chi_b}
&:=
\frac{1}{\sqrt{2^b}}
\sum_{j=0}^{2^b-1}
\omega_N^{-j}\ket j.
\label{eq:chi-b}
\end{align}
Writing $j=\sum_{t=0}^{b-1}j_t2^t$ shows that
\begin{align}
\ket{\chi_b}
&=
\bigotimes_{t=0}^{b-1}
\frac{\ket0+\omega_N^{-2^t}\ket1}{\sqrt2}.
\label{eq:chi-b-factorization}
\end{align}
Hence $\ket{\chi_b}$ can be prepared by synthesizing $b$ independent
single-qubit phase states in parallel. Each factor is obtained from
$\ket+$ by an $R_z$ rotation, up to an irrelevant global phase, i.e.
\begin{align}
R_z(\alpha)\ket+
&=
e^{-i\alpha/2}
\frac{\ket0+e^{i\alpha}\ket1}{\sqrt2}.
\label{eq:rz-prepares-phase-state}
\end{align}
We use \Cref{thm:rz-synthesis-qnc} to approximate these $b$ rotations
simultaneously.

It remains only to convert the computational labels in
$\ket{\chi_b}$ into the orbit labels defining $\ket{\psi_1}$. Since
$N=2^b-1$, exactly the first $N$ labels correspond to the desired
orbit. Apply a fixed basis permutation satisfying, for all $0\leq j<N$, $\ket j \longmapsto
\ket*{C_f^jv}$
and map the single remaining label $\ket N$ to $\ket{0^b}$. Ignoring
the approximation in the single-qubit rotations, the resulting state is
\begin{align}
\sqrt{\frac{N}{2^b}}\,\ket{\psi_1}
+
\frac{e^{i\varphi}}{\sqrt{2^b}}\ket{0^b}
\label{eq:psi-one-padded}
\end{align}
for some phase $\varphi$. Because the orbit of $v$ consists of all
nonzero strings, $\ket{0^b}$ is orthogonal to the support of
$\ket{\psi_1}$. The single padded basis vector therefore contributes
trace-distance error $2^{-b/2}$.

If each of the $b$ single-qubit factors in
\Cref{eq:chi-b-factorization} is prepared to error at most $\eta$, a
hybrid argument contributes an additional error at most $b\eta$.
Choosing $b=\Theta(\log(1/\eps))$ and
$\eta=\Theta(\eps/b)$ therefore gives total preparation error at most
$\eps$ and depth $O(\log\log(1/\eps))$. The only nonlocal step is the final basis permutation. Its
Fan-Out-assisted implementation uses $O(b2^b)$ gates and workspace,
including Fan-Out operations with as many as $2^b$ targets. In the next
subsection, we describe the circuit implementation and resource bounds in
detail.

\subsection{Basis Permutations}
The catalyst preparation requires a fixed relabeling of $b$-bit basis
states. We implement it by computing the input's one-hot encoding,
performing a hardwired lookup, and uncomputing the work. The
$b$-literal conjunctions take depth $O(\log b)$ with bounded-arity
Toffoli trees, or constant-depth with generalized Toffoli gates.

\begin{lemma}[Logarithmic-Width Basis Permutations]
\label{lem:log-width-permutation}
Every permutation $\pi:\{0,1\}^b\to\{0,1\}^b$ admits an exact clean $\UQNC_f$ implementation of depth $O(\log b)$ and gate count and workspace $O(b2^b)$.
\end{lemma}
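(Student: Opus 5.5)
The plan is to follow the outline just before the statement: one-hot decode the input, do a hardwired lookup into a fresh output register, then erase the input and the one-hot register by running the same kind of circuit for $\pi^{-1}$. Every step uses only $X$, CNOT, Toffoli and Fan-Out, so the circuit is exact over $\UQNC_f$.

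\textbf{Forward half: computing $\pi(z)$ into a fresh register.} Start from $\ket{z}\ket{0^b}$ and build the one-hot register $\ket{e}$, where $e_y=\mathbf 1[z=y]$ for each $y\in\{0,1\}^b$.
\begin{enumerate}
\item Fan-Out each input bit $z_i$ into $2^b$ copies. This is one Fan-Out layer with $2^b$ targets per gate and $b2^b$ copy qubits.
\item For each $y$, apply $X$ to the copies of the bits where $y_i=0$, so every literal of $y$ becomes positive.
\item Compute each $e_y$ with its own balanced Toffoli tree over its $b$ private literal copies, into a fresh qubit. Each tree has depth $O(\log b)$ and $O(b)$ gates and workspace, and all $2^b$ trees run in parallel on disjoint qubits.
\item Uncompute the trees' internal nodes, then undo the $X$ gates and the copy Fan-Out. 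This leaves the clean state $\ket{z}\ket{e}\ket{0^b}$.
\end{enumerate}
Next perform the lookup. For each output position $i$, the target bit is $\bigoplus_{y:\pi(y)_i=1}e_y$, which has up to $2^b$ terms. Compute it as a parity of the one-hot bits into output qubit $i$. Parity is Hadamard-conjugated Fan-Out, but $H$ in this model is fine. A more direct route is to note that each $e_y$ is XORed into the output positions where $\pi(y)_i=1$, which is a Fan-Out from $e_y$ with at most $b$ targets. The difficulty is that the different $y$ share targets, so these gates cannot all be applied in one layer. To fix this, give each $e_y$ its own private $b$-bit block $o_y$ and Fan-Out $e_y$ into the positions of $o_y$ given by $\pi(y)$. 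Then combine the blocks by XOR: for each $i$, the bit $\bigoplus_y o_{y,i}$ is a parity of $2^b$ bits, computed in constant depth by the Hadamard trick $H^{\otimes}\,\FANOUT\,H^{\otimes}$ from the target qubit. Uncompute the $o_y$ blocks. This gives $\ket{z}\ket{e}\ket{\pi(z)}$.

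\textbf{Second half: erasing $z$ and $e$.} Uncompute $e$ from $z$ by reversing the decoding, giving $\ket{z}\ket{\pi(z)}$. To erase $z$, run the same construction with $\pi^{-1}$, decoding from the register holding $\pi(z)$, so that $z$ is XORed into the first register and zeroed out. Reverse the decoding again, swap the registers by relabeling wires, and the result is $\ket{\pi(z)}\ket{0^b}$ with all workspace clean. The total is a constant number of phases, each of depth $O(\log b)$ (from the Toffoli trees) or $O(1)$, with $O(b2^b)$ gates and qubits.

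\textbf{Main obstacle.} The delicate step is making every parallel layer act on disjoint qubits. This is where the private copies matter: the literal copies in decoding and the private output blocks $o_y$ in the lookup. I would check the parity-via-Hadamard step carefully. Conjugating the target and all controls by $H$ turns Fan-Out from the target into parity onto the target, and the gate count stays $O(2^b)$ per output bit, so $O(b2^b)$ in total.
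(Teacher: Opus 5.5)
Your proposal is correct and follows the paper's proof essentially step for step. You Fan-Out the input bits and compute the one-hot register with parallel balanced Toffoli trees in depth $O(\log b)$. You then read off each $\pi(z)_i$ as a parity of one-hot bits, where your private blocks $o_y$ are exactly the paper's ``distribute to disjoint work registers'' step and Parity is Hadamard-conjugated Fan-Out. Finally you uncompute and erase the input with the analogous oracle for $\pi^{-1}$. The only cosmetic difference is that you swap the two $b$-qubit registers by relabeling wires, whereas the paper uses an explicit constant-depth layer of $b$ SWAP gates; this changes nothing in the bounds.
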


\begin{proof}
We first construct the clean XOR oracle
\begin{align}
    O_\pi\ket{x}\ket{y}
    &:=
    \ket{x}\ket{y\oplus\pi(x)}.
\end{align}
The construction proceeds by reversibly converting the binary string
$x\in\{0,1\}^b$ into its one-hot unary encoding.  For
$a\in\{0,1\}^b$, let $e_a$ denote the unary basis string whose $a$th
coordinate is one and whose remaining coordinates are zero.  We implement $\ket{x}\ket*{0^{2^b}} \longmapsto \ket{x}\ket{e_x}$.

To compute the $a$-th unary bit, test whether $x=a$.  For each
$i\in[b]$, define the corresponding literal
\begin{align}
    \ell_{a,i}(x)
    &:=
    \begin{cases}
        x_i, & a_i=1,\\
        \neg x_i, & a_i=0.
    \end{cases}
\end{align}
The $a$th unary bit is then
\begin{align}
    (e_x)_a
    &=
    \bigwedge_{i=1}^b \ell_{a,i}(x),
\end{align}
which equals one exactly when $x=a$.  Using Fan-Out, distribute each input
bit to the work registers associated with all $2^b$ possible strings
$a$.  For each $a$, compute the conjunction of its $b$ literals using a
balanced tree of Toffoli gates.  Each tree has depth $O(\log b)$ and uses
$O(b)$ gates and work qubits.  Since the $2^b$ trees act on disjoint
registers, they are evaluated in parallel.  The binary-to-unary conversion
therefore has depth $O(\log b)$ and gate count and workspace
$O(b2^b)$.

The unary representation makes the evaluation of $\pi$ immediate.  For
each output coordinate $j\in[b]$, the truth table of $\pi$ gives
\begin{align}
    \pi(x)_j
    &=
    \bigoplus_{\substack{a\in\{0,1\}^b\\ \pi(a)_j=1}}
    (e_x)_a.
    \label{eq:unary-permutation-lookup}
\end{align}
Indeed, the unary register contains a 1 only in the coordinate
$a=x$.  After distributing its bits to disjoint work registers, all $b$
parities in \Cref{eq:unary-permutation-lookup} can be XORed into the target
register in parallel.  Each Parity is implemented in constant depth from
Fan-Out by conjugating with Hadamard gates.  Reversing the distribution and
binary-to-unary conversion returns all ancillary registers to zero.
Consequently, the clean oracle $O_\pi$ has depth $O(\log b)$ and gate
count and workspace $O(b2^b)$.

Finally, introduce a fresh $b$-qubit zero register, use $O_\pi$ to compute
$\pi(x)$ into it, and swap the two $b$-qubit registers.  The register swap
consists of $b$ parallel SWAP gates, each decomposable into three CNOT
gates, and therefore has constant depth.  Applying $O_{\pi^{-1}}$ then
erases the register containing $x$, since $\pi^{-1}(\pi(x))=x$.  This
implements
\begin{align}
    \ket{x}\ket{0^b}
    &\longmapsto
    \ket{\pi(x)}\ket{0^b}
\end{align}
exactly.  The two oracle calls are sequential but each has depth
$O(\log b)$, so the complete clean in-place implementation retains depth
$O(\log b)$ and gate count and workspace $O(b2^b)$.
\end{proof}

The remaining ingredient in the catalyst preparation is the basis relabeling that maps the computational labels of $\ket{\chi_b}$ to the orbit labels defining $\ket{\psi_1}$. Since this relabeling is a fixed permutation of the $2^b$ computational-basis states, it suffices to show that an arbitrary $b$-bit basis permutation can be implemented in constant depth. We will not require this implementation to be polynomial in $b$, since in our application $b=\Theta(\log(1/\eps))$. So, a cost of $O(b2^b)$ remains polynomial in $1/\eps$. The following corollary gives exactly the required implementation.

\begin{corollary}[Constant-depth basis permutations]
\label{cor:constant-depth-permutation}
Every permutation $\pi:\{0,1\}^b\to\{0,1\}^b$ admits an exact clean
$\UQAC_f$ implementation of depth $O(1)$ and gate count and workspace
$O(b2^b)$.
\end{corollary}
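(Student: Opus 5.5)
The plan is to reuse the proof of \Cref{lem:log-width-permutation} verbatim and change only the one step that costs logarithmic depth. There, every stage already has constant depth in $\UQNC_f$ except the $2^b$ balanced Toffoli trees that evaluate the $b$-literal conjunctions $(e_x)_a=\bigwedge_{i=1}^b\ell_{a,i}(x)$.

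First, I would fan out each input bit $x_i$ to the $2^b$ work registers associated with the strings $a$. This is a single layer of Fan-Out gates with $2^b$ targets each, $b$ gates in total, on disjoint supports. Next, in each register $a$, I would apply $X$ to the copies of $x_i$ for which $a_i=0$, so that every literal $\ell_{a,i}(x)$ becomes a positive control.

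Then I would compute $(e_x)_a$ into a fresh qubit with one $b$-controlled generalized Toffoli. All $2^b$ of these Toffolis act on disjoint registers, so they fit in one layer. After that, I would undo the $X$ gates.

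The lookup step of \Cref{eq:unary-permutation-lookup} is unchanged. I would fan out each unary bit $(e_x)_a$ to at most $b$ copies, one per output coordinate $j$ with $\pi(a)_j=1$. I would then realize each of the $b$ parities as a Hadamard-conjugated Fan-Out into the target register, and these have constant depth. Reversing the distribution, the conjunctions, and the initial Fan-Out restores all workspace to $\ket0$, which gives the clean oracle $O_\pi$ in depth $O(1)$.

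Finally, I would apply the same in-place trick as before: the sequence $O_\pi$, then $b$ parallel swaps, then $O_{\pi^{-1}}$. This adds only constant depth.

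For resources, the $2^b$ registers each hold $O(b)$ copies, and the lookup uses $O(b2^b)$ copies. Every stage uses $O(b2^b)$ gates, so the gate count and workspace are both $O(b2^b)$.

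The only point that needs care is that all gates lie in $\UQAC_f$. The single-qubit gates used are $H$ and $X$. The multiqubit gates are Fan-Out and generalized Toffoli of unbounded width, and $\UQAC_f$ imposes no polylogarithmic width cap. I do not expect a real obstacle. The main thing to check is that the copies used by different Toffolis are genuinely disjoint, which the Fan-Out layer guarantees, so that each stage is a single parallel layer.
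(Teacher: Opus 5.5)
Your proposal is correct and matches the paper's proof. You keep the construction of \Cref{lem:log-width-permutation} and replace each balanced Toffoli tree for $(e_x)_a$ by a single $b$-controlled generalized Toffoli; every other stage (the input Fan-Out, the Hadamard-conjugated Fan-Out parity lookup, the uncomputation, and the in-place swap with $O_{\pi^{-1}}$) is already constant depth, and the gate count and workspace remain $O(b2^b)$.
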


\begin{proof}
In the construction of \Cref{lem:log-width-permutation}, replace each
balanced bounded-arity Toffoli tree computing
\begin{align}
    (e_x)_a &= \bigwedge_{i=1}^b\ell_{a,i}(x)
\end{align}
by a single unbounded Toffoli gate. All $2^b$ conjunctions remain
parallel, while the input distribution, parity lookups, uncomputation,
and register swaps already have constant depth with Fan-Out. The gate
count and workspace are unchanged asymptotically.
\end{proof}

\subsection{Catalyst Preparation}
We now combine the phase-state preparation and basis permutation to
prepare the catalyst
\begin{align}
    \ket{\psi_1}
    &=\frac{1}{\sqrt N}\sum_{j=0}^{N-1}
        \omega_N^{-j}\ket*{C_f^jv}.
    \label{eqn:catalyst_psi}
\end{align}
The following bound includes all preparation workspace and allows
several independent copies to be prepared in parallel.

\begin{lemma}[Shallow preparation of the universal catalyst]
\label{lem:kim-catalyst-factory}
For integers $b\geq3$, $w\geq1$, and $\eta\in(0,1)$, there is a
fully unitary $\UQNC_f$ circuit, initialized entirely to zero, whose
joint catalyst--workspace output $\widetilde\rho_{b,w}$ satisfies

\begin{align}
    \frac{1}{2} \left\|
        \widetilde\rho_{b,w}-
        \bigl(\ketbra{\psi_1}{\psi_1}\bigr)^{\otimes w}
        \otimes
        \ketbra{0^a}{0^a}
    \right\|_1
    &\leq
    \sqrt{\frac{w}{2^b}}
    +
    wb\eta,
    \label{eq:kim-catalyst-factory-error}
\end{align}
Here $a$ counts the workspace qubits other than the $wb$ catalyst
qubits. The depth is $O(\log b+\log\log(1/\eta))$ and the size is  $w2^{O(b)}+O(wb\log(1/\eta))$.
\end{lemma}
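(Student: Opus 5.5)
We build $w$ independent copies of a single-catalyst circuit on disjoint registers and run them in parallel. Each copy has three stages. First, apply $H$ to $b$ fresh qubits to obtain $\ket{+}^{\otimes b}$. Second, on qubit $t$ apply the synthesized rotation $C_{\alpha_t,\eta}$ from \Cref{thm:rz-synthesis-qnc}, with $\alpha_t:=-2\pi 2^t/N$, each call using its own $O(\log(1/\eta))$-qubit clean workspace. Third, apply the basis permutation $\pi$ of \Cref{lem:log-width-permutation}, defined by $\pi(j)=C_f^jv$ for $0\le j<N$ and $\pi(N)=0^b$. Since $v\mapsto C_f^jv$ enumerates all nonzero strings exactly once, $\pi$ is a bijection of $\{0,1\}^b$.

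The ideal copy (exact rotations) is analyzed as follows. By \Cref{eq:rz-prepares-phase-state} and \Cref{eq:chi-b-factorization}, the first two stages produce $\ket{\chi_b}$ up to global phase. Then $\pi$ maps it to the padded state in \Cref{eq:psi-one-padded}, $\ket{\phi}:=\sqrt{N/2^b}\,\ket{\psi_1}+e^{i\varphi}2^{-b/2}\ket{0^b}$. Because $\ket{0^b}\perp\ket{\psi_1}$, we get $|\langle\psi_1|\phi\rangle|^2=1-2^{-b}$. For $w$ copies the fidelity is $(1-2^{-b})^w\ge 1-w2^{-b}$. Since both states are pure, trace distance equals $\sqrt{1-F}\le\sqrt{w/2^b}$, with the clean workspace factor unaffected. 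This is the first error term.

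The rotation errors are handled by a hybrid argument on the isometries $C_{\alpha_t,\eta}V_0$. Each is $\eta$-close in operator norm to $V_0$ times a phased $R_z(\alpha_t)$, by \Cref{eq:qnc-rz-coherent-error}. Replacing the $wb$ rotations one at a time changes the joint pure state of catalyst and workspace by at most $wb\eta$ in Euclidean norm. The permutation is exact and unitary, so by unitary invariance it does not increase this distance. For pure states, trace distance is at most Euclidean distance, so this contributes at most $wb\eta$. The triangle inequality then gives \Cref{eq:kim-catalyst-factory-error}. The main subtlety is to compare the full output, including approximately restored workspace, with the ideal state times $\ket{0^a}$, rather than only the reduced catalyst state. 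The operator-norm form of \Cref{thm:rz-synthesis-qnc} handles this directly, and the fixed global phase $\varphi=\pi$ together with the $e^{-i\alpha/2}$ phases only contributes an overall phase.

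For resources, the Hadamards take depth 1. The parallel rotations take depth $O(\log\log(1/\eta))$ with size and workspace $O(wb\log(1/\eta))$. The permutations are run in parallel on disjoint registers, giving depth $O(\log b)$ and size $O(wb2^b)=w2^{O(b)}$ by \Cref{lem:log-width-permutation}. Summing the three stages gives the stated depth $O(\log b+\log\log(1/\eta))$ and size $w2^{O(b)}+O(wb\log(1/\eta))$. The gate set is $\UQNC_f$, because \Cref{thm:rz-synthesis-qnc} uses only Clifford and Toffoli gates, which decompose into $\{H,T,X,\cnot\}$ with constant overhead, and \Cref{lem:log-width-permutation} is already stated in $\UQNC_f$.
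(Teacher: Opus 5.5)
Your proposal is correct and follows essentially the same route as the paper. Both synthesize the $wb$ phase factors of $\ket{\chi_b}^{\otimes w}$ in parallel via \Cref{thm:rz-synthesis-qnc}, bound the hybrid error by $wb\eta$, apply the exact permutation $\pi_f$ of \Cref{lem:log-width-permutation}, bound the padding error by $\sqrt{1-(1-2^{-b})^w}\le\sqrt{w/2^b}$, and combine the two errors by the triangle inequality.
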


\begin{proof}
The construction has two stages. First, we prepare a product-state
approximation to a slightly padded version of the catalyst. Second, we
apply an exact basis permutation that converts almost all of this state
into $\ket{\psi_1}$. The only two errors are therefore the synthesis
error in preparing the product state and the amplitude of the single
extra padded basis vector.

Recall that, for  $N=2^b-1$,
\begin{align}
    \ket{\chi_b}
    &=
    \bigotimes_{t=0}^{b-1}
    \frac{\ket0+\omega_N^{-2^t}\ket1}{\sqrt2}.
\end{align}
By \Cref{eq:rz-prepares-phase-state}, the $t$th factor is prepared,
up to global phase, by applying
$R_z(-2\pi 2^t/N)$ to $\ket+$. Thus each copy of
$\ket{\chi_b}$ consists of $b$ independent single-qubit rotations,
all of which can be synthesized in parallel.
Prepare $w$ copies simultaneously, approximating each of the $wb$
rotations to clean-input error $\eta$ using
\Cref{thm:rz-synthesis-qnc}. A hybrid argument over the $wb$
replacements gives a joint state $\rho_\chi$ satisfying
\begin{align}
    \frac12\left\|
        \rho_\chi
        -
        (\ketbra{\chi_b}{\chi_b})^{\otimes w}
        \otimes\ketbra{0^a}{0^a}
    \right\|_1
    &\leq wb\eta.
    \label{eq:padded-state-synthesis-error}
\end{align}
Here $a$ includes all synthesis workspace as well as the clean
registers used by the subsequent basis permutations.
It remains to convert the computational labels of $\ket{\chi_b}$ into
the orbit labels defining the catalyst. Because $C_f$ is the companion
matrix of a primitive polynomial, the orbit
$v,C_fv,\ldots,C_f^{N-1}v$ contains every nonzero $b$-bit string
exactly once. Hence the map
\begin{align}
    \pi_f\ket j
    &:=
    \begin{cases}
        \ket*{C_f^jv},&0\leq j<N,\\
        \ket{0^b},&j=N
    \end{cases}
    \label{eq:kim-preparation-permutation}
\end{align}
is a permutation of the full computational basis.

The reason for padding by the single label $j=N$ is now clear.
Namely, the first $N$ terms are mapped exactly to the desired catalyst, while
the one additional term is sent to the only basis state outside its
support. Using $\omega_N^{-N}=1$,
\begin{align}
    \ket{\widetilde\psi_1}
    &:=
    \pi_f\ket{\chi_b}
    =
    \sqrt{1-2^{-b}}\ket{\psi_1}
    +
    2^{-b/2}\ket{0^b}.
    \label{eq:padded-to-kim-catalyst}
\end{align}
Since $\ket{\psi_1}$ is supported only on nonzero computational-basis
states, the two terms are orthogonal. Thus one ideal padded copy has
squared overlap $1-2^{-b}$ with $\ket{\psi_1}$, and $w$ independent
copies have squared overlap $(1-2^{-b})^w$. Their trace distance is
therefore
\begin{align}
    \frac12\left\|
        (\ketbra{\widetilde\psi_1}{\widetilde\psi_1})^{\otimes w}
        -
        (\ketbra{\psi_1}{\psi_1})^{\otimes w}
    \right\|_1
    &=
    \sqrt{1-(1-2^{-b})^w}
    \leq
    \sqrt{\frac{w}{2^b}},
    \label{eq:kim-padding-error}
\end{align}
where we used $1-(1-x)^w\leq wx$. Apply the exact clean implementation of $\pi_f$ from
\Cref{lem:log-width-permutation} to all $w$ copies in parallel.
Unitary invariance preserves the synthesis error in
\Cref{eq:padded-state-synthesis-error}, while
\Cref{eq:kim-padding-error} bounds the additional error caused by the
single padded label. The triangle inequality therefore gives
\begin{align}
    \frac12\left\|
        \widetilde\rho_{b,w}
        -
        (\ketbra{\psi_1}{\psi_1})^{\otimes w}
        \otimes\ketbra{0^a}{0^a}
    \right\|_1
    &\leq
    \sqrt{\frac{w}{2^b}}
    +
    wb\eta,
\end{align}
which is \Cref{eq:kim-catalyst-factory-error}.

Finally, all $wb$ single-qubit rotations are synthesized in parallel,
giving depth $O(\log\log(1/\eta))$ and total gate count and workspace
$O(wb\log(1/\eta))$. The $w$ basis permutations are also applied in
parallel and, by \Cref{lem:log-width-permutation}, contribute depth
$O(\log b)$ and total gate count and workspace
$O(wb2^b)=w2^{O(b)}$. Since these two stages are sequential, the
claimed resource bounds follow.
\end{proof}

The preceding construction prepares a single copy of $\ket{\psi_1}$ with error controlled by the phase-state approximation and the single padded basis vector. For circuit simulation, however, we will need a small bank of independent catalysts so that several phase gates can be implemented in parallel. Since the copies can be prepared simultaneously, increasing the number of catalysts affects the size and workspace but not the circuit depth. It therefore suffices to prepare each copy to error $O(\eps/w)$ and apply a hybrid bound over the $w$ copies. Choosing the precision parameter $b$ accordingly gives the following consequence.

\begin{corollary}[Constant-depth preparation of the universal catalyst]
\label{cor:constant-depth-kim-catalyst}
For every integer $w\geq1$ and $\eps\in(0,1)$, there is a choice
$b=O(\log(w/\eps))$ for which $w$ copies of the universal catalyst
$\ket{\psi_1}$ can be prepared by a $\UQAC_f^0$ circuit to trace-distance error at most $\eps$. The gate count and workspace are $\poly(w,1/\eps)$.
\end{corollary}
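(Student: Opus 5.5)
We would follow the proof of \Cref{lem:kim-catalyst-factory} step for step, changing only the two subroutines whose depth depends on $b$ or $\eta$. There are two such substitutions. First, the $wb$ parallel $R_z$ syntheses from \Cref{thm:rz-synthesis-qnc} become the constant-depth synthesizer of \Cref{cor:rz-synthesis-qacf}, with any fixed $\gamma$, say $\gamma=1/2$. Second, the bounded-arity basis permutation of \Cref{lem:log-width-permutation} becomes the constant-depth permutation of \Cref{cor:constant-depth-permutation}. Everything else is unchanged: the padded product state $\ket{\chi_b}$, the permutation $\pi_f$, and the orthogonality of $\ket{0^b}$ to the support of $\ket{\psi_1}$. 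Consequently the error analysis that yields \Cref{eq:kim-catalyst-factory-error} carries over verbatim, giving total trace-distance error at most $\sqrt{w/2^b}+wb\eta$.

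For the parameters, we pick $b:=\max\{3,\lceil\log_2(4w/\eps^2)\rceil\}=O(\log(w/\eps))$, so that $\sqrt{w/2^b}\leq\eps/2$. We then set $\eta:=\eps/(2wb)$, so that $wb\eta\leq\eps/2$. The total error is therefore at most $\eps$. Each synthesized rotation has clean-input operator-norm error at most $\eta$. The same hybrid argument as in \Cref{eq:padded-state-synthesis-error} converts these into a trace-distance bound of $wb\eta$ on the joint state, including the workspace, which is only approximately restored.

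For resources, all $wb$ syntheses run in parallel in depth $O_\gamma(1)$. Each uses $O(\log(1/\eta))$ gates and $O(\log^{3/2}(1/\eta))$ workspace, where $\log(1/\eta)=O(\log(wb/\eps))=O(\log(w/\eps))$, so this stage has polynomial cost. The $w$ permutations run in parallel in depth $O(1)$ with cost $O(wb2^b)$. Since $2^b=O(w/\eps^2)$, this is $\poly(w,1/\eps)$. Composing the two stages sequentially gives constant depth overall.

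The one point needing care is membership in $\UQAC_f$, since the gates must have admissible widths and lie in the right gate set. The Fan-Out gates in \Cref{cor:constant-depth-permutation} have up to $2^b=\poly(w/\eps)$ targets, which is allowed because $\UQAC_f$ places no width restriction. The single-qubit gates $S,S^\dagger,Z$ used by the synthesizer equal $T^2$, $T^6$, and $T^4$, so they lie in $\{H,T,X\}$. Likewise CNOT, CZ, and Toffoli are special cases of generalized Toffoli conjugated by $H$. The remaining check is that the constant depth does not hide a dependence on $w$ or $\eps$. It does not: the only depth parameter is the fixed $\gamma$ (through $k(\gamma)$ in \Cref{lem:comparison-fanout}), and this is independent of $w$ and $\eps$.
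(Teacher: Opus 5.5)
Your proposal is correct and matches the paper's proof essentially line for line. It uses the same parameters $b=\max\{3,\lceil\log_2(4w/\eps^2)\rceil\}$ and $\eta=\eps/(2wb)$, the same two substitutions (\Cref{cor:rz-synthesis-qacf} for the $wb$ parallel rotations and \Cref{cor:constant-depth-permutation} for the $w$ permutations), and the same bound $\sqrt{w/2^b}+wb\eta\leq\eps$ carried over from \Cref{lem:kim-catalyst-factory}. The only difference is that you take $\gamma=1/2$ where the paper takes $\gamma=1/4$. This just changes the synthesis workspace from $O(wb\log^{5/4}(1/\eta))$ to $O(wb\log^{3/2}(1/\eta))$ and does not affect the polynomial bound.
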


\begin{proof}
Use the construction of \Cref{lem:kim-catalyst-factory} with
$b=\max\{3,\lceil\log_2(4w/\eps^2)\rceil\}$ and
$\eta=\eps/(2wb)$. Its error analysis gives
\begin{align}
    \frac12\left\|
      \widetilde\rho_{b,w}
      -\bigl(\ketbra{\psi_1}{\psi_1}\bigr)^{\otimes w}
          \otimes\ketbra{0^a}{0^a}
    \right\|_1
    &\leq\sqrt{\frac{w}{2^b}}+wb\eta\leq\eps.
\end{align}
For the depth bound, fix $\gamma=1/4$ and replace each of the $wb$
parallel phase-factor synthesizers by the wide-gate construction of
\Cref{cor:rz-synthesis-qacf}. The same coherent accuracy $\eta$ is
available, so the error estimate is unchanged. Implement the $w$
permutations $\pi_f$ in parallel by
\Cref{cor:constant-depth-permutation}. Both stages have constant depth.
Their total gate count is
$O(wb2^b+wb\log(1/\eta))$, and the workspace is
\begin{align}
    O\bigl(wb2^b+wb\log^{5/4}(1/\eta)\bigr)
       =\poly(w,1/\eps).
\end{align}
Thus all resources have the claimed polynomial dependence and the
parameter $b$ is chosen with the target accuracy.
\end{proof}

\section{Approximate Polylogarithmic Fan-Out}
\label{sec:qac-uqac-collapse}

Several of our shallow synthesis constructions are naturally expressed as $\UQAC_f^0$ circuits, but use Fan-Out only on a fixed polylogarithmic number of targets. We show that these particular Fan-Out gates can be implemented in $\UQAC^0$, using only Hadamard and generalized Toffoli gates, with inverse-polynomial error, constant depth, and polynomial size. Note that the approximation is coherent and clean, i.e. approximately restores all zero-initialized workspace. Consequently, any $\UQAC_f^0$ synthesis construction whose Fan-Out gates have fixed polylogarithmic width can be compiled into a $\UQAC^0$ circuit with the same asymptotic depth and polynomial size. In \Cref{sec:catalytic-circuit-simulation}, we use this compilation to derive the resulting relationships between the standard and universal shallow-depth hierarchies. Importantly, the result applies only to fixed polylogarithmic-width Fan-Out and does not imply a general collapse $\UQAC_f^0=\UQAC^0$.

Approximate shallow Fan-Out has been studied previously. Rosenthal gave constant-depth approximations of exponential size in $\QAC$~\cite{rosenthal2021bounds}, while Grier, Morris, and Wu gave exact amplitude-amplification constructions, including polynomial-size implementations for every fixed polylogarithmic width~\cite[Cor.~8, Cor.~10]{grier2026mathsf}. These constructions use angle-dependent single-qubit rotations, and therefore do not directly provide the universal implementation needed here. Instead, we use the approximate-nekomata construction of Grier and Morris~\cite{grier2025quantum}, choosing its phase operation to mark only the all-ones basis state. With this choice, the construction uses only Hadamard and generalized Toffoli gates. A nekomata has two equally weighted branches, $\ket{0^q}$ and $\ket{1^q}$, on its target register. This structure yields coherent Parity by encoding the input parity as their relative phase, uncomputing the nekomata, and testing whether its preparation register returns to zero. Hadamard conjugation then gives Fan-Out.

We first obtain logarithmic-width Fan-Out from this construction, and then compose these circuits in a constant-height tree to reach $O(\log^c n)$ targets for any fixed $c$, while preserving constant depth, polynomial size, and inverse-polynomial accuracy.

\subsection{Prior Results}
\label{sec:fanout-prior-results}

We begin by specifying the operations and approximation guarantee.
On registers $B,X$, Parity XORs the parity of $X$ into $B$:
\begin{align}
    \mathsf{Parity}_L\ket b\ket{x_1,\ldots,x_L}
    &:=\ket{b\oplus x_1\oplus\cdots\oplus x_L}
       \ket{x_1,\ldots,x_L}.
    \label{eqn:parity-unitary-definition}
\end{align}
Fan-Out instead XORs the bit in $B$ into each qubit of $X$:
\begin{align}
    \FANOUT_L\ket b\ket{x_1,\ldots,x_L}
    &:=\ket b\ket{x_1\oplus b,\ldots,x_L\oplus b}.
    \label{eqn:fanout-unitary-definition}
\end{align}
Conjugating a CNOT by Hadamards reverses its control and target, giving
the standard Parity--Fan-Out duality~\cite[Prop.~1]{moore1999qac0}, i.e.
\begin{align}
\FANOUT_L
&=
H^{\otimes(L+1)}
\mathsf{Parity}_L
H^{\otimes(L+1)}.
\label{eqn:parity-fanout-hadamard-duality}
\end{align}
Thus, an implementation of Parity gives one of Fan-Out with only two
additional Hadamard layers.

We say that a unitary $V$ $\eps$-implements $U$ with clean
ancillas if, for every normalized input state $\ket{\psi}$,
\begin{align}
\left\|
V\bigl(\ket{0^A}\ket{\psi}\bigr)
-
\ket{0^A}U\ket{\psi}
\right\|
&\leq
\eps.
\label{eqn:clean-isometry-error-definition}
\end{align}
Equivalently, the implemented operation is close to $U$ while returning
the ancillary register approximately to $\ket*{0^A}$. Since the bound is
uniform over $\ket{\psi}$, it remains valid even when the data register
is entangled with an arbitrary reference system.

The connection to state preparation uses \emph{nekomata}. A
$q$-nekomata is a pure state of the form
\begin{align}
\ket\nu
&=
\frac{
\ket{0^q}\ket{\alpha_0}
+
\ket{1^q}\ket{\alpha_1}
}{\sqrt2},
\label{eqn:nekomata-definition}
\end{align}
where $\ket{\alpha_0}$ and $\ket{\alpha_1}$ are arbitrary normalized
garbage states. Following \cite[Def.~15]{grier2025quantum}, a pure state
$\ket\phi$ is a $\delta$-approximate $q$-nekomata if there exists such a
$\ket\nu$ satisfying
\begin{align}
|\braket{\nu}{\phi}|^2
&\geq
1-\delta.
\label{eqn:approximate-nekomata-definition}
\end{align}
Thus $\delta$ measures infidelity with an exact nekomata. The following
criterion shows that it is enough to place nearly half of the probability
on each of the two endpoint states $\ket{0^q}$ and $\ket{1^q}$.

\begin{fact}[Endpoint Criterion {\cite[Lem.~18]{grier2025quantum}}]
\label{fact:nekomata-endpoints}
Let $0\leq\eps<1/2$. If measuring the designated $q$ target qubits of a
pure state yields $0^q$ and $1^q$ with probability at least
$1/2-\eps$ each, then the state is a $2\eps$-approximate
$q$-nekomata.
\end{fact}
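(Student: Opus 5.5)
The plan is to exhibit an explicit exact nekomata $\ket\nu$ built from the given state and lower-bound its overlap with $\ket\phi$. Write the state in the measurement decomposition on the designated $q$ target qubits:
\begin{align}
    \ket\phi
    &=
    \sqrt{p_0}\,\ket{0^q}\ket{\alpha_0}
    +\sqrt{p_1}\,\ket{1^q}\ket{\alpha_1}
    +\ket{r},
\end{align}
where $p_0,p_1\geq 1/2-\eps$ are the endpoint probabilities, $\ket{\alpha_0},\ket{\alpha_1}$ are the normalized post-measurement garbage states, and $\ket r$ collects all components whose target string lies outside $\{0^q,1^q\}$. Any relative phases are absorbed into the garbage states, so both coefficients can be taken real and nonnegative. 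This decomposition is orthogonal, since the three pieces have disjoint target supports.

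Next, define the candidate nekomata with exactly these garbage states:
\begin{align}
    \ket\nu
    :=
    \frac{\ket{0^q}\ket{\alpha_0}+\ket{1^q}\ket{\alpha_1}}{\sqrt2}.
\end{align}
It has the form of \Cref{eqn:nekomata-definition}, and $\ket r$ is orthogonal to it. Therefore
\begin{align}
    \braket{\nu}{\phi}=\frac{\sqrt{p_0}+\sqrt{p_1}}{\sqrt2}.
\end{align}
It remains to show $(\sqrt{p_0}+\sqrt{p_1})^2/2\geq 1-2\eps$. Since the cross term $2\sqrt{p_0p_1}$ is nonnegative, we have $(\sqrt{p_0}+\sqrt{p_1})^2\geq p_0+p_1\geq 1-2\eps$. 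However, this only gives $|\braket\nu\phi|^2\geq 1/2-\eps$, which is too weak. The cross term has to be used genuinely: $(\sqrt{p_0}+\sqrt{p_1})^2=p_0+p_1+2\sqrt{p_0p_1}\geq 2(1-2\eps)$, because each of $p_0+p_1$ and $2\sqrt{p_0p_1}$ is at least $1-2\eps$ when both $p_j\geq 1/2-\eps$. Dividing by $2$ gives $|\braket\nu\phi|^2\geq 1-2\eps$, which is the $2\eps$-approximate condition of \Cref{eqn:approximate-nekomata-definition}.

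The main point needing care is the choice of garbage states and phases. If one fixed $\ket{\alpha_j}$ in advance rather than taking them from $\ket\phi$ itself, the overlap could degrade. The hypothesis $\eps<1/2$ guarantees $p_0,p_1>0$, so the post-measurement states $\ket{\alpha_j}$ are well defined. No other step should be delicate.
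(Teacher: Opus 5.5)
Your argument is correct and matches the route the paper sketches after citing Grier--Morris: normalize the endpoint garbage states taken from $\ket\phi$ itself, form the equal-amplitude nekomata, and show the squared overlap $(\sqrt{p_0}+\sqrt{p_1})^2/2$ is at least $1-2\eps$ by bounding both $p_0+p_1$ and $2\sqrt{p_0p_1}$ below by $1-2\eps$. The paper gives only a one-line justification here, and your write-up supplies the explicit computation behind it.
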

\noindent Indeed, one may normalize the garbage state associated with each endpoint
and give the two endpoint branches equal amplitude. If the original state
already places probability close to $1/2$ on each branch, then it has
large overlap with this exact nekomata.

Grier and Morris construct such states using a rectangular grid of
$q$-qubit columns. Each non-target column is prepared so that it is very
likely to be $\ket{1^q}$, less likely to be $\ket{0^q}$, and unlikely to
lie outside these two endpoint states. A row-wise generalized Toffoli
layer then combines the non-target columns into a distinguished target
column. By choosing the number of columns appropriately, the target
column is nearly equally likely to be $\ket{0^q}$ or $\ket{1^q}$, and
the complete grid state is therefore close to a nekomata by the
preceding fact. For our application, we use the simplest case of their construction, in
which the phase operation marks only the all-ones basis state. In the
notation of Grier and Morris, this corresponds to $S_q:=\{1^q\}$,
where $S_q$ denotes the set of basis strings that acquire a minus sign.
The associated phase reflection is therefore
\begin{align}
U_{S_q}
&:=
I-2\ketbra{1^q}{1^q}.
\label{eqn:singleton-phase-gate}
\end{align}
This reflection has an exact constant-depth implementation over our
restricted gate set. Taking the last qubit as the target,
\begin{align}
U_{S_q}
&=
\left(I^{\otimes(q-1)}\otimes H\right)
\GTof_{q-1}
\left(I^{\otimes(q-1)}\otimes H\right),
\label{eqn:singleton-phase-gate-implementation}
\end{align}
where $\GTof_{q-1}$ has the first $q-1$ qubits as controls and the
last as its target. Indeed, conjugating the target bit flip by Hadamards turns it
into a $Z$ phase conditioned on all $q-1$ controls being one, so the
unique basis state that acquires a minus sign is $\ket{1^q}$.

\begin{fact}[Singleton Grid Construction
{\cite[Thm.~19]{grier2025quantum}}]
\label{fact:singleton-grid-construction}
Fix $q\geq4$ and an integer $t\geq1$, and use the reflection
$U_{S_q}$ from \Cref{eqn:singleton-phase-gate}.
Let $B_q:=H^{\otimes q}U_{S_q}H^{\otimes q}$,
$\ket{s_q}:=B_q\ket{0^q}$, and
$\gamma_j:=|\braket{j^q}{s_q}|^2$ for $j\in\{0,1\}$.
Arrange $q(t+1)$ qubits in $q$ rows and $t+1$ columns. Prepare each
of $t$ non-target columns in
$(I-2\ketbra{s_q}{s_q})\ket{1^q}$, leaving the target column in
$\ket{0^q}$. In each row, apply a generalized Toffoli controlled on
the non-target bits and targeting the remaining bit. The target column equals $1^q$ with probability
\begin{align}
    p(1^q)
    &=(1-2\gamma_1)^{2t}.
    \label{eqn:grid-all-one-probability}
\end{align}
Writing $p_{\mathrm{other}}$ for the probability that the target is
neither $0^q$ nor $1^q$, we have
\begin{align}
    p_{\mathrm{other}}
    &\leq4t\gamma_1(1-\gamma_0-\gamma_1).
    \label{eqn:grid-other-probability}
\end{align}
The circuit has constant depth, uses $q(t+1)$ qubits, and contains
$O(q(t+1))$ Hadamard and generalized Toffoli gates.
\end{fact}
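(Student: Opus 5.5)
The plan is to verify the statement directly from the product structure of the grid: compute the computational-basis amplitudes of a single non-target column, then analyze the deterministic row-wise AND map column by column.

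\textbf{Circuit and depth.} First note that $B_q=H^{\otimes q}U_{S_q}H^{\otimes q}$ is a real orthogonal matrix. Since $B_q^2=I$, the reflection about $\ket{s_q}$ can be written as
\[
I-2\ketbra{s_q}{s_q}=B_q\bigl(I-2\ketbra{0^q}{0^q}\bigr)B_q .
\]
The middle factor satisfies $I-2\ketbra{0^q}{0^q}=X^{\otimes q}U_{S_q}X^{\otimes q}$. Hence, by \Cref{eqn:singleton-phase-gate-implementation}, each column is prepared from $\ket{0^q}$ by
\[
X^{\otimes q}\ \to\ B_q\ \to\ X^{\otimes q}U_{S_q}X^{\otimes q}\ \to\ B_q .
\]
This is a constant number of layers of $H$, $X$, and one generalized Toffoli. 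The $t$ columns act on disjoint qubits and are prepared in parallel. The final row-wise layer consists of $q$ disjoint generalized Toffolis. This gives constant depth, $q(t+1)$ qubits, and $O(q(t+1))$ gates.

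\textbf{Column amplitudes.} Write $s_x:=\braket{x}{s_q}$; these are real. Then
\[
\ket{c}:=(I-2\ketbra{s_q}{s_q})\ket{1^q}=\ket{1^q}-2s_{1^q}\ket{s_q}.
\]
Its amplitudes are:
\begin{itemize}
\item $1-2\gamma_1$ on $1^q$;
\item $-2s_{1^q}s_x$ on every $x\neq1^q$.
\end{itemize}
Consequently, the probability that a column lies outside $\{0^q,1^q\}$ is
\[
\sum_{x\notin\{0^q,1^q\}}4s_{1^q}^2s_x^2=4\gamma_1(1-\gamma_0-\gamma_1).
\]

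\textbf{Target distribution.} The $t$ columns are in the product state $\ket{c}^{\otimes t}$. The row gates act as a classical map on basis states: row $i$ of the target receives the AND of row $i$ across all non-target columns. Measuring the target therefore has the same statistics as sampling each column independently from $|\braket{x}{c}|^2$ and applying this map.
\begin{itemize}
\item The target equals $1^q$ exactly when every column is $1^q$. This happens with probability $(1-2\gamma_1)^{2t}$, which is \Cref{eqn:grid-all-one-probability}.
\item If any column equals $0^q$, every row AND is $0$, so the target is $0^q$.
\item So the target can lie outside $\{0^q,1^q\}$ only if no column is $0^q$ and at least one column is outside $\{0^q,1^q\}$. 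A union bound over the $t$ columns then gives $p_{\mathrm{other}}\le 4t\gamma_1(1-\gamma_0-\gamma_1)$, which is \Cref{eqn:grid-other-probability}.
\end{itemize}

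\textbf{Expected obstacle.} The only delicate point is keeping the reflection identity and its sign conventions straight. I need $B_q$ to be self-inverse so that $I-2\ketbra{s_q}{s_q}$ is conjugated correctly. I also need the amplitudes $s_x$ to be real, so that squared amplitudes are exactly $4\gamma_1|s_x|^2$. Both follow because $H$ and $U_{S_q}$ are real symmetric involutions.
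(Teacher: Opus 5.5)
Your proposal is correct and takes the same route the paper indicates: the paper justifies this cited fact only by independence across the non-target columns together with a union bound over columns that leave $\{0^q,1^q\}$, which is your argument. Your explicit column preparation $B_q\bigl(X^{\otimes q}U_{S_q}X^{\otimes q}\bigr)B_qX^{\otimes q}$ correctly gives the constant-depth Hadamard--generalized-Toffoli implementation that the paper takes from Grier--Morris (minor point: reality of the $s_x$ is not actually needed, since $|{-2}\,\overline{s_{1^q}}\,s_x|^2=4\gamma_1|s_x|^2$ holds for complex amplitudes as well).
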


The bounds in the preceding fact follow from independence across the
non-target columns together with a union bound on the probability of
leaving the two endpoint states. We now combine this approximate
nekomata preparation with the standard reduction from nekomata to
coherent Parity.

\begin{fact}[Coherent Nekomata-to-Parity Reduction
{\cite[Thm.~3.1 and Sec.~3.2]{rosenthal2021bounds}}]
\label{fact:nekomata-to-parity}
Suppose an $A$-qubit unitary circuit $C$ prepares a
$\delta$-approximate $L$-nekomata from $\ket{0^A}$, where
$0\leq\delta<1$. Then the standard reduction produces a circuit
$\mathcal P_L(C)$ such that, for every state $\ket{\psi}$ on the
Parity input and output registers,
\begin{align}
\left\|
\mathcal P_L(C)
\bigl(\ket{0^A}\ket{\psi}\bigr)
-
\ket{0^A}\mathsf{Parity}_L\ket{\psi}
\right\|
&\leq
4\sqrt{2\delta}.
\label{eqn:nekomata-parity-hybrid-error}
\end{align}
The construction uses two calls to $C$, two to $C^\dagger$, and
$O(A+L)$ additional Hadamard and generalized Toffoli gates in constant
additional depth. In particular, it preserves the Hadamard--Toffoli
gate set and approximately returns the preparation register to
$\ket{0^A}$.
\end{fact}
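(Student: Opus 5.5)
The plan is to reconstruct Rosenthal's reduction in the clean-ancilla language of \Cref{eqn:clean-isometry-error-definition}. There are two steps. First, analyze the construction for an \emph{exact} nekomata. Second, transfer the result to the approximate case using a fidelity-to-norm conversion and a hybrid argument.

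\textbf{Exact case.} Let $C$ prepare an exact nekomata $\ket\nu=(\ket{0^L}\ket{\alpha_0}+\ket{1^L}\ket{\alpha_1})/\sqrt2$. Here the $L$ target qubits form a register $T$ and the garbage forms a register $G$.
\begin{itemize}
\item Apply $C$ to $\ket{0^A}$.
\item For each $i$, apply a controlled-$Z$ between input qubit $x_i$ and target qubit $T_i$. These gates act on disjoint pairs, so the layer has depth one, and each is a Hadamard-conjugated Toffoli. The $\ket{1^L}$ branch acquires the phase $(-1)^{|x|}$, giving $(\ket{0^L}\ket{\alpha_0}+(-1)^{|x|}\ket{1^L}\ket{\alpha_1})/\sqrt2$.
\item Apply $C^\dagger$. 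The even-parity branch maps exactly to $\ket{0^A}$. The odd-parity state $C^\dagger\ket{\nu'}$, with $\ket{\nu'}$ the relatively-negated nekomata, is orthogonal to $\ket{0^A}$, since $\braket{\nu}{\nu'}=0$.
\item Flip the output bit $b$ conditioned on the register \emph{not} being $\ket{0^A}$. This is a generalized Toffoli on the $A$ register, with $X$ conjugations and an $X$ on the target, all expressible with $H$ and generalized Toffoli since $X=HZH$ and $Z$ is a Hadamard-conjugated Toffoli with no controls. In both parity cases this XORs $|x| \bmod 2$ into $b$.
\item Uncompute by applying $C$, the controlled-$Z$ layer again, and $C^\dagger$. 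This restores $\ket{0^A}$ on each basis input $x$.
\end{itemize}
By linearity the map is exactly $\ket{0^A}\ket\psi\mapsto\ket{0^A}\mathsf{Parity}_L\ket\psi$, using two calls each to $C$ and $C^\dagger$, as stated.

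\textbf{Approximate case.} Now $C\ket{0^A}=\ket\phi$ with $|\braket\nu\phi|^2\ge1-\delta$. After fixing a phase, $\|\ket\phi-\ket\nu\|\le\sqrt{2(1-\sqrt{1-\delta})}\le\sqrt{2\delta}$.

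Define an ideal unitary $C'$ agreeing with $C$ except that it maps $\ket{0^A}\mapsto\ket\nu$. A standard rotation in the plane spanned by $\ket\phi$ and $\ket\nu$ gives $\|C'-C\|_{\mathrm{op}}$ bounded by roughly $\|\ket\phi-\ket\nu\|$. The cleaner route is to avoid $C'$ and instead track the state. Every application of $C$ or $C^\dagger$ acts, in the ideal trajectory, on a vector of the form $\ket{0^A}\otimes(\cdot)$ or $\ket{\nu^{\pm}}\otimes(\cdot)$. Replacing $\ket\nu$ by $\ket\phi$ (and correspondingly $\ket{\nu'}$ by $Z_x\ket\phi$, whose distance is the same by unitarity) at each of the four call sites incurs error at most $\sqrt{2\delta}$. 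The triangle inequality with unitary invariance then yields $4\sqrt{2\delta}$, uniformly in $\ket\psi$ and also when entangled with a reference, since each step is an operator-norm bound on the relevant isometry.

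\textbf{Main obstacle.} The delicate point is the $C^\dagger$ steps in the odd branch. There we need $C^\dagger$ applied to the phase-flipped approximate state to be close to $C^\dagger$ applied to the phase-flipped exact state, and we need the flag test to behave correctly. I would handle this by comparing the whole circuit $\mathcal P_L(C)$ to $\mathcal P_L(C')$ with $C'$ an exact nekomata preparer differing from $C$ by a two-dimensional rotation. Then $\|C-C'\|_{\mathrm{op}}\le\sqrt{2\delta}$, and four substitutions give the bound directly. Resource counts follow by inspection.
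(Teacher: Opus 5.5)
Your proposal is correct and matches the paper's argument. You first verify the exact-nekomata reduction: CZ kickback of $(-1)^{|x|}$ onto the $\ket{1^L}$ branch, then $C^\dagger$, then a reversible OR of the $A$ register into the output bit, then $C$, CZ, $C^\dagger$ to clean up. You then replace $C$ by an exact preparer $C'$ that differs from $C$ by a planar rotation, so $\|C-C'\|_{\mathrm{op}}=\|\ket\phi-\ket\nu\|_2\le\sqrt{2\delta}$, and take a hybrid over the four calls to get $4\sqrt{2\delta}$; this is exactly what the paper does. Your intermediate \emph{track the state} variant is unnecessary, since without $C'$ it has no well-defined ideal target for the odd-branch $C^\dagger$ call, but the operator-norm comparison of $\mathcal P_L(C)$ with $\mathcal P_L(C')$ that you settle on is the paper's route.
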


The reduction is illustrated explicitly in
\cite[App.~A, Fig.~2]{grier2025quantum}. Conceptually, it writes the
input parity as a relative phase between the two nekomata branches,
uncomputes the preparation, and records whether the preparation register
returns to zero. The required $CZ$ and reversible OR operations use only
Hadamards and generalized Toffoli gates.
The $O(\sqrt{\delta})$ error arises when converting nekomata infidelity
into coherent state error. Indeed, if
$\ket{\phi}=C\ket{0^A}$ has squared overlap at least $1-\delta$ with an
exact nekomata $\ket{\nu}$, then after choosing their relative phase,
\begin{align}
\|\ket{\phi}-\ket{\nu}\|_2
&\leq
\sqrt{2\delta}.
\label{eqn:preparation-unitary-distance}
\end{align}
Equivalently, for the analysis one may replace $C$ by a unitary
$\widehat C$ that prepares $\ket{\nu}$ exactly and differs from $C$ by at
most $\sqrt{2\delta}$ in operator norm. A hybrid over the four calls to
$C$ or $C^\dagger$ then gives the factor
$4\sqrt{2\delta}$ in \Cref{eqn:nekomata-parity-hybrid-error}.

We will also use a standard copying-tree construction to extend
logarithmic-width Fan-Out to larger polylogarithmic widths. The idea is
simple. Starting from one control qubit, repeatedly apply Fan-Out to
fresh zero ancillas in a tree. If each Fan-Out gate has at most
$B-1$ targets, then after $h$ levels we can create up to $B^h$
coherent copies of the control bit. Thus $M$ copies require depth only
$\lceil\log_B M\rceil$.

\begin{fact}[Restricted Fan-Out Tree
{\cite[Lem.~3.9]{rosenthal2021bounds}}]
\label{fact:restricted-fanout-tree}
For integers $B\geq2$ and $M\geq1$, there is a circuit $V$ on $M$
qubits, using at most $M-1$ Fan-Out gates with at most $B-1$ targets
per gate, of depth $\lceil\log_B M\rceil$, such that, for
$b\in\{0,1\}$,
\begin{align}
V\bigl(\ket b\ket{0^{M-1}}\bigr)
&=
\ket b^{\otimes M}.
\end{align}
\end{fact}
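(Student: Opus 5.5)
The plan is an explicit tree construction followed by induction on its levels. Index the $M$ qubits by $0,1,\ldots,M-1$, with qubit $0$ holding the control bit $b$. Let $c_i:=\min\{B^i,M\}$; this will be the number of qubits holding a copy of $b$ after $i$ levels. At level $i+1$, for $i=0,1,\ldots$, I would assign to each already-copied qubit $u<c_i$ a disjoint set $T_u$ of at most $B-1$ fresh qubits, taken from the indices $c_i,\ldots,c_{i+1}-1$. Concretely, let $T_u:=\{c_i+u(B-1)+s: 0\le s<B-1\}\cap[c_i,c_{i+1})$. Then apply, in parallel, one Fan-Out gate with control $u$ and target set $T_u$ for every $u$ with $T_u\neq\varnothing$.

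First I would check that the sets $T_u$ partition the new index range. They are disjoint by construction, and they cover $[c_i,c_{i+1})$ because
\[
c_i+c_i(B-1)=Bc_i\ge c_{i+1}.
\]
Since the gates at one level act on disjoint qubits (distinct controls, disjoint target sets), each level has depth one.

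Next, the correctness claim $V(\ket b\ket{0^{M-1}})=\ket b^{\otimes M}$ needs only to be verified on the two basis inputs $b\in\{0,1\}$, since $V$ is defined by its action there and extends by linearity. I would argue by induction on $i$ that after level $i$ the qubits $0,\ldots,c_i-1$ are all in $\ket b$ and the remaining qubits are still $\ket 0$. The base case $i=0$ is the input itself. For the inductive step, each target in $T_u$ is a fresh $\ket0$ qubit, and Fan-Out XORs the control value $b$ into it, giving $\ket b$. Untouched qubits are unchanged.

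It remains to count depth and gates. Let $h:=\lceil\log_B M\rceil$. Then $B^h\ge M$, so $c_h=M$ and all qubits are copies after $h$ levels, which gives depth $h$; the case $M=1$ is the empty circuit of depth $0$. For the gate count, each gate has a nonempty target set, and every qubit other than qubit $0$ is a target of exactly one gate across the whole circuit, so there are at most $M-1$ gates. I expect no real obstacle. The only delicate point is the bookkeeping at the final, possibly partial level, where some controls receive fewer than $B-1$ targets (or none, in which case no gate is placed). The truncated choice of $T_u$ above handles this while keeping both the $B-1$ width bound and the $M-1$ gate bound.
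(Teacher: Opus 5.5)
Your proposal is correct and is essentially the construction the paper sketches: a $B$-ary copying tree in which every existing copy fans out to at most $B-1$ fresh zero qubits per level, giving $B^h$ copies after $h$ levels. Your explicit index assignment, the covering inequality $Bc_i\geq c_{i+1}$, and the count of one gate per nonempty target set supply the details the paper leaves to the citation of Rosenthal.
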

\noindent For our application, we take $B=\Theta(\log n)$ and
$M=\operatorname{polylog}(n)$. Hence $\log_B M=O(1)$, so the copying
tree has constant depth. We then replace each width-$B$ Fan-Out gate in
the tree by our constant-depth universal implementation from
\Cref{lem:uqac-logarithmic-fanout}. Gates within each tree level act on
disjoint registers and therefore run in parallel. On a superposition, the tree produces the corresponding cat state
rather than independent copies of an unknown quantum state.

\subsection{Universal Polylogarithmic Fan-Out}
\label{sec:fanout-fixed-gate-construction}

We now combine the preceding ingredients to obtain universal Fan-Out.
We first specialize the Grier--Morris grid construction to produce
logarithmic-width Fan-Out using only Hadamard and generalized Toffoli
gates, with inverse-polynomial coherent error and polynomial resources.
We then compose these logarithmic-width gadgets in the copying-tree
construction to reach any fixed polylogarithmic number of targets while
preserving constant depth and polynomial size.

\begin{lemma}[Fixed-Gate Simulation of Logarithmic-Width Fan-Out]
\label{lem:uqac-logarithmic-fanout}
For every pair of integers $q\geq4$ and $1\leq L\leq q$, there is a
constant-depth circuit $G_{L,q}$ over $H$ and generalized Toffoli gates,
acting on the $L+1$ data qubits of $\FANOUT_L$ together with
$A_q=\Theta(q4^q)$ clean ancilla qubits, such that, for every input state
$\ket{\psi}$,
\begin{align}
\left\|
G_{L,q}
\bigl(\ket{0^{A_q}}\ket{\psi}\bigr)
-
\ket{0^{A_q}}\FANOUT_L\ket{\psi}
\right\|
&\leq
16\cdot 2^{-q/2}.
\label{eqn:fixed-gate-fanout-error}
\end{align}
The circuit uses $O(q4^q)$ gates and qubits. Its largest generalized Toffoli gate has
$O(q4^q)$ controls.
\end{lemma}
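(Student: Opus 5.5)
The plan is to instantiate the singleton grid construction of \Cref{fact:singleton-grid-construction} with a column count $t=\Theta(4^q)$ chosen so that the target column is nearly balanced between $0^q$ and $1^q$. \Cref{fact:nekomata-endpoints} then certifies an approximate nekomata, \Cref{fact:nekomata-to-parity} turns it into coherent $\mathsf{Parity}_L$, and \Cref{eqn:parity-fanout-hadamard-duality} converts Parity into $\FANOUT_L$ at the cost of two Hadamard layers, which do not change the clean-isometry error. To get width $L\le q$, I would regard the first $L$ rows of the target column as the $L$ designated target qubits. The remaining $q-L$ target qubits and all non-target columns are absorbed into the garbage states $\ket{\alpha_0},\ket{\alpha_1}$. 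The events ``target column equals $0^q$'' and ``equals $1^q$'' are contained in the corresponding events for the first $L$ qubits, so the endpoint probabilities only increase.

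\textbf{Step 1: the parameters $\gamma_0,\gamma_1$.} Using $H^{\otimes q}\ket{1^q}=2^{-q/2}\sum_x(-1)^{|x|}\ket x$, one gets
\begin{align}
\ket{s_q}=B_q\ket{0^q}=\ket{0^q}-2^{1-q}\sum_{x}(-1)^{|x|}\ket x .
\end{align}
Hence $\gamma_0=(1-2^{1-q})^2$ and $\gamma_1=4^{1-q}$. The leftover mass is
\begin{align}
1-\gamma_0-\gamma_1=(2^q-2)4^{1-q}\le 4\cdot 2^{-q}.
\end{align}

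\textbf{Step 2: choosing $t$.} By \Cref{eqn:grid-all-one-probability}, $p(1^q)=(1-2\gamma_1)^{2t}$ is decreasing in $t$, with consecutive ratio $(1-2\gamma_1)^2\ge 1-4\gamma_1$. I would therefore pick the integer $t$ that makes $p(1^q)$ as close as possible to $1/2$ from above. This gives $|p(1^q)-1/2|\le 2\gamma_1=O(4^{-q})$ and $t\gamma_1\approx \ln 2/4$, so $t=\Theta(4^q)$.

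\textbf{Step 3: bounding the leakage.} \Cref{eqn:grid-other-probability} gives
\begin{align}
p_{\mathrm{other}}\le 4t\gamma_1\cdot 4\cdot 2^{-q}\lesssim 2.8\cdot 2^{-q}.
\end{align}
Consequently $p(0^q)=1-p(1^q)-p_{\mathrm{other}}\ge 1/2-O(2^{-q})$, and both endpoints carry probability at least $1/2-\eps$ with $\eps\le c\,2^{-q}$.

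\textbf{Step 4: assembling the error bound.} \Cref{fact:nekomata-endpoints} gives $\delta=2\eps$. \Cref{fact:nekomata-to-parity} then gives coherent error
\begin{align}
4\sqrt{2\delta}=8\sqrt{\eps}.
\end{align}
To reach the stated bound \Cref{eqn:fixed-gate-fanout-error}, I need $\eps\le 4\cdot 2^{-q}$. The main obstacle I anticipate is carrying out this constant bookkeeping carefully for all $q\ge 4$. Concretely, I need to check that the rounding term $2\gamma_1=8\cdot4^{-q}$ plus the leakage bound stays below $4\cdot 2^{-q}$. This should hold because $4t\gamma_1\le \ln 2+O(4^{-q})$, with slack available at small $q$ if one evaluates exactly.

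\textbf{Step 5: resources.} The grid uses $q(t+1)=\Theta(q4^q)$ qubits and $O(q4^q)$ gates. Its row Toffolis have $t=O(4^q)$ controls, and the $U_{S_q}$ gates have $q-1$ controls. The Rosenthal reduction adds only $O(A_q+L)$ Hadamard and generalized Toffoli gates in constant depth; its reversible OR over the preparation register has $O(q4^q)$ controls, which matches the claimed maximum Toffoli width. All stages have constant depth.
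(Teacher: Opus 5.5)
Your proposal is correct and follows the paper's proof essentially step for step. It uses the same endpoint weights $\gamma_0=(1-2u)^2$ and $\gamma_1=4u^2$ with $u=2^{-q}$, a column count $t=\Theta(4^q)$ putting $p(1^q)$ near $1/2$, the endpoint criterion with $\varepsilon=4u$, and the Rosenthal reduction giving $4\sqrt{2\cdot 8u}=16\cdot 2^{-q/2}$.

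The one difference is the rounding of $t$. The paper rounds up, so $p(1^q)\le 1/2$ and $p_{\mathrm{other}}\le 4u$ follows from $4t\gamma_1<1$. You round down, so $p(1^q)\ge 1/2$; your overshoot is really bounded by $2\gamma_1/(1-4\gamma_1)$ rather than exactly $2\gamma_1$. This is harmless, since your leakage bound $4\ln 2\cdot u$ leaves ample slack below $4u$ for all $q\geq 4$.
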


\begin{proof}
Apply \Cref{fact:singleton-grid-construction}, whose gate set is
already Hadamard and generalized Toffoli by
\Cref{eqn:singleton-phase-gate-implementation}. It remains to choose
the number of columns so that measuring the target gives $0^q$ and
$1^q$ with nearly equal probability, with little probability on any
other string. The endpoint criterion will then give an approximate
nekomata.

Let $u:=2^{-q}$
denote the probability weight of any individual basis string in the
uniform superposition on $q$ qubits. Since every amplitude of
$H^{\otimes q}\ket{1^q}$ has magnitude $2^{-q/2}=\sqrt{u}$, expanding
\begin{align}
\ket{s_q}
&=
\ket{0^q}
-
2^{1-q/2}H^{\otimes q}\ket{1^q}
\end{align}
shows that the probabilities of its two endpoint strings are
\begin{align}
\gamma_0
&:=
|\braket{0^q}{s_q}|^2
=
(1-2u)^2
\label{eqn:singleton-endpoint-probabilities}
\end{align}
and
\begin{align}
\gamma_1
&:=
|\braket{1^q}{s_q}|^2
=
4u^2.
\label{eqn:singleton-one-probability}
\end{align}
These two endpoint probabilities determine, through
\Cref{fact:singleton-grid-construction}, how the final target distribution
depends on the number of auxiliary columns. In particular, with $t$ non-target columns, the target is $\ket{1^q}$ with
probability
\begin{align}
p(1^q)
&=
(1-2\gamma_1)^{2t}.
\end{align}
We therefore choose
\begin{align}
t_q
&:=
\left\lceil
\frac{\ln2}{-2\ln(1-2\gamma_1)}
\right\rceil
=
\Theta(4^q),
\label{eqn:singleton-column-count}
\end{align}
so that this probability is approximately $1/2$. The asymptotic estimate
follows from $\gamma_1=\Theta(4^{-q})$ and
$-\ln(1-2\gamma_1)=\Theta(\gamma_1)$. Without the ceiling, this choice
would give $p(1^q)=1/2$ exactly. Rounding upward gives
\begin{align}
\frac12-2\gamma_1
&\leq
\frac{(1-2\gamma_1)^2}{2}
\leq
p(1^q)
\leq
\frac12.
\label{eqn:singleton-target-one}
\end{align}
Thus the all-one branch already has probability within exponentially
small error of $1/2$.

It remains to show that little probability lies outside the two endpoint
strings. Since $-\ln(1-2\gamma_1)\geq2\gamma_1$, our choice of $t_q$
implies
\begin{align*}
4t_q\gamma_1
&\leq
\ln2+4\gamma_1
<
1
\end{align*}
for $q\geq4$. Applying
\Cref{eqn:grid-other-probability} therefore gives
\begin{align}
p_{\mathrm{other}}
&\leq
1-\gamma_0-\gamma_1
=
4u-8u^2
\leq
4u.
\end{align}
Hence the total probability on target strings other than
$\ket{0^q}$ and $\ket{1^q}$ is at most $4u$. Let $p(0^q)$ denote the probability of the all-zero target string. Since
$p(1^q)\leq1/2$,
\begin{align}
p(0^q)
&=
1-p(1^q)-p_{\mathrm{other}}
\geq
\frac12-4u.
\label{eqn:singleton-target-zero}
\end{align}
Moreover, $2\gamma_1=8u^2\leq4u$, so
\Cref{eqn:singleton-target-one} also gives
\begin{align}
p(1^q)
&\geq
\frac12-4u.
\end{align}
Thus, both endpoint branches have probability at least
$1/2-4u$.

Let $C_q$ denote the resulting grid preparation. Since the grid contains
$q$ qubits in each of its $t_q+1$ columns, it acts on
\begin{align}
A_q
&:=
q(t_q+1)
=
\Theta(q4^q)
\end{align}
qubits. Let $\delta_q:=8u=2^{3-q}$ and
\(\ket{\phi_q}:=C_q\ket{0^{A_q}},\)
\Cref{fact:nekomata-endpoints} with $\eps=4u$ implies that there is an
exact $q$-nekomata $\ket{\nu_q}$ satisfying
\begin{align}
|\braket{\nu_q}{\phi_q}|^2
&\geq
1-\delta_q.
\label{eqn:singleton-nekomata-infidelity}
\end{align}
Therefore the grid prepares a $q$-nekomata with exponentially small
infidelity using $\Theta(q4^q)$ qubits. Finally, for any $L\leq q$, the
unused $q-L$ target qubits can be absorbed into the garbage register, so
the same prepared state is also a $\delta_q$-approximate
$L$-nekomata.

Apply \Cref{fact:nekomata-to-parity} to $C_q$, then conjugate the data
qubits by Hadamards as in \Cref{eqn:parity-fanout-hadamard-duality}.
These Hadamards preserve the error, so the resulting Fan-Out circuit
has clean-input error at most
\begin{align*}
    4\sqrt{2\delta_q}
    &=16\,2^{-q/2}.
\end{align*}
The grid uses $O(q4^q)$ gates in constant depth. The reduction makes
four preparation or inverse-preparation calls on the same register
and adds $O(A_q+L)$ gates in constant additional depth. Its reversible
OR has $A_q$ controls, and no other gate has larger arity. This proves
all resource bounds.
\end{proof}

The preceding lemma has cost exponential in $q$, so we now restrict to
the regime $q=O(\log n)$. In this range, the size and workspace
$O(q4^q)$ remain polynomial in $n$, while the error
$O(2^{-q/2})$ can be made inverse polynomial. This gives the following
logarithmic-width Fan-Out implementation.

\begin{corollary}[Inverse-Polynomial Logarithmic-Width Fan-Out]
\label{cor:inverse-poly-logarithmic-fanout}
Let $L(n)=O(\log n)$ and fix a constant $\beta>0$. There is a
polynomial-size, constant-depth circuit family over $H$ and generalized Toffoli gates that $n^{-\beta}$-implements
$\FANOUT_{L(n)}$ with clean ancillas.
\end{corollary}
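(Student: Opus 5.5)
The plan is to instantiate \Cref{lem:uqac-logarithmic-fanout} with a grid parameter $q=q(n)$ that grows like $\Theta(\log n)$, chosen large enough for the error bound and small enough for the cost. Suppose $L(n)\leq c_0\log_2 n$ for all large $n$. We set
\begin{align*}
    q(n):=\max\bigl\{4,\;L(n),\;\lceil 2\beta\log_2 n\rceil+8\bigr\}.
\end{align*}
Then $q\geq4$ and $1\leq L(n)\leq q$, so the lemma applies and yields the circuit $G_{L(n),q(n)}$. This circuit is built only from Hadamard and generalized Toffoli gates and has constant depth independent of $q$. The finitely many small $n$ (for instance those with $L(n)=0$, where Fan-Out is the identity) are handled trivially.

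For the error, the lemma gives clean-input error at most $16\cdot 2^{-q/2}$. Because $q\geq 2\beta\log_2 n+8$, this is at most $16\cdot 2^{-4}\, n^{-\beta}=n^{-\beta}$. This is exactly the condition in \Cref{eqn:clean-isometry-error-definition}, so $G_{L(n),q(n)}$ $n^{-\beta}$-implements $\FANOUT_{L(n)}$ with clean ancillas.

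For the resources, the lemma bounds gates, qubits, and the largest Toffoli arity all by $O(q4^q)$. Since $q=O(\log n)$ (using $q\leq\max\{4,c_0\log_2 n,2\beta\log_2 n+9\}$), we have $4^q=2^{2q}\leq n^{O(\max\{c_0,\beta\})}$ and hence $q4^q=\poly(n)$. Depth stays constant because the lemma's depth does not depend on $q$. There is no real obstacle here; the only point needing care is that one choice of $q$ must satisfy both $q\geq L$ and $q\gtrsim 2\beta\log n$ while still being $O(\log n)$. This is exactly where the hypothesis $L(n)=O(\log n)$ is used, because it keeps the exponential-in-$q$ cost polynomial.
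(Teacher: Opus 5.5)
Your proposal is correct and is the paper's proof: the paper takes $q(n)=\max\{L(n),4,\lceil 2\log_2(16n^\beta)\rceil\}$, which equals your $\max\{4,L(n),\lceil 2\beta\log_2 n\rceil+8\}$, and uses \Cref{lem:uqac-logarithmic-fanout} both for the $16\cdot 2^{-q/2}\le n^{-\beta}$ error bound and for the $q4^q=\poly(n)$ cost. One small wording point: the inputs with $L(n)=0$ need not be finitely many, but this does not matter, since Fan-Out on zero targets is the identity, as the paper also notes.
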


\begin{proof}
The case $L(n)=0$ is immediate. Otherwise, set
$q(n):=\max\{L(n),4,\lceil2\log_2(16n^\beta)\rceil\}$
and apply \Cref{lem:uqac-logarithmic-fanout}.
This meets the lemma's hypotheses and makes its error at most
$n^{-\beta}$. Since $q(n)=O(\log n)$, the gate count and workspace
$q(n)4^{q(n)}$ are polynomial in $n$.
\end{proof}

\noindent For larger widths, setting $q\geq L(n)$ directly would lose
polynomial size. Instead, we compose logarithmic-width gadgets using
\Cref{fact:restricted-fanout-tree}. The tree needs only constantly
many levels to reach any fixed polylogarithmic number of targets.

\begin{corollary}[Inverse-Polynomial Polylogarithmic Fan-Out]
\label{cor:inverse-poly-polylog-fanout}
For every fixed $c,\beta>0$ and every $L(n)=O(\log^c n)$, there is a
polynomial-size, constant-depth circuit $G_n$ over Hadamard and
generalized Toffoli gates, using $A(n)=\operatorname{poly}(n)$
clean ancilla qubits, such that, for every input state $\ket{\psi}$,
\begin{align}
\left\|
G_n
\bigl(\ket*{0^{A(n)}}\ket{\psi}\bigr)
-
\ket*{0^{A(n)}}\FANOUT_{L(n)}\ket{\psi}
\right\|
&\leq
n^{-\beta}.
\label{eqn:polylog-fanout-error}
\end{align}
\end{corollary}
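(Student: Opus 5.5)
The plan is to compose the logarithmic-width gadgets of \Cref{cor:inverse-poly-logarithmic-fanout} inside the copying tree of \Cref{fact:restricted-fanout-tree}. Fix $c,\beta$ and $L=L(n)=O(\log^c n)$. Choose the branching parameter $B:=\max\{2,\lceil\log_2 n\rceil+1\}$, so each tree node is a $\FANOUT_{B-1}$ gate with $B-1=O(\log n)$ targets. With $M:=L+1$, the tree has depth $h=\lceil\log_B M\rceil$, which is at most $c+1$ for large $n$, so $h=O(1)$. It uses at most $M-1=O(\log^c n)$ Fan-Out gates.

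To turn the tree into exact $\FANOUT_L$ on arbitrary inputs (whose targets need not be zero), I would run the following sandwich. First apply the tree $V$ to the control and $L$ fresh zero ancillas, producing $\ket b^{\otimes(L+1)}$. Then apply $L$ parallel CNOTs from the ancilla copies onto the $L$ data targets. Finally apply $V^\dagger$ to return the ancillas to zero. This is exact on every basis state $\ket b\ket{x}$, so it equals $\FANOUT_L$ exactly, with constant depth $2h+1$. The CNOTs are themselves one-target Hadamard--Toffoli gates, since a CNOT is a generalized Toffoli with one control. $V^\dagger$ is again a tree of width-$(B-1)$ Fan-Outs, because $\FANOUT$ is self-inverse.

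Next, replace every width-$(B-1)$ Fan-Out gate by the gadget $G_{B-1,q}$ from \Cref{cor:inverse-poly-logarithmic-fanout}, each with its own fresh clean ancilla block. I would pick the accuracy as $n^{-\beta'}$ with $\beta':=\beta+c+1$. Gates in the same tree level act on disjoint data qubits and disjoint ancilla blocks, so they run in parallel. The total depth is therefore $O(h)\cdot O(1)=O(1)$. There are $2(M-1)$ gadgets, each of size $\poly(n)$, so the size and the ancilla count $A(n)$ are both $\poly(n)$.

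The only real point is error accumulation under clean-ancilla approximation, and I expect this to be the main, though mild, obstacle. I would use the same telescoping hybrid as in \Cref{lem:single-qubit-to-rz}. Write $W_0:=\ket{0^{A(n)}}\otimes I$. Each gadget satisfies $\|G\,W_0-W_0\,\FANOUT_{B-1}\|_{\mathrm{op}}\le n^{-\beta'}$ on its own ancilla block, tensored with identity elsewhere, and this operator-norm bound persists under tensoring and under conjugation by the surrounding unitaries. Replacing the ideal gates one at a time, in circuit order, and using unitary invariance gives total error at most $2(M-1)n^{-\beta'}=O(\log^c n)\,n^{-\beta-c-1}\le n^{-\beta}$ for all sufficiently large $n$; small $n$ can be absorbed by raising $q$ by a constant. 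One must check that the hybrid is valid even though later gadgets see inputs that are not exactly clean. This holds because each replacement compares against the ideal action on the clean subspace, and the previous errors are carried forward only as additive norm terms. Since the bound is uniform in operator norm, it also holds for inputs entangled with a reference system, which yields \Cref{eqn:polylog-fanout-error}.
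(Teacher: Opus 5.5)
Your proposal is correct and follows essentially the same route as the paper: a constant-height copying tree of width-$O(\log n)$ Fan-Outs, the sandwich $V^\dagger D V$, replacement of each tree gate by the Hadamard--generalized-Toffoli gadget with $q=O(\log n)$, and a telescoping hybrid. The only difference is minor. The paper uncomputes with the exact inverse $\widetilde V^\dagger$ of the approximate tree and bounds the error through the intertwining identity $DV(\ket{0}\ket\psi)=V(\ket{0}F\ket\psi)$. You instead replace the uncomputation gates by fresh gadgets with private workspace and run a single hybrid over all $2(M-1)$ replacements. That works equally well, since each gadget's private workspace is untouched until that gadget acts.
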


\begin{proof}
The case $L(n)=0$ is immediate, so assume $L(n)\geq1$. Let
$B:=\max\{2,\lceil\log_2 n\rceil\}$ be the branching factor, and
let $M:=L(n)+1$ count the original control together with one private
copy for each target. By \Cref{fact:restricted-fanout-tree}, a copying
circuit $V$ produces these copies using $1\leq K\leq L(n)$ Fan-Out
gates, each with at most $B-1$ targets. Its depth is
$\lceil\log_B M\rceil=O_c(1)$ because $M=O(\log^c n)$ and
$B=\Theta(\log n)$.

Let $D$ be the parallel CNOT layer from the private control copies
to their respective targets. The ideal implementation is
$V^\dagger D V$: copy the control, act on the targets, and undo the
copies. Writing $F:=\FANOUT_{L(n)}$, this works because
\begin{align}
    D V\bigl(\ket*{0^{L(n)}}\ket\psi\bigr)
    &=V\bigl(\ket*{0^{L(n)}}F\ket\psi\bigr)
    \label{eqn:polylog-copy-act-intertwining}
\end{align}
for every normalized state $\ket\psi$ of the original control and
targets. Both sides copy the same control value and XOR it into the
targets, only in a different order. Equality on basis states extends
linearly to arbitrary inputs.

Give each tree gate private implementation workspace and local error
$\eps_{\mathrm{loc}}:=n^{-\beta}/(2K)$. All replacements from
\Cref{lem:uqac-logarithmic-fanout} can use the parameter
\begin{align}
    q&:=\max\left\{B,4,
          \left\lceil2\log_2\frac{16}{\eps_{\mathrm{loc}}}
          \right\rceil\right\}=O(\log n).
\end{align}
Here the logarithmic bound follows from $K=O(\log^c n)$. Thus each
local gadget has polynomial size and workspace, rather than the
quasipolynomial cost that would result from setting $q\geq L(n)$
directly.

Let $\widetilde V$ be the resulting tree, and let $A$ count the
copy qubits and all implementation workspace. Extend $V$ and $D$
by the identity on the added workspace. A hybrid over the $K$
replacements gives
\begin{align}
    \left\|(\widetilde V-V)
       \bigl(\ket{0^A}\ket\psi\bigr)\right\|_2
    &\leq K\eps_{\mathrm{loc}}=\frac{n^{-\beta}}2=: \zeta
    \label{eqn:copying-tree-error}
\end{align}
for every normalized $\ket\psi$. Each comparison uses an ideal
prefix, so the replaced gate's private workspace is zero; the gate's
data may be entangled with the rest of the tree.

Set $G_n:=\widetilde V^\dagger D\widetilde V$. To bound its error,
we use \Cref{eqn:polylog-copy-act-intertwining} rather than assume
that the approximate forward tree has left its workspace clean.
Unitary invariance and the triangle inequality give
\begin{align}
    &\left\|G_n\bigl(\ket{0^A}\ket\psi\bigr)
                   -\ket{0^A}F\ket\psi\right\|_2\\
    &\quad=\left\|D\widetilde V\bigl(\ket{0^A}\ket\psi\bigr)
                   -\widetilde V\bigl(\ket{0^A}F\ket\psi\bigr)
             \right\|_2\\
    &\quad\leq
       \left\|D(\widetilde V-V)\bigl(\ket{0^A}\ket\psi\bigr)
             \right\|_2
       +\left\|(V-\widetilde V)\bigl(\ket{0^A}F\ket\psi\bigr)
             \right\|_2\\
    &\quad\leq2\zeta=n^{-\beta}.
\end{align}
The two terms use \Cref{eqn:copying-tree-error} on $\ket\psi$ and
$F\ket\psi$, respectively, both with zero-initialized workspace.
This also bounds the error in restoring all ancillary qubits.

The tree has constantly many levels, with disjoint gadgets within
each level. It contains only $K=O(\log^c n)$ polynomial-size
gadgets, so its total size and workspace are polynomial. Reversing
it and adding the CNOT layer preserves constant depth. Finitely
many small input lengths can be handled separately.
\end{proof}

\section{Relations Between the Standard and Universal Hierarchies}
\label{sec:shallow-circuit-implications}

We will now leverage the previously described synthesis results to obtain a number of circuit simulations and relations between the standard and universal shallow-depth quantum circuit hierarchies. Although the universal hierarchy restricts the allowed single-qubit gate set, we will show that this restriction can often be imposed with little or no loss in depth. In particular, our direct synthesis results yield universal simulations with at most a doubly logarithmic depth overhead, while the polylogarithmic Fan-Out and catalytic constructions allow this overhead to be removed or made merely additive in several important shallow-depth regimes.

First, the depth-optimal single-qubit synthesis results of
\Cref{sec:be_gate_synth} already give several direct relations between
the standard and universal hierarchies. With bounded-arity gates,
arbitrary single-qubit gates can be synthesized to inverse-polynomial
accuracy in depth $O(\log\log n)$. At the level of full unitary
synthesis, this gives
\begin{align*}
\textsf{Unitary}(\QNC^0)
&\subseteq
\textsf{Unitary}(\UQNC[O(\log\log n)]),
\\
\textsf{Unitary}(\QAC^0)
&\subseteq
\textsf{Unitary}(\UQAC[O(\log\log n)]),
\\
\textsf{Unitary}(\QNC_f^0)
&\subseteq
\textsf{Unitary}(\UQNC_f[O(\log\log n)]).
\end{align*}
For bounded-error $\QNC^0$ decision computation, the single measured
output has only a constant-size backward light cone, so constant
accuracy suffices and the conclusion strengthens to
$\QNC^0=\UQNC^0$. This stronger constant-depth statement is
decision-specific and does not approximate the full unitary. Finally,
when both generalized Toffoli and Fan-Out are available, the
constant-depth synthesis construction gives the full-unitary equality
$\textsf{Unitary}(\QAC_f^0)=
\textsf{Unitary}(\UQAC_f^0)$, and hence also
$\QAC_f^0=\UQAC_f^0$ for bounded-error decision problems.

Second, \Cref{sec:qac-uqac-collapse} strengthens the constant-depth
synthesis result for circuits with generalized Toffoli gates. In
particular, the construction of \Cref{sec:be_gate_synth} uses only
polylogarithmic-width Fan-Out in addition to the fixed universal basis
and generalized Toffoli gates, while \Cref{sec:qac-uqac-collapse}
shows that these Fan-Out gates can themselves be approximated in
constant depth and polynomial space using only Hadamards and generalized
Toffoli gates. Combining the two gives a depth-preserving approximation
of the full unitary implemented by any $\QAC$ circuit, on arbitrary
inputs with clean ancillas. Thus, for every fixed $k\geq0$,
$$
    \textsf{Unitary}(\QAC^k)
    =
    \textsf{Unitary}(\UQAC^k),
$$
and, as an immediate consequence for bounded-error decision problems,
$\QAC^k=\UQAC^k$.

Finally, the end-to-end catalyst synthesis of
\Cref{sec:kim-comparison} gives a complementary universalization result
for circuits with unbounded Fan-Out. Once the Kim--Laakkonen catalyst
bank is available, each phase gate is implemented by a constant-depth
phase-kickback gadget, and the ideal catalyst is returned unchanged.
The same bank can therefore be reused throughout the computation, so
the only nonconstant-depth cost is its one-time
$O(\log\log n)$ preparation. For a depth-$D$ circuit, this yields a
simulation of the complete induced quantum channel in depth
$O(D+\log\log n)$, including on inputs entangled with a reference
system. For every fixed $k\geq1$, however,
the additive overhead is absorbed into $O(\log^k n)$, giving
$\QNC_f^k=\UQNC_f^k$ for bounded-error computation. Combining this with
the Takahashi--Tani collapse of the unrestricted Fan-Out hierarchy yields, for every fixed $k\geq1$,
$$
    \QNC_f^k=\UQNC_f^k=\QAC_f^k=\UQAC_f^k
    =\QTC_f^k=\UQTC_f^k.
$$

\subsection{Immediate Consequences of Single-Qubit Synthesis}
\label{sec:direct-unitary-compilation}

We first apply the single-qubit synthesis results of
\Cref{sec:be_gate_synth} independently to each single-qubit gate, while
leaving the multiqubit primitives unchanged. The following lemma
collects the resulting resource bounds, which will be used for both the
unitary-synthesis and bounded-error decision consequences.

\begin{lemma}[Direct Unitary Compilation]
\label{lem:direct-unitary-compilation}

Let $C$ be a size-$s$, depth-$d$ circuit in one of the models
$\QNC$, $\QAC$, $\QTC$, or $\QNC_f$. For every $\eps\in(0,1)$,
there is a circuit in the corresponding universal model that
approximates $C$ to error $\eps$, in the sense of
\Cref{eqn:qac-fixed-gate-simulation}, with depth $O\left(d\log\log(s/\eps)\right)$
and gate count and additional workspace
$O(s\log((s+2)/\eps))$. In particular, at inverse-polynomial accuracy,
\begin{align}
\textsf{Unitary}(\QNC^0)
&\subseteq
\textsf{Unitary}(\UQNC[O(\log\log n)]),
\label{eq:direct-qnc-unitary-inclusion}\\
\textsf{Unitary}(\QAC^0)
&\subseteq
\textsf{Unitary}(\UQAC[O(\log\log n)]),
\label{eq:direct-qac-unitary-inclusion}\\
\textsf{Unitary}(\QTC^0)
&\subseteq
\textsf{Unitary}(\UQTC[O(\log\log n)]),
\label{eq:direct-qtc-unitary-inclusion}\\
\textsf{Unitary}(\QNC_f^0)
&\subseteq
\textsf{Unitary}(\UQNC_f[O(\log\log n)]).
\label{eq:direct-qncf-unitary-inclusion}
\end{align}
When both generalized Toffoli and Fan-Out are available, the compilation
can be performed with no asymptotic depth overhead. More precisely, for
every fixed $\gamma\in(0,1)$, a size-$s$, depth-$d$ circuit in
$\QAC_f$ or $\QTC_f$ can be approximated to error $\eps$ in the
corresponding universal model with depth $O(d)$, gate count
$O_\gamma(s\log((s+2)/\eps))$, and additional workspace $O_\gamma\left( s\log^{1+\gamma}(s/\eps)\right)$
Consequently,
\begin{align}
\textsf{Unitary}(\QAC_f^0)
&=
\textsf{Unitary}(\UQAC_f^0),
\label{eq:direct-qacf-unitary-equality}\\
\textsf{Unitary}(\QTC_f^0)
&=
\textsf{Unitary}(\UQTC_f^0).
\label{eq:direct-qtcf-unitary-equality}
\end{align}
\end{lemma}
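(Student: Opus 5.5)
The plan is a gate-by-gate replacement followed by a hybrid argument. Fix $C$ of size $s$ and depth $d$, and set the per-gate accuracy to $\eta := \eps/s$. Every single-qubit gate $U$ in $C$ will be replaced by a synthesized circuit $C_{U,\eta}$; all multiqubit primitives (Toffoli, generalized Toffoli, Threshold, Fan-Out) are left untouched.

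In the bounded-arity setting the synthesized circuit comes from \Cref{lem:single-qubit-to-rz} applied with \Cref{thm:rz-synthesis-qnc}. The gates $S$, $S^\dagger$, $Z$, $CZ$, $\cnot$ are exact Clifford$+T$ words ($S=T^2$, $Z=T^4$, $CZ=(I\otimes H)\cnot(I\otimes H)$). Ordinary Toffoli has a standard exact constant-depth $\{H,T,\cnot\}$ decomposition, which is needed only in the $\QNC$ and $\QNC_f$ models. Each replacement therefore has depth $O(\log\log(1/\eta))$, size $O(\log(1/\eta))$ and workspace $O(\log(1/\eta))$, with a fixed phase.

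Each gate receives its own private clean workspace. Gates within one layer of $C$ act on disjoint qubits, so their synthesized replacements also act on disjoint registers and run in parallel. Layer $i$ thus becomes depth $O(\log\log(s/\eps))$, giving total depth $O(d\log\log(s/\eps))$. The total size is $O(s\log(s/\eps))$, where I write $(s+2)$ to keep the logarithm positive for small $s$. The workspace is the sum of the private registers, $O(s\log((s+2)/\eps))$. Reusing workspace across layers is unnecessary for the stated bound.

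For correctness I use the clean-isometry hybrid already used in \Cref{lem:single-qubit-to-rz}. Let $V_0$ embed the data into the all-zero workspace. Order the single-qubit gates $g_1,\dots,g_s$ in circuit order and interpolate one replacement at a time. Each step compares $\widetilde C_{g}V_0$ with $V_0 e^{i\varphi_g}g$ acting on a state whose private workspace for $g$ is still zero, because that workspace has not yet been touched. By unitary invariance each step costs at most $\eta$, uniformly over inputs, including inputs entangled with a reference system. The total error is therefore $s\eta=\eps$. The product of the fixed phases $e^{i\varphi_g}$ is input-independent, which matches the definition of $\textsf{Unitary}(\cdot)$ up to global phase. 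The main point needing care is this step: replaced gates leave their workspace only approximately clean. Using private workspace and the isometry formulation avoids any need to argue that a later gate sees a clean register, so the errors add linearly.

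The inverse-polynomial statements follow by taking $s=\poly(n)$, $d=O(1)$ and $\eps=1/\poly(n)$, which gives depth $O(\log\log n)$ and polynomial size. For $\QAC_f$ and $\QTC_f$ the only change is to substitute \Cref{cor:rz-synthesis-qacf}, with any fixed $\gamma$, into \Cref{lem:single-qubit-to-rz}. This is legitimate because its Fan-Out and generalized Toffoli gates have widths $O(\log^{1/2+\gamma}(s/\eps))$ and $O(\log(s/\eps))$, and both gates are available in these models. The depth per layer is then $O_\gamma(1)$, so the total depth is $O(d)$. The gate count is $O_\gamma(s\log((s+2)/\eps))$ and the workspace is $O_\gamma(s\log^{1+\gamma}(s/\eps))$. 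The universal classes are contained in the standard classes trivially, which gives the two equalities.
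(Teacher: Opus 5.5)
Your proposal is correct and follows essentially the same route as the paper: per-gate accuracy $\eta=\eps/s$, replacing each single-qubit gate via \Cref{thm:rz-synthesis-qnc} (or \Cref{cor:rz-synthesis-qacf}) combined with \Cref{lem:single-qubit-to-rz}, giving each replacement private clean workspace so a layer's replacements run in parallel, and an operator-norm hybrid with an input-independent accumulated phase. The only superfluous step is your Clifford$+T$ decomposition of Toffoli, since bounded Toffoli is already a native primitive of $\UQNC$ and $\UQNC_f$ (and a special case of generalized Toffoli and Threshold gates in the wider models).
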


\begin{proof}
Set $\eta:=\eps/s$ and let $g_1,\ldots,g_m$, with $m\leq s$, denote the
single-qubit gates of $C$. For each $g_j$, choose a universal
implementation $\widetilde g_j$ with private clean workspace and fixed
phase $\varphi_j$ such that
\begin{align}
    \left\|
        \widetilde g_j
        (\ket{0^{a_j}}\otimes I)
        -
        e^{i\varphi_j}
        (\ket{0^{a_j}}\otimes g_j)
    \right\|_{\mathrm{op}}
    &\leq\eta.
    \label{eq:direct-compilation-local-error}
\end{align}
By \Cref{thm:rz-synthesis-qnc} and \Cref{lem:single-qubit-to-rz}, in the
bounded-width models each replacement has depth
$O(\log\log(1/\eta))$ and uses $O(\log(1/\eta))$ gates and clean
ancillae. When generalized Toffoli and Fan-Out are available,
\Cref{cor:rz-synthesis-qacf} instead gives depth $O_\gamma(1)$,
$O_\gamma(\log(1/\eta))$ gates, and
$O_\gamma(\log^{1+\gamma}(1/\eta))$ clean ancillae. We leave all
multiqubit gates unchanged. The same compilation applies in the
Threshold models because generalized Toffoli is a special case of a
Threshold gate.

Let $\widetilde C$ denote the resulting circuit. A standard hybrid
argument, replacing the $m$ single-qubit gates one at a time, gives
\begin{align}
    \left\|
        \widetilde C(\ket{0^A}\otimes I)
        -
        e^{i\varphi}
        (\ket{0^A}\otimes C)
    \right\|_{\mathrm{op}}
    &\leq
    \sum_{j=1}^m \eta
    \leq
    s\eta
    =
    \eps,
    \label{eq:direct-compilation-total-error}
\end{align}
where $\varphi:=\sum_{j=1}^m\varphi_j$. Here the operator-norm
guarantee in \Cref{eq:direct-compilation-local-error} remains valid
after tensoring with the rest of the circuit, so each replacement may
act on a qubit entangled with the remaining registers. Moreover, each
gate has private initially clean workspace, and the hybrid compares
against the ideal clean execution, so the local guarantees apply
directly.

It remains to account for resources. Replacements of gates belonging
to the same circuit layer act on disjoint data qubits and private
workspace and may therefore be executed in parallel. Hence, in the
bounded-width models, each original layer expands to depth
$O(\log\log(1/\eta))$, giving total depth $O\!\left(d\log\log(s/\eps)\right)$.
Summing the per-gate costs over at most $s$ replacements gives gate
count and additional workspace $O\!\left(s\log(s/\eps)\right).$
When generalized Toffoli and Fan-Out are available, each replacement
has constant depth, so the total depth remains $O(d)$, while the gate
count and additional workspace are, respectively, $O_\gamma\!\left(s\log(s/\eps)\right)$ and $O_\gamma\!\left(s\log^{1+\gamma}(s/\eps)\right)$.

Finally, for polynomial-size circuit families and inverse-polynomial
target accuracy, $\eta=\eps/s$ remains inverse polynomial. Thus the
bounded-width compilation incurs $O(\log\log n)$ depth per original
layer, while the wide-gate compilation incurs only constant depth.
The reverse inclusions in the resulting wide-gate equalities follow
immediately from the gate-set inclusions.

\end{proof}

\begin{corollary}[Direct restricted-gate simulations]
\label{cor:shallow-restricted-gate-simulations}
For bounded-error decision problems,
\begin{align}
\QNC^0&=\UQNC^0,
\label{eq:qnc0-restricted-inclusion}\\
\QAC^0&\subseteq\UQAC[O(\log\log n)],
\label{eq:qac0-restricted-inclusion}\\
\QTC^0&\subseteq\UQTC[O(\log\log n)],
\label{eq:qtc0-restricted-inclusion}\\
\QNC_f^0&\subseteq\UQNC_f[O(\log\log n)],
\label{eq:qncf-restricted-inclusion}\\
\QAC_f^0&=\UQAC_f^0,
\label{eq:qacf-restricted-equality}\\
\QTC_f^0&=\UQTC_f^0.
\label{eq:qtcf-restricted-equality}
\end{align}
\end{corollary}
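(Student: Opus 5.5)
Most of the inclusions follow from \Cref{lem:direct-unitary-compilation}. A unitary approximation also approximates acceptance probabilities, so it transfers to decision classes. Fix a bounded-error family $\{C_n\}$ of size $s=\poly(n)$ and depth $d$. Apply the lemma with $\eps=1/12$, a constant. If $\|\widetilde C(\ket{0^A}\otimes I)-e^{i\varphi}\ket{0^A}\otimes C\|_{\mathrm{op}}\leq\eps$, then on input $\ket{x}\ket{0^{a(n)}}$ the two output states are $\eps$-close in Euclidean norm. The global phase is irrelevant to $\Pi_{\mathrm{out}}$, so the acceptance probabilities differ by at most $2\eps$. This moves $2/3$ and $1/3$ to $2/3-1/6$ and $1/3+1/6$, so a constant gap remains. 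Standard parallel repetition with a majority vote would restore the $2/3,1/3$ thresholds. In the universal models that majority is computed by generalized Toffoli, Threshold, or bounded-depth classical gates in constant depth, and in $\UQNC$ with constant repetitions it is a constant-size circuit. Alternatively, I would simply choose $\eps$ inverse-polynomial, which costs nothing extra since $\log\log(s/\eps)=O(\log\log n)$ anyway.

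With $\eps$ chosen this way, the lemma immediately gives \Cref{eq:qac0-restricted-inclusion,eq:qtc0-restricted-inclusion,eq:qncf-restricted-inclusion} in depth $O(\log\log n)$. It also gives the forward inclusions of \Cref{eq:qacf-restricted-equality,eq:qtcf-restricted-equality} in depth $O(1)$. The reverse inclusions are trivial because $\{H,T,X\}$ are allowed single-qubit gates in the standard models.

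The one case needing a separate argument is $\QNC^0\subseteq\UQNC^0$, because direct compilation costs $\log\log n$ depth. Here I will use the light-cone argument from the proof of \Cref{thm:depth_lower_bound}. The output qubit of a depth-$d$ circuit with gates of arity at most $r$ has a backward light cone of at most $r^d=O(1)$ gates. Deleting gates outside the cone leaves the reduced state of $\mathsf{out}$ unchanged, and hence leaves $p_{C_n}(x)$ unchanged. The cone contains at most $k=O(1)$ single-qubit gates. Replace each of them, using \Cref{thm:rz-synthesis-qnc} and \Cref{lem:single-qubit-to-rz}, by a universal circuit with accuracy $\eta=1/(12k)$. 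This is a constant precision, so each replacement has constant depth and size.

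The hybrid argument of \Cref{lem:direct-unitary-compilation} then bounds the total deviation of the acceptance probability by $2k\eta=1/6$, and the gap is preserved. Gates outside the cone can be dropped, or left uncompiled, which is illegal, so I will simply drop them. Bounded-arity gates such as Toffoli and CNOT are in the universal bounded gate set, assuming, as the paper implicitly does, that $\UQNC$ allows Toffoli, CNOT, and the Clifford gates of $\mathcal G_{\QNC}$. If needed, $S=T^2$ and $Z=T^4$ can be used. The depth is $O(d)$ times a constant. The main subtle point is justifying the deletion of non-cone gates. This is immediate, because those gates commute past the partial trace onto the complement of $\mathsf{out}$ in the Heisenberg picture: $C^\dagger(\Pi_{\mathrm{out}})C$ depends only on the cone gates. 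The reverse inclusion $\UQNC^0\subseteq\QNC^0$ is again trivial.
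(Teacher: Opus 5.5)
You need to fix one concrete error. With the bound you use, $|p_{\widetilde C}(x)-p_C(x)|\leq 2\eps$, the choice $\eps=1/12$ moves the thresholds to $2/3-1/6=1/2$ and $1/3+1/6=1/2$. So no gap remains, and your claim ``a constant gap remains'' is false as written. The same happens in your $\QNC^0$ paragraph: $\eta=1/(12k)$ gives deviation $2k\eta=1/6$, so ``the gap is preserved'' also fails.

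The repair is routine. Either halve the constants, or use the sharper bound $|p_{\widetilde C}(x)-p_C(x)|\leq\|\ket{\widetilde\psi}-\ket{\psi}\|_2$, which holds because trace distance between pure states is at most their Euclidean distance. Your inverse-polynomial alternative also avoids the collapse in the first part.

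In every case the compiled circuit only has thresholds of the form $2/3-\delta$ and $1/3+\delta$. You must therefore still restore the $2/3$, $1/3$ thresholds of the class definition by constant repetition and a constant-size majority. You mention this for the wide classes, but you should also apply it explicitly in the $\QNC^0$ case. The paper avoids this bookkeeping by amplifying the source circuit first, to $9/10$ versus $1/10$, and then compiling with error $1/20$. The compiled circuit then meets $17/20$ versus $3/20$ directly.

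Apart from this, your argument for the wide-gate and Fan-Out classes is the paper's, up to whether amplification comes before or after compilation.

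For $\QNC^0\subseteq\UQNC^0$ you take a different route. You keep the light-cone subcircuit and synthesize its $O(1)$ single-qubit gates to constant precision via \Cref{thm:rz-synthesis-qnc} and \Cref{lem:single-qubit-to-rz}. The paper uses the light cone only to conclude that, at each input length, the language depends on $O(1)$ input bits. It then computes that Boolean function exactly with a constant-size reversible circuit over $X$, CNOT, and Toffoli.

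The paper's route needs no synthesis and no error accounting, and it has zero error. It even lands in $\HQNC^0$, a fact reused in \Cref{prop:real-qnc-zero-decision}. Your route instead preserves the acceptance probability itself up to a small constant, not just the decision. That is a slightly more faithful simulation, but it costs the error bookkeeping above and genuinely uses $T$ gates.
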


\begin{proof}
We first consider $\QNC^0$. Because every gate has bounded arity and
the circuit has constant depth, the backward light cone of the
designated output qubit contains only $O(1)$ input bits. Hence, at each
input length, the acceptance probability depends on only constantly
many bits of the input. The bounded-error circuit therefore computes a
Boolean function of $O(1)$ bits, which has a constant-size truth table
and can be implemented exactly by a constant-size reversible circuit
over $X$, CNOT, and bounded Toffoli gates. Thus
$\QNC^0\subseteq\UQNC^0$, while the reverse inclusion follows
immediately from the gate sets.

For the remaining classes, we instead compile the entire circuit. First
amplify the source computation, using a constant number of parallel
repetitions followed by reversible majority, so that YES instances are
accepted with probability at least $9/10$ and NO instances with
probability at most $1/10$. Since the number of repetitions is
constant, both copying the classical input and computing the majority
incur only constant-depth and constant-factor size overhead.

Let $s(n)\geq1$ bound the size of the amplified circuit and set
$\eps_0:=1/20$. In the compilation of
\Cref{lem:direct-unitary-compilation}, synthesize each single-qubit gate
to accuracy $\eta_n:=\eps_0/s(n).$
Since at most $s(n)$ gates are replaced, a hybrid argument gives total
simulation error at most $s(n)\eta_n=\eps_0$.
Thus, on every input, the final state of the compiled circuit is within
Euclidean distance $\eps_0$ of the ideal amplified computation, up to
an irrelevant fixed global phase. The trace distance between the two
output states is therefore at most $\eps_0$, so any measurement
probability---in particular, the acceptance probability---changes by
at most $\eps_0$. Consequently, the compiled circuit has completeness
at least
\begin{align}
    \frac{9}{10}-\frac{1}{20}
    &=
    \frac{17}{20},
\end{align}
and soundness at most
\begin{align}
    \frac{1}{10}+\frac{1}{20}
    &=
    \frac{3}{20}.
\end{align}
Hence the bounded-error gap is preserved.

Finally, because $s(n)$ is polynomial, $\eta_n$ is inverse polynomial.
The bounded-width synthesis from
\Cref{thm:rz-synthesis-qnc,lem:single-qubit-to-rz} therefore contributes
$O(\log\log n)$ depth per original layer, yielding the stated
$\UQAC[O(\log\log n)]$, $\UQTC[O(\log\log n)]$, and
$\UQNC_f[O(\log\log n)]$ simulations. When generalized Toffoli and
Fan-Out are both available, \Cref{cor:rz-synthesis-qacf} gives
constant-depth replacements, so the $\QAC_f^0$ and $\QTC_f^0$
simulations remain constant depth. All size and workspace bounds remain
polynomial, and the reverse inclusions in these latter two equalities
follow directly from the gate-set inclusions.
\end{proof}

\subsection{Depth-Preserving Single-Qubit Gate Synthesis  in \texorpdfstring{$\UQAC^0$}{Q-U-AC0}}
\label{sec:qac-circuit-simulation}

We now combine the two ingredients developed above. Our constant-depth single-qubit synthesis procedure uses generalized Toffoli gates together with Fan-Out of only polylogarithmic width, while \Cref{sec:qac-uqac-collapse} shows that these auxiliary Fan-Out operations can themselves be approximated in constant depth using only Hadamard and generalized Toffoli gates. Consequently, Fan-Out can be removed from the synthesis construction without increasing its asymptotic depth.

This gives a depth-preserving compiler from $\QAC$ circuits with arbitrary single-qubit gates to the fixed universal basis consisting of $H$, $T$, and generalized Toffoli gates. The approximation is clean and at the level of the full unitary action, rather than only its acceptance probability. Namely, the compiled circuit approximates the original circuit on every input state while approximately returning all implementation ancillae to $\ket{0}$. We first state this compilation theorem and then record its single-qubit and complexity-class consequences.

\begin{theorem}[Fixed-Gate Simulation of $\QAC$ Circuits]
\label{thm:qac-fixed-gate-simulation}
Let $C$ be a size-$s$, depth-$d$ $\QAC$ circuit acting on $w$
qubits, where $s\geq1$ counts all gate occurrences. For every
$\eps\in(0,1)$, there is a circuit $\widetilde C$ over $H$, $T$,
and generalized Toffoli gates, using $A=\poly(s,1/\eps)$ additional
clean ancillae, such that, for one fixed phase $\varphi$ and every
normalized $w$-qubit input $\ket\psi$,
\begin{align}
\left|\widetilde C\bigl(\ket{0^A}\ket\psi\bigr)
-e^{i\varphi}\ket{0^A}C\ket\psi\right|_2
&\leq\eps.
\label{eqn:qac-fixed-gate-simulation}
\end{align}
The depth is $O(d)$ and the size and additional workspace are
$\poly(s,1/\eps)$.
\end{theorem}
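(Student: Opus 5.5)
The plan is to compose the two available compilations and control the error with a single hybrid argument.

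\textbf{Step 1: synthesize each gate in constant depth.} First apply the wide-gate branch of \Cref{lem:direct-unitary-compilation} to $C$, with a fixed $\gamma$, say $\gamma=1/4$, and per-gate accuracy $\eta:=\eps/(2s)$. Every single-qubit gate $g_j$ of $C$ is replaced by the constant-depth synthesizer of \Cref{cor:rz-synthesis-qacf}, combined with \Cref{lem:single-qubit-to-rz}. Each replacement gets private clean workspace. The resulting circuit $C'$ is over $\{H,X,Z,S,S^\dagger,\cnot,CZ\}$, generalized Toffoli, and Fan-Out gates of width $O(\log^{1/2+\gamma}(s/\eps))$. It satisfies the clean operator-norm bound with error $\eps/2$ and one fixed phase $\varphi$.

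Next rewrite the Clifford single-qubit gates over $\{H,T\}$: $S=T^2$, $Z=T^4$, $S^\dagger=T^6$, $X=HZH$. Also $CZ=H_t\,\cnot\,H_t$, and $\cnot$ is a one-control generalized Toffoli. This changes depth by only a constant factor.

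\textbf{Step 2: remove the Fan-Out gates.} Let $K\le \poly(s,1/\eps)$ be the number of Fan-Out occurrences in $C'$. Every one has width at most $O(\log^{c}(s/\eps))$ for a constant $c$. Set the size parameter $n:=\lceil s/\eps\rceil$, so the Fan-Out widths are polylogarithmic in $n$.

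Replace each Fan-Out occurrence, using its own private workspace, by the gadget $G_n$ of \Cref{cor:inverse-poly-polylog-fanout}. Choose $\beta$ so that the gadget error is at most $\eps/(2K)$. This is possible because $K$ is polynomial in $s/\eps$. The gadgets use only $H$ and generalized Toffoli gates, have constant depth, and have polynomial size. Gadgets replacing gates in the same layer act on disjoint qubits and run in parallel, so the depth stays $O(d)$.

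\textbf{Step 3: bound the total error.} Run a hybrid argument over all replacements, as in the proofs of \Cref{lem:direct-unitary-compilation} and \Cref{cor:inverse-poly-polylog-fanout}. Each hybrid step compares against the ideal prefix, so the replaced gate's private workspace is exactly zero at that point. The clean-isometry guarantee of that gadget then applies, even when its data qubits are entangled with the rest of the circuit. Unitary invariance carries each local error to the end of the circuit.

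Summing gives total error at most $\eps/2+K\cdot\eps/(2K)=\eps$. The fixed global phase is inherited from Step 1, since the Fan-Out gadgets carry no phase.

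\textbf{Main obstacle.} The step needing care is the nesting. Fan-Out gadgets sit inside synthesizers that are themselves only approximately clean, and the inverse $W_z^\dagger$ calls inside the oblivious amplitude amplification also contain Fan-Outs. I would flatten everything first: treat $C'$ as one circuit whose primitive Fan-Out occurrences, including those in the inverses, are each replaced independently and compared gate by gate with the exact $C'$. Then no cross-level error interaction arises, and Step 1's guarantee is used only for the exact $C'$.

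\textbf{Resources.} Size and workspace are $K\cdot\poly(n)+O(s\log^{1+\gamma}(s/\eps))=\poly(s,1/\eps)$. The largest generalized Toffoli has $\poly(s/\eps)$ controls.
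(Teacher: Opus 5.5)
Your proposal is correct and follows essentially the same two-stage argument as the paper. You synthesize every single-qubit gate via \Cref{cor:rz-synthesis-qacf} with $\gamma=1/4$ at per-gate error $\eps/(2s)$, rewrite the Clifford gates over $\{H,T\}$ and generalized Toffoli, and then remove the Fan-Outs of the flattened circuit $C'$ by a second hybrid with total budget $\eps/2$.

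The only cosmetic difference is in the Fan-Out replacement. The paper applies \Cref{lem:uqac-logarithmic-fanout} directly with $q=\max\{4,L,\lceil 2\log_2(16/\rho)\rceil\}=O(\log(s/\eps))$, since the widths $O(\log^{3/4}(s/\eps))$ already fit below $q$. Your route through \Cref{cor:inverse-poly-polylog-fanout}, with $n=\lceil s/\eps\rceil$ and a fixed $\beta$, also works, but it adds a copying tree that is not needed here.
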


\begin{proof}
The compilation proceeds in two stages. First, we replace every
arbitrary single-qubit gate by our constant-depth synthesis circuit,
temporarily retaining the polylogarithmic-width Fan-Out gates used by
that construction. We then replace each of these Fan-Out gates by its
constant-depth implementation over Hadamard and generalized Toffoli.
We allocate error $\eps/2$ to each stage.

We begin with compilation  of the single-qubit gates.
Fix $\gamma=1/4$ in \Cref{cor:rz-synthesis-qacf} and combine it with
\Cref{lem:single-qubit-to-rz}. For every single-qubit unitary $U$ and
accuracy $\eta$, this gives a constant-depth circuit $S_{U,\eta}$ and
an input-independent phase $\varphi_{U,\eta}$ such that
\begin{align}
    \sup_{\|\ket\psi\|_2=1}
    \left\|
        S_{U,\eta}(\ket{0^{a_\eta}}\ket\psi)
        -
        e^{i\varphi_{U,\eta}}
        \ket{0^{a_\eta}}U\ket\psi
    \right\|_2
    &\leq\eta.
    \label{eqn:wide-compiler-fixed-phase-guarantee}
\end{align}
Each such synthesizer uses $O(\log(1/\eta))$ Fan-Out gates, each with
$O(\log^{3/4}(1/\eta))$ targets. All other gates already have
constant-size exact implementations over the desired basis:
$Z=T^4$, $S=T^2$, $S^\dagger=T^6$, $X=HT^4H$. Note that CNOT is a
generalized Toffoli and CZ is obtained from CNOT by conjugating its
target with Hadamards.

Set $\eta:=\eps/(2s)$. Replace every single-qubit gate of $C$ by its
corresponding $S_{U,\eta}$ using private clean workspace, while leaving
the generalized Toffoli gates unchanged. Let $C'$ denote the resulting
circuit. Since at most $s$ gates are replaced, a standard hybrid
argument gives, for one fixed phase $\varphi$ and every normalized
input $\ket\psi$,
\begin{align}
    \left\|
        C'(\ket{0^A}\ket\psi)
        -
        e^{i\varphi}\ket{0^A}C\ket\psi
    \right\|_2
    &\leq
    s\eta
    =
    \frac{\eps}{2}.
    \label{eqn:wide-compiler-total-error}
\end{align}
Here $A$ includes both the synthesis workspace and the clean workspace
reserved for the Fan-Out replacements below. Adding unused zero
registers does not affect the bound.

We now elimate Fan-Out. The circuit $C'$ contains $M=O\!(s\log(s/\eps))$
Fan-Out gates, each acting on at most $L=O\!(\log^{3/4}(s/\eps))$
targets. If $M=0$, the first stage already proves the theorem.
Otherwise, assign error
$\rho:=\eps/(2M)$ to each Fan-Out replacement and choose
\begin{align}
    q
    &:=
    \max\left\{
        4,\,
        L,\,
        \left\lceil
            2\log_2\frac{16}{\rho}
        \right\rceil
    \right\}
    =
    O\!\left(
        \log\frac{s}{\eps}
    \right).
\end{align}
By \Cref{lem:uqac-logarithmic-fanout}, every such Fan-Out admits a
constant-depth implementation over Hadamard and generalized Toffoli
with clean-input error at most $\rho$ and
$O(q4^q)=\poly(s,1/\eps)$ gates and workspace.

Replace the $M$ Fan-Out occurrences one at a time in a second hybrid
argument. The local guarantee applies even when the Fan-Out data
register is entangled with the rest of the computation. Only its
private implementation workspace must begin in the all-zero state.
Thus the resulting circuit $\widetilde C$ satisfies
\begin{align}
    \left\|
        \widetilde C(\ket{0^A}\ket\psi)
        -
        C'(\ket{0^A}\ket\psi)
    \right\|_2
    &\leq
    M\rho
    =
    \frac{\eps}{2}.
    \label{eqn:fanout-replacement-total-error}
\end{align}
Combining
\Cref{eqn:wide-compiler-total-error,eqn:fanout-replacement-total-error}
with the triangle inequality proves
\Cref{eqn:qac-fixed-gate-simulation}.

Finally, the total cost of all Fan-Out replacements is
$O(Mq4^q)=\poly(s,1/\eps)$, and the first-stage synthesis circuits also
use only polynomially many gates and ancillae. Gates from the same
original circuit layer act on disjoint data registers and use private
workspace, so their compiled implementations run in parallel. Both
compilation stages increase the depth of each original layer by only a
constant factor, and therefore
$\operatorname{depth}(\widetilde C)=O(d)$.
\end{proof}

\noindent The theorem shows that the constant-depth synthesis result previously obtained in $\UQAC_f^0$ does not require Fan-Out as a primitive, with sustained access to generalized Toffoli. Thus, the polylogarithmic-width Fan-Out used by the synthesis can be removed in $\UQAC^0$.

\begin{corollary}[Depth-Preserving Universalization of $\QAC$]
\label{cor:qac-uqac-collapse}
For every depth $D(n)\geq1$,
\begin{align}
\textsf{Unitary}(\QAC[D])
&=
\textsf{Unitary}(\UQAC[D]).
\label{eq:unitary-qac-uqac-collapse}
\end{align}
For bounded-error decision problems,
\begin{align}
\QAC[D]
&=
\UQAC[D].
\label{eqn:qac-uqac-collapse}
\end{align}
In particular, these equalities hold for $\QAC^k$ and $\UQAC^k$ for every fixed $k\geq0$.
\end{corollary}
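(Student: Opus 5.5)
The equality will follow by proving two inclusions. The reverse inclusions, $\textsf{Unitary}(\UQAC[D])\subseteq\textsf{Unitary}(\QAC[D])$ and $\UQAC[D]\subseteq\QAC[D]$, hold because $\{H,T,X\}$ are particular single-qubit gates, so every $\UQAC$ circuit is already a $\QAC$ circuit. The content is therefore the forward inclusion, and for it I would invoke \Cref{thm:qac-fixed-gate-simulation} directly.

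\emph{Unitary setting.} Let $\{C_n\}$ be a polynomial-size, depth-$O(D(n))$ $\QAC$ family implementing a target unitary family to every inverse-polynomial accuracy. Fix an accuracy $\eps(n)=1/\poly(n)$. First take an implementation $C_n$ with error $\eps/2$; it has size $s(n)=\poly(n)$ and depth $d=O(D(n))$. Then apply \Cref{thm:qac-fixed-gate-simulation} with accuracy $\eps/2$. This produces $\widetilde C_n$ over $H$, $T$, and generalized Toffoli with the following properties:
\begin{itemize}
\item depth $O(d)=O(D(n))$;
\item size and clean workspace $\poly(s,2/\eps)=\poly(n)$;
\item one input-independent global phase.
\end{itemize}
Two facts from the theorem matter here. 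Its guarantee is a clean operator-norm bound, so the implementation ancillae are approximately restored, as the definition of $\textsf{Unitary}(\cdot)$ requires. Its global phase composes with the phase allowed for $C_n$. The triangle inequality then gives total error $\eps$. One caveat: the theorem's gate set omits $X$, but $X=HT^4H$, so this is harmless. Since the construction works for every inverse-polynomial $\eps$, the compiled family lies in $\textsf{Unitary}(\UQAC[D])$.

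\emph{Decision setting.} Given a bounded-error $\QAC[D]$ family, I would first amplify it to error $1/10$. This uses constantly many parallel copies of the circuit, each computing on its own copy of the input; a reversible majority of the constantly many output bits costs constant depth and size. Next I compile with $\eps=1/20$ via \Cref{thm:qac-fixed-gate-simulation}. A Euclidean distance of at most $1/20$ between final states changes the acceptance probability by at most $1/20$, so the gap $(17/20,\,3/20)$ is preserved, exactly as in \Cref{cor:shallow-restricted-gate-simulations}. The resulting depth is $O(D)$ and the size is polynomial. Specializing to $D(n)=\log^k n$ gives the statements for $\QAC^k$.

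The only point needing care is the constant-depth regime $D=O(1)$, where no overhead can be absorbed. There I must check that the theorem's depth bound is truly $O(d)$ with a constant independent of $n$ and $\eps$. It is: the synthesizer uses a fixed $\gamma=1/4$, and each Fan-Out replacement from \Cref{lem:uqac-logarithmic-fanout} has constant depth. I expect no genuine obstacle beyond this verification and the bookkeeping of phases and clean ancillae, which the theorem already provides.
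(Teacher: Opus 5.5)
Your proposal is correct and takes essentially the same route as the paper. For the unitary statement you apply \Cref{thm:qac-fixed-gate-simulation} with inverse-polynomial error; for the decision statement you first perform the constant-repetition amplification of \Cref{cor:shallow-restricted-gate-simulations} and then compile with a small constant error. The reverse inclusions come from the gate sets, as in the paper. Your extra bookkeeping (splitting $\eps$ between the original $\QAC$ implementation and the compilation, composing the global phases, and writing $X=HT^4H$) spells out details that the paper's short proof leaves implicit.
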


\begin{proof}
As an immediate special case of
\Cref{thm:qac-fixed-gate-simulation}, every single-qubit unitary admits
a constant-depth approximation over $H$, $T$, and generalized Toffoli
gates with polynomial overhead in the inverse accuracy.

More generally, apply \Cref{thm:qac-fixed-gate-simulation} to each
circuit in a depth-$O(D(n))$ $\QAC$ family, using sufficiently small
inverse-polynomial compilation error for unitary synthesis or
sufficiently small constant error for bounded-error decisions, after the constant-error amplification described in the proof of \Cref{cor:shallow-restricted-gate-simulations}. The
compiler increases depth by only a constant factor and size only
polynomially, so the resulting family remains depth $O(D(n))$.
This proves the forward inclusions in both statements. The reverse
inclusions follow immediately from the gate sets.
\end{proof}

\noindent
The same compiler applies to threshold circuits, since every generalized
Toffoli introduced by the compilation is already available as a threshold
operation.

\begin{corollary}[Depth-Preserving Universalization of $\QTC$]
\label{cor:qtc-uqtc-collapse}
For every depth bound $D(n)\geq1$,
\begin{align}
\textsf{Unitary}(\QTC[D])
&=
\textsf{Unitary}(\UQTC[D]).
\label{eq:unitary-qtc-uqtc-collapse}
\end{align}
For bounded-error decision problems,
\begin{align}
\QTC[D]
&=
\UQTC[D].
\label{eqn:qtc-uqtc-collapse}
\end{align}
In particular, these equalities hold for $\QTC^k$ and $\UQTC^k$ for every fixed $k\geq0$.
\end{corollary}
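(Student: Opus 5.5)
The plan is to reuse the compiler of \Cref{thm:qac-fixed-gate-simulation} essentially verbatim, observing that nothing in its construction or analysis depended on the multiqubit primitives of the source circuit beyond their being left untouched. Given a size-$s$, depth-$d$ $\QTC$ circuit $C$, I would keep every Threshold gate unchanged and replace every arbitrary single-qubit gate $U$ by the constant-depth synthesizer $S_{U,\eta}$ obtained from \Cref{cor:rz-synthesis-qacf} (with $\gamma=1/4$) and \Cref{lem:single-qubit-to-rz}, each using private clean workspace and accuracy $\eta=\eps/(2s)$. The gates appearing inside these synthesizers are $H$, $X$, $Z$, $S$, $S^\dagger$, CNOT, CZ, generalized Toffoli, and polylogarithmic-width Fan-Out. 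The single-qubit ones are exact words in $\{H,T\}$ as already noted in the proof of \Cref{thm:qac-fixed-gate-simulation}. Each generalized Toffoli, including CNOT and the $H$-conjugated CZ, is literally a Threshold gate with threshold $u=|S|$, so it is available in $\UQTC$.

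Next I would eliminate the $M=O(s\log(s/\eps))$ Fan-Out gates of width $O(\log^{3/4}(s/\eps))$ by \Cref{lem:uqac-logarithmic-fanout}, with per-gate error $\eps/(2M)$ and $q=O(\log(s/\eps))$. These gadgets use only Hadamard and generalized Toffoli gates, and hence again only $\UQTC$ gates. The two hybrid arguments of \Cref{thm:qac-fixed-gate-simulation} go through unchanged: both stages only use local clean-isometry guarantees that survive tensoring with the rest of the circuit. They give total error $\eps$ with one fixed global phase, depth $O(d)$ since each original layer expands by a constant factor with parallel private workspace, and size $\poly(s,1/\eps)$.

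From this full-unitary compilation, the unitary equality follows by taking inverse-polynomial $\eps$ for polynomial-size, depth-$O(D(n))$ families. The decision equality follows as in \Cref{cor:qac-uqac-collapse}: first amplify to a $9/10$ versus $1/10$ gap with constantly many repetitions and a reversible majority (itself a Threshold gate), then compile with constant error $1/20$. The reverse inclusions are immediate from the gate-set inclusion $\{H,T,X\}\subseteq$ arbitrary single-qubit gates. I expect no real obstacle. The only point needing care is confirming that every wide gate introduced by the compiler is a bona fide Threshold gate: generalized Toffoli is the threshold-$|S|$ case, and the reversible OR in the nekomata-to-parity reduction is built from $X$-conjugated generalized Toffoli. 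I should also confirm that the Threshold gates of the source circuit, being reversible classical gates, commute with nothing in the analysis that the hybrid argument requires.
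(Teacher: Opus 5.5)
Your proposal is correct and follows the paper's own proof. The paper reruns the $\QAC$ compiler via \Cref{cor:qac-uqac-collapse}, leaving the source Threshold gates untouched and observing that every generalized Toffoli the compiler introduces is a Threshold gate whose threshold equals its number of controls; the reverse inclusions come from the gate sets. Your additional checks, namely the $X$-conjugated reversible OR in the nekomata-to-Parity reduction and the constant-repetition amplification, are consistent with that argument and spell out details the paper leaves implicit.
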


\begin{proof}
Apply the proof of \Cref{cor:qac-uqac-collapse}. An $m$-controlled
generalized Toffoli is a threshold operation with threshold $m$, so
every multiqubit gate introduced by the compiler is already available
in $\UQTC$. The depth and resource bounds are unchanged, and the
reverse inclusions again follow from the gate sets.
\end{proof}

\subsection{Additive-Depth Compilation in \texorpdfstring{$\UQNC_f$}{Q-U-NCf} with a Catalyst}
\label{sec:catalytic-circuit-simulation}

The preceding section showed that the constant-depth single-qubit
synthesis originally obtained in $\UQAC_f^0$ can in fact be
implemented in $\UQAC^0$. The polylogarithmic-width Fan-Out gates used
by the synthesis can be replaced in constant depth using generalized
Toffoli gates. The analogous question for $\QNC_f$ goes in the
opposite direction. Here Fan-Out is available as a primitive, but
generalized Toffoli is not. Takahashi and Tani showed that generalized Toffoli can be implemented
in constant depth from Fan-Out~\cite{takahashi2016collapse}, yielding
the standard equality $\QNC_f^0=\QAC_f^0$. Their construction,
however, uses arbitrary single-qubit gates, including angle-dependent
phase rotations, and therefore does not directly imply
$\QNC_f^0=\UQNC_f^0$. We leave constant-depth universalization of
$\QNC_f^0$ as an open question.

For larger depths, we instead use the catalyst prepared in
\Cref{sec:kim-comparison}. Once the catalyst bank is available, the
Kim--Laakkonen phase-kickback construction implements each phase layer
in constant depth using Fan-Out and bounded-arity gates. Crucially,
the ideal catalyst is returned unchanged and can therefore be reused
throughout the computation. We prepare the bank once, perform the
entire online computation, and then reverse the preparation circuit.
Thus the catalyst-preparation error is incurred only once, rather than
once per use, while the preparation depth contributes only an additive
$O(\log\log n)$ overhead.

To implement a grid phase $\omega_N^d$, we use the catalyst eigenvalue relation
$U_f^d\ket{\psi_1}=\omega_N^d\ket{\psi_1}$. Thus, if the phase is to be applied conditioned on a data qubit, the online circuit must implement the controlled transformation
$\ket c\ket x\mapsto\ket c\ket{C_f^{dc}x}$ on the catalyst register. Since $C_f^d$ is an invertible linear map over $\mathbb F_2$, the required online operations are controlled linear transformations. Kim and Laakkonen give circuits for such controlled linear maps with bounded Toffoli depth~\cite[Thm.~3]{kim2025clifford}. The next lemma records an explicit implementation in our $\UQNC_f$ model and keeps track of its total depth, gate count, and workspace.

\begin{lemma}[Controlled Linear Maps]
\label{lem:controlled-linear-map}
For every $A\in\mathrm{GL}_b(\mathbb F_2)$, $c\in\{0,1\}$, and $x\in\mathbb F_2^b$, the transformation
\begin{align}
\ket c\ket x
&\longmapsto
\ket c\ket{A^c x},
\qquad \text{where} \quad 
A^c=
\begin{cases}
I,&c=0,\\
A,&c=1,
\end{cases}
\label{eq:controlled-linear-map}
\end{align}
has an exact clean
$\UQNC_f$ implementation of depth $O(1)$ and size $O(b^2)$.
\end{lemma}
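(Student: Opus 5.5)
The plan is to use the linearity of $A$: compute $Ax$ out of place, controlled on $c$, with a constant number of Fan-Out/Parity layers and one layer of ordinary Toffoli gates. Then swap $x$ for $Ax$ and uncompute. This is the same compute--swap--uncompute template used in the proof of \Cref{lem:log-width-permutation}, except that the truth-table lookup is replaced by a linear one.

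\textbf{Step 1 (copying).} Use Fan-Out to make $b$ copies of the control bit $c$. Also use Fan-Out to copy each input bit $x_i$ once for every row $j$ with $A_{ji}=1$. This gives at most $b^2$ copies in depth $O(1)$ with $O(b)$ Fan-Out gates.

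\textbf{Step 2 (out-of-place product).} Prepare a fresh $b$-qubit register $y=0^b$. For each pair $(j,i)$ with $A_{ji}=1$, apply a Toffoli with controls (a copy of $c$, a copy of $x_i$) and target a fresh ancilla $p_{ji}$. All these Toffolis act on disjoint qubits, so together they form a single layer. Next, for each $j$, XOR the parity of $\{p_{ji}\}_i$ into $y_j$. This is a Parity gate, which equals a Hadamard-conjugated Fan-Out by \Cref{eqn:parity-fanout-hadamard-duality}. The $b$ parities act on disjoint registers and run in parallel. Now $y=c\,Ax$.

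Finally, uncompute the $p_{ji}$ and the copies by reversing the Toffoli layer and the Fan-Outs. We now have the clean map $\ket c\ket x\ket0\mapsto\ket c\ket x\ket{c\,Ax}$ in depth $O(1)$ and size $O(b^2)$. Toffoli decomposes into $H,T,\mathrm{CNOT}$ in constant depth, so everything lies in $\UQNC_f$.

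\textbf{Step 3 (in place).} We want to end with $\ket c\ket{A^cx}\ket0$. On the $c=0$ branch nothing should happen, since $y=0$ there. I expect this case distinction to be the main obstacle, and my approach is a controlled swap. Apply a controlled-SWAP between $x$ and $y$, controlled by the copies of $c$: these are $b$ Fredkin gates on disjoint pairs, each built from a bounded number of Toffolis. On the $c=1$ branch this yields $\ket{Ax}\ket{x}$, and on the $c=0$ branch it leaves $\ket{x}\ket{0}$.

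Then erase the second register by computing $c\,A^{-1}(\cdot)$ of the first register into it, using the Step-2 circuit with $A^{-1}$ in place of $A$ and the target XORed into the second register. On the $c=1$ branch this XORs $A^{-1}Ax=x$ onto $x$, giving $0$. On the $c=0$ branch it adds $0$ to $0$. Reverse the $c$ copies at the end.

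\textbf{Resources.} Each of the two oracle calls and the swap layer has constant depth. The total gate count and workspace are $O(b^2)$, dominated by the $O(b^2)$ Toffolis and copies. Correctness on basis states extends to superpositions by linearity, and every ancilla is returned to $\ket0$ exactly.
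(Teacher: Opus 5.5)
Your proposal is correct and follows essentially the same compute--swap--uncompute route as the paper. The paper computes $A^c x = x\oplus c\,(A+I)x$ directly, by XORing $x_i$ together with the products $c\,x_j$ for $j$ with $A_{ij}\neq\delta_{ij}$, so that an unconditional SWAP followed by the same oracle for $A^{-1}$ suffices; you compute $c\,Ax$ instead, which forces the controlled SWAP, and you handle that case correctly.

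One small slip needs fixing. Making only $b$ copies of $c$ does not let your up-to-$b^2$ Toffolis act on disjoint qubits, so either fan $c$ out once per nonzero entry of $A$ (a single Fan-Out with at most $b^2$ targets) or, as the paper does, compute each product $c\,x_i$ once and then fan out the product.
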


\begin{proof}
First compute $A^cx$ into a separate register. For each coordinate,
let $S_i:=\{j:A_{ij}\neq\delta_{ij}\}$. Over $\mathbb F_2$,
\begin{align}
(A^cx)_i
&=
x_i\oplus\bigoplus_{j\in S_i}cx_j.
\label{eq:controlled-linear-coordinate}
\end{align}
Fan out $c$ to $b$ private controls and compute all products $cx_j$
in parallel with ordinary Toffoli gates. Distribute each product to
its occurrences in the sets $S_i$, then XOR $x_i$ and those products
into the $i$th target using Parity. By
\Cref{eqn:parity-fanout-hadamard-duality}, these parity operations
have constant-depth implementations. Private copies make the
coordinate computations disjoint. Reversing the product and copying
steps clears the workspace, giving the exact oracle
\begin{align}
    O_A:\ket c\ket x\ket y
    \longmapsto
    \ket c\ket x\ket{y\oplus A^cx}.  
\end{align}
There are at most $b^2$ coordinate incidences, so the gate count and
workspace are $O(b^2)$, with constant depth.

For an in-place implementation, compute $A^cx$ into a fresh zero
register, swap it with the input register using parallel SWAPs, and
then apply $O_{A^{-1}}$ to erase the old input. This works because
$(A^{-1})^cA^c=I$ for both values of $c$, and preserves the stated
resource bounds.
\end{proof}
By \Cref{lem:single-qubit-to-rz}, every single-qubit gate can be
written, up to a fixed global phase, as a constant-length sequence of
Hadamard gates and phase gates
$P(\theta):=\operatorname{diag}(1,e^{i\theta})$. We round each phase
to the catalyst grid and implement the rounded phase by kickback from
$\ket{\psi_1}$.

\begin{theorem}[Additive-Depth Catalytic Universalization]
\label{thm:catalytic-universalization}
Let $C$ be a size-$s$, depth-$D$ $\QNC_f$ circuit. After decomposing
its single-qubit gates, up to global phase, into Hadamard and phase
gates, let $r$ be the total number of phase gates and let $w$ be the
maximum number appearing in any one layer. For every $\delta\in(0,1)$, there is a $\UQNC_f$ circuit
$\widetilde C$, using $A=\poly(s,1/\delta)$ additional clean ancillas,
such that for one fixed phase $\varphi$ and every normalized input
$\ket\psi$,
\begin{align}
\left|
\widetilde C\bigl(\ket{0^A}\ket\psi\bigr)
-
e^{i\varphi}\ket{0^A}C\ket\psi
\right|_2
&\leq
\delta.
\label{eq:catalytic-clean-unitary-guarantee}
\end{align}
The size and workspace are polynomial in $s$ and $1/\delta$, while the depth is
\begin{align}
O\!\left(
D+\log\log\frac{r+w+2}{\delta}
\right).
\label{eq:additive-universalization-depth}
\end{align}
\end{theorem}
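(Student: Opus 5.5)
The circuit $\widetilde C$ has three stages: prepare a bank of catalysts once, run an online simulation of $C$ that uses them for every phase gate, then reverse the preparation. The error budget is $\delta/3$ for grid rounding, $\delta/3$ for catalyst preparation, and $\delta/3$ as slack.

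\textbf{Decomposition and rounding.} First decompose every single-qubit gate of $C$, as permitted by \Cref{lem:single-qubit-to-rz}, into Hadamards and phase gates $P(\theta)$. Then choose $b=\Theta(\log((r+w+2)/\delta))$ with $N=2^b-1$. Round each $\theta$ to the nearest multiple $2\pi d/N$. The per-gate error is $O(1/N)$, so the total over $r$ gates is at most $\delta/3$ once $N\gtrsim r/\delta$.

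\textbf{Online phase gadget.} A rounded phase $P(2\pi d/N)$ on data qubit $c$ is implemented by the controlled map $\ket c\ket x\mapsto\ket c\ket{C_f^{dc}x}$ on a catalyst register. By \Cref{lem:controlled-linear-map} with $A=C_f^{d}$, this is exact, has constant depth and size $O(b^2)$ in $\UQNC_f$, and uses only Fan-Out, Toffoli and $H,X$-type gates. By \Cref{eq:kim-eigenvalue}, with the register in $\ket{\psi_1}$ it applies exactly $\omega_N^{dc}$ and returns $\ket{\psi_1}$ unchanged. Since the $k=1$ catalyst realises the grid phase $\omega_N^d$ directly, no inverse $k^{-1}$ is needed.

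Each layer contains at most $w$ phase gates, so a bank of $w$ catalysts lets every layer's phases act on disjoint catalyst registers in parallel. Hadamards, Fan-Out and bounded gates of $C$ are kept as is; $H,X,$ CNOT, Toffoli are in the universal set. Each original layer therefore becomes $O(1)$ depth, and the online part has depth $O(D)$. Let $\widehat C$ be the online circuit run on the ideal bank $\ket{\psi_1}^{\otimes w}$. Then
\[
\widehat C\,\ket\psi\ket{\psi_1}^{\otimes w}=(C_{\mathrm{round}}\ket\psi)\ket{\psi_1}^{\otimes w}
\]
exactly, up to the fixed global phases from the Euler decompositions.

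\textbf{Preparation and the main obstacle.} Let $P$ be the factory of \Cref{lem:kim-catalyst-factory} with $\eta$ chosen so that $\sqrt{w/2^b}+wb\eta\le\delta/3$. Set $\widetilde C:=P^\dagger\widehat C P$.

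The key step is showing that the preparation error is paid once and not $D$ times. Write $P\ket{0}=\ket{\Psi_{\mathrm{ideal}}}+\ket{e}$, where $\ket{\Psi_{\mathrm{ideal}}}=\ket{\psi_1}^{\otimes w}\ket{0^a}$. A pure-state version of the factory bound (same hybrid, with \Cref{eq:padded-to-kim-catalyst}) gives $\|\ket e\|\le\delta/3$ up to phase. Since the bank is an exact eigenstate of the online circuit,
\[
\widehat C\bigl(\ket\psi\otimes\ket{\Psi_{\mathrm{ideal}}}\bigr)=C_{\mathrm{round}}\ket\psi\otimes\ket{\Psi_{\mathrm{ideal}}}.
\]
Writing $\ket{\Psi_{\mathrm{ideal}}}=P\ket0-\ket e$, applying $P^\dagger$ returns the state to $\ket0$ up to an error vector of norm at most $2\|\ket e\|$: one $\ket e$ enters before $\widehat C$, and one $\ket e$ is undone after it. The terms combine by unitary invariance and the triangle inequality, so the error bound holds uniformly over inputs entangled with a reference. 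Together with the rounding error, the total is at most $\delta$ after rebalancing the constants.

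For depth, the factory contributes $O(\log b+\log\log(1/\eta))=O(\log\log((r+w+2)/\delta))$ twice, and the online part contributes $O(D)$. Size is $w2^{O(b)}+\poly=\poly(s,1/\delta)$ because $2^b=\poly((r+w)/\delta)$.
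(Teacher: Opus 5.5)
Your proof is correct and follows the paper's argument essentially step for step. The shared steps are grid rounding with $b=\Theta(\log((r+w+2)/\delta))$, exact online kickback via \Cref{lem:controlled-linear-map} on a reusable bank of $w$ copies of $\ket{\psi_1}$, and the sandwich $P^\dagger\widehat C P$, whose catalyst error is at most twice the preparation error no matter how often the bank is reused. The only difference is that you bound the factory error directly in Euclidean norm with a pure-state hybrid, whereas the paper converts the trace-distance bound of \Cref{lem:kim-catalyst-factory} into a Euclidean one at the cost of a factor $\sqrt2$; this is immaterial.
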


\begin{proof}
The proof has three steps. First, we round every phase gate in the
decomposed circuit to the $N$-point phase grid associated with the
catalyst. Second, assuming an ideal bank of $w$ catalysts, we implement
the rounded circuit exactly by phase kickback, reusing the same bank in
every layer. Third, we replace the ideal bank by the output of the
approximate catalyst factory and uncompute the factory at the end. The
rounding error and catalyst-preparation error are budgeted separately;
the latter is incurred only once, independently of the number of
catalyst uses.

The case $r=0$ is immediate, so assume $r,w\geq1$. Let
$N:=2^b-1$. Choose
$b=\Theta(\log((r+w+2)/\delta))$ large enough that
$\pi r/N\leq\delta/2$ and
$\sqrt{w/2^b}\leq\delta/(8\sqrt2)$, and set
$\eta:=\delta/(8\sqrt2\,wb)$. By
\Cref{lem:kim-catalyst-factory}, there is a fully unitary factory
$F$ whose output has trace distance at most
\begin{align}
\xi
&:=
\sqrt{\frac{w}{2^b}}+wb\eta
\leq
\frac{\delta}{4\sqrt2}
\label{eq:catalyst-joint-error}
\end{align}
from $w$ ideal catalysts together with zero factory workspace. Because the factory is unitary and starts from a pure state, write its
output as $\ket*{\widetilde\Gamma}$ and the ideal
catalyst--workspace state as
$\ket\Gamma:=\ket{\psi_1}^{\otimes w}\ket{0^a}$. After fixing an
irrelevant global phase of $\ket\Gamma$, the relation between trace
distance and Euclidean distance for pure states gives
\begin{align}
\left|
\ket*{\widetilde\Gamma}-\ket\Gamma
\right|_2
&\leq
\sqrt2\,\xi.
\label{eq:catalyst-vector-error}
\end{align}

\paragraph{Step 1: round to the catalyst grid.}
For each phase gate $P(\theta)$, choose the nearest grid phase
$P(2\pi d/N)$. The angular error is at most $\pi/N$, and hence
\begin{align}
\left\|
P(\theta)
-
P\!\left(\frac{2\pi d}{N}\right)
\right\|_{\mathrm{op}}
&\leq
\frac{\pi}{N}.
\label{eq:phase-grid-error}
\end{align}
Let $C^{(b)}$ denote the circuit obtained by replacing all $r$ phase
gates by their rounded versions. A hybrid argument gives, for one fixed
global phase $\varphi$,
\begin{align}\left\|
C^{(b)}-e^{i\varphi}C
\right\|_{\mathrm{op}}
&\leq
\frac{\pi r}{N}
\leq
\frac{\delta}{2}.
\label{eq:rounded-unitary-error}
\end{align}
Thus it remains to implement $C^{(b)}$ using the catalyst bank.

\paragraph{Step 2: implement the rounded circuit with an ideal catalyst.}
A rounded phase $P(2\pi d/N)$ is implemented by controlling the linear
map $C_f^d$ on the corresponding data qubit. By
\Cref{eq:kim-eigenvalue},
\begin{align}
\operatorname{ctrl}(U_f^d)
\bigl(
(c_0\ket0+c_1\ket1)\ket{\psi_1}
\bigr)
&=
(c_0\ket0+c_1\omega_N^d\ket1)
\ket{\psi_1}.
\label{eq:compact-kim-kickback}
\end{align}
The desired phase is therefore transferred to the data while the
catalyst is returned exactly to $\ket{\psi_1}$. Assign distinct catalysts to the at most $w$ phase gates in any one
layer. By \Cref{lem:controlled-linear-map}, the corresponding
controlled maps have constant depth and use private clean workspace,
so all phase gates in a layer can be implemented in parallel. Their
workspace is then uncomputed. Since each ideal catalyst is restored
after use, the same bank can be reused in every layer. Let $V_b$ denote the complete online implementation of $C^{(b)}$, and
let $h$ denote its workspace. For every input $\ket\psi$,
\begin{align}
V_b
\bigl(
\ket\psi\ket\Gamma\ket{0^h}
\bigr)
&=
C^{(b)}\ket\psi
\ket\Gamma\ket{0^h}.
\label{eq:ideal-catalyst-reuse}
\end{align}
Thus the online computation has depth $O(D)$.

\paragraph{Step 3: replace the ideal catalyst by the prepared one.}
The actual compiled circuit applies the factory $F$, then $V_b$, and
finally $F^\dagger$. The inverse factory maps $\ket*{\widetilde\Gamma}=F\ket{0^{a+wb}}$ exactly to zero. We therefore compare the online output with $C^{(b)}\ket\psi\ket*{\widetilde\Gamma}\ket{0^h}$ before applying $F^\dagger$. The estimate below then bounds the final cleanup error. By unitary invariance,
\begin{align}
&\left|
(I_{\mathrm{data}}\otimes F^\dagger\otimes I_h)
V_b
\bigl(
\ket\psi\ket*{\widetilde\Gamma}\ket{0^h}
\bigr)
-
C^{(b)}\ket\psi\ket{0^{a+wb+h}}
\right|_2=
\left|
V_b
\bigl(
\ket\psi\ket*{\widetilde\Gamma}\ket{0^h}
\bigr)
-
C^{(b)}\ket\psi
\ket*{\widetilde\Gamma}\ket{0^h}
\right|_2.
\label{eq:catalyst-unprepare-reduction}
\end{align}
Insert the ideal state $\ket\Gamma$ into the two terms on the
right-hand side. Using
\Cref{eq:ideal-catalyst-reuse} and unitarity of $V_b$ gives
\begin{align}
&\left|
V_b
\bigl(
\ket\psi\ket*{\widetilde\Gamma}\ket{0^h}
\bigr)
-
C^{(b)}\ket\psi
\ket*{\widetilde\Gamma}\ket{0^h}
\right|_2\leq
2
\left|
\ket*{\widetilde\Gamma}-\ket\Gamma
\right|_2
\leq
2\sqrt2\,\xi
\leq
\frac{\delta}{2}.
\label{eq:catalyst-cleanup-error}
\end{align}
Importantly, this bound depends only on the initial catalyst
preparation error and not on the number of times the bank is reused.
Combining
\Cref{eq:rounded-unitary-error,eq:catalyst-cleanup-error} with the
triangle inequality proves
\Cref{eq:catalytic-clean-unitary-guarantee}.

\paragraph{Resources.}
The online simulation has depth $O(D)$. The catalyst factory and its inverse each contribute $O(\log b+\log\log(1/\eta))$ depth. Since $b=\Theta(\log((r+w+2)/\delta))$ and $\eta=\Theta(\delta/(wb))$, the overall depth is $O(D+\log\log((r+w)/\delta))$. For size and ancillae, the factory uses $w2^{O(b)}+O(wb\log(1/\eta))$ resources, while the online controlled linear maps use $O(rb^2)$ gates and $O(wb^2)$ reusable workspace. Because $b=O(\log((r+w+2)/\delta))$, the factor $2^{O(b)}$ is polynomial in $(r+w+2)/\delta$. Since $r,w=O(s)$, the total gate count and ancillae are therefore $\poly(s,1/\delta)$.
\end{proof}
\noindent This stronger clean-unitary guarantee immediately yields a universalization result for $\QNC_f$ once the additive catalyst-preparation depth can be absorbed.
\begin{corollary}[Catalytic Universalization Above Loglog Depth]
\label{cor:qnc-f-universalization}
Let $D(n)=\Omega(\log\log n)$. Then, for polynomial-size circuit families,
\begin{align}
\textsf{Unitary}(\QNC_f[D])
&=
\textsf{Unitary}(\UQNC_f[D]).
\label{eq:unitary-qnc-f-universalization}
\end{align}
For bounded-error decision problems,
\begin{align}
\QNC_f[D]
&=
\UQNC_f[D].
\label{eq:qnc-f-universalization}
\end{align}
In particular, for every fixed integer $k\geq1$, $\textsf{Unitary}(\QNC_f^k)=\textsf{Unitary}(\UQNC_f^k)$
and $\QNC_f^k=\UQNC_f^k$.
\end{corollary}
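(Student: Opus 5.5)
The plan is to apply \Cref{thm:catalytic-universalization} to each circuit in a polynomial-size $\QNC_f$ family of depth $O(D(n))$, and to show that its additive catalyst-preparation cost is absorbed by $D$. The reverse inclusions, $\textsf{Unitary}(\UQNC_f[D])\subseteq\textsf{Unitary}(\QNC_f[D])$ and $\UQNC_f[D]\subseteq\QNC_f[D]$, hold immediately because $\{H,T,X\}$ together with bounded-arity gates and Fan-Out is a subset of the $\QNC_f$ gate set. So only the forward directions need an argument.

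\textbf{Unitary synthesis.} Fix a target accuracy $\delta(n)=n^{-c}$ for some constant $c$. The source circuit $C_n$ has size $s(n)=\poly(n)$.
\begin{itemize}
\item Decompose each single-qubit gate via \Cref{lem:single-qubit-to-rz} into $O(1)$ Hadamard and phase gates. This multiplies the depth by a constant and gives $r,w\le O(s)=\poly(n)$.
\item \Cref{thm:catalytic-universalization} then yields a $\UQNC_f$ circuit with $\poly(s,1/\delta)=\poly(n)$ size and ancillae.
\item The resulting depth is $O(D+\log\log((r+w+2)/\delta))=O(D(n)+\log\log n)$.
\end{itemize}
Since $D(n)=\Omega(\log\log n)$, this is $O(D(n))$. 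The guarantee \Cref{eq:catalytic-clean-unitary-guarantee} has the clean-ancilla, fixed-global-phase form used to define $\textsf{Unitary}(\cdot)$. This gives the unitary equality.

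\textbf{Bounded-error decision.} Amplify first, as in the proof of \Cref{cor:shallow-restricted-gate-simulations}. Use a constant number of parallel repetitions plus a reversible majority, built from bounded-arity gates, so completeness is at least $9/10$ and soundness at most $1/10$; this changes depth and size only by constant factors. Then compile with constant error $\delta=1/20$. The Euclidean-distance bound implies the acceptance probability moves by at most $1/20$, so the gap survives. The depth is now $O(D+\log\log s)=O(D)$. The statements for $\QNC_f^k$ with $k\ge1$ follow because $\log^k n=\Omega(\log\log n)$.

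\textbf{Main obstacle.} The one point needing care is that the additive overhead really is $O(\log\log n)$ and not larger. This requires $b=\Theta(\log(\poly(n)))=\Theta(\log n)$, which makes the factory depth $O(\log b+\log\log(1/\eta))=O(\log\log n)$. It also requires the factory size $w2^{O(b)}$ to stay polynomial, which holds because $b=O(\log n)$ with a constant determined by $c$ and the exponent of $s$. A second check is that the constant-factor depth blowup from the Euler decomposition and amplification stays inside $O(D)$. Both checks follow directly from the resource bounds in \Cref{thm:catalytic-universalization}.
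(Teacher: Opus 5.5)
Your proposal is correct and follows the paper's proof. You apply \Cref{thm:catalytic-universalization} at inverse-polynomial accuracy, or at a small constant accuracy after constant-fold parallel amplification for decision problems; polynomial size makes $r,w$ polynomial, so the one-time catalyst cost is an additive $O(\log\log n)$ absorbed by $D(n)=\Omega(\log\log n)$, and the reverse inclusions come from the gate sets. Your extra checks that $b=\Theta(\log n)$ and that the factory size $w2^{O(b)}$ stays polynomial just make explicit what the paper leaves implicit.
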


\begin{proof}
For a polynomial-size circuit family, the parameters $r$ and $w$ in
\Cref{thm:catalytic-universalization} are polynomially bounded. At
inverse-polynomial accuracy, the additive catalyst-preparation cost is
therefore $O(\log\log n)$. If $D(n)=\Omega(\log\log n)$, the compiled
depth is
$O(D(n)+\log\log n)=O(D(n))$, while the size remains polynomial. Applying \Cref{thm:catalytic-universalization} with sufficiently small
inverse-polynomial error gives the unitary-synthesis equality. For bounded-error decision problems, first amplify by a constant number of parallel repetitions as in \Cref{cor:shallow-restricted-gate-simulations}. A sufficiently small constant compilation error then preserves the stipulated completeness and soundness thresholds. The
reverse inclusions are immediate from the gate sets.
\end{proof}

\noindent To extend the result to the other Fan-Out models, we use the
following exact simulations from prior work.

\begin{fact}[Exact Fan-Out Simulations
{\cite{takahashi2016collapse}}]
\label{fact:takahashi-tani-simulation}
Generalized Toffoli and threshold gates have exact constant-depth,
polynomial-size implementations using bounded-arity gates, unbounded
Fan-Out, and arbitrary single-qubit gates.
\end{fact}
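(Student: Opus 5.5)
The plan follows the Høyer--Špalek\,/\,Takahashi--Tani route. It has four steps: turn Fan-Out into coherent counting through phase kickback, extract the Hamming weight exactly, evaluate the Boolean predicate on that short weight string, and uncompute. Generalized Toffoli is the special case of a Threshold gate with threshold $|S|$, so it suffices to give an exact constant-depth implementation of $\ket{x}\ket{z_t}\mapsto\ket{x}\ket{z_t\oplus[\,|x|\geq u\,]}$ over bounded-arity gates, unbounded Fan-Out, and arbitrary single-qubit gates.

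\textbf{Step 1: weight phases.} Fix $k:=\lceil\log_2(n+1)\rceil$. For each $j\in\{1,\ldots,k\}$, prepare a qubit in $\ket+$, and copy it with one Fan-Out into an $n$-qubit cat register. The $i$th input bit then controls an $R_z(2\pi/2^j)$, or equivalently a phase gate $P(2\pi/2^j)$, on the $i$th copy. These controlled rotations act on disjoint qubits, so they run in parallel. Undoing the Fan-Out kicks back the accumulated phase, leaving $(\ket0+e^{2\pi i|x|/2^j}\ket1)/\sqrt2$. First fanning out each $x_i$ to $k$ private copies makes all $k$ registers run in parallel, so this step has constant depth and polynomial size. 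These are exactly the phase-estimation states for the eigenphase $|x|/2^k$, which is an exact $k$-bit dyadic number.

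\textbf{Step 2: exact binary weight.} Apply the inverse QFT on the $k$ phase qubits. Because the eigenphase is exactly dyadic, phase estimation returns $\ket{\mathrm{bin}(|x|)}$ with no error. This is the step I expect to be the main obstacle. The textbook inverse QFT has depth $\Theta(k)$, since each output bit needs rotations controlled by the lower bits.

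I would first try the Høyer--Špalek observation that the required corrections are themselves phases linear in the already-determined bits. Concretely, bit $j$ of $|x|$ is the parity read off from $e^{2\pi i(|x|-(|x|\bmod 2^{j-1}))/2^j}$. The subtracted low-order part $|x|\bmod 2^{j-1}$ can instead be written as a sum over the input bits with hardwired arbitrary angles. This suggests computing each bit directly from the inputs with its own Fan-Out copies, in parallel over $j$. The catch is that "$\bmod$" is not linear, so this direct approach may fail.

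If it does, I would use the Takahashi--Tani fallback: exact amplitude amplification with tuned rotation angles. The angles are tuned so that a test such as $[\,|x|\equiv c \pmod{2^j}\,]$ becomes a perfect $0/1$ outcome. Arbitrary single-qubit gates are exactly what make this tuning possible, and it is where the construction genuinely uses continuous rotations.

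\textbf{Step 3: evaluate the predicate.} Once $\ket{\mathrm{bin}(|x|)}$ is available on $k=O(\log n)$ qubits, $[\,|x|\geq u\,]$ is a fixed Boolean function of $O(\log n)$ bits. I would compute it as an XOR of the one-hot indicators $[\,|x|=c\,]$ for $c\geq u$; these indicators are pairwise disjoint, so XOR agrees with OR, and the XOR is a single Parity gate, which is a Hadamard-conjugated Fan-Out. Each indicator is an AND of $k$ literals, and this inner AND is the only remaining unbounded-fan-in step. I would avoid the $\log\log n$-depth recursion as follows. Such an AND of literals is itself $[\text{number of violated literals}\equiv 0 \pmod{2^{k'}}]$ with $2^{k'}>k$. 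I would therefore apply the same exact counting test from Step 2 a second time, and this time the exact amplitude-amplification version suffices, so the total depth stays constant.

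\textbf{Step 4: uncompute.} CNOT the result into $t$, then reverse Steps 1--3. Because every stage is exact, all ancillae return to $\ket0$. Depth is $O(1)$ and size is $\mathrm{poly}(n)$, since there are $O(n\log n)$ Fan-Out copies and $n+1$ indicator gadgets of polynomial size.
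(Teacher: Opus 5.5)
Your Step~1, the XOR-of-disjoint-indicators idea in Step~3, and the uncomputation in Step~4 are fine. The gap is Step~2, which you rightly flag as the crux but never carry out, and which is exactly the nontrivial content of the cited Takahashi--Tani result (the paper only cites it).

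The direct route fails for the reason you give: $|x|\bmod 2^{j-1}$ is not linear in the input bits, so its phase cannot be kicked back with fixed input-controlled rotations. The amplitude-amplification fallback also fails. For the single-qubit test, let $t:=(|x|-c)\bmod 2^j$ and $\omega:=e^{2\pi i t/2^j}$. The test qubit is produced by a $2\times2$ unitary with entries $(1\pm\omega)/2$. After $r$ rounds of (generalized) amplitude amplification, the amplitude on the outcome declaring $t=0$ is therefore a Laurent polynomial of degree $O(r)$ in $\omega$. A perfect test needs this polynomial to vanish at all $2^j-1$ nontrivial $2^j$-th roots of unity while having modulus one at $\omega=1$. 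That forces $r=\Omega(2^j)$, i.e.\ $\Omega(n)$ rounds for $j\approx\log_2 n$. The same count undercuts your second use in Step~3: with $2^{k'}>k$ there are still $\Theta(\log n)$ distinct residues, so ``this time the exact amplitude-amplification version suffices'' is unsupported. That step also presupposes a resolution of Step~2.

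The missing idea is that you need neither $\mathrm{bin}(|x|)$ nor amplification.

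First, reduce to a small OR. Apply the fixed phase $P(-2\pi c/2^j)$ to the $j$th phase qubit and then $H$. This gives $\ket{\psi_{x,c}}:=\bigotimes_{j=1}^{k}H\frac{\ket{0}+e^{2\pi i t/2^j}\ket{1}}{\sqrt2}$ with $t:=|x|-c$ and $|t|\leq n<2^k$. If $t=0$, this is exactly $\ket{0^k}$. If $t\neq0$, write $t=2^v o$ with $o$ odd, so $v\leq k-1$. The factor $j=v+1$ is then $H\ket{-}=\ket{1}$, so $\ket{\psi_{x,c}}$ is exactly orthogonal to $\ket{0^k}$, even though it is not a basis state. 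Hence $[\,|x|\neq c\,]$ is exactly the output of any exact clean unitary implementation of $\mathrm{OR}_k$ on this register.

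Second, compute $\mathrm{OR}_k$ exactly in constant depth. For $y\neq0^k$, exactly $2^{k-1}$ of the $2^k-1$ nonempty subset parities $\bigoplus_{i\in S}y_i$ equal one; for $y=0^k$, none do. Compute all of them by fanning out each $y_i$ and applying Hadamard-conjugated Fan-Out. Let $w$ be their Hamming weight, and kick back $e^{i\pi w/2^{k-1}}$ onto a $\ket{+}$ cat register using fixed controlled-$P(\pi/2^{k-1})$ gates. A final $H$ then yields $\ket{\mathrm{OR}_k(y)}$ exactly, with size $O(k2^k)=O(n\log n)$.

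Third, clean up and assemble. For each basis input, the flag is a definite bit unentangled from the garbage, so you can copy it out and reverse everything cleanly. Run this for all $c\geq u$ in parallel on private input copies, and combine the disjoint indicators with one Parity gate as in your Step~3. This gives the exact Threshold gate, and hence generalized Toffoli, in constant depth and polynomial size, with no binary weight and no nested counting test.
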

\noindent Because these simulations increase depth by only a constant factor,
they apply to any asymptotic depth bound.
\begin{fact}[Standard Fan-Out Hierarchy
{\cite{takahashi2016collapse}}]
\label{cor:standard-fanout-hierarchy}
For every depth function $D(n)\geq1$, the bounded-error decision classes
satisfy
\begin{align}
\QNC_f[D]
=
\QAC_f[D]
=
\QTC_f[D].
\label{eq:standard-fanout-collapse}
\end{align}
\end{fact}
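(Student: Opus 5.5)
The plan is to prove the two equalities in \Cref{eq:standard-fanout-collapse} separately, each by a pair of inclusions, using only gate-set containment and the exact simulations of \Cref{fact:takahashi-tani-simulation}. Since $\QNC_f\subseteq\QAC_f\subseteq\QTC_f$ as gate sets, we have $\QNC_f[D]\subseteq\QAC_f[D]\subseteq\QTC_f[D]$ immediately. What remains is to show $\QTC_f[D]\subseteq\QNC_f[D]$.

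\textbf{Reverse inclusion.} Given a polynomial-size, depth-$O(D(n))$ $\QTC_f$ circuit family, I would replace every Threshold gate (and hence every generalized Toffoli gate, which is the special case with threshold equal to the number of controls) by its exact constant-depth, polynomial-size implementation from \Cref{fact:takahashi-tani-simulation}. That implementation uses only bounded-arity gates, Fan-Out, and arbitrary single-qubit gates, all of which are available in $\QNC_f$.

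Several bookkeeping points need to be checked.
\begin{itemize}
\item \emph{Private ancillae.} Each replacement is exact and clean, so it returns its ancillae to $\ket0$. Giving each gadget its own private ancillae means the gadgets replacing gates in one layer act on disjoint registers and run in parallel.
\item \emph{Depth.} Each original layer therefore becomes $O(1)$ layers, so the total depth stays $O(D(n))$.
\item \emph{Size.} A gate on at most $\poly(n)$ wires is replaced by $\poly(n)$ gates, and there are polynomially many gates, so the size remains polynomial.
\item \emph{Acceptance probability.} The simulation is an exact unitary equality on the data register with all ancillae restored to zero, so every acceptance probability is preserved exactly and the bounded-error thresholds $2/3$ and $1/3$ are unaffected.
\end{itemize}

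\textbf{Main obstacle.} The only real concern is making sure the Takahashi--Tani gadgets really are \emph{clean}, meaning garbage-free, and not merely correct on the target. Otherwise the leftover ancillae could stay entangled with the data and interfere with later gates. I would handle this with the standard compute--copy--uncompute trick if needed: compute the Threshold value into a fresh qubit, XOR it into the target with a CNOT, and then run the gadget in reverse. This at most triples the depth.

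A second, smaller point concerns Fan-Out gates whose targets are Threshold outputs. These stay in place unchanged, because Fan-Out is native to $\QNC_f$. With these checks in place, the inclusion chain closes and all three classes coincide for every $D(n)\geq1$.
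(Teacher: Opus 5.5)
Your proposal is correct and matches the paper's proof. The inclusions $\QNC_f[D]\subseteq\QAC_f[D]\subseteq\QTC_f[D]$ come from gate-set containment, and the reverse inclusion replaces every generalized Toffoli and Threshold gate by its exact Takahashi--Tani $\QNC_f$ implementation on private workspace, so each layer grows by only a constant factor and acceptance probabilities are preserved exactly. Your compute--copy--uncompute remark about garbage is a harmless safeguard that the paper leaves implicit in the word ``exact.''
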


\begin{proof}
Apply \Cref{fact:takahashi-tani-simulation} to every generalized
Toffoli and threshold gate, using disjoint workspace for disjoint
gates. Each original layer expands into only constantly many layers,
the size remains polynomial, and the simulation is exact. Hence a
depth-$O(D(n))$ circuit in either larger model becomes a
depth-$O(D(n))$ $\QNC_f$ circuit. The reverse inclusions follow
immediately from the gate sets.
\end{proof}

The Takahashi--Tani simulations may introduce arbitrary single-qubit
gates. By \Cref{cor:qnc-f-universalization}, these can be removed with
an additive $O(\log\log n)$ depth cost. Thus, whenever
$D(n)=\Omega(\log\log n)$, this overhead is absorbed into the original
depth bound.

\begin{corollary}[Universalization of the Fan-Out Hierarchies]
\label{cor:fanout-hierarchy-universalization}
Let $D(n)=\Omega(\log\log n)$. Then the bounded-error decision classes
satisfy
\begin{align}
\QNC_f[D]
=
\UQNC_f[D]
=
\QAC_f[D]
=
\UQAC_f[D]
=
\QTC_f[D]
=
\UQTC_f[D].
\label{eq:fanout-hierarchy-universalization}
\end{align}
In particular, this holds for $D(n)=\log^k n$ for every fixed
integer $k\geq1$.
\end{corollary}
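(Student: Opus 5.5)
The plan is to chain the standard Fan-Out collapse of \Cref{cor:standard-fanout-hierarchy} with the catalytic universalization of \Cref{cor:qnc-f-universalization}, using the gate-set inclusions for the remaining directions. Throughout, fix $D(n)=\Omega(\log\log n)$ and work with bounded-error decision classes.

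\textbf{Step 1: inclusions from the gate sets.} The universal classes are restrictions of the standard ones, and the multiqubit primitives nest, since bounded Toffoli is a generalized Toffoli and generalized Toffoli is a threshold gate. This gives
\[
\UQNC_f[D]\subseteq\UQAC_f[D]\subseteq\UQTC_f[D]\subseteq\QTC_f[D].
\]

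\textbf{Step 2: close the cycle.} Starting from $\QTC_f[D]$, apply \Cref{cor:standard-fanout-hierarchy} to get $\QTC_f[D]=\QNC_f[D]$. Then \Cref{cor:qnc-f-universalization}, which needs exactly $D=\Omega(\log\log n)$, gives $\QNC_f[D]=\UQNC_f[D]$. This produces the chain
\[
\UQNC_f[D]\subseteq\UQAC_f[D]\subseteq\UQTC_f[D]\subseteq\QTC_f[D]=\QNC_f[D]=\UQNC_f[D],
\]
so all four classes in it are equal. The standard collapse $\QNC_f[D]=\QAC_f[D]=\QTC_f[D]$ then places $\QAC_f[D]$ in the same chain, yielding all six equalities. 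The case $D=\log^k n$ with $k\geq1$ satisfies $D=\Omega(\log\log n)$, so it follows immediately.

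\textbf{Main obstacle.} The only delicate point is that the Takahashi--Tani simulation introduces arbitrary, angle-dependent single-qubit gates. These are precisely what \Cref{thm:catalytic-universalization} removes, at an additive $O(\log\log n)$ depth cost.

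I would check that the circuit output by Takahashi--Tani is a polynomial-size $\QNC_f$ circuit of depth $O(D)$. In that case the additive cost is absorbed into $O(D)$, and the bounded-error gap survives after constant amplification followed by a constant compilation error, as in \Cref{cor:qnc-f-universalization}. Since that corollary is stated for arbitrary $\QNC_f[D]$ families, no further work is needed beyond composing these two simulations.
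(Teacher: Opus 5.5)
Your proposal is correct and follows essentially the same route as the paper. You reduce $\QAC_f[D]$ and $\QTC_f[D]$ to $\QNC_f[D]$ via the Takahashi--Tani collapse (\Cref{cor:standard-fanout-hierarchy}), remove the arbitrary single-qubit gates with \Cref{cor:qnc-f-universalization} (absorbing its additive $O(\log\log n)$ cost since $D=\Omega(\log\log n)$), and close the remaining directions by gate-set inclusions; your single cycle of inclusions is just a compact repackaging of the paper's argument that each universal class lies between $\UQNC_f[D]$ and its unrestricted counterpart.
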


\begin{proof}
A $\QAC_f[D]$ or $\QTC_f[D]$ family first becomes a $\QNC_f[D]$
family by \Cref{cor:standard-fanout-hierarchy}. Applying
\Cref{cor:qnc-f-universalization} then gives a $\UQNC_f$ simulation
of depth $O(D(n)+\log\log n)=O(D(n))$. Thus both unrestricted classes
are contained in $\UQNC_f[D]$. Each corresponding universal class
contains $\UQNC_f[D]$ and is contained in its unrestricted class, so
all six classes coincide.
\end{proof}

\section{The Real Shallow-Depth Circuit Hierarchy}
\label{sec:hadamard-only-collapse}
\label{sec:real-hierarchy-main-result}

We now remove $T$-gates for bounded-error decision computation. Recall that
$\HQNC$, $\HQAC$, and $\HQTC$ are the Hadamard-only versions of
$\UQNC$, $\UQAC$, and $\UQTC$, with the same multiqubit
primitives. The gates $X$ and $Z$ also remain available as the
single-qubit cases of the corresponding multiqubit operations, with
$Z=HXH$. In the bounded-arity models, i.e. $\HQNC$, we explicitly retain ordinary
three-qubit Toffoli in addition to CNOT. Throughout, language classes
are nonuniform and polynomial-size, with computational-basis inputs,
$\ket{0}$-initialized ancillae, and one measured output qubit.

Real simulations of quantum computation have a long history. The standard construction uses one additional two-level system to encode the real and imaginary parts of the amplitudes~\cite{berstein1993quantum,aharonov2003simple}. While efficient in width, this creates a shared auxiliary register touched by every nonreal gate, and therefore need not preserve circuit depth. McKague, Mosca, and Gisin later developed a distributed encoding that restores locality~\cite{mckague2009simulating}.
A more recent line of
work uses the Pauli--Liouville, or Pauli-transfer, representation of
density matrices and quantum channels, in which a Hermitian operator
is represented by its real Pauli coefficients
\cite{wood2015tensor,huang2022bloch,kunold2023vectorization}. In
particular, this representation has been used for classical circuit
simulation and for quantum simulation of density-matrix dynamics.

We use the Pauli-transfer representation for a different purpose, to obtain
a locality-preserving real simulation of shallow quantum circuits.
Rather than tracking the complex amplitudes of the evolving state
vector, we represent its density matrix by its Pauli coefficients.
The $4^q$ coefficients of a $q$-qubit pure density matrix form a
normalized real vector and can therefore be stored as the amplitudes
of only $2q$ ``label'' qubits. Computational-basis inputs have a simple
product encoding, which we propagate gate-by-gate.
Under this representation, conjugation by an $r$-qubit gate $U$
becomes the real orthogonal transformation
\begin{align}
UQU^\dagger
&=
\sum_P [O_U]_{P,Q}P,
\quad \text{where} \quad
[O_U]_{P,Q}
:=
2^{-r}\operatorname{Tr}(P\,UQU^\dagger),
\end{align}
so that, on encoded states, $\ket{\rho}_{\mathrm P}\longmapsto O_U\ket{\rho}_{\mathrm P} = \ket{U\rho U^\dagger}_{\mathrm P}$.
The matrix $O_U$ is referred to as the Pauli-transfer matrix of $U$. Crucially, when
$U$ acts on only a subset of the original qubits, $O_U$ acts only on
the corresponding label registers. Thus, the encoding preserves the locality and parallelism that are lost in the usual realification based on a single shared auxiliary register.

While the Pauli-transfer formalism itself is standard, the new ingredient is
to realize these induced transformations within the corresponding
Hadamard-only shallow circuit models. Fan-Out is Clifford and hence
acts as a signed permutation of Pauli labels. For generalized Toffoli,
whose induced action is both non-Clifford and unbounded-arity, we
identify the Pauli-transfer matrix as a structured reflection and
implement it by a coherent agreement test. Threshold circuits are
handled indirectly through the Fan-Out model. To our knowledge, this
use of the Pauli-transfer representation to obtain depth-preserving
real simulations and the resulting shallow-depth hierarchy collapses
has not appeared previously.

\begin{theorem}[Depth-Preserving Real Decision Simulation]
\label{thm:h-only-depth-preserving-simulation}
Let $(\mathcal U,\mathcal R)$ be one of the corresponding universal--real circuit pairs:
\begin{align*}
    (\UQAC,\HQAC),\qquad
    (\UQAC_f,\HQAC_f),\qquad
    (\UQTC,\HQTC),\qquad
    (\UQTC_f,\HQTC_f).
\end{align*}
Let $C$ be a depth-$d$ circuit in $\mathcal U$, and let $s\geq2$ bound its gate count and width. Then, for every $\varepsilon\in(0,1/2)$,
there exists a circuit $\widetilde C$ in the corresponding real class
$\mathcal R$, of depth $O(d+1)$ and size 
$\operatorname{poly}(s,1/\varepsilon)$, such that, for every $x\in\{0,1\}^n$,
\begin{align}
    |p_{\widetilde C}(x)-p_C(x)|
    &\leq\varepsilon
    \label{eqn:h-only-depth-preserving-acceptance}
\end{align}
\end{theorem}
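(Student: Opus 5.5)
We simulate $C$ gate by gate in the Pauli-transfer encoding $\ket\rho_{\mathrm P}$ of \Cref{eq:intro-pauli-encoding}, with each original qubit replaced by two real label qubits.

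\textbf{Encoding and readout.} First, we encode the classical input. For a basis state $\ket{z}$, the density matrix $\ketbra zz$ has Pauli coefficients supported on $\{I,Z\}^{\otimes q}$ with signs $(-1)^{z_i}$ on the $Z$ factors. So $\ket{\ketbra zz}_{\mathrm P}$ is a product over qubits of $(\ket I+(-1)^{z_i}\ket Z)/\sqrt2$. With the labeling $I=00$, $Z=01$, this is $\ket0\otimes H X^{z_i}\ket0$, so it is prepared in depth one over $\{H,X\}$. Second, we read out the answer. The acceptance probability equals $(1-\operatorname{Tr}(Z_{\mathrm{out}}\rho))/2$, which is a linear function of the output qubit's Pauli coefficients. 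Since the encoded vector is normalized and real, a short real circuit on the two output label qubits (plus an extra ancilla if needed) converts this coefficient into a measurable probability. This is the content I would record as the encoding lemma (\Cref{lem:real-pauli-encoding}), proving exact preservation of $p_C(x)$.

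\textbf{Gate-by-gate simulation.} Each gate $U$ of $C$ induces the real orthogonal map $O_U$ on the label qubits of its own support only. Gates in one layer therefore have disjoint supports, and their encoded versions can run in parallel. It then suffices to implement each $O_U$ in constant depth in $\mathcal R$, to accuracy $\varepsilon/s$, and to combine the errors by a hybrid argument over at most $s$ gates. The gates split into three groups.

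\emph{Clifford gates.} $H$, $X$, $Z$, CNOT, and Fan-Out act as signed permutations of Pauli labels. For $H$ this is a swap of $X\leftrightarrow Z$ labels plus a sign on $Y$. Fan-Out copies $X$-type bits in one direction and parities $Z$-type bits in the other, which is realizable with Fan-Out, CNOT, and a $Z$ phase in constant depth.

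\emph{The $T$ gate.} $T$ fixes $I$ and $Z$ and rotates the $(X,Y)$ plane by $\pi/4$. Its encoded action is therefore a real $R_y(\pi/2)$-type rotation on one label qubit, controlled by the other label qubit. Because $H$, CNOT, and controlled-$Z$ reproduce Clifford rotations, I would realize $\pi/4$-type real rotations using a Hadamard-only version of our block-encoding plus OAA synthesis. That is, the SELECT unitaries are chosen among real operators $\pm I,\pm Z,\pm X$, giving an $R_y$ block encoding with real coefficients, and a generalized Toffoli implements the reflection. This is constant depth in $\mathcal R$.

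\emph{Threshold gates.} Rather than simulating them directly, I would invoke Takahashi--Tani to reduce $\UQTC$ to $\QNC_f$, then universalize, and use Grier--Morris real Fan-Out in $\HQTC$. This is deferred to the Threshold corollary.

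\textbf{Main obstacle: generalized Toffoli.} Conjugate the target by $H$, so the gate becomes $CZ_k=I-2\ketbra{1^k}{1^k}$. Its transfer matrix is $O=I-2\Pi$ for a projector $\Pi$, which I would compute explicitly. Fixing the first Pauli string, the reflection flips the sign exactly when two conditions hold: every $I/Z$ position is in a local $\ket-$ condition, and all $X/Y$ positions share a common $Y$-eigenvalue. The local conditions are a generalized-Toffoli AND after local Hadamards. The agreement condition is the hard part, because directly testing ``all equal'' on superposed labels needs parity-like power. My first attempt is randomized bucketing, with the randomness supplied coherently by $H\ket0$ ancillae. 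At $O(\log s)$ scales, hash the active positions into paired buckets. Use generalized Toffolis to detect singleton buckets, route the isolated labels, and flag any disagreeing pair. Then take an OR over $O(\log(s/\varepsilon))$ independent repetitions to get failure probability $\varepsilon/s^2$, phase-kick on ``no disagreement and local conditions hold,'' and uncompute. The delicate points are two: proving a constant detection probability at the right scale, and showing the coherent garbage returns to zero up to small error, which is a clean-isometry bound analogous to \Cref{lem:robust-oaa-rz}. Both should be checked carefully.
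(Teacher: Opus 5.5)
Your architecture is the paper's: the Pauli-transfer encoding with a depth-one input layer, encoded gates acting on disjoint label registers, the reflection picture of encoded $\operatorname{CZ}_r$ with a paired-bucket search, and Threshold handled via Takahashi--Tani plus Grier--Morris. However, several steps fail as written.

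\textbf{Encoded $T$.} This step does not work with SELECT branches $\pm I,\pm Z,\pm X$. Any real combination of these is a symmetric matrix, and the amplification map $B\mapsto -3B+4BB^\dagger B$ preserves symmetry, so you can never reach $R_y(\theta)$ with $\sin(\theta/2)\neq0$. You need the antisymmetric branch $J=XZ$, using $R_y(\theta)=\cos(\theta/2)I+\sin(\theta/2)J$ and $\Lambda(J)=\operatorname{CNOT}\cdot\operatorname{CZ}$. Alternatively, block-encode a reflection $\cos\alpha Z+\sin\alpha X$ and compose it with an exact $Z$.

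\textbf{Readout.} Your justification is off. The coefficient $\operatorname{Tr}(Z_{\mathrm{out}}\rho)$ sits in an amplitude of size $2^{-q/2}\operatorname{Tr}(Z_{\mathrm{out}}\rho)$, and measurement probabilities are quadratic in amplitudes, so it cannot simply be converted into a probability. What makes local readout work is purity: the effect $M_\Pi$ representing $A\mapsto\frac12(\Pi A+A\Pi)$ gives $\bra{\rho}M_\Pi\ket{\rho}_{\mathrm P}=\operatorname{Tr}(\Pi\rho^2)=\operatorname{Tr}(\Pi\rho)$.

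\textbf{Smaller points.} The encoded Fan-Out sign is a cubic polynomial with $O(m^2)$ monomials, requiring CZ and CCZ phases on copied bits rather than a single $Z$ phase. In the Threshold route you must universalize by viewing the Takahashi--Tani output as a $\QAC_f$ circuit and using the constant-depth wide-gate compiler. The catalytic $\QNC_f$ universalization would add $O(\log\log n)$ depth.

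\textbf{The agreement test.} This is the main gap: the steps you treat as routine are exactly where the real, Fan-Out-free setting bites.
\begin{itemize}
\item \emph{Handling $Y$-labels.} A real circuit commutes with complex conjugation, which swaps $\ket{+i}$ and $\ket{-i}$. So no real circuit can write an individual $Y$-label into a computational-basis register. CNOT fan-out of a $Y$-eigenstate also fails: it yields a GHZ-type state whose single qubits carry no $Y$-information. Hence the labels can neither be copied into the $\Theta(\log r\cdot\log(1/\eta))$ parallel experiments that constant depth requires, nor compared by a classical XOR. The paper supplies both missing primitives in \Cref{lem:real-y-sharing}. One is the real repetition isometry $\ket{y_s}\mapsto\ket{y_s}^{\otimes M}$, built from a parity followed by $\prod_{i<j}\operatorname{CZ}_{ij}$. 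The other is the comparison $H_c\,\Lambda_c(Y\otimes Y)\,H_c$, made real by $Y\otimes Y=-J\otimes J$. This works because $s\oplus t$, unlike $s$, is invariant under conjugation.
\item \emph{Bucket assignment.} In \HQAC{} only polylogarithmic-width Fan-Out is available, via the nekomata compilation. Yet decoding $H\ket0$ random bucket indices into per-bucket indicators, and masking them by $a_i$, uses each random bit and each $a_i$ in up to $\Theta(r)$ places. The paper avoids this with one-hot $W$-state assignments, the selected-CNOT mask-writing trick, and logarithmic-width $\operatorname{EXACT}_1$. It also uses the one-hot controlled SWAP of \Cref{lem:real-onehot-routing} to resolve the shared-target conflict when routing.
\item \emph{The case $a=0$.} The sign must also require $a\neq0$. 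For $a=0$, the vector $\ket{0^r}_a\ket{-}^{\otimes r}_b$, which encodes $\ket{1^r}\bra{1^r}$, meets both of your conditions vacuously but is fixed by conjugation.
\end{itemize}
By contrast, the two points you flag as delicate are the easy ones. Detection is the short estimate of \Cref{lem:paired-bucket-disagreement}. Cleanup follows because each fixed bucket assignment applies only a sign and uncomputes its test workspace exactly.
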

\noindent The bounded-arity classes $\UQNC$ and $\UQNC_f$ are treated separately:
their real simulation uses a reusable catalyst whose preparation incurs an
additive $O(\log\log(s/\varepsilon))$ depth cost, as in
\Cref{sec:catalytic-circuit-simulation}; see
\Cref{thm:bounded-arity-real-decision-simulation}.

\paragraph{Proof outline.}
\Cref{sec:pauli-real-encoding} introduces the local Pauli encoding and
readout. \Cref{sec:h-only-real-primitives} implements the real rotation
induced by $T$. \Cref{sec:real-fanout-threshold} treats Fan-Out as a
linear label update plus a low-degree sign and records the threshold
reductions. The main technical step, \Cref{sec:encoded-toffoli-real},
expresses encoded Toffoli as a reflection whose negative eigenspace is
detected by local checks and randomized comparisons of selected
$Y$-eigenvalues. \Cref{sec:real-simulation-completion} assembles the
wide-gate simulations, while \Cref{sec:real-bounded-fanout} handles
bounded arity using exact synthesis and a reusable real catalyst.

\subsection{A Local Real Encoding}
\label{sec:pauli-real-encoding}

We now construct the real encoding used throughout the simulation. The
idea is to represent the evolving density matrix by its real Pauli
coefficients. Although there are $4^q$ such coefficients for a
$q$-qubit state, they are stored coherently as the amplitudes of only
$2q$ label qubits. Crucially, conjugation by a gate acting on a set of
qubits induces an operation only on the corresponding label qubits, so
the encoding preserves the locality and parallelism of the original
circuit.

Let $q$ denote the total number of qubits of $C$, including its
workspace, and write $C=L_d\cdots L_1$. On classical input $x$, let
\begin{align}
\ket{\psi_t(x)}
&=
L_t\cdots L_1\ket{x}\ket{0^{q-n}}
\qquad \text{and} \qquad
\rho_t(x)
:=
\ket{\psi_t(x)}\bra{\psi_t(x)}.
\label{eqn:original-state-evolution}
\end{align}
If $\Pi=\ket{1}\bra{1}_\textsf{out}\otimes I$ projects onto outcome one of the
designated output qubit, then
\begin{align}
p_C(x)
&=
\operatorname{Tr}\bigl(\Pi\rho_d(x)\bigr).
\label{eqn:original-acceptance-probability}
\end{align}
Since the computation is unitary prior to the final measurement, every
$\rho_t(x)$ is pure and satisfies $\rho_t(x)^2=\rho_t(x)$. We label the $q$-qubit Pauli operators by pairs of bit strings
$a,b\in\{0,1\}^q$. Define
\begin{align}
P(a,b)
&:=
\bigotimes_{j=1}^q i^{a_jb_j}X^{a_j}Z^{b_j},
\label{eqn:pauli-labels}
\end{align}
so the local labels $00,01,10,11$ correspond to $I,Z,X,Y$,
respectively. For a Hermitian operator $\rho$, define its
\emph{Pauli encoding} as
\begin{align}
\ket{\rho}_{\mathrm P}
&:=
2^{-q/2}
\sum_{a,b\in\{0,1\}^q}
\operatorname{Tr}(P(a,b)\rho)\ket{a,b}.
\label{eqn:pauli-vectorization}
\end{align}
The amplitudes are real because both $P(a,b)$ and $\rho$ are Hermitian.
Pauli orthogonality further gives, for Hermitian $A$ and $B$,
\begin{align}
\braket{A}{B}_{\mathrm P}
&=
\operatorname{Tr}(AB).
\label{eqn:pauli-hilbert-schmidt}
\end{align}
Hence
$|\ket{\rho}_{\mathrm P}|_2^2=\operatorname{Tr}(\rho^2)$, so each
$\ket{\rho_t(x)}_{\mathrm P}$ is a normalized real state.

The two label qubits associated with an original qubit should not be
viewed as separate real- and imaginary-part registers. Their joint
value instead labels one of $I,Z,X,Y$, and complex phase information
in the original state is encoded through real Pauli coefficients. For
example,
\begin{align}
\rho
&=
\frac12\left(I+r_xX+r_yY+r_zZ\right)
\end{align}
has entirely real coefficients, with the $Y$ coefficient $r_y$
carrying information that would appear as relative phase in a
state-vector description. Crucially, the simulator never computes the $4^q$
coefficients individually. Instead, it prepares their coherent encoding and
evolves it directly.

The following lemma proves three properties of this encoding that
will be necessary for the later analysis. In particular, the encoded input is easy to prepare, gates act locally on the
label registers, and the original acceptance probability can be read
out locally.

\begin{lemma}[Local Pauli Encoding and Readout]
\label{lem:real-pauli-encoding}
The Pauli encoding in \Cref{eqn:pauli-vectorization} satisfies:
\begin{enumerate}
\item \textbf{Input preparation.}
For every $z\in\{0,1\}^q$,
\begin{align}
\bigl|\ket{z}\bra{z}\bigr\rangle_{\mathrm P}
&=
\ket{0^q}_a\otimes H^{\otimes q}\ket{z}_b.
\label{eqn:encoded-basis-state}
\end{align}
Thus a computational-basis input has a depth-one encoded
preparation using only Hadamard gates.

\item \textbf{Local gate evolution.}
Suppose $U$ acts on a set $S$ of $r$ original qubits, with Pauli-transfer matrix
\begin{align}
[O_U]_{P,Q}
&:=
2^{-r}\operatorname{Tr}\!\left(P\,UQU^\dagger\right),
\label{eqn:local-pauli-transfer}
\end{align}
where $P,Q$ range over the $r$-qubit Paulis on $S$. Then, $O_U$ is
real orthogonal and, for all Hermitian $\rho$,
\begin{align}
\bigl|U\rho U^\dagger\bigr\rangle_{\mathrm P}
&=
(O_U)_S\ket{\rho}_{\mathrm P},
\label{eqn:local-pauli-evolution}
\end{align}
where $(O_U)_S$ acts only on the $2r$ label qubits associated with
$S$.

\item \textbf{Local readout.}
Let $\textsf{out}$ be the designated output qubit, and let
$(a_{\textsf{out}},b_{\textsf{out}})$ denote its two Pauli-label qubits.
For
$\Pi=\ket{1}\bra{1}_{\textsf{out}}\otimes I$, define the effect
\begin{align}
M_\Pi
&:=
\ket{0}\bra{0}_{a_{\textsf{out}}}
\otimes
\ket{-}\bra{-}_{b_{\textsf{out}}}
+
\frac12
\ket{1}\bra{1}_{a_{\textsf{out}}}
\otimes
I_{b_{\textsf{out}}}.
\label{eqn:encoded-readout-effect}
\end{align}
Then, for every pure state $\rho$,
\begin{align}
\bra{\rho}
M_\Pi
\ket{\rho}_{\mathrm P}
&=
\operatorname{Tr}(\Pi\rho).
\label{eqn:encoded-readout-probability}
\end{align}
Moreover, the measurement $\{M_\Pi,I-M_\Pi\}$ can be implemented
exactly by a constant-size $\HQNC^0$ circuit, followed by one computational-basis
measurement.

\end{enumerate}
\end{lemma}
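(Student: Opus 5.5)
We treat the three parts separately, since each reduces to a short computation in the Pauli basis. Throughout we use the label convention of \Cref{eqn:pauli-labels}, $P(a,b)=\bigotimes_j i^{a_jb_j}X^{a_j}Z^{b_j}$, together with Pauli orthogonality $\operatorname{Tr}(P(a,b)P(a',b'))=2^q\delta_{(a,b),(a',b')}$.

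\textbf{Input preparation.} For a basis state $\ket z$, $\operatorname{Tr}(P(a,b)\ket z\bra z)=\bra z P(a,b)\ket z$. This vanishes unless $a=0$, because any $X$ or $Y$ factor flips the bit. When $a=0$ it equals $(-1)^{b\cdot z}$. Hence
\[
\bigl|\ket z\bra z\bigr\rangle_{\mathrm P}=2^{-q/2}\sum_b(-1)^{b\cdot z}\ket{0^q}\ket b=\ket{0^q}\otimes H^{\otimes q}\ket z,
\]
which is a single Hadamard layer applied after $X$ gates that prepare $\ket z$ on the $b$ register.

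\textbf{Local gate evolution.} Write the global Pauli as $P=P_S\otimes P_{\bar S}$. Since $U$ acts only on $S$,
\[
\operatorname{Tr}\bigl(P\,U\rho U^\dagger\bigr)=\operatorname{Tr}\bigl((U^\dagger P_SU\otimes P_{\bar S})\rho\bigr).
\]
Expand $U^\dagger P_SU=\sum_Q c_{P,Q}Q$ with $c_{P,Q}=2^{-r}\operatorname{Tr}(QU^\dagger P_SU)=2^{-r}\operatorname{Tr}(P_S\,UQU^\dagger)=[O_U]_{P,Q}$. These coefficients are real because both $Q$ and $U^\dagger P_S U$ are Hermitian. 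Substituting this expansion gives exactly $(O_U)_S$ acting on the $2r$ label qubits of $S$, tensored with the identity elsewhere. For orthogonality, \Cref{eqn:pauli-hilbert-schmidt} together with $\operatorname{Tr}(UAU^\dagger UBU^\dagger)=\operatorname{Tr}(AB)$ shows that $O_U$ preserves inner products on the real span of the encodings of Hermitian operators. On $r$ qubits these encodings span all of $\mathbb R^{4^r}$, so $O_U$ is real orthogonal.

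\textbf{Local readout.} Write $\Pi=\tfrac12(I-Z_{\mathsf{out}})$. Then
\[
\operatorname{Tr}(\Pi\rho)=\tfrac12\operatorname{Tr}(\rho)-\tfrac12\operatorname{Tr}(Z_{\mathsf{out}}\rho).
\]
I would compute $\bra\rho M_\Pi\ket\rho_{\mathrm P}$ by splitting the labels of $\mathsf{out}$. Labels $10,11$ (that is, $X$ and $Y$) contribute $\tfrac12$ of their squared weight. Labels $00,01$ (that is, $I$ and $Z$) are projected onto $\ket-_{b}$, giving $\tfrac12(c_I-c_Z)^2$, where $c_I,c_Z$ denote the restricted amplitude vectors. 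Expanding, the result is
\[
\tfrac12\|\rho\|^2-\braket{c_I}{c_Z}.
\]
Purity gives $\|\ket\rho_{\mathrm P}\|^2=\operatorname{Tr}\rho^2=1=\operatorname{Tr}\rho$, so it remains to show $\braket{c_I}{c_Z}=\tfrac12\operatorname{Tr}(Z_{\mathsf{out}}\rho)$. This is the main step. Pair the amplitude of $I\otimes R$ with that of $Z\otimes R$ for every Pauli $R$ on the remaining qubits, giving
\[
\braket{c_I}{c_Z}=2^{-q}\sum_R\operatorname{Tr}\bigl((I\otimes R)\rho\bigr)\operatorname{Tr}\bigl((Z\otimes R)\rho\bigr).
\]
By Pauli completeness, $\sum_P\operatorname{Tr}(PA)\operatorname{Tr}(PB)=2^q\operatorname{Tr}(AB)$. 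The sum runs only over Paulis whose $\mathsf{out}$ factor is $I$, so the natural tool is the completeness identity on the $q-1$ remaining qubits, $\sum_R R\otimes R=2^{q-1}\,\mathrm{SWAP}$. Applying it yields $2^{-q}\cdot2^{q-1}\operatorname{Tr}\bigl((\rho\otimes\rho)(I\otimes Z)\,\mathrm{SWAP}_{\overline{\mathsf{out}}}\bigr)$, which is a partial-swap trace. For pure $\rho=\ket\psi\bra\psi$ I expect this to simplify, via the Schmidt decomposition across $\mathsf{out}$ versus the rest, to $\tfrac12\bra\psi Z_{\mathsf{out}}\ket\psi$. I would check this step carefully, because purity is essential here: the identity fails for mixed states.

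The measurement itself is simple. Measure $a_{\mathsf{out}}$. If the outcome is $0$, apply $H$ to $b_{\mathsf{out}}$ and accept on outcome $1$ (this realizes $\ket-\bra-$). If the outcome is $1$, accept with probability $\tfrac12$ using a fresh ancilla in $\ket+$. To turn this into a single basis measurement, prepare the ancilla with $H$, then use a controlled-SWAP (built from Toffoli gates) to route either $H b_{\mathsf{out}}$ or the ancilla into an output bit according to $a_{\mathsf{out}}$. This is a constant-size circuit over $H$, CNOT, and Toffoli, followed by one computational-basis measurement.
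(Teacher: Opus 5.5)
Your input-preparation and gate-evolution arguments, and your readout circuit, are essentially the paper's. The circuit is a Hadamard on $b_{\mathsf{out}}$, an ancilla in $\ket{+}$, and an $a_{\mathsf{out}}$-controlled routing of one of them into the measured bit. The genuine difference is in the readout identity. You split $M_\Pi$ into its $\{I,Z\}$ and $\{X,Y\}$ blocks, reduce everything to the cross term $\langle c_I,c_Z\rangle$, and evaluate it by Pauli completeness on the other $q-1$ qubits.

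The step you left as an expectation does close. Your partial-swap trace equals $\tfrac12\operatorname{Tr}\bigl((Z_{\mathsf{out}}\otimes\rho_{\overline{\mathsf{out}}})\rho\bigr)$ with $\rho_{\overline{\mathsf{out}}}:=\operatorname{Tr}_{\mathsf{out}}\rho$. For a Schmidt decomposition $\ket\psi=\sum_{k=1}^{2}\sqrt{\lambda_k}\ket{u_k}\ket{v_k}$ (allowing $\lambda_2=0$), this is $\tfrac12\sum_k\lambda_k^2\bra{u_k}Z\ket{u_k}$. Since $Z$ is traceless on a qubit, $\bra{u_2}Z\ket{u_2}=-\bra{u_1}Z\ket{u_1}$. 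With $\lambda_1+\lambda_2=1$ you get $\tfrac12(\lambda_1-\lambda_2)\bra{u_1}Z\ket{u_1}=\tfrac12\bra\psi Z_{\mathsf{out}}\ket\psi$, as needed.

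The paper sidesteps this computation. It notes that $M_\Pi$ is exactly the normalized-Pauli-basis matrix of the Jordan map $A\mapsto\tfrac12(\Pi A+A\Pi)$, which is a one-qubit check on $I,Z,X,Y$. Hence $\bra{\rho}M_\Pi\ket{\rho}_{\mathrm P}=\operatorname{Tr}(\Pi\rho^2)$ for every Hermitian $\rho$, and purity is used only as $\rho^2=\rho$.

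The same observation shortens your route. The map $\tfrac12\{Z_{\mathsf{out}},\cdot\}$ exchanges the $I\otimes R$ and $Z\otimes R$ coefficients and kills the $X\otimes R$ and $Y\otimes R$ ones. This gives $2\langle c_I,c_Z\rangle=\operatorname{Tr}\bigl(\rho\,\tfrac12\{Z_{\mathsf{out}},\rho\}\bigr)=\operatorname{Tr}(Z_{\mathsf{out}}\rho^2)$ at once.

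Your Schmidt argument is valid but relies on the output being a single qubit, with two Schmidt weights and a traceless $Z$. The Jordan-map view is basis-free. It also yields the mixed-state formula $\operatorname{Tr}(\Pi\rho^2)$, which explains exactly why purity is needed.
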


\begin{proof}
For the first claim,
$\ket{z}\bra{z}
=2^{-q}\prod_{j=1}^q(I+(-1)^{z_j}Z_j)$, so only Pauli strings with
$a=0^q$ have nonzero coefficients. Therefore
\begin{align}
\bigl|\ket{z}\bra{z}\bigr\rangle_{\mathrm P}
&=
\ket{0^q}_a
\otimes
2^{-q/2}
\sum_{b\in\{0,1\}^q}
(-1)^{z\cdot b}\ket{b}
=
\ket{0^q}_a\otimes H^{\otimes q}\ket{z}_b.
\end{align}
Thus we may use the original computational-basis input as the
$b$-register, append $\ket{0^q}$ as the $a$-register, and apply one
parallel layer of Hadamards.

For the second claim, conjugation $A\mapsto UAU^\dagger$ preserves both
Hermiticity and the Hilbert--Schmidt inner product. Its matrix in the
normalized Pauli basis is therefore real orthogonal, with entries
given by \Cref{eqn:local-pauli-transfer}. Expanding $\rho$ in this
basis immediately gives \Cref{eqn:local-pauli-evolution}. Since $U$
acts trivially outside $S$, its Pauli-transfer matrix likewise acts as
the identity on all label registers outside $S$. In particular,
disjoint gates in the original circuit remain disjoint in the encoded
computation.

For the final claim, consider the Hilbert--Schmidt-space operator
corresponding to the Jordan map
\begin{align}
A
&\longmapsto
\frac12(\Pi A+A\Pi).
\end{align}
Its matrix in the normalized Pauli basis is precisely $M_\Pi$. Hence,
for any Hermitian $\rho$,
\begin{align}
\bra{\rho}
M_\Pi
\ket{\rho}_{\mathrm P}
&=
\operatorname{Tr}(\Pi\rho^2).
\label{eqn:pauli-output-probability}
\end{align}
When $\rho$ is pure, $\rho^2=\rho$, yielding
\Cref{eqn:encoded-readout-probability}.

On the output label pair, this effect takes the form
\Cref{eqn:encoded-readout-effect}. To implement it, apply $H$ to
$b_\textsf{out}$, prepare a fresh qubit $c$ in $\ket{+}$, and reversibly select
$b_\textsf{out}$ when $a_\textsf{out}=0$ and $c$ when $a_\textsf{out}=1$ into the measured output
qubit. This requires only a constant number of Hadamard and Toffoli
gates, meaning it can be implemented by a $\HQNC^0$ circuit.
\end{proof}

\noindent We also record the Pauli-transfer actions of the two nonclassical
single-qubit gates.

\begin{proposition}[Encoded Hadamard]
\label{prop:encoded-hadamard}
The Pauli-transfer matrix of the Hadamard gate acts on its two label
qubits $(a,b)$ as
\begin{align}
O_H\ket{a,b}
&=
(-1)^{ab}\ket{b,a}.
\label{eqn:encoded-h}
\end{align}
In particular, $O_H$ is implemented exactly by a SWAP followed by a CZ.
\end{proposition}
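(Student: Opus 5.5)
The plan is to compute the Pauli-transfer matrix of $H$ directly from \Cref{eqn:local-pauli-transfer} with $r=1$, using the labeling convention $P(a,b)=i^{ab}X^aZ^b$ of \Cref{eqn:pauli-labels}. Conjugation by $H$ permutes the single-qubit Paulis up to sign, so every column of $O_H$ has exactly one nonzero entry. It therefore suffices to compute $HP(a,b)H$ for the four labels and read off the image label and sign.

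\textbf{Step 1: the four images.} Using $HXH=Z$ and $HZH=X$, we get:
\begin{itemize}
\item $I=P(0,0)\mapsto I=P(0,0)$;
\item $Z=P(0,1)\mapsto X=P(1,0)$;
\item $X=P(1,0)\mapsto Z=P(0,1)$;
\item $Y=P(1,1)=iXZ\mapsto iZX=-iXZ=-Y=-P(1,1)$.
\end{itemize}
In every case the image label is $(b,a)$, and the sign is $-1$ exactly when $a=b=1$, that is, the sign is $(-1)^{ab}$.

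\textbf{Step 2: matching the convention.} By \Cref{eqn:local-pauli-transfer}, $[O_H]_{P,Q}=\tfrac12\operatorname{Tr}(P\,HQH)$. Pauli orthogonality then gives $O_H\ket{a,b}=(-1)^{ab}\ket{b,a}$, which is \Cref{eqn:encoded-h}.

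\textbf{Step 3: the circuit.} A SWAP on the label pair maps $\ket{a,b}$ to $\ket{b,a}$. A subsequent CZ multiplies this by $(-1)^{ba}=(-1)^{ab}$, since the phase is symmetric in its two arguments. The composition therefore equals $O_H$ on every basis state, hence exactly as operators. Both gates lie in the real gate set: SWAP is three CNOTs, and CZ is a CNOT conjugated by Hadamards on the target.

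I expect no real obstacle here. The only point needing care is the $Y$ sign, which requires tracking the $i^{ab}$ phase convention consistently.
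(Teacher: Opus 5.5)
Your proposal is correct and follows the paper's proof essentially step for step. Both compute $HIH=I$, $HZH=X$, $HXH=Z$, $HYH=-Y$, read off the label map $(a,b)\mapsto(b,a)$ with sign $(-1)^{ab}$, and identify the permutation as a SWAP and the sign as a CZ. Your explicit derivation of the $Y$ sign from $P(1,1)=iXZ$ and your Pauli-orthogonality step simply spell out details that the paper leaves implicit.
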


\begin{proof}
Conjugation by $H$ acts on the single-qubit Pauli basis as
\begin{align}
HIH&=I,
&
HZH&=X,
&
HXH&=Z,
&
HYH&=-Y.
\end{align}
Thus $H$ exchanges the $X$- and $Z$-components of the Pauli label,
corresponding to $(a,b)\mapsto(b,a)$, and introduces a minus sign
precisely for the $Y$ label $(a,b)=(1,1)$. This gives
\Cref{eqn:encoded-h}. The permutation $(a,b)\mapsto(b,a)$ is a SWAP,
while the phase $(-1)^{ab}$ is a CZ.
\end{proof}

\noindent For the $T$-gate, we use the notation
\begin{align}
\Lambda_a(V_b)
&:=
\ket{0}\bra{0}_a\otimes I_b
+
\ket{1}\bra{1}_a\otimes V_b
\label{eqn:controlled-label-operation}
\end{align}
for $V$ applied to the $b$-label conditioned on $a=1$.

\begin{proposition}[Encoded $T$-gate]
\label{prop:encoded-t}
The Pauli-transfer matrix of the $T$-gate acts on its two label qubits $(a,b)$ as
\begin{align}
O_T\ket{a,b}
&=
\Lambda_a \left(R_y(\pi/2)_b\right)\ket{a,b}.
\label{eqn:encoded-t}
\end{align}
Thus $O_T$ acts trivially when $a=0$ and applies $R_y(\pi/2)$ to the
$b$-label when $a=1$. In particular, the complex phase introduced by
$T$ becomes an ordinary controlled real rotation on the Pauli-label
space.
\end{proposition}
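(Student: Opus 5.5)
The plan is to compute the Pauli-transfer matrix of $T$ directly from \Cref{eqn:local-pauli-transfer}. By part~2 of \Cref{lem:real-pauli-encoding}, $O_T$ is determined by how conjugation by $T$ acts on the four single-qubit Paulis, so it suffices to track each basis label. The identification $00,01,10,11\leftrightarrow I,Z,X,Y$ comes from \Cref{eqn:pauli-labels}. We should first confirm the sign convention $P(1,1)=iXZ$: since $XZ=-iY$, we get $iXZ=Y$. So the label $(1,1)$ is exactly $+Y$, and no stray signs enter.

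Next, we use that $T=\operatorname{diag}(1,e^{i\pi/4})$ is diagonal, so it commutes with $I$ and $Z$. Hence $TIT^\dagger=I$ and $TZT^\dagger=Z$. This shows that $O_T$ fixes both labels with $a=0$, which is the statement that $\Lambda_a$ acts trivially on the $a=0$ branch.

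For $a=1$, we compute
\begin{align*}
TXT^\dagger &= \frac{X+Y}{\sqrt2},
&
TYT^\dagger &= \frac{Y-X}{\sqrt2}.
\end{align*}
These follow from $TXT^\dagger=\cos(\pi/4)X+\sin(\pi/4)Y$ and the analogous rotation for $Y$; a quick check is to act on $\ket0$ and $\ket1$ with the off-diagonal entries $e^{\mp i\pi/4}$. Translated into labels, this gives
\begin{align*}
\ket{1,0} &\mapsto \frac{\ket{1,0}+\ket{1,1}}{\sqrt2},
&
\ket{1,1} &\mapsto \frac{-\ket{1,0}+\ket{1,1}}{\sqrt2}.
\end{align*}
The $a$-label stays $1$, and on the $b$-label the map is the matrix $\frac1{\sqrt2}\left(\begin{smallmatrix}1&-1\\1&1\end{smallmatrix}\right)$. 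This is precisely $R_y(\pi/2)=\left(\begin{smallmatrix}\cos\frac\pi4&-\sin\frac\pi4\\ \sin\frac\pi4&\cos\frac\pi4\end{smallmatrix}\right)$.

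Combining the two branches gives $O_T=\ket0\bra0_a\otimes I_b+\ket1\bra1_a\otimes R_y(\pi/2)_b$, which is \Cref{eqn:encoded-t}. The matrix is real and orthogonal, consistent with \Cref{lem:real-pauli-encoding}. The only real obstacle is sign bookkeeping: the convention for $Y$, the direction of the rotation (whether $TXT^\dagger$ picks up $+Y$ or $-Y$), and the $R_y$ sign convention. We would therefore verify $TXT^\dagger$ explicitly from the matrix entries $e^{\pm i\pi/4}$ before matching to $R_y$. If the orientation came out reversed, the statement would instead involve $R_y(-\pi/2)$, which should be ruled out by this check.
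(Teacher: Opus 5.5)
Your proposal is correct and follows essentially the same route as the paper: $I$ and $Z$ are fixed on the $a=0$ branch, and on the $a=1$ branch the computations $TXT^\dagger=(X+Y)/\sqrt2$ and $TYT^\dagger=(Y-X)/\sqrt2$ are matched to $R_y(\pi/2)$ acting on the $b$-label. Your explicit check that $P(1,1)=iXZ=Y$, so that the label $11$ carries no extra sign, is a worthwhile detail that the paper leaves implicit.
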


\begin{proof}
Conjugation by $T$ fixes $I$ and $Z$, while
\begin{align}
TXT^\dagger
&=
\frac{X+Y}{\sqrt2}
\qquad \text{and} \qquad
TYT^\dagger
=
\frac{-X+Y}{\sqrt2}.
\end{align}
The label $a=0$ corresponds to the $\{I,Z\}$ subspace, so $O_T$ acts
as the identity there. The label $a=1$ corresponds to the
$\{X,Y\}$ subspace, with $b=0$ and $b=1$ labeling $X$ and $Y$,
respectively. Since
\begin{align}
R_y(\pi/2)\ket{0}
&=
\frac{\ket{0}+\ket{1}}{\sqrt2}
\qquad \text{and} \qquad
R_y(\pi/2)\ket{1}
=
\frac{-\ket{0}+\ket{1}}{\sqrt2},
\end{align}
its action on the $b$-label exactly reproduces the above rotation of
$X$ and $Y$. Hence
$O_T=\Lambda_a(R_y(\pi/2)_b)$.
\end{proof}

\subsection{Real Rotations and the Encoded \texorpdfstring{$T$}{T} Gate}
\label{sec:h-only-real-primitives}

By \Cref{prop:encoded-t}, the Pauli-transfer action of a $T$-gate is
the controlled real rotation
\begin{align}
O_T
&=
\Lambda(R_y(\pi/2)).
\end{align}
This is where the restriction to decision computation becomes
essential. We do not reproduce the original complex $T$-gate on an
arbitrary quantum state. Instead, the Pauli encoding represents the
evolving density matrix by a real state, and preserving the final
acceptance probability requires only that we implement the induced
real transformation $O_T$.

It therefore suffices to synthesize real $y$-rotations without
reintroducing $T$. The next lemma shows more generally that arbitrary
real rotations, and their controlled versions, admit efficient
Hadamard-only implementations.

\begin{lemma}[Real Rotations]
\label{lem:h-only-real-rotations}
Fix $\theta\in\mathbb R$ and $\eta\in(0,1/2)$. For either
\begin{align}
U\in\left\{R_y(\theta),\Lambda(R_y(\theta))\right\},
\end{align}
there exists a constant-depth $\HQAC$ circuit $\widetilde U$ using
$A=\operatorname{poly}(1/\eta)$ clean ancillae and
$\operatorname{poly}(1/\eta)$ gates such that
\begin{align}
\sup_{\|\ket{\phi}\|_2=1}
\left\|
\widetilde U\bigl(\ket{\phi}\ket{0^A}\bigr)
-
(U\ket{\phi})\ket{0^A}
\right\|_2
&\leq \eta.
\label{eqn:real-rotation-approximation}
\end{align}
Moreover, there is an $\HQNC$ circuit satisfying the same guarantee
with depth $O(\log\log(4/\eta))$ and
$\operatorname{poly}(\log(1/\eta))$ gates and clean ancillae.
\end{lemma}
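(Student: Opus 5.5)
We will reuse the block-encoding and oblivious-amplitude-amplification template of \Cref{sec:be_gate_synth}, replacing the complex dyadic approximation of $\frac12e^{i\theta/2}$ by a real one. Since
\begin{align*}
R_y(\theta)=\cos(\theta/2)I-i\sin(\theta/2)Y,
\end{align*}
and $-iY=XZ$ is a real matrix, the four fixed real unitaries
\begin{align*}
I,\qquad -I,\qquad XZ,\qquad -XZ
\end{align*}
play the role of $I,-I,-iZ,iZ$. Applying \Cref{thm:int_approx} to the target $\frac12e^{i\theta/2}$, we choose nonnegative integers $n_0,\dots,n_3$ summing to $2^m$ with $m=\Theta(\log(1/\eta))$. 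The block
\begin{align*}
K=2^{-m}\bigl[(n_0-n_1)I+(n_2-n_3)XZ\bigr]
\end{align*}
is then $\delta$-close to $\frac12R_y(\theta)$ and has the form $r\,R_y(\phi)$. The reflection step goes through unchanged: $KK^\dagger=r^2I$, so the proof of \Cref{lem:robust-oaa-rz} applies verbatim after conjugating by the real unitary mapping $Z$-rotations to $Y$-rotations (or by repeating the radial--angular computation with $R_y$ in place of $R_z$). This yields a clean $5\delta$-approximation up to the sign $-1$. The sign is harmless: it can be removed by a fixed real gadget such as $(XZ)^2=-I$ on one ancilla, since $Z=HXH$ is available.

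The circuit itself is exactly the one in the proof of \Cref{thm:rz-synthesis-qnc}, and every gate in it is already real. The uniform superposition uses $H^{\otimes m}$, and the comparisons from \Cref{lem:comparison-no-fanout} and \Cref{lem:comparison-fanout} use only $X$, CNOT, Toffoli, Fan-Out and generalized Toffoli. The one-hot indicators use CNOT and $X$, and the reflection about $\ket{0^m}$ uses $X$, Toffoli trees or generalized Toffoli, and $Z=HXH$. The SELECT step becomes $Z_{e_1}$ for the sign, together with $Z$ on $e_3$ and a controlled-$(XZ)$ from $e_2$ and from $e_3$ onto the system qubit. The phase $S^{\pm}$ gates are no longer needed, because the sign $-1$ on the $e_3$ branch is a real $Z_{e_3}$. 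Controlled-$XZ$ is a CZ followed by a CNOT, both available.

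For the controlled version $\Lambda(R_y(\theta))$, the same construction works with one change: the SELECT gates are additionally controlled by the control qubit $c$. The branch sign $Z_{e_j}$ becomes $CZ_{c,e_j}$, and the controlled-$XZ$ becomes a Toffoli/CCZ. The resulting block is then $\ket0\bra0_c\otimes\tfrac{n_0+n_1}{2^m}\ldots$. This is the main obstacle. When $c=0$ the block is no longer $\frac12 I$ but $(n_0+n_1+n_2+n_3)/2^m=1$ on that branch, so a single OAA round would distort it.

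To fix the normalization, on the $c=0$ branch we keep the same interval structure but substitute fixed unitaries averaging to $\frac12 I$: $I$ on $I_0\cup I_2$ and $-I$ on $I_1\cup I_3$ would give $(n_0+n_2-n_1-n_3)/2^m$, which is not generally $\frac12$. Instead we use an extra uniformly random address qubit $u$ and apply $-I$ when $u=1$ and a length-$\frac34$ interval condition holds. Simplest is a dedicated threshold $M_c\approx\frac34 2^m$ for the $c=0$ branch, realizing $\frac12 I$ up to $O(2^{-m})$. This uses one more comparison and a CZ controlled on $\neg c$. Both branches then have blocks of the form $r_cV_c$ with $V_c$ unitary and $r_c\approx\frac12$. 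The block is block-diagonal in $c$, and the robust OAA analysis applies to each branch separately, giving total error $O(\delta)$ uniformly. Setting $\delta=\Theta(\eta)$ gives the claims. In $\HQAC$, the wide comparisons use Fan-Out of polylogarithmic width, which is removed by \Cref{cor:inverse-poly-polylog-fanout}; since that construction uses only $H$ and generalized Toffoli, it is real, giving constant depth and $\poly(1/\eta)$ size. In $\HQNC$, the balanced Toffoli trees give depth $O(\log m)=O(\log\log(4/\eta))$ and $O(m)$ gates.
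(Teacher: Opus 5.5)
Your construction for $R_y(\theta)$ is the same as the paper's: branches $I,-I,XZ,-XZ$, one robust OAA round, the $(XZ)^2=-I$ sign fix, and Fan-Out compiled away. There is one slip. No \emph{real} unitary conjugates $Z$ to $Y$, but none is needed: that conjugation is only an analysis device, and $\mathsf{Amp}$ is covariant under conjugating the system register by any unitary.

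For $\Lambda(R_y(\theta))$ you take a genuinely different route.

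\textbf{The paper's route.} The paper never reopens the block encoding. It writes $\Lambda(R_y(\theta))=(I\otimes R_y(\theta/2))\,\mathrm{CNOT}\,(I\otimes R_y(-\theta/2))\,\mathrm{CNOT}$ and synthesizes two uncontrolled rotations at error $\eta/2$ each. This is modular, since any uncontrolled synthesizer can be used as a black box, at the price of two sequential calls.

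\textbf{Your route.} You control the SELECT on $c$ and run a single amplification, analyzing the two $c$-sectors separately. This is valid: everything is block-diagonal in $c$, so the errors from the two sectors are orthogonal. Your dedicated threshold $M_c=3\cdot 2^{m-2}$ does make the $c=0$ block exactly $\frac12 I$. The auxiliary-qubit variant you sketch first, read literally, puts weight $3/8$ on $-I$ and gives $\frac14 I$ instead.

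\textbf{The obstacle is not real.} With the naive controlled SELECT, $W$ is the identity on the entire $c=0$ sector, so $\mathsf{Amp}(W)$ is exactly the identity there ($-3I+4I=I$, with no leakage). The $c=1$ sector yields approximately $-R_y(\theta)$. The only defect is this relative sign, which one $Z=HXH$ on the control removes, so the extra comparison can be dropped.
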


\noindent
The guarantee in \Cref{eqn:real-rotation-approximation} compares the
full data--ancilla isometries. It therefore remains valid when the
acted-on register is entangled with an arbitrary reference system,
allowing the synthesized rotations to be substituted gate-by-gate in
the encoded computation. Throughout the constructions below, each
approximate primitive is given fresh clean ancillae, including
occurrences used in an uncomputation. A standard telescoping hybrid
then bounds the total error by the sum of the individual clean-input
errors.

\begin{proof}
We first synthesize $R_y(\theta)$. The key observation is that almost
all of the synthesis construction from \Cref{sec:be_gate_synth}
already uses only real gates. The only place where complex
single-qubit gates enter is the SELECT operation implementing the
linear-combination branches. For a real $y$-rotation, those branches
can instead be chosen entirely from the Hadamard-only gate set.
Consequently, the dyadic block encoding, oblivious amplitude
amplification, and Fan-Out compilation from
\Cref{sec:be_gate_synth} otherwise carry over unchanged.

Set $J:=XZ$. Since $J^2=-I$,
\begin{align}
R_y(\theta)
&=
\cos(\theta/2)I+\sin(\theta/2)J.
\label{eqn:real-rotation-lcu}
\end{align}
We therefore apply the dyadic block-encoding construction of
\Cref{sec:be_gate_synth} with SELECT branches
\begin{align}
I,\qquad -I,\qquad J,\qquad -J.
\label{eqn:real-rotation-select}
\end{align}
These replace the complex branches used in the earlier $R_z$
construction. Crucially, all four are available without $T$.
Indeed, $Z=HXH$ and, with the rightmost gate applied first,
$\Lambda(J)=\operatorname{CNOT}\operatorname{CZ}$, while a $Z$
on the branch indicator supplies the minus sign for the negative
branches. Thus the only potentially complex part of the earlier
construction becomes Hadamard--Toffoli implementable.

By \Cref{thm:int_approx}, using $m=O(\log(1/\eta))$ address bits
gives a block encoding $W$ whose distinguished block has the form
\begin{align}
K=aI+bJ
\end{align}
and satisfies, for $\delta:=\eta/20$,
\begin{align}
\left|
K-\frac12R_y(\theta)
\right|_{\mathrm{op}}
&\leq\delta.
\label{eqn:real-rotation-block-error}
\end{align}
All remaining steps in constructing $W$ are exactly those of
\Cref{sec:be_gate_synth}. The interval predicates are classical
reversible computations, and the only additional primitives are the
small Fan-Out gates used to evaluate them in parallel. In particular,
no further $T$-gates or complex single-qubit operations appear.

The oblivious-amplification step and its error analysis likewise carry
over unchanged. Writing $K=rV$, where $V$ is a real rotation and
$|r-\tfrac12|\leq\delta$, one round of amplification, followed by the sign correction $J^2=-I$, produces a clean isometry within $4\delta$ of $V$.Moreover,
\begin{align}
|V-R_y(\theta)|_{\mathrm{op}}
&\leq
\frac{2\delta}{1/2-\delta}
<
5\delta.
\end{align}
Hence the ideal-Fan-Out implementation has clean-input error less than
$9\delta<\eta/2$. The amplification reflections are
computational-basis reflections, implemented using $X$ and generalized
Toffoli gates, and the required overall minus sign is realized by
$J^2=-I$. Thus this stage also remains entirely within the real gate
set.

Finally, the construction uses only $O(m)$ Fan-Out occurrences of
width $O(m)$. As in \Cref{sec:be_gate_synth}, allocate total error
at most $\eta/2$ to their replacement and apply
\Cref{lem:uqac-logarithmic-fanout}. Since
$m=O(\log(1/\eta))$, this yields a constant-depth $\HQAC$
implementation with size and ancillae
$\operatorname{poly}(1/\eta)$. The Fan-Out replacements themselves
use only Hadamard and generalized Toffoli gates, so they do not
reintroduce $T$.

For $\HQNC$, use the bounded-arity implementation of the same
construction from \Cref{sec:be_gate_synth}. The copying, interval
predicates, and reflections are implemented by balanced CNOT and
bounded-Toffoli trees, giving depth
$O(\log m)=O(\log\log(4/\eta))$ and
$\operatorname{poly}(m)$ size and ancillae.

Finally, with the rightmost operation applied first,
\begin{align}
\Lambda(R_y(\theta))
&=
(I\otimes R_y(\theta/2))
\operatorname{CNOT}
(I\otimes R_y(-\theta/2))
\operatorname{CNOT}.
\label{eqn:controlled-ry-decomposition}
\end{align}
For control zero the two rotations cancel, while for control one
conjugation by $X$ reverses the second angle. Approximating the two
unconditional rotations to error $\eta/2$ each proves the controlled
case.
\end{proof}

\noindent This immediately removes the only complex single-qubit
gate from the encoded computation.

\begin{corollary}[Encoded $T$ Gate]
\label{cor:h-only-encoded-t}
For every $\eta\in(0,1/2)$, the Pauli-transfer action $O_T$ of the
$T$-gate admits a constant-depth $\HQAC$ implementation of
$\operatorname{poly}(1/\eta)$ size and ancillae with clean-input
operator error at most $\eta$. It also admits an $\HQNC$
implementation with depth $O(\log\log(4/\eta))$ and
$\operatorname{poly}(\log(1/\eta))$ size and ancillae.
\end{corollary}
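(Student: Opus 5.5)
The plan is to read this corollary off directly by combining \Cref{prop:encoded-t} with \Cref{lem:h-only-real-rotations}. By \Cref{prop:encoded-t}, the Pauli-transfer action of $T$ on its two label qubits $(a,b)$ is exactly $O_T=\Lambda_a(R_y(\pi/2)_b)$, a controlled real $y$-rotation with the $a$-label as control and the $b$-label as target. So it suffices to instantiate \Cref{lem:h-only-real-rotations} with $U=\Lambda(R_y(\theta))$ and $\theta=\pi/2$, at the same accuracy $\eta$.

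For the $\HQAC$ statement, I would take the constant-depth circuit $\widetilde U$ from \Cref{lem:h-only-real-rotations}. It uses $\operatorname{poly}(1/\eta)$ clean ancillae and gates, and it satisfies the clean-isometry bound \eqref{eqn:real-rotation-approximation}. Taking the supremum over normalized $\ket\phi$, that bound is exactly clean-input operator error at most $\eta$ for $O_T$ acting on the label pair. One point needs checking: \eqref{eqn:real-rotation-approximation} carries no global phase. This matters because the encoded computation is a real state on which relative phases between branches would be meaningful. The lemma already states a phase-free guarantee, since the sign of the OAA step is corrected by $J^2=-I$ inside the real gate set. So nothing further is needed there.

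For the $\HQNC$ statement, I would invoke the second part of \Cref{lem:h-only-real-rotations}. It gives the same guarantee with depth $O(\log\log(4/\eta))$ and $\operatorname{poly}(\log(1/\eta))$ gates and ancillae. In both cases the controlled version comes from the decomposition \eqref{eqn:controlled-ry-decomposition}, which is already built into the lemma. That decomposition uses two CNOTs and two unconditional rotations, each synthesized to error $\eta/2$.

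I do not expect a substantive obstacle, since the work is done in the lemma. The only care needed is bookkeeping. The guarantee is a statement about full data--ancilla isometries, so it remains valid when the label qubits are entangled with the rest of the encoded register. That is what allows this gadget to be substituted gate-by-gate later, with a telescoping hybrid argument.
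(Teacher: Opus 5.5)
Your proposal is correct and matches the paper's proof: the paper also identifies $O_T=\Lambda(R_y(\pi/2))$ via the encoded-$T$ proposition, then applies the real-rotation lemma with $\theta=\pi/2$ for both the $\HQAC$ and $\HQNC$ statements. Your remark about global phase is harmless but not needed, since in the CNOT decomposition any global phase on the two unconditional rotations stays global and does not become a relative phase between the control branches.
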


\begin{proof}
By \Cref{prop:encoded-t},
$O_T=\Lambda(R_y(\pi/2))$. Apply
\Cref{lem:h-only-real-rotations} with $\theta=\pi/2$.
\end{proof}

\subsection{Fan-Out and Threshold Gates}
\label{sec:real-fanout-threshold}

We next extend the real simulation to circuits with Fan-Out and
Threshold gates. The two cases are handled differently. Fan-Out admits
a direct and exact implementation in the Pauli encoding. Because it is
Clifford, conjugation maps each Pauli operator to a single Pauli
operator up to sign, so its Pauli-transfer action is a signed
permutation of the Pauli labels. A general Threshold gate need not
have such a simple encoded action. Rather than realifying Threshold
gates directly, we reduce Threshold circuits to the Fan-Out-assisted
generalized-Toffoli model and then replace Fan-Out by a constant-depth
real Threshold construction.

We begin with Fan-Out. Let $F_m$ denote Fan-Out with one control and
$m$ targets. Its encoded action has the form
\begin{align}
O_{F_m}\ket{a,b}
&=
(-1)^{\phi(a,b)}
\ket{L_a(a),L_b(b)},
\label{eqn:encoded-fanout-structural-action}
\end{align}
where $L_a$ and $L_b$ are binary linear maps and $\phi$ is a Boolean
polynomial of degree at most three. Thus, despite the unbounded arity
of $F_m$, its Pauli-transfer matrix creates no superposition over
Pauli labels: it only permutes the labels and attaches a sign.

Both parts of \Cref{eqn:encoded-fanout-structural-action} admit shallow
implementations for simple structural reasons. The map $L_a$ is
Fan-Out itself, while the only nonlocal operation in $L_b$ is Parity,
which is Hadamard-conjugated Fan-Out. The phase $\phi$ contains only
$O(m^2)$ monomials, so after making enough copies of the participating
label bits, the corresponding CZ and CCZ phases can be applied to
disjoint registers in parallel. The following lemma makes this action
explicit.

\begin{lemma}[Exact Encoded Fan-Out]
\label{lem:real-encoded-fanout}
For Fan-Out $F_m$ with one control and $m$ targets, $O_{F_m}$
has an exact constant-depth $\HQNC_f$ implementation using
$O(m^2)$ gates and clean ancillae.
\end{lemma}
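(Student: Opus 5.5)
The plan is to compute $O_{F_m}$ explicitly as a signed permutation of Pauli labels, in the form \Cref{eqn:encoded-fanout-structural-action}, and then implement its three pieces in sequence: the sign, the $a$-label map $L_a$, and the $b$-label map $L_b$. Write $F_m=\prod_{i=1}^m\cnot_{c,t_i}$. These CNOTs commute and map $X$-type Paulis to $X$-type Paulis and $Z$-type Paulis to $Z$-type Paulis without signs. Hence
\[
F_mX^aZ^bF_m^\dagger=X^{a'}Z^{b'},
\]
where
\[
a'_c=a_c,\qquad a'_{t_i}=a_{t_i}\oplus a_c,
\]
\[
b'_c=b_c\oplus\textstyle\bigoplus_i b_{t_i},\qquad b'_{t_i}=b_{t_i}.
\]
So $L_a$ is Fan-Out on the $a$-labels with control $a_c$, and $L_b$ is Parity of the target $b$-labels into $b_c$. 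Using $P(a,b)=i^{a\cdot b}X^aZ^b$, we get $F_mP(a,b)F_m^\dagger=i^{a\cdot b-a'\cdot b'}P(a',b')$. By \Cref{lem:real-pauli-encoding}, $O_{F_m}\ket{a,b}=(-1)^{\phi(a,b)}\ket{a',b'}$, where $(-1)^{\phi}=i^{a\cdot b-a'\cdot b'}$ with the inner products taken as integers. This sign is real because $O_{F_m}$ is real orthogonal.

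The first step is to evaluate $\phi$ explicitly. Expand $a'\cdot b'$ over the integers:
\[
a'\cdot b' = a_c\,[\,b_c\oplus\textstyle\bigoplus_ib_{t_i}]+\sum_i(a_{t_i}\oplus a_c)b_{t_i}.
\]
Substitute $u\oplus v=u+v-2uv$. Reduce the bracketed parity modulo $4$ using $\bigoplus_i y_i\equiv\sum_iy_i-2\sum_{i<j}y_iy_j \pmod 4$. Then divide the even integer exponent $a\cdot b-a'\cdot b'$ by two modulo $2$. The result should be a Boolean polynomial of degree at most three. Its monomials have the forms $a_cb_{t_i}$, $a_ca_{t_i}b_{t_i}$, $a_cb_cb_{t_i}$ and $a_cb_{t_i}b_{t_j}$, so there are $O(m^2)$ of them. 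I expect this bookkeeping to be the main obstacle. In particular, I must check that every odd power of $i$ cancels, and that the mod-$4$ carry of the parity $b'_c$ produces only the pairwise cubic terms $a_cb_{t_i}b_{t_j}$ and nothing of higher degree. I would verify the final formula on the case $m=1$, comparing it against the known CNOT Pauli-transfer signs, such as $Y_cX_t\mapsto Y_cI_t$ and $X_cZ_t\mapsto -Y_cY_t$, before trusting the general expression.

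Next comes the circuit, in three stages. First, apply the diagonal sign $(-1)^{\phi(a,b)}$ on the input labels. Use Fan-Out to make $O(m)$ private copies of each label bit appearing in $\phi$, with $O(m^2)$ copies in total. Apply every monomial in one parallel layer, using $Z$ for degree one, CZ ($H$-conjugated CNOT) for degree two, and CCZ ($H$-conjugated Toffoli) for degree three, all on disjoint copies. Then undo the copying. Second, apply $L_a$, which is a single Fan-Out from $a_c$ onto the $a_{t_i}$. Third, apply $L_b$, which is $\mathsf{Parity}_m$ from the $b_{t_i}$ into $b_c$. This is implemented exactly in constant depth as $H^{\otimes(m+1)}\FANOUT_mH^{\otimes(m+1)}$ by \Cref{eqn:parity-fanout-hadamard-duality}. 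Every gate used is $H$, $X$, CNOT, Toffoli or Fan-Out, so the circuit lies in $\HQNC_f$. The depth is constant, and the gate and ancilla counts are $O(m^2)$, dominated by the copies for the phase layer. The implementation is exact and clean, since the copies are uncomputed. This proves \Cref{lem:real-encoded-fanout}.
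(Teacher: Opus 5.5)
Your proposal is correct and follows the paper's proof essentially step for step. It uses the same label updates ($L_a$ as Fan-Out on the $a$-labels, $L_b$ as Hadamard-conjugated Fan-Out computing the target parity into $b_c$) and the same sign $i^{a\cdot b-a'\cdot b'}$, which indeed reduces to $\phi=a_c\bigl(\sum_i a_{t_i}b_{t_i}+(b_c+1)\sum_i b_{t_i}+\sum_{i<j}b_{t_i}b_{t_j}\bigr)$. It then applies that sign by the same copy-then-apply-CZ/CCZ-in-parallel step before the permutation.

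One small correction: every monomial contains $a_c$, so that bit needs $\Theta(m^2)$ copies (one wide Fan-Out) rather than $O(m)$. The total number of copies is still $O(m^2)$, so your resource bounds stand.
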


\begin{proof}
Index the Fan-Out control by $0$ and the targets by
$1,\ldots,m$. The linear maps in
\Cref{eqn:encoded-fanout-structural-action} are, for
$1\leq j\leq m$,
\begin{align}
a'_0&=a_0,
&
a'_j&=a_j\oplus a_0,
&
b'_0&=b_0\oplus\bigoplus_{j=1}^m b_j,
&
b'_j&=b_j.
\label{eqn:real-fanout-label-update}
\end{align}
Thus $L_a$ is exactly Fan-Out on the $a$ labels. The only nontrivial
part of $L_b$ computes the parity of the target $b$ labels into
$b_0$. Since Parity is Hadamard-conjugated Fan-Out
\cite{moore1999qac0}, the complete label permutation has an exact
constant-depth $\HQNC_f$ implementation.

It remains to determine the sign in
\Cref{eqn:encoded-fanout-structural-action}. One obtains
\begin{align}
\phi(a,b)
&=
a_0\left(
\sum_{j=1}^m a_jb_j
+(b_0+1)\sum_{j=1}^m b_j
+\sum_{1\leq i<j\leq m}b_ib_j
\right)
\pmod2.
\label{eqn:real-fanout-sign}
\end{align}
To verify this expression, let $L$ denote the binary linear
transformation implemented by $F_m$. Conjugation maps $X^aZ^b$ to
$X^{La}Z^{L^{-T}b}$. Returning to the Hermitian Pauli convention
contributes the factor
\begin{align}
i^{a\cdot b-(La)\cdot(L^{-T}b)},
\end{align}
and expanding this even exponent modulo four gives
\Cref{eqn:real-fanout-sign}.

The polynomial $\phi$ contains $O(m^2)$ monomials, each of degree at
most three. Before applying the label permutation, use Fan-Out to make
separate copies of the bits participating in these monomials. The
corresponding CZ and CCZ phases can then be applied to disjoint copies
in parallel, after which the copies are uncomputed. Since these phases
are diagonal, the copied bits are unchanged and all ancillae return to
zero. Finally, CCZ is Hadamard-conjugated ordinary Toffoli. Hence both
the sign and the label permutation are implemented exactly in constant
depth using $O(m^2)$ gates and clean ancillae.
\end{proof}

We now turn to Threshold circuits. Rather than construct the
Pauli-transfer action of a general Threshold gate directly, we reduce
to the Fan-Out-assisted model. The required Fan-Out implementation is
another specialization of the Grier--Morris nekomata construction
used earlier in \Cref{sec:qac-uqac-collapse}. The difference is that
we now choose the central Hamming slice, whose phase can be implemented
directly by Threshold gates.

\begin{fact}[Fan-Out from Hadamard and Threshold
{\cite[Theorem~19, Lemma~16, and Corollary~20]{grier2025quantum}}]
\label{fact:real-threshold-fanout}
For every $m\geq1$ and $\eta\in(0,1/2)$, Fan-Out on $m$
targets has a constant-depth $\HQTC$ implementation using
$\operatorname{poly}(m,1/\eta)$ gates and clean ancillae, with
clean-input operator error at most $\eta$.
\end{fact}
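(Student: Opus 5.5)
The plan follows the same template as \Cref{lem:uqac-logarithmic-fanout}. First prepare an approximate nekomata using only Hadamard and Threshold gates. Then convert it to coherent Parity via \Cref{fact:nekomata-to-parity}, and finally conjugate by Hadamards as in \Cref{eqn:parity-fanout-hadamard-duality}. The only change is the phase set in the Grier--Morris grid. Instead of the singleton $S_q=\{1^q\}$, I would take $S_q$ to be the central Hamming slice $\{z\in\{0,1\}^q:|z|=\lfloor q/2\rfloor\}$, or a union of two adjacent threshold conditions.

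The reflection $U_{S_q}$ is then diagonal with sign $(-1)^{[|z|\geq u]\oplus[|z|\geq u+1]}$. I would realize it in constant depth by computing the two Threshold bits into fresh ancillae, applying a CZ-type phase to their XOR (here $Z=HXH$), and uncomputing. Since Threshold gates can be applied with controls fanned out, I would instead use disjoint ancilla copies. Concretely: set a target ancilla to $\ket{-}$ with $X$ and $H$, and apply the two Threshold gates with that common target sequentially. Each flip kicks back a sign, so the net phase is exactly the slice indicator.

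Next I would verify the endpoint statistics needed for \Cref{fact:singleton-grid-construction}. Let $\ket{s_q}=H^{\otimes q}U_{S_q}H^{\otimes q}\ket{0^q}$. The slice has weight $\Theta(1/\sqrt q)$ in the uniform distribution, so $\gamma_1=|\braket{1^q}{s_q}|^2$ should be bounded below by roughly $1/\poly(q)$ rather than exponentially small. This is why only $t=\poly(q)$ columns are needed, and the size stays polynomial even though $q$ must be at least $m$. I would then choose $t$ so that $(1-2\gamma_1)^{2t}\approx1/2$. The main obstacle is bounding the leakage $p_{\mathrm{other}}\leq4t\gamma_1(1-\gamma_0-\gamma_1)$. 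With a constant-weight slice, $1-\gamma_0-\gamma_1$ is not small, so a single grid will not give error $\eta$.

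To handle this leakage I would follow the cited Lemma~16 and Corollary~20. The idea is to amplify the endpoint concentration by taking many columns that each nearly fix a value, so that $\gamma_1 t$ is a constant while $p_{\mathrm{other}}$ is driven down by choosing the slice so that $1-\gamma_0-\gamma_1=O(1/\poly(q))$. Alternatively, use $q=\poly(m,1/\eta)$ rows, which is allowed since only $\poly(m,1/\eta)$ resources are required. Once the endpoint probabilities are each at least $1/2-\eta'$, \Cref{fact:nekomata-endpoints} gives a $2\eta'$-approximate $m$-nekomata, with the extra $q-m$ target rows absorbed into garbage. \Cref{fact:nekomata-to-parity} then yields Parity with error $4\sqrt{4\eta'}$. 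Choosing $\eta'=\Theta(\eta^2)$ gives clean-input error at most $\eta$.

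The reversible OR and CZ gadgets in that reduction are themselves Threshold gates (OR is threshold $1$) conjugated by Hadamards. Hence the entire circuit is in $\HQTC$, has constant depth, and uses $\poly(m,1/\eta)$ gates and ancillae.
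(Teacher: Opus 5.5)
Your overall architecture matches the paper's: the central Hamming slice, its phase realized by two Threshold kickbacks onto a shared $\ket{-}$ target (this part is fine), the Grier--Morris grid built only from Hadamard, $X$, and Threshold/generalized-Toffoli gates, then \Cref{fact:nekomata-to-parity} and Hadamard conjugation, with $q=\operatorname{poly}(m,1/\eta)$ and the extra $q-m$ rows absorbed as garbage.

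The gap is in the leakage estimate, the one quantitative step everything depends on. You assert that $1-\gamma_0-\gamma_1$ ``is not small,'' so that a single grid cannot reach error $\eta$. You then offer either an unspecified amplification via the cited lemmas or simply enlarging $q$. The assertion is false. Worse, if it were true, your fallback would not help. Your choice of $t$ pins $4t\gamma_1$ near $\ln 2$, so the bound $p_{\mathrm{other}}\leq4t\gamma_1(1-\gamma_0-\gamma_1)$ would never drop below a constant, however large $q$ is. As written, the proposal never shows that the nekomata infidelity tends to zero with $q$.

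The missing computation is short. Let $p:=\binom{q}{q/2}2^{-q}=\Theta(q^{-1/2})$. Expanding $\ket{s_q}=H^{\otimes q}U_{S_q}H^{\otimes q}\ket{0^q}$ gives $\braket{0^q}{s_q}=1-2p$ and $\braket{1^q}{s_q}=\pm2p$, using $\sum_z(-1)^{|z|}=0$. Hence $\gamma_1=4p^2=\Theta(1/q)$, $t=\Theta(q)$, and $1-\gamma_0-\gamma_1=4p-8p^2=\Theta(q^{-1/2})$.

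A single grid therefore already has $p_{\mathrm{other}}=O(q^{-1/2})$ and $p(1^q)=\tfrac12-O(1/q)$. The endpoint criterion gives infidelity $\Delta_q=O(q^{-1/2})$, and the resulting Fan-Out error is $4\sqrt{2\Delta_q}=O(q^{-1/4})$. An even $q=O(m+\eta^{-4})$ then suffices, with $q(t+1)=O(q^2)$ grid qubits. These are exactly the paper's $\Delta_q$ and choice of $q$; no separate amplification step is needed.
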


\begin{proof}
We use the same Grier--Morris nekomata construction and coherent
nekomata-to-Parity reduction as in
\Cref{sec:qac-uqac-collapse}, specifically \Cref{fact:nekomata-to-parity}. It therefore
suffices to verify that the nekomata preparation can be implemented
using only gates available in $\HQTC$ and with polynomial resources
at the required accuracy.

Take the central Hamming slice
\begin{align}
S_q
&:=
\{z:|z|=q/2\}
\end{align}
for sufficiently large even $q$. Its phase satisfies
\begin{align}
(-1)^{[|z|=q/2]}
&=
(-1)^{[|z|\geq q/2]}
(-1)^{[|z|\geq q/2+1]}.
\label{eqn:real-threshold-slice-phase}
\end{align}
Hence the required phase operation is implemented exactly by two
Threshold phase kickbacks. The remaining operations in the
Grier--Morris preparation are Hadamards, $X$ gates, and
generalized-Toffoli reflections, all of which are available in
$\HQTC$. Thus their construction gives, in constant depth and with
polynomial resources, an approximate $q$-nekomata of infidelity $\Delta_q=O(q^{-1/2})$.

For $q\geq m$, retain any $m$ marked qubits and regard the remaining
ones as ancillae. Applying
\Cref{fact:nekomata-to-parity} and then the standard
Parity--Fan-Out Hadamard conjugation gives a Fan-Out implementation
with clean-input operator error at most
$4\sqrt{2\Delta_q}=O(q^{-1/4})$. Choosing a sufficiently large even
\begin{align}
q
&=
O(m+\eta^{-4})
\end{align}
makes this error at most $\eta$. The resulting circuit has constant
depth and uses $\operatorname{poly}(m,1/\eta)$ gates and clean
ancillae.
\end{proof}

\noindent The preceding Fan-Out construction lets us avoid realifying a general
Threshold gate directly. In fact, it gives a slightly stronger
simulation than is needed. Even if Fan-Out is present as a primitive
in the original Threshold circuit, it can be removed in the final
real simulation.

\begin{corollary}[Real Threshold Simulation]
\label{cor:real-threshold-simulation}
Let $C$ be a depth-$d$ $\UQTC_f$ circuit, and let $s\geq2$ bound its
gate count and width. For every $\varepsilon\in(0,1/2)$, there is an
$\HQTC$ circuit $\widetilde C$ of depth $O(d+1)$ and size and
ancillae $\operatorname{poly}(s,1/\varepsilon)$ such that, for every
$x\in\{0,1\}^n$,
\begin{align}
|p_{\widetilde C}(x)-p_C(x)|
&\leq
\varepsilon.
\label{eqn:real-threshold-simulation}
\end{align}
In particular, the same conclusion holds for $C\in\UQTC$.
\end{corollary}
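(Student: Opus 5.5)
The plan is to reduce the $\UQTC_f$ circuit to a $\UQNC_f$-type circuit with Fan-Out and generalized Toffoli, realify that circuit in the Pauli encoding, and finally replace every Fan-Out in the real circuit using \Cref{fact:real-threshold-fanout}.

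\textbf{Step 1: remove Threshold gates.} Apply \Cref{fact:takahashi-tani-simulation} to replace each Threshold gate of $C$ by an exact constant-depth gadget over bounded-arity gates, Fan-Out, and arbitrary single-qubit gates. Disjoint gates get disjoint workspace, so the depth stays $O(d)$ and the size stays $\poly(s)$. The gadget may introduce arbitrary single-qubit rotations. Since Fan-Out and generalized Toffoli are both available in the intermediate model, compile these rotations with \Cref{lem:direct-unitary-compilation}, at accuracy $\varepsilon/4$, into $H,T$, Fan-Out and generalized Toffoli. The result is a $\UQAC_f$ circuit $C'$ of depth $O(d+1)$ whose acceptance probabilities are within $\varepsilon/4$ of those of $C$.

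\textbf{Step 2: realify $C'$ in the Pauli encoding.} Use \Cref{lem:real-pauli-encoding} as follows. Prepare the input encoding in depth one with Hadamards. Then replace each gate $U$ of $C'$ by its Pauli-transfer action $(O_U)_S$ on its own label qubits:
\begin{itemize}
\item Hadamard via \Cref{prop:encoded-hadamard};
\item $T$ via \Cref{cor:h-only-encoded-t}, at per-gate accuracy $\varepsilon/(4s')$, where $s'$ is the size of $C'$;
\item Fan-Out exactly via \Cref{lem:real-encoded-fanout};
\item generalized Toffoli via the encoded-Toffoli construction. Conjugating the target by $H$ reduces it to generalized $CZ$, whose transfer matrix is the structured reflection implemented in constant depth over $H$, $X$, generalized Toffoli and Fan-Out (\Cref{thm:real-encoded-toffoli}), at per-gate accuracy $\varepsilon/(4s')$.
\end{itemize}
Finally, read out with the constant-size circuit for $M_\Pi$. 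Disjoint gates act on disjoint label registers, so each layer stays constant depth. A telescoping hybrid, with fresh clean ancillae per gadget, bounds the deviation of the encoded final state by $\varepsilon/4$. By \Cref{eqn:encoded-readout-probability} and purity, the acceptance probability then moves by at most $\varepsilon/4$. This yields an $\HQAC_f$ circuit $C''$ of depth $O(d+1)$.

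\textbf{Step 3: remove Fan-Out.} $C''$ contains at most $\poly(s,1/\varepsilon)=:K$ Fan-Out gates, each of width $\poly(s,1/\varepsilon)$. Replace each one by the $\HQTC$ construction of \Cref{fact:real-threshold-fanout} with $\eta=\varepsilon/(4K)$, which costs $\poly(s,1/\varepsilon)$ resources. Generalized Toffoli is a Threshold gate, so every remaining gate already lies in $\HQTC$. One more hybrid adds error at most $\varepsilon/4$, and the depth remains $O(d+1)$. The four error contributions sum to at most $\varepsilon$.

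\textbf{Main obstacle.} I expect the error and size bookkeeping across the nested replacements to be the delicate part. Widths grow polynomially at each stage, and \Cref{fact:real-threshold-fanout} has only $\poly(1/\eta)$ cost, with $q=O(m+\eta^{-4})$. So each tolerance has to be set against the size produced by the previous stage, and one must check that these settings compose to an overall $\poly(s,1/\varepsilon)$ bound rather than quasi-polynomial. One must also check that every approximate gadget is used only on clean private ancillae, so that the clean-input operator guarantees add up under the hybrid. The statement that a $\UQTC$ circuit is handled follows because $\UQTC\subseteq\UQTC_f$.
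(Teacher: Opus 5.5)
Your proposal is correct and follows the paper's proof essentially step for step: remove Threshold gates with Takahashi--Tani, apply \Cref{lem:direct-unitary-compilation} after regarding the result as a $\QAC_f$ circuit, realify into $\HQAC_f$ with Fan-Out encoded exactly by \Cref{lem:real-encoded-fanout}, and finally replace every remaining Fan-Out by the Threshold-based construction of \Cref{fact:real-threshold-fanout}. The differences are only cosmetic: you inline the $\UQAC_f\to\HQAC_f$ case of \Cref{thm:h-only-depth-preserving-simulation} rather than citing it, and your error budget has three $\varepsilon/4$ contributions rather than the ``four'' you state, which is still within $\varepsilon$.
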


\begin{proof}
We track the simulation through the corresponding circuit classes. Start with the depth-$d$ circuit $C\in\UQTC_f$. Replace each
Threshold gate by its exact constant-depth $\QNC_f$ implementation
from \Cref{fact:takahashi-tani-simulation}, retaining the original
Fan-Out and single-qubit gates. This gives an equivalent $\QNC_f$
circuit of depth $O(d)$ and polynomial size. Thus the
Takahashi--Tani simulation may introduce additional Fan-Out gates,
but it preserves the implemented unitary exactly.

We do not next invoke the $\QNC_f$ universalization theorem, since its
additive $O(\log\log n)$ catalyst-preparation cost would not preserve
constant depth. Instead, we simply enlarge the allowed gate set and
regard the same circuit as a $\QAC_f$ circuit. No circuit
transformation occurs in this step. The advantage is that the
depth-preserving fixed-gate compiler for $\QAC_f$ now applies.
Using \Cref{lem:direct-unitary-compilation} with clean-input error $\varepsilon/8$, we obtain a $\UQAC_f$ circuit $C_{\mathrm{univ}}$ of depth $O(d)$ and size $\operatorname{poly}(s,1/\varepsilon)$. This compilation changes each acceptance probability by at most $\varepsilon/4$.

We now realify this circuit using the
$\UQAC_f\to\HQAC_f$ case of
\Cref{thm:h-only-depth-preserving-simulation}, with acceptance error
$\varepsilon/4$
and implement each encoded Fan-Out exactly, using
\Cref{lem:real-encoded-fanout}. This gives an $\HQAC_f$ circuit
$C_{\mathrm{real}}$ of depth $O(d)$ and polynomial size and ancillae
such that, combining the compilation and realification errors, for every input $x$,
\begin{align}
|p_{C_{\mathrm{real}}}(x)-p_C(x)|
&\leq
\frac{\varepsilon}{2}.
\label{eqn:threshold-realification-error}
\end{align}

It remains only to remove primitive Fan-Out. Every generalized
Toffoli gate is already a Threshold gate, so these gates may be
retained unchanged. Let $M$ denote the number of Fan-Out occurrences
in $C_{\mathrm{real}}$. If $M=0$, there is nothing further to do.
Otherwise, replace each Fan-Out using
\Cref{fact:real-threshold-fanout} with clean-input operator error
\begin{align}
\eta
&:=
\frac{\varepsilon}{4M}.
\label{eqn:threshold-fanout-local-error}
\end{align}
Using fresh clean ancillae for every occurrence, a telescoping hybrid
bounds the Euclidean distance between the output states of the ideal
$\HQAC_f$ circuit and the resulting $\HQTC$ circuit
$\widetilde C$ by
\begin{align}
M\eta
&=
\frac{\varepsilon}{4}.
\label{eqn:threshold-fanout-total-state-error}
\end{align}
For normalized states $\ket{\phi},\ket{\psi}$ and any measurement
effect $0\leq E\leq I$,
\begin{align}
\left|
\bra{\phi}E\ket{\phi}
-
\bra{\psi}E\ket{\psi}
\right|
&\leq
2|\ket{\phi}-\ket{\psi}|_2.
\end{align}
Hence the Fan-Out replacements change the acceptance probability by
at most
\begin{align}
|p_{\widetilde C}(x)-p_{C_{\mathrm{real}}}(x)|
&\leq
\frac{\varepsilon}{2}.
\label{eqn:threshold-fanout-acceptance-error}
\end{align}
Combining
\Cref{eqn:threshold-realification-error,eqn:threshold-fanout-acceptance-error}
gives
\begin{align}
|p_{\widetilde C}(x)-p_C(x)|
&\leq
\varepsilon
\end{align}
for every input $x$.

Finally, $M\leq\operatorname{poly}(s,1/\varepsilon)$, and every Fan-Out has at most $\operatorname{poly}(s,1/\varepsilon)$ targets. Thus
$\eta^{-1}=O(M/\varepsilon)=\operatorname{poly}(s,1/\varepsilon)$,
so \Cref{fact:real-threshold-fanout} gives polynomial size and
ancillae for every replacement. All replacements have constant-depth
and can be performed in parallel on disjoint gates within a common
layer. The final circuit therefore lies in $\HQTC$, has depth
$O(d+1)$ and size and ancillae
$\operatorname{poly}(s,1/\varepsilon)$, and satisfies the claimed
acceptance bound. Since $\UQTC\subseteq\UQTC_f$, the same simulation also applies to circuits without Fan-Out.
\end{proof}

\noindent
The reduction above is intentionally asymmetric. Its final step uses
general Threshold gates to eliminate Fan-Out, so it does not yield a
simulation of $\UQAC$ by $\HQAC$. Establishing that stronger
gate-restricted simulation requires implementing the encoded
generalized Toffoli directly without appealing to general Threshold
gates, which is the task of the next subsection.

\subsection{Encoded Generalized Toffoli Without Fan-Out}
\label{sec:encoded-toffoli-real}

The final ingredient in the real simulation is generalized Toffoli.
For the Clifford gates treated above, the Pauli-transfer action is a
signed permutation and can therefore be implemented directly.
Generalized Toffoli, however,  is qualitatively different. Namely, its Pauli-transfer
matrix mixes Pauli labels, and its arity may grow with the circuit
width, so fixed-dimensional synthesis does not apply. Our approach is
instead to identify its encoded action as a structured reflection and
implement that reflection by testing membership in its negative
eigenspace in constant depth. The construction proceeds via the following six steps:
\begin{enumerate}
    \item \textbf{Characterize the reflection.}
    In \Cref{sec:real-cz-reflection}, we analyze the generalized CZ gate,
    which is equivalent to generalized Toffoli up to Hadamard
    conjugation. Its Pauli-transfer action reduces to two conditions:
    local $X$-eigenvalue checks on inactive positions and agreement of
    the $Y$-eigenvalues on active positions.

    \item \textbf{Isolate candidate pairs.}
In \Cref{sec:real-bucket-primitives}, we use $W$ states to hash active
positions into buckets and exact-one tests to identify buckets
containing a unique active position. Pairing such singleton buckets
lets us search for disagreeing $Y$-labels without performing all
pairwise comparisons.

    \item \textbf{Access isolated positions.}
In \Cref{sec:real-onehot-swap}, we use the one-hot mask of a singleton
bucket to identify its unique active position, and show how to SWAP the
corresponding data qubit into an auxiliary register in constant depth.

    \item \textbf{Parallelize the agreement tests.}
In \Cref{sec:real-phase-label-sharing}, we address the fact that the
same $Y$-eigenvalue may need to participate in many hashing
experiments. Since $\ket{+i}$ and $\ket{-i}$ are orthogonal, we can
coherently encode this eigenvalue into several registers, allowing all
required comparisons to be performed in parallel on disjoint qubits.

\item \textbf{Assemble the encoded gate.}
In \Cref{sec:real-agreement-construction}, we combine the bucket tests,
one-hot extraction, and shared $Y$-labels to detect whether the active
positions disagree. This yields a constant-depth implementation of the
reflection, after which we bound the coherent error and compile away
the auxiliary rotations and small Fan-Out gates.

\item \textbf{Complete the real simulation.}
In \Cref{sec:real-simulation-completion}, we substitute these encoded
gate implementations into the original circuit layer-by-layer and
show that the resulting $\HQAC$ circuit preserves the acceptance
probability up to the desired error.
\end{enumerate}
\noindent For clarity, several intermediate constructions are first described
using exact real rotations and polylogarithmic-width Fan-Out. However, these are
only auxiliary resources and both are eliminated when the full encoded
gate is compiled into $\HQAC$.

\subsubsection{Identifying the Pauli-Transfer Reflection}
\label{sec:real-cz-reflection}

We first determine the operator that the circuit must implement. It is
convenient to work with the generalized CZ gate
\begin{align}
    \operatorname{CZ}_r
    &:= I-2\ket{1^r}\bra{1^r}.
    \label{eqn:generalized-cz}
\end{align}
Generalized Toffoli is obtained by conjugating one qubit of
$\operatorname{CZ}_r$ by Hadamards. Since the real-encoded Hadamard operation is
already available exactly, it suffices to implement
$O_{\operatorname{CZ}_r}$. The following proposition identifies its
action on the two Pauli-label registers. Write
$\ket{\pm i}:=(\ket0\pm i\ket1)/\sqrt2$ for the two
$Y$-eigenstates.

\begin{proposition}[Encoded Generalized CZ as a Reflection] \label{prop:encoded-generalized-cz}
For every
$a,b\in\{0,1\}^r$, the Pauli-transfer matrix of $\operatorname{CZ}_r$ satisfies
\begin{align}
    O_{\operatorname{CZ}_r}\ket{a,b}
    &=\ket a\otimes(I-2Q_a)\ket b,
    \label{eqn:encoded-cz-action}
\end{align}
where $Q_0:=0$. For $a\neq0$, let $S(a):=\{i:a_i=1\}$ and
define the two product states
\begin{align}
    \ket{v_a^\pm}
    &:=
    \bigotimes_{j\notin S(a)}\ket{-}_j
    \otimes
    \bigotimes_{j\in S(a)}\ket{\pm i}_j, \quad \text{such that}\quad  Q_a
    :=\ket{v_a^+}\bra{v_a^+}
       +\ket{v_a^-}\bra{v_a^-}.
    \label{eqn:encoded-cz-negative-projector}
\end{align}
Thus the first label $a$ is unchanged. For each $a\neq0$, the
operation negates the span of $\ket{v_a^+}$ and $\ket{v_a^-}$
and fixes its orthogonal complement.
\end{proposition}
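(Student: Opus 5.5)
The plan is to write $\operatorname{CZ}_r=I-2\Pi_1$ with $\Pi_1:=\ket{1^r}\bra{1^r}$ and compute the Pauli-transfer action directly through the conjugation map. For any Pauli $Q=P(a,b)$,
\[
\operatorname{CZ}_rQ\operatorname{CZ}_r=Q-2\Pi_1Q-2Q\Pi_1+4\Pi_1Q\Pi_1 .
\]
The first step is to split by the $X$-part $a$. If $a=0$, then $Q$ is diagonal and commutes with $\Pi_1$. In that case the last three terms give $-4\Pi_1Q+4\Pi_1Q=0$, so $Q$ is fixed, which matches $Q_0=0$.

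If $a\neq0$, then $Q$ flips the bits in $S(a)$, so $\Pi_1Q\Pi_1=0$. The correction is therefore $-2(\Pi_1Q+Q\Pi_1)$, and this is again an operator whose $X$-part is exactly $a$. This shows that the label $a$ is preserved. It also shows that on the $b$-register we get a linear map $I-2Q_a$, where
\[
[Q_a]_{b',b}=2^{-r}\operatorname{Tr}\!\bigl(P(a,b')(\Pi_1P(a,b)+P(a,b)\Pi_1)\bigr).
\]

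The second step is to evaluate these matrix elements. Writing $P(a,b)=i^{a\cdot b}X^aZ^b$, the only contributions come from the two basis elements $\bra{1^r}\cdot\ket{1^r\oplus a}$ and $\bra{1^r\oplus a}\cdot\ket{1^r}$. The relevant amplitude is
\[
\bra{1^r}P(a,b)\ket{1^r\oplus a}=i^{|a\wedge b|}(-1)^{|b|-|a\wedge b|}\cdot(\text{sign from }Z\text{ acting on }1^r\oplus a),
\]
and similarly for the reversed element. Rather than carrying all the signs through blindly, I would recognise the result as a product over positions. Positions outside $S(a)$ carry a factor $(-1)^{b_j}/\sqrt2$, which is the $\ket{-}$ amplitude. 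Positions inside $S(a)$ carry $i^{b_j}/\sqrt2$ from one term and $(-i)^{b_j}/\sqrt2$ from the other, which are the $\ket{\pm i}$ amplitudes written in the real label basis, where label $b_j=0$ means $X$ and $b_j=1$ means $Y$.

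A cleaner route to the same statement uses the fact that $\Pi_1Q+Q\Pi_1=\ket{1^r}\bra{u}+\ket{u}\bra{1^r}$ with $u=1^r\oplus a$. The Pauli coefficients of the Hermitian part of $\ket{1^r}\bra{1^r\oplus a}$ factor as a tensor product over qubits. Each flipped qubit contributes $\ket1\bra0+\ket0\bra1$-type terms, and these expand in the $X,Y$ labels as $\tfrac12(X\pm iY)$. Each unflipped qubit contributes $\ket1\bra1=\tfrac12(I-Z)$, which is the $\ket{-}$ vector in labels. So $Q_a$ is a sum of rank-one terms $\ket{v}\bra{w}$ built from these factors. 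Regrouping the two conjugate terms $\ket{v_a^+}\bra{v_a^+}+\ket{v_a^-}\bra{v_a^-}$ gives a real symmetric operator, since $\ket{v_a^-}=\overline{\ket{v_a^+}}$.

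The third step is a consistency check. $O_{\operatorname{CZ}_r}$ is an involution and real orthogonal by \Cref{lem:real-pauli-encoding}, because $\operatorname{CZ}_r^2=I$. Hence $I-2Q_a$ must be a reflection, so $Q_a$ must be a projector. The vectors $\ket{v_a^\pm}$ are orthonormal: $\braket{+i}{-i}=0$ on every position of $S(a)\neq\emptyset$, so their inner product vanishes and their span is two-dimensional. This confirms the form claimed in \Cref{eqn:encoded-cz-negative-projector}. It also fixes the normalisation: the constant in front must make $Q_a$ idempotent, which pins down the factor $2$ without tracking every power of $2^{-r}$.

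I expect the main obstacle to be the sign and phase bookkeeping in the second step. This means reconciling the Hermitian convention $i^{a_jb_j}X^{a_j}Z^{b_j}$ with the $\ket{\pm i}$ components, including the fact that the $Y$ label corresponds to $b_j=1$. The aim is to show that the two cross terms collapse exactly to $\ket{v_a^+}\bra{v_a^+}+\ket{v_a^-}\bra{v_a^-}$ and not to $\ket{v_a^+}\bra{v_a^-}+\mathrm{h.c.}$ or a sign-flipped variant. I would settle this by checking the case $r=1$ first. There $\operatorname{CZ}_1=Z$, which maps $X\mapsto-X$ and $Y\mapsto-Y$, so on $a=1$ the operator $Q_1$ should be the identity on the $X/Y$ block, consistent with $\ket{+i}\bra{+i}+\ket{-i}\bra{-i}=I$. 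I would then check $r=2$ against the known conjugation of Paulis by CZ, and induct using the product structure.
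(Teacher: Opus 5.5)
Your proof is correct, but it takes a more computational route than the paper.

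\textbf{The paper's route.} The paper never expands $\operatorname{CZ}_rQ\operatorname{CZ}_r$ on Pauli operators. It notes that conjugation by the diagonal $\operatorname{CZ}_r$ acts diagonally on matrix units, $\ket x\bra y\mapsto(-1)^{[x=1^r]+[y=1^r]}\ket x\bra y$. Hence the $-1$ eigenspace is spanned by the Hilbert--Schmidt-orthonormal family $E_a=\ket{1^r\oplus a}\bra{1^r}$, $E_a^\dagger$ with $a\neq0$. It then transports this subspace to Pauli coordinates using the isometry of Pauli vectorization and the identities $\ket1\bra1=\tfrac12(I-Z)$, $\ket0\bra1=\tfrac12(X+iY)$, $\ket1\bra0=\tfrac12(X-iY)$. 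Because the negative eigenspace is identified as a subspace, $Q_a$ is automatically the orthogonal projector onto $\operatorname{span}\{\ket{v_a^\pm}\}$. The pairing question you flag as the main obstacle (diagonal pairs versus $\ket{v_a^+}\bra{v_a^-}+\mathrm{h.c.}$) therefore never arises.

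\textbf{What your route adds, and how to close it.} Your route makes preservation of $a$ very transparent via $\Pi_1P(a,b)\Pi_1=0$, and it produces explicit entries. You can close the pairing question directly, without small cases, induction, or the idempotence check. Let $u=1^r\oplus a$ and $\alpha(b):=2^{-r/2}\bra{1^r}P(a,b)\ket{u}$. Your per-position computation shows $\alpha(b)$ is exactly the $b$-amplitude of $\ket{v_a^+}$, and $\overline{\alpha(b)}=2^{-r/2}\bra{u}P(a,b)\ket{1^r}$ is that of $\ket{v_a^-}$. Cyclicity gives $\operatorname{Tr}(P'\Pi_1P)=\bra{1^r}P\ket{u}\bra{u}P'\ket{1^r}$ and $\operatorname{Tr}(P'P\Pi_1)=\bra{1^r}P'\ket{u}\bra{u}P\ket{1^r}$. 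Therefore
\[
[Q_a]_{b',b}=\alpha(b)\overline{\alpha(b')}+\alpha(b')\overline{\alpha(b)}=\bigl[\ket{v_a^+}\bra{v_a^+}+\ket{v_a^-}\bra{v_a^-}\bigr]_{b',b}.
\]
Equivalently, on the $a$-block, $\Pi_1Q+Q\Pi_1=\operatorname{Tr}(E_a^\dagger Q)\,E_a+\operatorname{Tr}(E_aQ)\,E_a^\dagger$. This is the paper's observation in disguise.

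\textbf{Intermediate claims to tighten.}
\begin{itemize}
\item Your displayed amplitude formula counts the $Z$-sign twice: the factor $(-1)^{|b|-|a\wedge b|}$ already is that sign.
\item $\Pi_1Q+Q\Pi_1$ is $\bra{1^r}Q\ket{u}\,\ket{1^r}\bra{u}+\bra{u}Q\ket{1^r}\,\ket{u}\bra{1^r}$, not the bare sum of the two matrix units.
\item The Pauli coefficients of the Hermitian part of $\ket{1^r}\bra{u}$ do not factor over qubits. Only those of $\ket{1^r}\bra{u}$ and $\ket{u}\bra{1^r}$ separately do.
\end{itemize}
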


\begin{proof}
For computational-basis strings $x,y\in\{0,1\}^r$, define $\chi(x):=(-1)^{[x=1^r]},$
so that $\operatorname{CZ}_r\ket x=\chi(x)\ket x$. Hence
conjugation acts diagonally on the computational-basis matrix units:
\begin{align}
    \operatorname{CZ}_r
    \ket x\!\bra y
    \operatorname{CZ}_r
    &=
    \chi(x)\chi(y)\ket x\!\bra y.
    \label{eqn:cz-matrix-unit-action}
\end{align}
The sign is negative exactly when one of $x,y$ equals $1^r$ and the
other does not. Thus, for each
$a\in\{0,1\}^r\setminus\{0^r\}$, define
\begin{align}
    E_a
    :=
    \ket{1^r\oplus a}\!\bra{1^r}
    \qquad \text{and} \qquad
    E_a^\dagger
    :=
    \ket{1^r}\!\bra{1^r\oplus a}.
    \label{eqn:cz-negative-matrix-units}
\end{align}
These are precisely the matrix units with eigenvalue $-1$ under
conjugation by $\operatorname{CZ}_r$. Their tensor-product form is determined directly by $a$. Namely,
\begin{align}
    E_a
    &=
    \bigotimes_{j:a_j=0}\ket1\bra1_j
    \otimes
    \bigotimes_{j:a_j=1}\ket0\bra1_j,
    \label{eqn:Ea-local-decomposition}\\
    E_a^\dagger
    &=
    \bigotimes_{j:a_j=0}\ket1\bra1_j
    \otimes
    \bigotimes_{j:a_j=1}\ket1\bra0_j.
    \label{eqn:Eadag-local-decomposition}
\end{align}
Indeed, when $a_j=0$, both strings have a $1$ in position $j$.
When $a_j=1$, the ket of $E_a$ has a $0$ while its bra has a $1$,
with the roles reversed for $E_a^\dagger$.

The operators $E_a,E_a^\dagger$, over all $a\neq0$, are distinct
computational-basis matrix units. They therefore form an orthonormal
basis of the negative eigenspace under the Hilbert--Schmidt inner
product. We now express this basis in Pauli coordinates. Extend the
Pauli encoding linearly to arbitrary operators by
\begin{align}
    \ket{A}_{\mathrm P}
    &:=
    2^{-r/2}
    \sum_{u,v\in\{0,1\}^r}
    \Tr\!\bigl(P(u,v)A\bigr)\ket{u,v}.
    \label{eqn:pauli-vectorization-general}
\end{align}
For one qubit,
\begin{align}
    \ket1\bra1&=\frac{I-Z}{2},&
    \ket0\bra1&=\frac{X+iY}{2},&
    \ket1\bra0&=\frac{X-iY}{2},
\end{align}
and hence their normalized Pauli vectors are
\begin{align}
    \ket{\ket1\bra1}_{\mathrm P}
    &=\ket0\ket{-},&
    \ket{\ket0\bra1}_{\mathrm P}
    &=\ket1\ket{+i},&
    \ket{\ket1\bra0}_{\mathrm P}
    &=\ket1\ket{-i}.
    \label{eqn:local-matrix-unit-pauli-vectors}
\end{align}
Thus a diagonal factor $\ket1\bra1$ contributes first Pauli label
$0$, while either off-diagonal factor contributes first Pauli label
$1$. Applying these identities tensor factor by tensor factor to
\Cref{eqn:Ea-local-decomposition,eqn:Eadag-local-decomposition} gives
\begin{align}
    \ket{E_a}_{\mathrm P}
    &=
    \ket a\otimes
    \left(
        \bigotimes_{j\notin S(a)}\ket{-}_j
        \otimes
        \bigotimes_{j\in S(a)}\ket{+i}_j
    \right)
    =
    \ket a\ket{v_a^+},
    \\
    \ket{E_a^\dagger}_{\mathrm P}
    &=
    \ket a\otimes
    \left(
        \bigotimes_{j\notin S(a)}\ket{-}_j
        \otimes
        \bigotimes_{j\in S(a)}\ket{-i}_j
    \right)
    =
    \ket a\ket{v_a^-}.
\end{align}

Pauli vectorization is an isometry for the Hilbert--Schmidt inner
product. Therefore, for each fixed $a\neq0$, the negative eigenspace
inside the block with first Pauli label $a$ is exactly
\[
    \ket a\otimes
    \operatorname{span}\{\ket{v_a^+},\ket{v_a^-}\}.
\]
Its projector on the second Pauli-label register is precisely
\[
    Q_a
    =
    \ket{v_a^+}\bra{v_a^+}
    +
    \ket{v_a^-}\bra{v_a^-}.
\]
Conjugation has eigenvalue $-1$ on this subspace and $+1$ on its
orthogonal complement, so the action in the $a$ block is
$I-2Q_a$. For $a=0$, no negative matrix unit occurs and the action
is the identity. Hence
\[
    O_{\operatorname{CZ}_r}\ket{a,b}
    =
    \ket a\otimes(I-2Q_a)\ket b,
\]
as claimed.
\end{proof}

For a fixed Pauli label $a\neq0$, let $S(a):=\{i:a_i=1\}$.
We call the positions in $S(a)$ \emph{active} and the remaining
positions \emph{inactive}. By \Cref{prop:encoded-generalized-cz}, the
encoded generalized CZ applies a minus sign exactly on the two product
states
\begin{align}
    \bigotimes_{j\notin S(a)}\ket{-}_j
    \otimes
    \bigotimes_{j\in S(a)}\ket{+i}_j \qquad
    \text{and} \qquad
    \bigotimes_{j\notin S(a)}\ket{-}_j
    \otimes
    \bigotimes_{j\in S(a)}\ket{-i}_j.
\end{align}
Thus membership in the negative eigenspace is characterized by two
conditions on the second Pauli-label register $b$:
\begin{enumerate}
    \item every inactive position is in the $X$-eigenstate $\ket{-}$;
    \item all active positions have the same $Y$-eigenvalue, either
    all $\ket{+i}$ or all $\ket{-i}$.
\end{enumerate}
The first condition is local and can be checked independently at each
inactive position. The second is the main global condition. Importantly,
we do not need to determine whether the common $Y$-eigenvalue is
$+1$ or $-1$. We only need to test whether all active positions agree,
while preserving coherent superpositions of the two valid cases.

To test agreement, it suffices to find one pair of active positions
with opposite $Y$-eigenvalues. We use the probabilistic strategy of the constant-ancillae Dicke state construction of Joshi and Vasconcelos~\cite{joshi2026constant}. Namely, active
positions are assigned to random buckets and the
$\operatorname{EXACT}_1$ function is used to identify buckets containing a unique
active position. We then pair these buckets and compare the
$Y$-eigenvalues of the two isolated positions. The following
elementary observation gives the probabilistic guarantee needed for
this agreement test.

\begin{lemma}[Paired-Bucket Disagreement Test]
\label{lem:paired-bucket-disagreement}
Suppose $m\geq2$ active positions carry labels in $\{+,-\}$, with
both labels present. Assign the positions independently and uniformly
to $2B$ buckets, grouped into the fixed pairs
$
    (1,2),(3,4),\ldots,(2B-1,2B).
$
If $m\leq B<2m$,
then with probability at least $1/32$, some paired buckets each
contain exactly one active position and those two positions have
opposite labels.
\end{lemma}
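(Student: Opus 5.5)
The plan is to lower-bound the probability by a sum of pairwise-disjoint events. Let $p$ and $q$ be the numbers of active positions labeled $+$ and $-$, so $p+q=m$ and $p,q\geq1$. By symmetry between the labels, assume $p\geq q$, so that $p\geq m/2$, and fix one position $v$ labeled $-$. For each $+$-labeled position $u$, define the event $A_u$ by three conditions: $u$ and $v$ lie in the same fixed bucket pair $(2k-1,2k)$, they occupy different buckets of that pair, and every one of the remaining $m-2$ positions avoids both buckets of that pair.

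When $A_u$ occurs, the two buckets of that pair each contain exactly one active position, and these positions carry opposite labels. So $A_u$ implies the event in the statement.

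The events $A_u$ are pairwise disjoint. On $A_u$, the only position other than $v$ in $v$'s bucket pair is $u$, so $A_u$ and $A_{u'}$ cannot both hold when $u\neq u'$. The target probability is therefore at least $\sum_u\Pr[A_u]$.

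Next compute $\Pr[A_u]$ using independence and uniformity. Wherever $v$ lands, $u$ must land in the partner bucket, which happens with probability $1/(2B)$. Each other position must avoid the two buckets of the pair, which happens with probability $1-1/B$ independently. Hence
\begin{align}
    \Pr[A_u]=\frac{1}{2B}\left(1-\frac1B\right)^{m-2},
    \qquad
    \sum_u\Pr[A_u]\geq\frac{m}{4B}\left(1-\frac1B\right)^{m-2}.
\end{align}

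Finally, insert the constraint $m\leq B<2m$. From $B<2m$ we get $m/(4B)>1/8$. From $m\leq B$ we get $(1-1/B)^{m-2}\geq(1-1/B)^{B-1}\geq e^{-1}$, which is valid since $B\geq m\geq2$. The probability is therefore at least $1/(8e)>1/32$.

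The main obstacle is choosing the events so that they are genuinely disjoint. The naive union over all opposite-label pairs $(u,v)$ overlaps, and fixing a $+$ position while summing over $-$ positions gives only about $q/(2B)$, which is too small when $q=1$. The fix is to always fix a position of the minority label and sum over the majority label. This step is what produces the factor $m/2$ that cancels the $1/(2B)$ loss.
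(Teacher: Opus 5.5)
Your proof is correct and matches the paper's argument: both fix a minority-label position $v$, take the disjoint events that exactly one opposite-label position lands in $v$'s partner bucket while every other position avoids the pair, and bound $\frac{k}{2B}\left(1-\frac1B\right)^{m-2}$ with $k\geq m/2$. The only difference is cosmetic. You lower-bound the second factor by $e^{-1}$, giving $1/(8e)$, while the paper uses $1/4$, giving exactly $1/32$; you also spell out the disjointness that the paper leaves implicit in the phrase ``exactly one.''
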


\begin{proof}
Fix an active position $v$ carrying a minority label, and let
$k\geq m/2$ be the number of active positions carrying the opposite
label. Condition on the bucket containing $v$. Consider the event
that exactly one of these $k$ opposite-label positions lands in the
paired bucket, while every other active position avoids both buckets.
Its probability is
\begin{align}
    \frac{k}{2B}
    \left(1-\frac1B\right)^{m-2}
    &\geq \frac1{32}.
    \label{eqn:paired-bucket-disagreement-probability}
\end{align}
Indeed, $k\geq m/2$ and $B<2m$ make the first factor at least
$1/8$, while $B\geq m$ makes the second at least $1/4$. On this
event, the two buckets are singletons carrying opposite labels.
\end{proof}

Thus, when the number of buckets is within a constant factor of the
number of active positions, a single assignment detects disagreement
with constant probability. Since the number of active positions is
unknown, we run in parallel over all bucket counts $B \in \{1,2,4,\ldots,2^{\lceil\log_2 r\rceil}\}$.
One of these choices is guaranteed to satisfy
$m\leq B<2m$, where $m$ is the number of active positions. We then
repeat the corresponding random assignment independently to make the
probability of missing a disagreement arbitrarily small. 

We next
recall from~\cite{joshi2026constant} the two ingredients used to carry
out this procedure: 1) computation of the $\operatorname{EXACT}_1$ function for identifying
buckets containing a unique active position  and 2) $W$ state synthesis for
assigning positions uniformly to buckets in superposition. We then
develop the two additional operations needed for the coherent
agreement test.

\subsubsection{Exact-One Tests and \texorpdfstring{$W$}{W}-State Preparation}
\label{sec:real-bucket-primitives}

The bucket construction uses two ingredients from
Joshi and Vasconcelos~\cite{joshi2026constant}. An
$\operatorname{EXACT}_1$ test identifies whether a bucket contains
exactly one active position, while a $W$ state assigns a position
uniformly to one of the available buckets in superposition. Their
$W$-state preparation uses continuously parameterized real rotations, so we compile these rotations into the $\HQAC$ gate set using the
real-rotation synthesis developed earlier.

\begin{fact}[Exact-One Test
{\cite[Corollary~2]{joshi2026constant}}]
\label{fact:real-exact-one}
For every $q\geq1$, the function $\operatorname{EXACT}_1(c)=[|c|=1]$
has an exact, constant-depth, polynomial-size clean reversible
implementation using $\HQAC$ gates together with $O(\log(q))$-Fan-Out.
\end{fact}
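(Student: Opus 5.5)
The plan is to reduce $\operatorname{EXACT}_1$ to a constant number of layers of unbounded OR/AND gates, using a binary-index encoding to detect whether two or more inputs are set. Let $k:=\lceil\log_2 q\rceil$ and index the inputs by $i\in\{0,\ldots,q-1\}$, written in binary as $i=(i_1,\ldots,i_k)$. For each bit position $t\in[k]$ and each $\beta\in\{0,1\}$, define the partial OR
\begin{align}
    O_t^\beta(c)
    &:=
    \bigvee_{i:\,i_t=\beta} c_i .
\end{align}

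The key combinatorial claim is
\begin{align}
    \operatorname{EXACT}_1(c)
    &=
    \Bigl(\bigvee_i c_i\Bigr)\wedge
    \bigwedge_{t=1}^{k}\neg\bigl(O_t^0(c)\wedge O_t^1(c)\bigr).
\end{align}
If exactly one $c_i=1$, then the first conjunct holds. Moreover, for each $t$ exactly one of $O_t^0,O_t^1$ equals one, so every clause $\neg(O_t^0\wedge O_t^1)$ is satisfied. If two distinct indices $i\neq j$ are set, their binary expansions differ in some position $t$, and then $O_t^0=O_t^1=1$, so that clause fails. If no index is set, the first conjunct fails.

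The circuit then proceeds in the following order.
\begin{enumerate}
    \item Apply a Fan-Out from each $c_i$ into $k+1$ fresh ancillae. This is the only use of Fan-Out, and its width is $O(\log q)$, as the statement requires.
    \item Using these disjoint copies, compute the $2k+1$ ORs (the global OR and all $O_t^\beta$) in parallel. Each OR is a generalized Toffoli conjugated by $X$ gates on its controls and target, by De Morgan. Each $c_i$ feeds the global OR and exactly one of $O_t^0,O_t^1$ for each $t$, so each input needs exactly $k+1$ copies.
    \item Compute each $O_t^0\wedge O_t^1$ into a fresh qubit with an ordinary Toffoli.
    \item Apply one generalized Toffoli controlled on the global OR and on the negations (via $X$) of the $k$ pair-ANDs. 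This writes $\operatorname{EXACT}_1(c)$ into a fresh qubit.
    \item CNOT this result into the designated output qubit.
    \item Run steps 1 through 4 in reverse. Every gate used is a permutation of the computational basis, so this returns all ancillae exactly to $\ket0$ on every basis input, and by linearity on every superposition.
\end{enumerate}

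All gates used are $X$, CNOT, Toffoli, generalized Toffoli, and Fan-Out of width $k+1=O(\log q)$, all of which are available in $\HQAC$ together with the allowed small Fan-Out. The depth is a constant, since steps 1 through 5 each have constant depth. The gate count and workspace are $O(q\log q)$.

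The only step needing genuine care is the correctness of the $\ge 2$ detection, which is exactly the observation that distinct binary strings differ in some coordinate. The rest of the argument is verifying that copies are disjoint, so that the generalized Toffolis within a layer act on disjoint qubits and can run in parallel.
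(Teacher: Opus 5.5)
Your construction is correct and meets the stated resources. The paper does not prove this statement: it imports it as a black box from Joshi--Vasconcelos [Corollary~2], so there is no in-paper argument to compare against, and your binary-index decomposition gives a self-contained proof.

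The main thing it adds is that it shows where the $O(\log q)$ Fan-Out width comes from. Each $c_i$ feeds only the global OR and one of $O_t^0,O_t^1$ per bit position, so $k+1=O(\log q)$ copies suffice. By contrast, the naive ``at least two'' test $\bigvee_{i<j}(c_i\wedge c_j)$ puts each $c_i$ into $q-1$ clauses and would need Fan-Out of width $\Theta(q)$.

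One small wording fix is needed in step~2. The target of each OR should receive a single $X$, or equivalently be initialized to $\ket1$, rather than being conjugated by $X$. Any gate acting on its target as $t\mapsto t\oplus f$ is unchanged by $X$-conjugation of that target. So conjugating both controls and target would compute the NOR rather than the OR. With that correction, the rest goes through as you describe: the disjointness of the copied controls within each layer, the constant depth count, and the exact cleanup by reversal.
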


\begin{lemma}[One-Hot State Preparation]
\label{lem:real-w-state}
For every $q\geq1$ and $\delta\in(0,1/2)$, there is a
constant-depth circuit $P_{q,\delta}$, using $\HQAC$ gates and
$O(\log(q))$-Fan-Out, such that
\begin{align}
    \left\|
        P_{q,\delta}\ket{0^{q+A}}
        -\ket{W_q}\ket{0^A}
    \right\|_2
    &\leq\delta,
\end{align}
for the $q$-qubit $W$ state, $\ket{W_q}:=\frac1{\sqrt q}\sum_{j=1}^q\ket{e_j}$.
The gate and ancillae count $A$ are
polynomial in $q$ and $1/\delta$.
\end{lemma}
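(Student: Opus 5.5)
The plan is to take the constant-depth $W$-state construction of Joshi and Vasconcelos~\cite{joshi2026constant} as a template and replace each of its continuously parameterized real rotations by the Hadamard-only synthesis of \Cref{lem:h-only-real-rotations}. That construction has constant depth, polynomial size, and uses only real rotations $R_y(\theta)$ (possibly controlled), $X$, and generalized Toffoli gates. It may also use $\operatorname{EXACT}_1$-type tests, which \Cref{fact:real-exact-one} supplies with $O(\log q)$-Fan-Out.

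Concretely, I would view the preparation through a dyadic address register. Apply $H^{\otimes k}$ to $k=\lceil\log_2 q\rceil$ qubits to obtain a uniform superposition over $2^k$ addresses. Coherently compute the one-hot vector $e_j$ of the address into the $q$ target qubits by a binary-to-unary decoder: one generalized Toffoli per output, with input bits copied by $O(\log q)$-width Fan-Out, exactly as in \Cref{cor:constant-depth-permutation}. Then uncompute the address from the one-hot register, which is possible because each address bit is a parity of the one-hot bits.

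The nonuniform weight arises only because $q$ need not be a power of two. Addresses $\geq q$ would have to be removed, and this is the step where the real rotations enter. I would first prepare the address state $\frac{1}{\sqrt q}\sum_{j<q}\ket j$ using a rotation-based construction: a cascade of controlled $R_y(\theta_i)$ gates conditioned on comparisons against $q$, as in the Joshi--Vasconcelos argument. Only $O(\log q)$ rotations are needed, each computed on a fixed prefix-comparison flag. The comparisons themselves come from \Cref{lem:comparison-fanout}, with width $O(\log q)$ Fan-Out.

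Alternatively, the last step can avoid cascades altogether. Use the one-step oblivious amplitude amplification of \Cref{prop:exact-one-step-oaa} together with a single real $R_y$ rotation that reduces the good-subspace weight $q/2^k\in(1/2,1]$ to exactly $1/4$. One round of amplification then yields the truncated uniform state exactly. I would commit to this route because it needs only $O(1)$ rotations. The reflections are computational-basis reflections, implementable by $X$ and generalized Toffoli, and the good-subspace flag $[\val(x)<q]$ comes from \Cref{lem:comparison-fanout}.

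The error analysis is a telescoping hybrid. There are $O(1)$ rotation replacements (or $O(\log q)$ in the cascade variant), each synthesized by \Cref{lem:h-only-real-rotations} to error $\delta/O(\log q)$ with fresh clean ancillae. The total error is therefore at most $\delta$, and the size is $\poly(1/\delta)\cdot\poly(q)$. Every component is constant depth, so the whole circuit is constant depth.

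The main obstacle is that \Cref{lem:h-only-real-rotations} itself, as proved, replaces its Fan-Out via \Cref{lem:uqac-logarithmic-fanout}, and only approximately. That is acceptable here. Alternatively one may keep the $O(\log(1/\delta))$-width Fan-Out explicitly, but the statement allows only $O(\log q)$-Fan-Out, so I would use the fully compiled $\HQAC$ version of the rotation. One further check is needed: the OAA weight must be exactly $1/4$ after the rotation, and this only holds up to synthesis error. The robust analysis of \Cref{lem:robust-oaa-rz} adapted to a real block, as already used in \Cref{lem:h-only-real-rotations}, keeps the error linear in the synthesis error.
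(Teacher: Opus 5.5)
Your opening plan (take the Joshi--Vasconcelos preparation, replace each real rotation by the $\HQAC$ synthesis of \Cref{lem:h-only-real-rotations}, and close with a telescoping hybrid) is exactly the paper's proof. The concrete address-register construction you then commit to, however, breaks at the binary-to-unary decoder.

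\textbf{The decoder needs wide Fan-Out.} All $q$ conjunctions $[x=j]$ are evaluated in parallel, and each reads all $k=\lceil\log_2 q\rceil$ address bits. So every address bit must first be copied onto $\Theta(q)$ disjoint wires. \Cref{cor:constant-depth-permutation} does exactly this, but with Fan-Out of width $2^b\approx q$, not $O(\log q)$. The available substitutes do not help:
\begin{itemize}
\item a copying tree (\Cref{fact:restricted-fanout-tree}) built from width-$O(\log q)$ gates needs depth $\Theta(\log q/\log\log q)$;
\item the fixed-gate gadget of \Cref{lem:uqac-logarithmic-fanout} at width $q$ costs $\Theta(q4^q)$ gates.
\end{itemize}
There is also a general obstruction. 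In any classical reversible circuit of $X$, generalized Toffoli, and width-$w$ Fan-Out gates, the number of wires that can depend on a fixed address bit grows by at most a factor $w+1$ per layer. At constant depth it is therefore at most $(\log q)^{O(1)}$. That is far fewer than the $q$ decoder outputs, each of which depends on every address bit.

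Your uncomputation step has a milder version of the same issue. A parity of $\Theta(q)$ bits also needs wide Fan-Out. Under the one-hot promise, though, each address bit is an OR, i.e.\ one generalized Toffoli, with each one-hot bit copied only $k$ times.

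\textbf{The remedy.} Never pass through a binary address; build the one-hot state directly on the $q$ target qubits. This uses the same ingredients (real rotations plus $\operatorname{EXACT}_1$) as the construction the paper cites.
\begin{itemize}
\item Apply $R_y(\theta)$ with $\sin^2(\theta/2)=1/q$ to every target qubit. The Hamming-weight-one component of the resulting product state is exactly $\sqrt{(1-1/q)^{q-1}}\,\ket{W_q}$, with weight in $[1/e,1]$.
\item Lower this weight to exactly $1/4$ with one tuning rotation on a flag qubit.
\item Run one round of amplitude amplification. Its good-subspace reflection computes $\operatorname{EXACT}_1$ via \Cref{fact:real-exact-one}, which needs only $O(\log q)$-width Fan-Out.
\end{itemize}
This uses $O(q)$ rotation occurrences rather than $O(1)$, so allot error $\delta/O(q)$ to each; the size remains $\poly(q,1/\delta)$. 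No robust amplification analysis is needed either. The hybrid compares each synthesized rotation with its exact counterpart inside the ideal circuit, so exactness of the weight-$1/4$ tuning is only required in that ideal circuit.
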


\begin{proof}
Joshi and Vasconcelos
\cite[Lemma~3 and Theorem~1]{joshi2026constant} give an exact
constant-depth preparation of $\ket{W_q}$ using
$\operatorname{EXACT}_1$ together with continuously parameterized
single-qubit rotations. These rotations may be chosen to be real
$R_y$ rotations.

Let $L_q=\operatorname{poly}(q)$ bound the number of real
single-qubit rotation occurrences, including those appearing during
uncomputation. By \Cref{lem:h-only-real-rotations}, each such
$R_y(\theta)$ can be approximated by a constant-depth $\HQAC$ circuit
with arbitrary prescribed clean-input error and polynomial gate and
ancilla cost. We apply this construction to every rotation with error
at most $\delta/\max\{1,L_q\},$
using fresh clean ancillae for each occurrence. A telescoping hybrid
over the $L_q$ replacements then gives total error at most $\delta$
on the complete output state, including the preparation workspace,
while preserving constant depth and polynomial resources. For $q=1$,
the state is prepared exactly by $X$.
\end{proof}

When constructing the complete agreement test below, it is convenient
first to reason with the ideal version of this preparation using exact
real rotations and the stated small Fan-Out gates. All such auxiliary
operations, including those used during uncomputation, are compiled
into the $\HQAC$ gate set only after the full circuit has been
assembled.

\subsubsection{One-Hot Controlled SWAP}
\label{sec:real-onehot-swap}

Suppose a bucket contains exactly one active position. The corresponding
mask then has a single $1$, identifying which data qubit belongs to that
bucket. We need to swap that qubit into an auxiliary register so that
it can be compared with the qubit selected from the paired bucket.

Let $c=(c_1,\ldots,c_q)$ denote the one-hot mask,
$d=(d_1,\ldots,d_q)$ the data qubits, and $t$ the auxiliary qubit.
If $c=e_j$, the desired operation is $\operatorname{SWAP}_{d_j,t}$.
Equivalently, on the one-hot subspace it is
\begin{align}
    \prod_{j=1}^q
    \operatorname{CSWAP}_{c_j}(t,d_j).
    \label{eqn:onehot-cswap-product}
\end{align}
These candidate controlled-SWAPs cannot simply be executed in
parallel because they all share the qubit $t$.

This overlap differs from the parallel controlled-SWAP construction
of Joshi and Vasconcelos
\cite[Corollary~3]{joshi2026constant}. There, a common control must
be distributed across many swaps, and unbounded Fan-Out is used to
supply those controls in parallel. Here the controls $c_j$ are
distinct, but the candidate swaps share a common target. We instead
use the one-hot promise to implement their combined action directly.

\begin{lemma}[Enabled One-Hot Controlled SWAP]
\label{lem:real-onehot-routing}
Let $e$ be an enable qubit, let
$c=(c_1,\ldots,c_q)$ be a selector register,
$d=(d_1,\ldots,d_q)$ a data register, and $t$ an auxiliary qubit.
There is an exact clean polynomial-size $\HQAC^0$
circuit satisfying, for every state $\ket\psi_{dt}$, $c\in\{0,1\}^q$, and $j\in[q]$,
\begin{align}
    \ket0_e\ket c\ket\psi_{dt}
    &\longmapsto
    \ket0_e\ket c\ket\psi_{dt},
    \label{eqn:onehot-swap-disabled}\\
    \ket1_e\ket{0^q}_c\ket\psi_{dt}
    &\longmapsto
    \ket1_e\ket{0^q}_c\ket\psi_{dt},
    \\
    \ket1_e\ket{e_j}_c\ket\psi_{dt}
    &\longmapsto
    \ket1_e\ket{e_j}_c
    \operatorname{SWAP}_{d_j,t}\ket\psi_{dt}.
    \label{eqn:onehot-swap-enabled}
\end{align}
Thus, when $e=0$ the circuit is exactly the identity for every
selector $c$. When $e=1$, for
every selector with $|c|=1$, swaps $t$ with the uniquely selected
data qubit.
\end{lemma}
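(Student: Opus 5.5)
The plan is to avoid the shared target entirely by first moving the selected data bit into a fresh register with no overlap, and then using the one-hot promise to turn a sum over $j$ into a single wide AND/XOR. A SWAP decomposes as three CNOTs, $\operatorname{SWAP}_{d_j,t}=\operatorname{CNOT}_{t\to d_j}\operatorname{CNOT}_{d_j\to t}\operatorname{CNOT}_{t\to d_j}$. Controlled on $e\wedge c_j$, each of these CNOTs is a controlled XOR, so the enabled one-hot SWAP is the product of three layers $\Lambda_1\Lambda_2\Lambda_1$:
\begin{align*}
\Lambda_1&:\ \ket{t,d}\mapsto\ket{t,\,d\oplus e\,t\,c},\\
\Lambda_2&:\ \ket{t,d}\mapsto\ket{t\oplus e\textstyle\bigoplus_j c_jd_j,\ d}.
\end{align*}
In $\Lambda_1$, $c$ is viewed as a vector and $d_j\mapsto d_j\oplus e\,t\,c_j$. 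The three-CNOT identity then gives exactly $\operatorname{SWAP}_{d_j,t}$ when $c=e_j$ and $e=1$. Both $\Lambda_1$ and $\Lambda_2$ are the identity when $e=0$ or $c=0^q$. This gives all three cases of the statement, and for all $c$ when $e=0$. The cases $|c|\geq2$ with $e=1$ are unconstrained, so it suffices that each layer be an exact clean $\HQAC^0$ circuit acting as written on every basis state.

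For $\Lambda_2$, the one-hot promise makes the parity $\bigoplus_j c_jd_j$ equal an OR, but exactness for $e=0$ with arbitrary $c$ only needs control by $e$. I will therefore compute a fresh bit $g_j=c_j\wedge d_j$ with a Toffoli for each $j$ in parallel. These bits are disjoint, so no Fan-Out is needed. When $c$ has at most one 1, the parity of the $g_j$ equals $\operatorname{OR}(g)$. Since only $|c|\leq1$ matters when $e=1$, I implement the target update as $t\mapsto t\oplus(e\wedge\operatorname{OR}(g))$. This is one generalized Toffoli on the negated $g$'s together with $e$, using the same De Morgan trick as \Cref{lem:comparison-fanout}, with an $X$ correction: I compute $\operatorname{OR}(g)$ into an ancilla, apply a Toffoli from $(e,\text{ancilla})$ into $t$, and uncompute. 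For $e=0$ this is exactly the identity for every $c$. The $g_j$ are then uncomputed by the same Toffolis, which is valid because $c$ and $d$ are unchanged during $\Lambda_2$.

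For $\Lambda_1$, the issue is that $t$ must control $q$ different targets. This is a Fan-Out of $t$, which $\HQAC$ does not have. Here I will use the one-hot promise again, but in the \emph{target} direction: the update $d_j\mapsto d_j\oplus e\,t\,c_j$ only needs $e\wedge t$, which is one bit, ANDed with each $c_j$. I plan to avoid copying $e\wedge t$ by rewriting $\Lambda_1$ as $H$-conjugation of a phase. Conjugating the targets $d_j$ by Hadamards turns $\Lambda_1$ into the diagonal phase $(-1)^{e\,t\,\sum_j c_jd_j}$, which equals $(-1)^{e\,t\,\bigoplus_j c_jd_j}$. On the relevant inputs ($|c|\leq1$, or $e=0$) this is $(-1)^{e\,t\,\operatorname{OR}(c\wedge d)}$. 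That phase is implementable exactly in constant depth: compute the $g_j$ as before, compute $\operatorname{OR}(g)$, apply a CCZ on $(e,t,\text{OR})$ written as an $H$-conjugated Toffoli, and uncompute. I must check that when $e=1$ and $|c|\geq2$ nothing is required, and that when $e=0$ the phase is trivial for all $c$, so exactness holds on the specified inputs.

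The main obstacle is this Fan-Out-free implementation of $\Lambda_1$. The key point to verify carefully is that the Hadamard-conjugated form of $\Lambda_1$ is a phase depending on the targets only through the parity of $c\wedge d$. That parity collapses to a single OR under the one-hot promise, so no copies of $t$ are ever needed. After this check, the total circuit is a constant number of layers of $H$, $X$, Toffoli and generalized Toffoli gates, with $O(q)$ gates and clean ancillae.
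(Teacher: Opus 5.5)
Your proposal is correct and is essentially the paper's proof: your $\Lambda_2$ is the paper's selected CNOT $A_e$ ($t\mapsto t\oplus e\,\mathrm{OR}(c\wedge d)$), and your Hadamard-conjugated CCZ implementation of $\Lambda_1$ is exactly the paper's $B_e=KA_eK$ with $K=H_d^{\otimes q}\otimes H_t$. This is because conjugating the Toffoli into $t$ by $H_t$ is precisely the CCZ on $(e,t,\mathrm{OR})$. The only difference is that you use the orientation $B_eA_eB_e$ of the three-CNOT SWAP identity instead of the paper's $A_eB_eA_e$, which is immaterial.
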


\begin{proof}
We construct one selected CNOT, obtain its reverse by Hadamard
conjugation, and compose them using the usual three-CNOT SWAP
identity.

To begin, define the Boolean function
\begin{align}
    F(c,d)
    &:=
    \bigvee_{j=1}^q(c_j\wedge d_j).
    \label{eqn:onehot-selected-function}
\end{align}
When $c$ is one-hot, $F(c,d)$ simply returns the data bit selected by
$c$. Specifically, if $c=e_j$, then $F(c,d)=d_j$, while $F(0^q,d)=0$. Using this selected bit, define the update
\begin{align}
    A_e\ket{c,d,t}
    &:=
    \ket{c,d,t\oplus eF(c,d)}.
    \label{eqn:onehot-forward-update}
\end{align}
Thus $A_e$ flips the auxiliary qubit $t$ exactly when the enable bit
$e$ is one and the selected data bit is one. In particular, when
$e=1$ and $c=e_j$, $A_e$ is precisely
$\operatorname{CNOT}_{d_j\to t}$.
To implement $A_e$, compute the conjunctions $c_jd_j$ in parallel,
compute their OR into a flag, update $t$ controlled on this flag
and $e$, and uncompute. Generalized Toffoli and $X$ gates give an
exact clean implementation in constant depth with $O(q)$ gates
and ancillae. When $e=1$ and $c=e_j$, we have $F(c,d)=d_j$,
so $A_e$ acts as $\operatorname{CNOT}_{d_j\to t}$.

To complete the SWAP, we also need the CNOT in the opposite direction,
from the auxiliary qubit $t$ to the selected data qubit $d_j$. This
does not require a new selection procedure. Let $K:= H_d^{\otimes q}\otimes H_t$
and define
$B_e:=K A_e K$.
When $e=1$ and $c=e_j$, we have $A_e=\operatorname{CNOT}_{d_j\to t}$.
Conjugating a CNOT by Hadamards on its control and target reverses its
direction:
\begin{align}
    (H_{d_j}\otimes H_t)
    \operatorname{CNOT}_{d_j\to t}
    (H_{d_j}\otimes H_t)
    &=
    \operatorname{CNOT}_{t\to d_j}.
\end{align}
The Hadamards on the remaining data qubits cancel, so $B_e=\operatorname{CNOT}_{t\to d_j}$
whenever $e=1$ and $c=e_j$. Therefore, on this branch,
\begin{align}
    A_e B_e A_e
    &=
    \operatorname{CNOT}_{d_j\to t}
    \operatorname{CNOT}_{t\to d_j}
    \operatorname{CNOT}_{d_j\to t}=
    \operatorname{SWAP}_{d_j,t}.
\end{align}
\end{proof}

We will also use this construction to combine a one-hot bucket choice
with an activation bit. Suppose $h=e_j$ specifies the bucket chosen
by a position, and $a\in\{0,1\}$ records whether that position is
active. We need to produce a mask that equals $e_j$ when $a=1$ and
$0^q$ when $a=0$. For $|h|=1$, the opposite-direction selected CNOT above gives
exactly this transformation:
\begin{align}
    \ket a\ket h\ket{0^q}
    &\longmapsto
    \ket a\ket h\ket{ah},
    \label{eqn:onehot-mask-writing}
\end{align}
where $ah:=(ah_1,\ldots,ah_q)$. Indeed, taking $e=1$, $t=a$,
and $c=h$ in $B_e$ applies a CNOT from $a$ to the unique output
qubit selected by $h$. Thus, if $h=e_j$, the output is $e_j$ when
$a=1$ and $0^q$ when $a=0$. Importantly, this produces the entire
mask in constant depth without first making $q$ copies of $a$.

\subsubsection{Parallel Encoding and Comparison of \texorpdfstring{$Y$}{Y}-Labels}
\label{sec:real-phase-label-sharing}

The same active position may need to participate in several bucket
comparisons in parallel. If its $Y$-eigenvalue were stored in the
computational basis, we could simply copy that classical bit with
CNOT gates. Here the label is instead encoded by the orthogonal states
$\ket{+i}$ and $\ket{-i}$. Conceptually, we therefore want to copy
this binary label in the $Y$ basis, i.e. $\ket{\pm i}\longmapsto
    \ket{\pm i}^{\otimes M}$.
Because the two states are orthogonal, such a coherent repetition
encoding is allowed. The only subtlety is implementing it using real
gates, without first rotating through a complex $Y$-basis change.
The following lemma gives such an implementation.

\begin{lemma}[Repetition Encoding and Comparison of $Y$-Labels]
\label{lem:real-y-sharing}
\label{lem:real-small-width-sharing}
For $s\in\{0,1\}$, define
\begin{align}
    \ket{y_s}
    &:=
    \begin{cases}
        \ket{+i}, & s=0,\\
        \ket{-i}, & s=1.
    \end{cases}
\end{align}
For every $M\geq1$, the real isometry $V_M\ket{y_s}=\ket{y_s}^{\otimes M}$
has an exact polynomial-size implementation in $\HQNC^0$ with
Fan-Out of width $O(M)$. Moreover, there is an exact constant-depth $\HQNC$ circuit that,
given an enable bit $e$, two $Y$-eigenstates, and a comparison bit
$c$ (initialized to zero), implements
\begin{align}
    \ket e\ket{y_s}\ket{y_t}\ket0_c
    &\longmapsto
    \ket e\ket{y_s}\ket{y_t}
    \ket{e(s\oplus t)}_c,
    \label{eqn:enabled-y-comparison}
\end{align}
for every $e,s,t\in\{0,1\}$. Thus, when $e=1$, the comparison bit
records whether the two $Y$-eigenvalues disagree. When $e=0$
the circuit acts trivially.
\end{lemma}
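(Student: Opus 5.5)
The plan rests on one observation. The states $\ket{y_0}=\ket{+i}$ and $\ket{y_1}=\ket{-i}$ are complex conjugates of each other. So if a \emph{real} linear isometry $V$ satisfies $V\ket{+i}=\ket{+i}^{\otimes M}$, taking complex conjugates gives $V\ket{-i}=\ket{-i}^{\otimes M}$ automatically. It therefore suffices to build a real circuit with the correct action on $\ket{+i}$. Expanding $\ket{+i}^{\otimes M}=2^{-M/2}\sum_{x\in\{0,1\}^M} i^{|x|}\ket{x}$ and using $\ket0=(\ket{+i}+\ket{-i})/\sqrt2$ and $\ket1=-i(\ket{+i}-\ket{-i})/\sqrt2$, I will read off the required action on computational basis inputs:
\begin{align}
V_M\ket{d}
&=
2^{(1-M)/2}\sum_{|x|\equiv d \ (\mathrm{mod}\ 2)} (-1)^{\lfloor |x|/2\rfloor}\ket{x},
\qquad d\in\{0,1\}.
\end{align}
These vectors are real and orthonormal, which confirms that a real isometry suffices.

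The circuit has two stages.
\begin{enumerate}
\item \emph{Parity-constrained uniform superposition.} Apply $H$ to $M-1$ fresh ancillae. Then XOR their parity into the data qubit $d$, which becomes the last coordinate of $x$. By the Parity--Fan-Out duality \Cref{eqn:parity-fanout-hadamard-duality}, this parity is a Hadamard-conjugated Fan-Out of width $M-1$. The result is the uniform superposition over strings $x$ with $|x|\equiv d \pmod 2$.
\item \emph{Sign correction.} Apply the phase $(-1)^{\lfloor|x|/2\rfloor}$. Since $\binom{|x|}{2}\equiv\lfloor |x|/2\rfloor \pmod 2$, this phase equals $(-1)^{\sum_{i<j}x_ix_j}$, i.e.\ a CZ on every pair of positions. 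As in the proof of \Cref{lem:real-encoded-fanout}, I will Fan-Out each $x_i$ into $O(M)$ private copies, apply the $O(M^2)$ CZs (each $H$-conjugated CNOT) on disjoint copies in constant depth, and then uncompute the copies.
\end{enumerate}
This yields an exact, polynomial-size, constant-depth real circuit with Fan-Out of width $O(M)$. I still need to check the ordering convention (which qubit plays the role of $d$), since both stages are symmetric under permuting positions.

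For the enabled comparison \Cref{eqn:enabled-y-comparison}, I will run the $M=2$ encoder in reverse. Because $V_2$ is an isometry into the two-qubit space, $V_2^\dagger$ can be completed to a unitary $U$ that inverts the circuit above: a CZ, followed by recomputing the parity of the two qubits and Hadamarding the freed qubit. Since $V_2$ maps $\ket{y_s}$ onto $\ket{y_s}\ket{y_s}$, the unitary $U$ sends $\ket{y_s}\ket{y_s}$ to $\ket{y_s}\ket{0}$. The states $\ket{y_s}\ket{y_{1-s}}$ are orthogonal to the range of $V_2$, so by unitarity they must land on $\ket{\cdot}\ket1$ in the freed qubit. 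I will verify this by a direct four-case amplitude computation, which I also expect to show that $U\ket{y_s}\ket{y_t}=\ket{y_s}\ket{s\oplus t}$ up to a real sign; the uncomputation in the last step makes such a sign harmless.

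The comparison circuit is then: apply $U$, Toffoli the freed qubit together with $e$ into $c$, and apply $U^\dagger$. This uses only Hadamard, CNOT, CZ and Toffoli, so it is exact, constant depth, and acts trivially when $e=0$.

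I expect the main obstacle to be the sign bookkeeping. Specifically, I must confirm that $(-1)^{\lfloor |x|/2\rfloor}$ reproduces $i^{|x|}$ with the correct real/imaginary split for both parities $d$ (including the global factor $-i$ on the odd branch, which must be absorbed consistently so that $V_M$ is genuinely linear and real). I must also confirm that the disagreement outcome is exactly $\ket1$ with no residual amplitude on $\ket0$. If the signs misalign, I would first try replacing the pairwise-CZ phase by $(-1)^{\lfloor (|x|+d)/2\rfloor}$, implemented with one extra layer of CZs against the data qubit.
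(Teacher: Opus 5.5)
Your proposal is correct. The repetition-encoding half is the paper's construction: Hadamards on $M-1$ fresh qubits, Parity into the data qubit via Hadamard-conjugated Fan-Out, then $\prod_{i<j}\operatorname{CZ}_{i,j}$ applied on Fan-Out copies, using $\binom{|x|}{2}\equiv\lfloor |x|/2\rfloor \pmod 2$. Your complex-conjugation observation is a tidy way to see that checking $\ket{+i}$ suffices.

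The comparison half takes a different route. The paper runs a Hadamard test. It applies $H_c$, then $Y\otimes Y=-J\otimes J$ (with $J=XZ$) controlled on both $e$ and $c$, with the sign supplied by $\operatorname{CZ}_{e,c}$, then $H_c$ again, so the $Y\otimes Y$ eigenvalue is kicked back into $c$. You instead decode with $U=E_2^\dagger$, where $E_2$ is your $M=2$ encoder. You then copy the freed qubit into $c$ with one Toffoli controlled on $e$, and re-encode with $U^\dagger$.

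Your orthogonality argument already gives exactness. The span $\operatorname{span}\{\ket{y_0}^{\otimes 2},\ket{y_1}^{\otimes 2}\}$ is sent to freed qubit $\ket0$ and its complement to freed qubit $\ket1$. The four-case computation in fact gives $U\ket{y_s}\ket{y_t}=\ket{y_s}\ket{s\oplus t}$ with no sign at all, with the data slot first as in your notation. So the fallback phase you mention is unnecessary.

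Your route buys economy. The basis change is uncontrolled, the enable enters through a single Toffoli, and there is no $-J\otimes J$ sign bookkeeping. It also scales naturally: $E_M Z_j E_M^\dagger=Y_jY_d$ for each fresh qubit $j$, so $E_M^\dagger$ extracts every disagreement with the data qubit at once. The paper's route buys independence from the encoder. It directly measures a real-implementable Pauli-product observable, and would apply verbatim to any such check.
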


\begin{proof}
We first construct the repetition encoding. Append $M-1$ zero
qubits, apply Hadamards to them, XOR their Parity into the original
qubit, and then apply
\begin{align}
    K_M
    &:=
    \prod_{i<j}\operatorname{CZ}_{i,j}.
\end{align}
On computational-basis input $b\in\{0,1\}$, this produces
\begin{align}
    \frac{1}{2^{(M-1)/2}}
    \sum_{\substack{z\in\{0,1\}^M\\|z|\equiv b\pmod2}}
        (-1)^{\binom{|z|}{2}}\ket z
    &=
    V_M\ket b.
    \label{eqn:small-sharing-explicit-isometry}
\end{align}
The equality follows by expanding
$\ket{+i}^{\otimes M}$ and $\ket{-i}^{\otimes M}$ in the
computational basis. Recall that Parity is simply Hadamard-conjugated
Fan-Out~\cite{moore1999qac0}. To implement $K_M$ in constant
depth, use Fan-Out to make enough computational-basis copies of each
qubit for the pairwise CZ gates, apply those CZ gates on disjoint
registers in parallel, and uncompute the copies. Each qubit requires
only $O(M)$ copies, so Fan-Out of width $O(M)$ suffices.

We now implement the enabled comparison. Recall that
$Y\ket{y_s}=(-1)^s\ket{y_s}$. Hence the two-qubit state
$\ket{y_s}\ket{y_t}$ is an eigenstate of $Y\otimes Y$ with
eigenvalue $(-1)^{s+t}=(-1)^{s\oplus t}$. Thus, determining whether
the two $Y$-eigenvalues agree amounts to recording this sign in a qubit. To do so, introduce a comparison qubit $c$ initialized to $\ket0$. First
consider the case in which the comparison is enabled. Apply $H_c$,
apply $Y\otimes Y$ controlled by $c$, and then apply $H_c$ again.
Since the data register is an eigenstate of $Y\otimes Y$, the action
is
\begin{align}
    \ket0_c\ket{y_s}\ket{y_t}
    &\longmapsto
    \frac{\ket0_c+\ket1_c}{\sqrt2}
    \ket{y_s}\ket{y_t}
    \\
    &\longmapsto
    \frac{
        \ket0_c+(-1)^{s\oplus t}\ket1_c
    }{\sqrt2}
    \ket{y_s}\ket{y_t}
    \\
    &\longmapsto
    \ket{s\oplus t}_c\ket{y_s}\ket{y_t}.
    \label{eqn:y-label-comparison}
\end{align}
Thus the comparison qubit is $0$ when the two $Y$-eigenvalues agree
and $1$ when they disagree, while the two data qubits are unchanged. To make this comparison conditional on the enable bit $e$, we apply
the middle controlled operation only when both $e=1$ and $c=1$.
When $e=0$, the middle operation is therefore the identity, and the
two Hadamards on $c$ cancel. When $e=1$,
\Cref{eqn:y-label-comparison} applies. Consequently, the complete
transformation is, as claimed,
\begin{align}
    \ket e\ket{y_s}\ket{y_t}\ket0_c
    &\longmapsto
    \ket e\ket{y_s}\ket{y_t}
    \ket{e(s\oplus t)}_c.
\end{align}

It remains to verify that this enabled comparison uses only real,
bounded-arity gates. Define $J:=XZ$. Since $J=-iY$, we have $Y\otimes Y=-J\otimes J$.
Therefore, conditioned on $e=c=1$, the required $Y\otimes Y$
operation can be implemented by applying $J$ to each data qubit
together with the additional minus sign. The two controlled $J$
operations decompose into constant-arity controlled-$X$ and
controlled-$Z$ gates, while the minus sign is supplied by
$\operatorname{CZ}_{e,c}$. All of these gates are real and of
bounded arity. Hence the enabled comparison has an exact
constant-depth implementation in $\HQNC$.
\end{proof}

\subsubsection{Constructing the Encoded Reflection}
\label{sec:real-agreement-construction}

We now combine the preceding constructions to implement the reflection
in \Cref{prop:encoded-generalized-cz}. Local checks enforce the
inactive conditions, while the paired-bucket test checks agreement
among the active $Y$-labels. We first analyze the circuit with ideal
auxiliary gates, then compile those gates into $\HQAC$.

\begin{theorem}[Encoded Generalized Toffoli]
\label{thm:real-encoded-toffoli}
\label{thm:constant-depth-real-encoded-toffoli}
For every $r\geq1$ and $\eta\in(0,1/2)$, there is a
constant-depth size-$\operatorname{poly}(r,1/\eta)$ $\HQAC$ circuit $\widetilde O$, using $A=\operatorname{poly}(r,1/\eta)$ clean ancillae, such that
\begin{align}
    \sup_{\|\phi\|_2=1}
    \left\|
        \widetilde O(\ket\phi\ket{0^A})
        -(O_{\mathrm{Toffoli}_r}\ket\phi)\ket{0^A}
    \right\|_2
    &\leq\eta.
    \label{eqn:real-encoded-toffoli-error}
\end{align}
Here $\mathrm{Toffoli}_r$ acts on $r$ qubits, with $r-1$ controls. For $r=1$ it is $X$.
\end{theorem}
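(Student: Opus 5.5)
We reduce to the generalized CZ. Since $\mathrm{Toffoli}_r$ is $\operatorname{CZ}_r$ conjugated by $H$ on its target, and $O_H$ is exact (SWAP followed by CZ) by \Cref{prop:encoded-hadamard}, it suffices to approximate $O_{\operatorname{CZ}_r}=\sum_a\ketbra{a}{a}\otimes(I-2Q_a)$ from \Cref{prop:encoded-generalized-cz}. For $r=1$ the gate is $X$, whose encoded action is a signed label permutation and is exact. For general $r$ we implement the reflection by phase kickback on a single coherent flag. We compute a bit $g(a,b)$ that, up to small coherent error, equals $1$ exactly on $\ket{a}\otimes\operatorname{span}\{\ket{v_a^\pm}\}$ with $a\neq0$. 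We then apply $Z$ to $g$ and uncompute.

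The flag is the AND of three pieces.
\begin{enumerate}
\item The test $a\neq0$, computed with one generalized Toffoli and $X$ gates.
\item The inactive checks. For each $j$ with $a_j=0$, we rotate $b_j$ by $H$ so that $\ket{-}$ becomes $\ket1$. This rotation is controlled on $\neg a_j$, and a controlled-$H$ is a real gate obtainable from \Cref{lem:h-only-real-rotations}, or via $H=R_y(\pi/2)Z$ up to sign.
\item The active agreement condition.
\end{enumerate}
Because $\ket{v_a^\pm}$ are product states, the inactive condition becomes a computational-basis AND after these local rotations.

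For the active agreement condition, we first apply the repetition encoding $V_M$ of \Cref{lem:real-y-sharing} to each $b_j$ with $a_j=1$, producing $M=O(\log r\cdot\log(1/\eta))$ copies. This is controlled on $a_j$, so inactive positions are left alone. Then, for each scale $B\in\{1,2,4,\ldots\}$ and each of $R=O(\log(1/\eta))$ repetitions, we run one paired-bucket experiment:
\begin{enumerate}
\item Give each position a $W_{2B}$ register, prepared by \Cref{lem:real-w-state}.
\item Mask it by $a_j$ using \Cref{eqn:onehot-mask-writing}.
\item Compute bucket occupancy masks by transposing the position-by-bucket array; this is just wiring.
\item Test each bucket with $\operatorname{EXACT}_1$ (\Cref{fact:real-exact-one}).
\item For each bucket pair where both buckets are singletons, route one copy of each isolated $Y$-label to auxiliary qubits by the enabled one-hot SWAP (\Cref{lem:real-onehot-routing}).
\item Apply the enabled comparison of \Cref{lem:real-y-sharing}.
\end{enumerate}
The disagreement bit is the OR over all pairs, scales and repetitions. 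It is computed by a generalized Toffoli with De~Morgan negations. The agreement condition is its negation.

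Everything is then uncomputed in reverse, after kicking back $Z$ on $g$. One point needs care: the $W$ registers are in superposition, so the disagreement bit becomes entangled with the bucket assignments. The error analysis treats $\ket{v_a^\pm}$ and their orthogonal complement separately.
\begin{enumerate}
\item If all active labels agree, every comparison returns $0$ deterministically. The branch is then flag-clean and is uncomputed exactly.
\item If the $b$-register lies outside $\operatorname{span}\{\ket{v_a^\pm}\}$, we write it in the product basis given by the local rotations and the $\ket{\pm i}$ labels. Every such basis vector either fails an inactive check or has mixed $Y$-labels. In the mixed case, \Cref{lem:paired-bucket-disagreement} at the correct scale shows that the probability of no detection over $R$ repetitions is at most $(31/32)^R\le\eta^2/\mathrm{poly}$.
\end{enumerate}
This basis is orthonormal and $Q_a$ is diagonal in it, so the implemented map has the form $\sum_{\text{basis}}\ketbra{\beta}{\beta}\otimes U_\beta$. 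Each per-basis error is therefore small, and the errors combine in operator norm, not just on average. We verify that the deviation from $\pm1$ on each branch is $O(\sqrt{\Pr[\text{miss}]})$ after uncomputation.

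Finally, we compile. The circuit so far uses exact real rotations, from the $W$ states and the controlled $H$, and Fan-Out of width $O(\log r)$ or $O(M)$ from the repetition encoding and $\operatorname{EXACT}_1$. Every $R_y$ occurrence is replaced using \Cref{lem:h-only-real-rotations}. Every Fan-Out of width $O(\log(r/\eta))$ is replaced by \Cref{cor:inverse-poly-polylog-fanout}, adapted with $n=\mathrm{poly}(r,1/\eta)$; its only gates are Hadamard and generalized Toffoli. Each replacement gets fresh ancillae and error $\eta/(2\cdot\#\text{occurrences})$, and a telescoping hybrid bounds the total error.

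The main obstacle I expect is the coherent error analysis in the middle step. The argument must hold uniformly over inputs, including superpositions across different $a$ and across bucket assignments. It must also account for the copied $Y$-labels being restored exactly, which requires the comparisons to be non-disturbing on eigenstates. I would handle this by working in the eigenbasis above and bounding the block-diagonal deviation branch by branch.
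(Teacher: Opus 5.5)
Your outline follows the paper's proof closely. It reduces to $O_{\operatorname{CZ}_r}$, builds a flag from $a\neq0$, the inactive checks and a randomized paired-bucket agreement test, analyzes the error basis by basis in the local $X$/$Y$ basis, and compiles at the end. However, step~5 of your agreement test does not run in constant depth as written.

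\textbf{Why step 5 fails.} Your $M=O(\log r\cdot\log(1/\eta))$ equals the number of experiments, so each experiment $(u,\ell)$ has only one repetition-encoded copy of each position's $Y$-label.
\begin{itemize}
\item All $2B_\ell$ bucket-level one-hot SWAPs of \Cref{lem:real-onehot-routing} in that experiment therefore draw on the same $r$-qubit data register.
\item That circuit touches every data qubit, through the conjunctions $c_id_i$ and the Hadamard conjugation $K$ inside $B_e$. This holds even though on any branch only one qubit is actually moved.
\item So the $\Theta(r)$ bucket instances cannot share a layer, and running them in sequence costs depth $\Theta(r)$.
\end{itemize}
Giving each bucket private copies does not help. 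It would need $V_M$ with $M=\Theta(r\log(1/\eta))$, i.e.\ Parity/Fan-Out of width linear in $r$. \Cref{cor:inverse-poly-polylog-fanout} does not supply this: a constant-height tree of polylogarithmic-width gadgets reaches only polylogarithmically many copies. Indeed, linear-width Fan-Out in $\HQAC^0$ would already place Parity in $\QAC^0$.

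\textbf{The missing step.} You need a scatter step before the gather, which is exactly what the paper does.
\begin{enumerate}
\item For each experiment and each position $i$, allocate one zero slot per bucket.
\item Apply \Cref{lem:real-onehot-routing} with selector the row $c_{u,\ell,i,*}$, data register position $i$'s slots, and target $i$'s encoded copy. This moves the copy into slot $(i,h_{u,\ell}(i))$, or leaves it untouched when $a_i=0$. These instances are disjoint across $i$.
\item Bucket $j$ now owns the column of slots $(\cdot,j)$, disjoint from the other buckets. The $\operatorname{EXACT}_1$-enabled extraction and $Y$-comparison can then run on all buckets in parallel.
\end{enumerate}
The one-hot trick removes the shared-register conflict in only one direction, which is why both stages are needed. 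With this step added, the rest of your argument goes through.

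\textbf{Smaller points.} Your controlled-$H$ inactive check and your $a_j$-controlled $V_M$ are valid but unnecessary. The paper applies $H$ unconditionally, records $\neg a_i\wedge\neg b_i$, and undoes $H$ exactly. It then applies $V_M$ to every $b_i$, since inactive copies are never selected.
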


\begin{proof}
It suffices to implement $O_{\operatorname{CZ}_r}$, since exact
encoded Hadamard conjugation then gives generalized Toffoli.
The case $r=1$ is immediate, so assume $r\geq2$.

\paragraph{Construction.}
We first describe the circuit assuming access to the exact real
rotations and small Fan-Out gates used in the preceding constructions.
These auxiliary operations will be compiled into $\HQAC$ at the end.

We begin with the local condition from
\Cref{prop:encoded-generalized-cz}. For each position $i$, apply $H$
to $b_i$, record whether $a_i=0$ and $b_i=0$, and then undo the
Hadamard. Since $\ket0$ in this basis is the $+1$ $X$-eigenstate,
these flags detect exactly the forbidden inactive positions. We also
compute a flag indicating whether $a\neq0$. It remains to test whether all active positions have the same
$Y$-eigenvalue. Let $L:=\lceil\log_2 r\rceil$, and for each
$\ell\in[L]$ set $B_\ell:=2^\ell$. We use
$R:=\lceil64\ln(4/\eta)\rceil$ independent repetitions at each bucket
count and set $M:=RL$, the total number of bucket assignments in which
each position participates. For every repetition $u\in[R]$, choice $\ell\in[L]$, and position
$i\in[r]$, prepare an ideal $\ket{W_{2B_\ell}}$. Its unique occupied
location assigns position $i$ uniformly to one of $2B_\ell$ buckets,
which we pair as $(1,2),(3,4),\ldots,(2B_\ell-1,2B_\ell).$
All bucket assignments remain coherent. No measurement is performed.

For each position $i$, the activation bit $a_i$ is used in every bucket
assignment $(u,\ell)$. We therefore use Fan-Out to create one private
copy of $a_i$ for each such assignment, allowing all corresponding
bucket-membership registers $c_{u,\ell,i,*}$ to be computed in
parallel on disjoint controls. Whenever an intermediate
bucket-membership or enable bit is needed by several tests in the same
layer, we similarly create the required private copies beforehand.
All of these copies are uncomputed when the construction is reversed.

We next ignore the bucket choices of inactive positions. Using
\Cref{eqn:onehot-mask-writing}, define
\begin{align}
    c_{u,\ell,i,j}
    &:=
    a_i[h_{u,\ell}(i)=j].
    \label{eqn:real-active-bucket-mask}
\end{align}
Thus, if position $i$ is active, the row $c_{u,\ell,i,*}$ is one-hot
and records its chosen bucket. If $i$ is inactive, the row is zero. The same $Y$-label may be needed in several bucket assignments, so
apply the repetition encoding $V_M$ from
\Cref{lem:real-y-sharing} to every $b_i$, reserving one encoded qubit
for each pair $(u,\ell)$. For a fixed $(u,\ell)$ and position $i$,
allocate one zero-initialized slot for each bucket and use
\Cref{lem:real-onehot-routing} to SWAP the corresponding encoded qubit
into the slot selected by $c_{u,\ell,i,*}$. After this step, different
buckets occupy disjoint registers.

For each bucket $j$, apply $\operatorname{EXACT}_1$ to the column
$c_{u,\ell,*,j}$ to determine whether that bucket contains exactly one
active position. For each paired pair of buckets, enable the next step
only if both exact-one tests succeed. In that case, use
\Cref{lem:real-onehot-routing} to extract the unique data qubit from
each bucket into a designated auxiliary qubit, and compare their
$Y$-eigenvalues using \Cref{eqn:enabled-y-comparison}. If either
bucket is not a singleton, the enabled SWAPs and comparison act
trivially. A comparison bit equal to one therefore witnesses two active positions
with opposite $Y$-eigenvalues. Different bucket pairs use disjoint
registers, and different $(u,\ell)$ assignments use different
repetition-encoded qubits, so all comparisons can be carried out in
parallel.

Finally, apply a minus sign exactly when $a\neq0$, every inactive
check passes, and every comparison bit is zero. One generalized CZ,
with the appropriate $X$ conjugations, implements this condition. We
then reverse the entire computation. The test ancilla workspace is erased exactly for each fixed bucket assignment. The coherent bucket registers are restored only approximately, with the cleanup error bounded below.
\paragraph{Error bound.}
Fix $a$ and use the $X$ basis at inactive positions and the $Y$
basis at active positions. For each fixed bucket assignment, the
circuit applies only a sign in this basis and uncomputes its test
ancillae. The sign can be wrong only when the inactive checks pass
but every repetition misses an existing disagreement. In this case,
$m:=|S(a)|\geq2$, and some chosen $B_\ell$ satisfies
$m\leq B_\ell<2m$. By \Cref{lem:paired-bucket-disagreement},
each repetition detects disagreement with probability at least
$1/32$, so missing it in all repetitions has probability at
most
\begin{align}
    p_{\mathrm{miss}}
    &:=(31/32)^R.
    \label{eqn:real-disagreement-miss-probability}
\end{align}

To extend this bound to arbitrary inputs, write
$\ket\phi=\sum_z\alpha_z\ket z$, where $z$ indexes these basis
states over all choices of $a$. Let $d_h(z)$ and $d_*(z)$
denote the implemented and desired signs for bucket assignment $h$.
After uncomputing the test but before unpreparing the bucket
registers, orthogonality gives squared error
\begin{align}
    \sum_z|\alpha_z|^2
    \sum_h\Pr[h]\,
        |d_h(z)-d_*(z)|^2
    &\leq4p_{\mathrm{miss}}.
\end{align}
The signs differ by magnitude 2 only on missed assignments.
Unpreparing the bucket registers preserves the norm, so the choice
$R=\lceil64\ln(4/\eta)\rceil$ gives
\begin{align}
    \sup_{\|\phi\|_2=1}
    \left\|
        C_{\mathrm{id}}(\ket\phi\ket{0^A})
        -(O_{\operatorname{CZ}_r}\ket\phi)\ket{0^A}
    \right\|_2
    &\leq2(31/32)^{R/2}
    \leq\frac{\eta}{2}.
    \label{eqn:real-coherent-agreement-bound}
\end{align}

\paragraph{Resources and compilation.}
Set $N:=\lceil(r+2)/\eta\rceil$. Since $R,L=O(\log N)$,
the repetition encoding and activation-bit copies require Fan-Out
of width $M=RL=O(\log^2N)$. The exact-one tests, mask copies, and
$W$-state preparations require only logarithmic-width Fan-Out.
The data slots occupy $rR\sum_{\ell=1}^L2B_\ell=O(r^2R)$
qubits, and all gate and ancilla counts are polynomial in $N$.
All repetitions, bucket counts, and bucket pairs run in parallel,
giving constant-depth with the ideal auxiliary gates.

Let $P\leq\operatorname{poly}(N)$ count the real-rotation and
Fan-Out occurrences, including those in uncomputation. By
\Cref{lem:h-only-real-rotations} and \Cref{cor:inverse-poly-polylog-fanout},
each can be replaced by a constant-depth $\HQAC$ circuit with
clean-input error $\rho:=\eta/(2\max\{1,P\})$ and polynomial
cost. Give every occurrence fresh clean ancillae, including inverse
occurrences. A telescoping hybrid bounds the total compilation error
by $P\rho\leq\eta/2$. It compares replacements on ideal
intermediate states and therefore does not require approximate
states to satisfy the one-hot promises.

Together with \Cref{eqn:real-coherent-agreement-bound}, this gives
total clean-input error at most $\eta$. Exact encoded Hadamard
conjugation converts the resulting implementation of
$O_{\operatorname{CZ}_r}$ into the required encoded generalized
Toffoli.
\end{proof}

\subsection{Completing the Real Circuit Simulation}
\label{sec:real-simulation-completion}

All required encoded gates are now available. Replacing them
layer-by-layer completes the real simulation, since the Pauli encoding
preserves disjoint gate supports.

\begin{proof}[Proof of \Cref{thm:h-only-depth-preserving-simulation}]
Let $C$ be a depth-$d$ $\UQAC$ circuit with gate count and width
at most $s$. By \Cref{lem:real-pauli-encoding}, its
computational-basis input and original ancillae can be Pauli-encoded
in constant depth. Exact replacement of every gate $U$ by $O_U$
would propagate the encoding and preserve the acceptance probability
under the local readout of that lemma.

Implement encoded Hadamard and Pauli gates exactly. For encoded $T$
and generalized Toffoli, apply
\Cref{cor:h-only-encoded-t}  and \Cref{thm:real-encoded-toffoli} with
clean-input error $\eta:=\varepsilon/(2s)$, as well as fresh clean
ancillae per occurrence. These guarantees remain valid on registers
entangled with the rest of the computation. Thus, if $\ket{\Psi_x}$
denotes the ideal encoded output, including zero-initialized
implementation ancillae, and $\ket*{\widetilde\Psi_x}$ the actual
output, a telescoping hybrid gives
\begin{align}
    \|\ket*{\widetilde\Psi_x}-\ket{\Psi_x}\|_2
    &\leq s\eta=\frac{\varepsilon}{2}.
\end{align}
Applying the encoded output effect $M_\Pi$, tensored with identity
on the implementation ancillae, yields
\begin{align}
    |p_{\widetilde C}(x)-p_C(x)|
    &\leq
    2\|\ket*{\widetilde\Psi_x}-\ket{\Psi_x}\|_2
    \leq\varepsilon.
\end{align}
Replacements of disjoint gates act on disjoint label registers and
use separate ancillae, so every original layer incurs only
constant-factor depth overhead. Including input preparation and
readout, the resulting $\HQAC$ circuit has depth $O(d+1)$ and
gate and ancilla counts $\operatorname{poly}(s,1/\varepsilon)$.
\end{proof}
\noindent The same simulation applies immediately to circuits with primitive
Fan-Out. In the target $\HQAC_f$ model, the auxiliary Fan-Out gates
used above may be retained as primitives, while each Fan-Out gate of
the original circuit is replaced exactly by its encoded implementation
from \Cref{lem:real-encoded-fanout}. Hence the depth and error bounds
of \Cref{thm:h-only-depth-preserving-simulation} are unchanged for
$\UQAC_f\to\HQAC_f$.
These two cases were proved independently of Threshold simulation.
The $\UQAC_f$ case can therefore be used in
\Cref{cor:real-threshold-simulation}, which gives the remaining
$\UQTC$ and $\UQTC_f$ cases of
\Cref{thm:h-only-depth-preserving-simulation}, with target
$\HQTC\subseteq\HQTC_f$.

\subsection{Bounded-Arity Circuits and Reusable Catalysts}
\label{sec:real-bounded-fanout}

The preceding $\UQAC\to\HQAC$ simulation does not preserve bounded-arity. Specifically, its implementation of the encoded generalized Toffoli gate itself requires the use of
generalized Toffoli gates. Thus, even when the source circuit lies in
$\UQNC$, that construction only yields an $\HQAC$ circuit. To obtain
a $\UQNC\to\HQNC$ simulation, we instead exploit the fact that every
multiqubit gate now has constant-arity.

In this setting, encoded Hadamard, Pauli, and Toffoli gates all admit
exact constant-depth implementations using bounded-arity real gates.
The only remaining difficulty is encoded $T$. Synthesizing its real
rotation separately in every circuit layer would introduce a
nonconstant overhead per layer. Instead, we prepare a bank of
Hadamard-eigenstate catalysts once and reuse the same states throughout
the computation. The resulting $O(\log\log((s+2)/\varepsilon))$ preparation
cost is therefore additive rather than multiplicative in the circuit
depth.

\begin{theorem}[Bounded-Arity Real Decision Simulation]
\label{thm:bounded-arity-real-decision-simulation}
Let $C$ be a depth-$d$ $\UQNC$ circuit, and let $s\geq2$ bound its gate count and width. For every $\varepsilon\in(0,1/2)$, there is an
$\HQNC$ circuit $\widetilde C$ satisfying, for every input $x$,
\begin{align}
    |p_{\widetilde C}(x)-p_C(x)|
    &\leq \varepsilon.
\end{align}
The circuit is depth-$O\!\left(d+\log\log(s/\varepsilon)\right)$
and size-$\operatorname{poly}(s,\log(1/\varepsilon))$.
The same statement holds for $\UQNC_f$ and $\HQNC_f$.
\end{theorem}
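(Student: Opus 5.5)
We will follow the proof of \Cref{thm:h-only-depth-preserving-simulation}: Pauli-encode the input in depth one (\Cref{lem:real-pauli-encoding}), replace each gate $U$ of each layer $L_t$ by its Pauli-transfer matrix $O_U$ on the $2r$ label qubits of its support, and read out with the constant-size effect $M_\Pi$. Every gate except $T$ is handled exactly. $H$ is a SWAP followed by a CZ (\Cref{prop:encoded-hadamard}), and $X$, CNOT and Toffoli act on at most three original qubits, so $O_U$ is a fixed real orthogonal matrix on at most six label qubits. We should not approximate this matrix. Since $X$, CNOT and Toffoli are Clifford$+$CCZ-type permutation gates, we expect $O_U$ to be a signed permutation, or close to it, but the plan does not rely on that. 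For Toffoli we instead use the reflection form of \Cref{prop:encoded-generalized-cz} with $r=3$: the two vectors $\ket{v_a^\pm}$ are fixed product states, so $I-2Q_a$ for each of the constantly many $a$ is a fixed real reflection on $O(1)$ qubits. The real gate set $\{H,\mathrm{CNOT},\mathrm{Toffoli}\}$ is not exactly universal for all real orthogonal matrices, so for these fixed constant-size gates we will check directly that the reflection entries lie in $\mathbb Z[1/\sqrt2]$, and then invoke exact synthesis over Hadamard--Toffoli. Alternatively we decompose CCZ into its $\mathcal A_3$ phase form, whose encoding is a label permutation composed with a reflection whose eigenvectors involve only $\ket{\pm}$ and $\ket{\pm i}$ rotated by $H$. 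Either way every non-$T$ gate costs $O(1)$ depth. Encoded Fan-Out in the $\UQNC_f$ case is exact by \Cref{lem:real-encoded-fanout}.

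The remaining gate is encoded $T$, $O_T=\Lambda(R_y(\pi/2))$ (\Cref{prop:encoded-t}). Synthesizing it per occurrence via \Cref{cor:h-only-encoded-t} would cost $O(\log\log(s/\varepsilon))$ per layer, which is multiplicative in $d$. We will use a reusable catalyst instead, mirroring \Cref{thm:catalytic-universalization}. The useful fact is $R_y(\pi/2)=HZ\cdot(\text{real rotation})$; more usefully, $R_y(\pi/4)$ has real eigenvectors. Note that $\Lambda(R_y(\pi/2))$ is a real gate of angle $\pi/2$, equivalent under real Clifford conjugation (by $H$-type gates) to the controlled real rotation $\mathrm{diag}$ in the eigenbasis of $H$. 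Concretely, $\ket{h}=\cos(\pi/8)\ket0+\sin(\pi/8)\ket1$ is a $+1$ eigenstate of $H$, and we use it as a magic resource.

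The standard real gadget consumes $\ket h$ with a controlled-$H$-like Clifford$+$Toffoli circuit and produces $\Lambda(R_y(\pi/2))$. We will choose the gadget so that it \emph{returns} the catalyst exactly: a controlled-$H$ applied to $\ket h$ kicks back no phase, and a controlled-SWAP routing between the data and the catalyst implements the rotation. Concretely, we will write $\Lambda(R_y(\pi/2))$ up to exact $\HQNC$ Clifford/Toffoli corrections as a product of two controlled-$H$ gates, each implemented as (Toffoli-controlled SWAP into the catalyst) $\cdot$ $\Lambda\Lambda(H)$ $\cdot$ uncompute, in the style of Kim--Laakkonen. We expect this step, finding an exact constant-depth bounded-arity gadget with $\ket h$ or a small real eigenstate catalyst, to be the main obstacle. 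If the direct $\ket h$ gadget fails, the fallback is a real analogue of the $\ket{\psi_1}$ construction of \Cref{sec:kim-comparison}, using the real/imaginary doubling only on the catalyst register.

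We prepare $w\le s$ catalysts once, one per simultaneous $T$ in a layer. Each $\ket h$ is $R_y(\pi/4)\ket0$, synthesized by the bounded-arity part of \Cref{lem:h-only-real-rotations} to error $\eta=\varepsilon/(8w)$ in depth $O(\log\log(s/\varepsilon))$ and size $\poly\log(1/\eta)$ each. We then run the online encoded computation and uncompute the factory at the end. The error analysis copies Step~3 of \Cref{thm:catalytic-universalization}. The online circuit is exact given ideal catalysts, so the two-term insertion bound gives total state error $2\sqrt2\,\xi\le\varepsilon/2$, independent of how often the catalysts are reused, and readout changes $p$ by at most twice that. The depth is $O(d)$ online plus $O(\log\log(s/\varepsilon))$ for preparation. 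The size is $O(s)$ online gates plus $w\,\poly\log(1/\varepsilon)$, i.e.\ $\poly(s,\log(1/\varepsilon))$. The $\UQNC_f\to\HQNC_f$ case adds only the exact encoded Fan-Out gadgets.
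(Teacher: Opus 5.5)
Your outer framework matches the paper's proof. You Pauli-encode the input and implement encoded $H$, the Paulis, Fan-Out and the constant-size encoded Toffoli exactly; the Toffoli case goes by exact synthesis, since its Pauli-transfer entries $2^{-3}\operatorname{Tr}(P\,UQU^\dagger)$ are dyadic. You prepare $w\le s$ copies of $\ket{h_+}=R_y(\pi/4)\ket0$ once, in depth $O(\log\log(s/\varepsilon))$, and reuse them so the preparation error is paid only once.

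The gap is exactly the step you flag as the main obstacle: the exact catalytic implementation of $O_T$. The gadget you sketch for it does not work. You implement each controlled-$H$ as a controlled SWAP into the catalyst followed by $\Lambda\Lambda(H)$. A doubly controlled Hadamard is not an $\HQNC$ gate, and it is at least as hard as the controlled-$H$ you are trying to build, so the step is circular. The phase-kickback framing is also the wrong mechanism: a controlled-$H$ acting on its $+1$ eigenstate kicks back nothing, so kickback from $\ket{h_+}$ cannot produce the rotation. Two side remarks: $R_y(\pi/2)=HZ$ exactly, with no extra rotation factor; and $R_y(\pi/4)$ does not have real eigenvectors (its eigenvectors are $\ket{\pm i}$), whereas $H$ does have the real $+1$ eigenvector $\ket{h_+}$. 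Your fallback, a ``real analogue'' of $\ket{\psi_1}$ with real/imaginary doubling on the catalyst, is too unspecified to close the gap.

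The missing idea is short. Since $R_y(\pi/2)=HZ$, you have $O_T=\Lambda(HZ)=\Lambda(H)\,\mathrm{CZ}$, so one controlled Hadamard plus an exact CZ suffices. Implement $\Lambda(H)$ as $S_{c;t,a}\,H_a\,S_{c;t,a}$. Here $S_{c;t,a}$ is a SWAP of the target $t$ with the catalyst wire $a$, controlled by $c$; it is a constant-size CNOT/Toffoli circuit. The middle $H_a$ is \emph{uncontrolled}. If $c=0$, $H_a$ acts on its $+1$ eigenstate $\ket{h_+}$ and does nothing. If $c=1$, the target's state sits on wire $a$, receives $H$, and is swapped back while $\ket{h_+}$ returns to wire $a$. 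The catalyst factor is identical in both branches, so it is returned exactly and unentangled even on entangled inputs, which is what licenses reuse across layers. With this gadget the rest of your argument goes through as in the paper. For the decision readout you also do not need the two-term insertion bound or to uncompute the factory. By unitarity the final state is within $w\eta$ of the ideal one, so the acceptance probability moves by at most $2w\eta$.
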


\begin{proof}
We use the Pauli encoding and readout of
\Cref{lem:real-pauli-encoding}. We first observe that all encoded
gates other than $T$ can be implemented exactly in constant depth
with bounded-arity real gates. Encoded Hadamard and Pauli gates are already exact. Consider next a
Toffoli gate $F$ of fixed arity $r$. Its Pauli-transfer matrix
$O_F$ has constant dimension, since $r=O(1)$, and its entries are
dyadic rationals. Exact synthesis of dyadic orthogonal matrices
therefore gives a constant-size circuit over
$\{H,X,\operatorname{CNOT},\operatorname{Toffoli}\}$ with clean
ancillae~\cite[Section~2 and Corollary~1]{amy2023improved}.
Hence every encoded bounded-arity Toffoli also has an exact
constant-depth $\HQNC$ implementation.

It remains to implement encoded $T$. Recall that
$O_T=\Lambda(HZ)=\Lambda(H)\operatorname{CZ}.$
The CZ factor is already available exactly, so it suffices to implement
controlled-$H$. For this, define the $+1$ Hadamard eigenstate
\begin{align}
    \ket{h_+}
    &:=
    \cos(\pi/8)\ket0+\sin(\pi/8)\ket1=R_y(\pi/4)\ket0,
    \label{eqn:real-hadamard-catalyst}
\end{align}
such that $ H\ket{h_+}=\ket{h_+}$.
Let $S_{c;t,a}$ denote a SWAP between a target qubit $t$ and a
catalyst qubit $a$, controlled by $c$. Then
\begin{align}
    S_{c;t,a}H_aS_{c;t,a}
    \bigl(\ket\varphi_{ct}\ket{h_+}_a\bigr)
    &=
    \bigl(\Lambda(H)\ket\varphi_{ct}\bigr)
    \ket{h_+}_a.
    \label{eqn:real-controlled-h-catalysis}
\end{align}
Indeed, when $c=0$, the Hadamard acts on its $+1$ eigenstate and
does nothing. When $c=1$, the controlled SWAPs move the target state
onto the catalyst wire, apply $H$, and swap it back. In either case,
the catalyst is returned exactly to $\ket{h_+}$. Since
controlled-SWAP has a constant-size bounded-Toffoli decomposition,
this gives an exact constant-depth implementation of $O_T$ once the
catalyst is available.

The key point is that the same catalyst can be reused. Let $w$ be the
maximum number of $T$ gates appearing in any single layer of $C$.
Since $\ket{h_+}=R_y(\pi/4)\ket0,$
we prepare $w$ approximate copies of $\ket{h_+}$ in parallel using
the bounded-arity real-rotation synthesis of
\Cref{lem:h-only-real-rotations}. We assign one catalyst to each $T$
gate in a layer. Since the ideal controlled-$H$ construction returns
each catalyst unchanged, the same bank can be reused in every
subsequent layer. Thus the nonconstant catalyst-preparation depth is
incurred only once, rather than once per circuit layer.

Using \Cref{lem:h-only-real-rotations}, prepare each catalyst with
joint state error $\eta := \varepsilon/(2w)$,
with no catalyst required when $w=0$. The entire catalyst bank is then
within $w\eta\leq\varepsilon/2$ of the ideal bank. The remainder of
the simulation applies the same unitary to the ideal and approximate
banks. Since unitary evolution preserves Euclidean distance, this
initial error does not grow when the catalysts are reused across
layers. Consequently, the final acceptance probability differs from
the ideal encoded computation by at most $\varepsilon$.

The $w$ catalysts are prepared in parallel, so their preparation depth
is $O(\log\log(w/\varepsilon))$. After preparation, each original gate has an exact constant-depth
encoded implementation, and gates from the same original layer can be
simulated in parallel using separate workspace. Hence each layer incurs
only constant-factor depth overhead. Since $w\leq s$, the total depth
is $O\!\left(d+\log\log(s/\varepsilon)\right)$, with size and ancillae count
$\operatorname{poly}(s,\log(1/\varepsilon))$.

For $\UQNC_f$, the same argument applies. Primitive Fan-Out remains
available in the target $\HQNC_f$ model, and each Fan-Out gate of the
original circuit has the exact encoded implementation from
\Cref{lem:real-encoded-fanout}. Thus the depth and error bounds are
unchanged.
\end{proof}

\subsection{Complexity Consequences}
\label{sec:real-hierarchy-consequences}

We now translate the preceding simulations into relations among the
standard, universal, and real shallow-depth hierarchies. Throughout
this subsection, the classes are \emph{bounded-error decision classes}:
$\mathcal K[D]$ denotes the languages decided with bounded error by
polynomial-size circuit families in model $\mathcal K$ of depth
$O(D(n))$, with one designated output qubit measured in the
computational basis. Some of the standard-to-universal simulations
proved earlier hold at the unitary level, but the realification
theorems need only preserve the acceptance probability. Accordingly,
all equalities below that involve a real class are statements about
decision power.

The wide-gate simulations incur only constant-factor depth overhead,
so they preserve every depth bound $D(n)\geq1$. The bounded-arity
simulation additionally requires a one-time
$O(\log\log n)$ catalyst-preparation depth, which is absorbed whenever
$D(n)=\Omega(\log\log n)$. When needed, we first amplify the original
bounded-error computation by a constant number of parallel
repetitions, leaving constant slack for the subsequent approximation.

We begin with the models containing generalized Toffoli, Fan-Out, or
Threshold gates, where the real simulation is depth preserving.

\begin{corollary}[Real Fan-Out and Threshold Hierarchies]
\label{cor:real-wide-hierarchies}
\label{cor:hqacf-constant-depth-collapse}
For every depth bound $D(n)\geq1$,
\begin{align}
    \QAC_f[D]
    &=
    \UQAC_f[D]
    =
    \HQAC_f[D]
    =
    \QTC[D]
    =
    \UQTC[D]
    \notag\\
    &=
    \HQTC[D]
    =
    \QTC_f[D]
    =
    \UQTC_f[D]
    =
    \HQTC_f[D].
    \label{eqn:real-wide-hierarchies}
\end{align}
\end{corollary}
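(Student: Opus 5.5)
The plan is to close a cycle of inclusions among the nine classes, using only results already established. Trivial inclusions from gate sets give, for each model, $\HQ\!\cdot[D]\subseteq\UQ\!\cdot[D]\subseteq\Q\!\cdot[D]$: for instance $\HQAC_f[D]\subseteq\UQAC_f[D]\subseteq\QAC_f[D]$, and the same chains for $\QTC$ and $\QTC_f$. Widening the multiqubit primitives gives
\[
\mathcal K\in\{\mathsf{H},\mathsf{U},\text{standard}\}:\quad \mathcal K\textsf{AC}_f[D]\subseteq\mathcal K\textsf{TC}_f[D]\quad\text{and}\quad \mathcal K\textsf{TC}[D]\subseteq\mathcal K\textsf{TC}_f[D].
\]
It therefore suffices to show that the largest class, $\QTC_f[D]$, is contained in each of the two smallest "ends," $\HQAC_f[D]$ and $\HQTC[D]$. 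We also use that $\HQAC_f[D]\subseteq\HQTC_f[D]$ collapses back into $\HQTC[D]$.

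\textbf{Step 1: $\QTC_f[D]\subseteq\HQTC[D]$.} Start with a bounded-error $\QTC_f$ family of depth $O(D)$. Amplify it by constant parallel repetition and reversible majority so that the gap is $9/10$ versus $1/10$. By \Cref{cor:standard-fanout-hierarchy} (Takahashi--Tani, \Cref{fact:takahashi-tani-simulation}), rewrite it exactly as a $\QNC_f\subseteq\QAC_f$ circuit of depth $O(D)$. Then apply \Cref{lem:direct-unitary-compilation} (the $\QAC_f$ case, with constant depth overhead and inverse-polynomial or small constant error) to obtain a $\UQAC_f\subseteq\UQTC_f$ circuit of depth $O(D)$. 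Finally, invoke \Cref{cor:real-threshold-simulation} with error $1/20$. This yields an $\HQTC$ circuit of depth $O(D+1)=O(D)$ and polynomial size, which preserves the gap at $17/20$ versus $3/20$.

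\textbf{Step 2: $\QTC_f[D]\subseteq\HQAC_f[D]$.} Use the same first two steps to obtain a $\UQAC_f$ circuit of depth $O(D)$. Then apply the $(\UQAC_f,\HQAC_f)$ case of \Cref{thm:h-only-depth-preserving-simulation} with constant error, keeping Fan-Out as a primitive. In that theorem, encoded Fan-Out is exact by \Cref{lem:real-encoded-fanout} and encoded generalized Toffoli is given by \Cref{thm:real-encoded-toffoli}. This step is independent of the Threshold simulation, so the reasoning is not circular.

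Combining the two steps: every class in \eqref{eqn:real-wide-hierarchies} contains either $\HQAC_f[D]$ or $\HQTC[D]$, and each of these lies below $\QTC_f[D]$, which Steps 1 and 2 push back down into both. Hence all nine classes coincide.

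The main obstacle is bookkeeping rather than new mathematics. First, the simulation steps must all incur only constant-factor depth and polynomial size, so that $O(D)$ is preserved for every $D(n)\geq1$, including constant depth. In particular, we must avoid \Cref{cor:qnc-f-universalization}, which has an additive $\log\log n$ cost. Second, the accumulated error across the three simulation stages must stay below the amplified slack, which we handle by allotting each stage error $1/60$.
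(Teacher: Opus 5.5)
Your argument is correct and uses the same ingredients as the paper: Takahashi--Tani, the constant-overhead fixed-gate compilation of $\QAC_f$, the $\UQAC_f\to\HQAC_f$ realification, and Grier--Morris Fan-Out inside $\HQTC$. The only difference is arrangement: you sandwich every class between two real "ends," while the paper uses a single chain $\HQAC_f[D]\subseteq\HQTC[D]\subseteq\UQTC[D]\subseteq\QTC[D]\subseteq\QTC_f[D]=\QAC_f[D]=\HQAC_f[D]$. Your Step~1 is therefore redundant, since once Step~2 gives $\QTC_f[D]\subseteq\HQAC_f[D]$, replacing each primitive Fan-Out via \Cref{fact:real-threshold-fanout} yields $\HQAC_f[D]\subseteq\HQTC[D]$ directly, which is exactly what \Cref{cor:real-threshold-simulation} does internally.
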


\begin{proof}
The fixed-gate compilation of
\Cref{thm:qac-fixed-gate-simulation} extends to $\QAC_f$ by leaving
primitive Fan-Out gates unchanged. Applying the real simulation of
\Cref{thm:h-only-depth-preserving-simulation} then gives
\begin{align}
    \QAC_f[D]
    &=
    \UQAC_f[D]
    =
    \HQAC_f[D].
\end{align}
It remains to compare these classes with the Threshold models.
Generalized Toffoli is itself a Threshold gate, while
\Cref{fact:real-threshold-fanout} implements Fan-Out in $\HQTC$
with sufficiently small inverse-polynomial error. Hence
\begin{align}
    \HQAC_f[D]
    &\subseteq
    \HQTC[D].
\end{align}
Conversely, the Takahashi--Tani simulations of
\Cref{fact:takahashi-tani-simulation} give
$\QTC_f[D]=\QAC_f[D]$. Therefore
\begin{align}
    \HQAC_f[D]
    \subseteq
    \HQTC[D]
    \subseteq
    \UQTC[D]
    \subseteq
    \QTC[D]
    \subseteq
    \QTC_f[D]
    =
    \HQAC_f[D],
\end{align}
so every inclusion is an equality. The Fan-Out variants
$\UQTC_f[D]$ and $\HQTC_f[D]$ lie between the same endpoints and
are equal as well.
\end{proof}
\noindent Without primitive Fan-Out or Threshold gates, the generalized-Toffoli
simulation gives a separate collapse.

\begin{corollary}[Real Generalized-Toffoli Hierarchy]
\label{thm:uqac-hqac-collapse}
\label{cor:qac-uqac-hqac-collapse}
For every depth bound $D(n)\geq1$,
\begin{align}
    \QAC[D]
    &=
    \UQAC[D]
    =
    \HQAC[D].
    \label{eqn:uqac-hqac-collapse}
\end{align}
In particular,
$\QAC^k=\UQAC^k=\HQAC^k$ for every fixed integer $k\geq0$.
\end{corollary}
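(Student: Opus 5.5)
The plan is to prove the chain of inclusions
\[
\QAC[D]\subseteq\UQAC[D]\subseteq\HQAC[D]\subseteq\QAC[D].
\]
The final inclusion is immediate from the gate sets. $\{H,X\}$ is a subset of the arbitrary single-qubit gates, and the multiqubit primitives of $\HQAC$ and $\QAC$ coincide. The first inclusion is exactly the bounded-error decision statement of \Cref{cor:qac-uqac-collapse}, which rests on \Cref{thm:qac-fixed-gate-simulation}: every single-qubit gate is compiled to $\{H,T\}$ plus generalized Toffoli with constant-factor depth overhead, and the auxiliary polylogarithmic Fan-Out is removed. So only the middle inclusion $\UQAC[D]\subseteq\HQAC[D]$ needs an argument, and it comes from the $(\UQAC,\HQAC)$ case of \Cref{thm:h-only-depth-preserving-simulation}.

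For the middle step, take $L\in\UQAC[D]$ decided by a polynomial-size family $\{C_n\}$ of depth $O(D(n))$ with completeness $2/3$ and soundness $1/3$. First amplify as in the proof of \Cref{cor:shallow-restricted-gate-simulations}. Run a constant number of parallel copies, each using its own copy of the input and ancillae, and compute a reversible majority of their output bits with generalized Toffoli and $X$ gates. Copying the classical input and computing majority on constantly many bits both cost constant depth and only constant-factor size. The result is a $\UQAC$ family of depth $O(D(n)+1)=O(D(n))$, since $D\geq1$, with acceptance probabilities at least $9/10$ on YES instances and at most $1/10$ on NO instances. Let $s(n)=\poly(n)$ bound its gate count and width.

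Next apply \Cref{thm:h-only-depth-preserving-simulation} with a constant error, say $\varepsilon=1/20$. This yields an $\HQAC$ circuit of depth $O(D(n)+1)=O(D(n))$ and size $\poly(s,20)=\poly(n)$ whose acceptance probability is within $1/20$ of the amplified circuit on every input $x$. The gap is therefore preserved: YES instances are accepted with probability at least $17/20\geq2/3$, and NO instances with probability at most $3/20\leq1/3$. Because the simulation theorem holds uniformly for every circuit with parameters depending only on $s$ and $\varepsilon$, the resulting nonuniform family is polynomial-size. This gives $L\in\HQAC[D]$.

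I expect no real obstacle, since the heavy lifting is in \Cref{thm:real-encoded-toffoli} and \Cref{cor:h-only-encoded-t}. The one point to check is that the real simulation never leaves $\HQAC$. Concretely, every auxiliary Fan-Out and real rotation introduced by the encoded $T$ and encoded generalized-Toffoli gadgets must be compiled away, via \Cref{cor:inverse-poly-polylog-fanout} and \Cref{lem:h-only-real-rotations}, rather than kept as a primitive. That theorem's statement already guarantees this. Finally, the case $\QAC^k=\UQAC^k=\HQAC^k$ follows by taking $D(n)=\log^k n$.
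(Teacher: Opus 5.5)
Your proposal is correct and follows essentially the same route as the paper. The paper likewise obtains $\QAC[D]=\UQAC[D]$ from the depth-preserving universalization corollary, gets $\UQAC[D]\subseteq\HQAC[D]$ from constant-error amplification followed by the $(\UQAC,\HQAC)$ case of the depth-preserving real simulation theorem, and takes the reverse inclusions from the gate sets; your explicit amplification bookkeeping (thresholds $17/20$ and $3/20$ after an error-$1/20$ realification) fills in details the paper leaves implicit.
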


\begin{proof}
The equality $\QAC[D]=\UQAC[D]$ follows from
\Cref{cor:qac-uqac-collapse}. After constant-error amplification,
the $\UQAC\to\HQAC$ case of
\Cref{thm:h-only-depth-preserving-simulation} gives the second
equality with only constant-factor depth overhead. The reverse
inclusions follow from the gate sets.
\end{proof}

\noindent In particular, restricting the single-qubit gate set does not change
the constant-depth Parity problem:
\begin{align}
    \mathrm{Parity}\notin\QAC^0
    \quad\Longleftrightarrow\quad
    \mathrm{Parity}\notin\UQAC^0
    \quad\Longleftrightarrow\quad
    \mathrm{Parity}\notin\HQAC^0.
    \label{eq:parity-lower-bound-transfer}
\end{align}
Thus, for bounded-error decision lower bounds against $\QAC^0$, it
suffices to consider constant-depth circuits containing only $H$, $X$,
and generalized Toffoli gates.

We next turn to bounded-arity circuits. Here the real simulation has
an additive $O(\log\log n)$ catalyst-preparation cost, so it preserves
the asymptotic depth bound once $D(n)=\Omega(\log\log n)$.

\begin{corollary}[Real Bounded-Arity Hierarchies]
\label{cor:real-qnc-hierarchies}
For every depth bound $D(n)=\Omega(\log\log n)$,
\begin{align}
    \UQNC[D]
    =
    \HQNC[D]
    \quad
    \text{and} \quad
    \UQNC_f[D]
    =
    \HQNC_f[D].
    \label{eqn:real-qnc-hierarchies}
\end{align}
\end{corollary}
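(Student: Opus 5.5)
The argument is the same bounded-error transfer used in \Cref{cor:qac-uqac-hqac-collapse}, with \Cref{thm:bounded-arity-real-decision-simulation} in place of the wide-gate realification. The containments $\HQNC[D]\subseteq\UQNC[D]$ and $\HQNC_f[D]\subseteq\UQNC_f[D]$ follow directly from the gate sets. The single-qubit gates $\{H,X\}$ are a subset of $\{H,T,X\}$, and the multiqubit primitives are identical. The only gate needing comment is the ordinary three-qubit Toffoli retained in $\HQNC$, which is already among the bounded-arity primitives of $\UQNC$. So it suffices to prove the forward inclusions.

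For the forward direction, fix $L\in\UQNC[D]$ decided by a polynomial-size family $\{C_n\}$ of depth $O(D(n))$. First amplify as in the proof of \Cref{cor:shallow-restricted-gate-simulations}. Run a constant number of parallel copies, which requires copying the classical input, and combine them with a reversible majority. Majority of a constant number of bits has a constant-size circuit over $X$, CNOT, and Toffoli, so the result stays in $\UQNC$ (or $\UQNC_f$) with only constant-factor depth and size overhead. The amplified circuit accepts YES instances with probability at least $9/10$ and NO instances with probability at most $1/10$.

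One point needs care here. Copying the input $k=O(1)$ times with bounded-arity gates is just $k-1$ CNOTs per input bit applied sequentially, which is constant depth. Now apply \Cref{thm:bounded-arity-real-decision-simulation} with $\varepsilon=1/20$ and $s=\poly(n)$. This gives an $\HQNC$ circuit whose acceptance probability is within $1/20$ of the amplified one on every input. Completeness is therefore at least $17/20$ and soundness at most $3/20$, so the bounded-error gap is preserved. The size is $\poly(s,\log 20)=\poly(n)$. The depth is
\[
O\!\left(D(n)+\log\log(20s)\right)=O\!\left(D(n)+\log\log n\right)=O(D(n)),
\]
where the last equality uses the hypothesis $D(n)=\Omega(\log\log n)$ to absorb the one-time catalyst-preparation cost.

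The Fan-Out case is verbatim the same, using the final sentence of \Cref{thm:bounded-arity-real-decision-simulation}, which covers $\UQNC_f\to\HQNC_f$. There is no substantive obstacle. The only step to check is that the depth overhead is additive rather than multiplicative, and that is exactly what \Cref{thm:bounded-arity-real-decision-simulation} guarantees via catalyst reuse. This additivity is why the condition $D=\Omega(\log\log n)$ is needed and sufficient.
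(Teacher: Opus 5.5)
Your proposal is correct and follows essentially the same route as the paper's proof. Both amplify by constant parallel repetition, apply \Cref{thm:bounded-arity-real-decision-simulation} with constant error, absorb the one-time additive $O(\log\log n)$ catalyst-preparation depth into $O(D(n))$ using $D(n)=\Omega(\log\log n)$, and take the reverse inclusions directly from the gate sets.
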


\begin{proof}
Apply \Cref{thm:bounded-arity-real-decision-simulation} after
constant-error amplification. For polynomial-size circuits and
constant target error, the catalyst bank requires
$O(\log\log n)$ preparation depth. Hence
\begin{align}
    O\!\left(D(n)+\log\log n\right)
    &=
    O(D(n))
\end{align}
whenever $D(n)=\Omega(\log\log n)$. The reverse inclusions follow
from the gate sets.
\end{proof}

There is also a separate constant-depth statement for bounded-arity
circuits \emph{without} Fan-Out. It does not follow from
\Cref{thm:bounded-arity-real-decision-simulation}, whose catalyst
preparation has nonconstant depth. Instead, it is a consequence of
the single-output decision convention.

\begin{proposition}[Constant-Depth Bounded-Arity Decision Classes]
\label{prop:real-qnc-zero-decision}
For single-output bounded-error decision computation,
\begin{align}
    \QNC^0
    &=
    \UQNC^0
    =
    \HQNC^0.
    \label{eqn:real-qnc-zero-decision}
\end{align}
\end{proposition}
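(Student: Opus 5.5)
The plan is to reuse the light-cone argument from the proof of \Cref{cor:shallow-restricted-gate-simulations} and push it down to the real gate set. The inclusions $\HQNC^0\subseteq\UQNC^0\subseteq\QNC^0$ hold by gate-set inclusion: $\{H,X\}\subseteq\{H,T,X\}$, and the multiqubit primitives are the same. So the only work is showing $\QNC^0\subseteq\HQNC^0$.

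Fix a language $L\in\QNC^0$, decided by a depth-$d$ family $\{C_n\}$ with $d=O(1)$ over bounded-arity gates of arity at most $r$. I first apply the backward light-cone bound used in \Cref{thm:depth_lower_bound}. The designated output qubit's light cone contains at most $r^d=O(1)$ input wires. Gates outside the light cone can be deleted without changing the reduced state of the output qubit. Hence $p_{C_n}(x)$ depends only on the bits $x_S$ for a set $S$ with $|S|\leq r^d$, where $S$ depends on $n$ but not on $x$. Because $C_n$ decides $L$ with bounded error, the predicate $f_n(x_S):=[p_{C_n}(x)\geq 2/3]$ agrees with membership in $L$ on every input of length $n$. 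That is, $x\in L$ if and only if $f_n(x_S)=1$, and $f_n$ is a Boolean function on $O(1)$ bits.

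Next I implement $f_n$ exactly and deterministically in $\HQNC^0$, with acceptance probability in $\{0,1\}$, so there is no error to control. Write $f_n$ in disjunctive normal form over at most $2^{r^d}=O(1)$ minterms. Each minterm is an AND of $|S|$ literals. It is computed by applying $X$ to the negated inputs (on private copies made by CNOT, or directly with $X$ followed by uncomputation) and then a balanced tree of ordinary Toffoli gates into fresh ancillae. The tree has depth $O(\log|S|)=O(1)$ and constant size.

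The minterms are mutually exclusive, so their OR equals their XOR. I therefore CNOT each minterm bit into the output qubit. Each input bit is used by constantly many minterms, so constantly many CNOT copies suffice and the sharing costs only constant depth. The resulting circuit uses only $X$, CNOT and Toffoli, which lie in $\HQNC$ since Toffoli is retained explicitly there. It has constant depth and constant size, and padding with idle wires keeps the circuit family polynomial-size and nonuniform. On every $x$ it accepts with probability exactly $f_n(x_S)$, which gives $L\in\HQNC^0$.

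I expect the main subtlety to be the first step rather than the synthesis. It must be justified that the acceptance probability of a bounded-arity circuit really depends only on the input bits in the output's light cone, with the ancillae starting in $\ket{0}$ and arbitrary single-qubit gates allowed anywhere. The reduced output state after tracing out everything else equals the state produced by the light-cone subcircuit acting on its own inputs, and that subcircuit sees only the input wires in $S$. Once this is stated, the remaining steps are a finite truth-table construction, using the same argument as the $\QNC^0=\UQNC^0$ case of \Cref{cor:shallow-restricted-gate-simulations} with $T$ never invoked.
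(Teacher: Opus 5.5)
Your proposal is correct and follows the paper's own proof. In both, the constant-size backward light cone reduces the decision at each input length to a Boolean function of $O(1)$ input bits, which is then computed exactly by a constant-size circuit over $X$, CNOT, and Toffoli; your mutually-exclusive-minterm construction simply spells out the truth-table synthesis step that the paper states in one line.
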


\begin{proof}
Fix a constant-depth $\QNC^0$ circuit with bounded gate arity. The
backward light cone of its designated output qubit contains only
$O(1)$ input qubits. Consequently, at each input length, the
acceptance probability depends on only constantly many input bits. Because the circuit decides a language with bounded error, these
$O(1)$ bits determine a Boolean function
$f:\{0,1\}^{O(1)}\to\{0,1\}$. Since $f$ has constant-size truth table,
it can be computed exactly by a constant-size reversible circuit over
$X$, CNOT, and bounded Toffoli gates. Such a circuit lies in
$\HQNC^0$. Hence
$\QNC^0\subseteq\HQNC^0\subseteq\UQNC^0\subseteq\QNC^0$, and therefore
$\QNC^0=\UQNC^0=\HQNC^0$.
\end{proof}

We emphasize that
\Cref{eqn:real-qnc-zero-decision} is only an equality of
\emph{single-output decision classes}. It does not give a
constant-depth simulation of the full quantum state, unitary, or
joint output distribution of an arbitrary $\QNC^0$ circuit. By
contrast, \Cref{eqn:real-qnc-hierarchies} is obtained from the
explicit real simulation, but applies only once the available depth
can absorb the catalyst-preparation cost.

With Fan-Out, we may additionally combine the earlier universalization
of $\QNC_f$ with the bounded-arity real simulation. Both introduce
only an additive $O(\log\log n)$ depth cost. Thus, at doubly
logarithmic depth and above, the bounded-arity Fan-Out model joins the
wide-gate collapse.

\begin{corollary}[Full Real Fan-Out Hierarchy]
\label{cor:full-real-fanout-hierarchy}
For every depth bound $D(n)=\Omega(\log\log n)$,
\begin{align}
    \QNC_f[D]
    &=
    \UQNC_f[D]
    =
    \HQNC_f[D]=
    \QAC_f[D]
    =
    \UQAC_f[D]
    \\
    &=
    \HQAC_f[D]=
    \QTC_f[D]
    =
    \UQTC_f[D]
    =
    \HQTC_f[D].
    \label{eqn:full-real-fanout-hierarchy}
\end{align}
These classes are also equal to
$\QTC[D]$, $\UQTC[D]$, and $\HQTC[D]$.
\end{corollary}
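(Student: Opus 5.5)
The plan is to stitch together the bounded-arity Fan-Out results with the wide-gate collapse of \Cref{cor:real-wide-hierarchies}. That corollary holds for every $D(n)\geq1$ and already gives
\[
\QAC_f[D]=\UQAC_f[D]=\HQAC_f[D]=\QTC[D]=\UQTC[D]=\HQTC[D]=\QTC_f[D]=\UQTC_f[D]=\HQTC_f[D].
\]
So the only remaining task is to attach the three bounded-arity Fan-Out classes $\QNC_f[D]$, $\UQNC_f[D]$, $\HQNC_f[D]$ to this chain. This is where the hypothesis $D(n)=\Omega(\log\log n)$ is used.

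The argument will be a cycle of inclusions:
\[
\QAC_f[D]\subseteq\QNC_f[D]\subseteq\UQNC_f[D]\subseteq\HQNC_f[D]\subseteq\HQAC_f[D]\subseteq\QAC_f[D].
\]
\begin{enumerate}
\item The first inclusion is \Cref{cor:standard-fanout-hierarchy} (Takahashi--Tani). It holds for every depth bound.
\item The second is \Cref{cor:qnc-f-universalization}. It incurs an additive $O(\log\log n)$ catalyst-preparation cost, which is absorbed into $O(D)$ because $D(n)=\Omega(\log\log n)$.
\item The third is the $\UQNC_f\to\HQNC_f$ case of \Cref{cor:real-qnc-hierarchies}. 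It again costs only an additive $O(\log\log n)$ depth, so it too is absorbed.
\item The fourth is immediate from gate-set inclusion: $\HQNC_f$ uses $H$, $X$, bounded Toffoli and Fan-Out, all of which are available in $\HQAC_f$. Bounded Toffoli is a special case of generalized Toffoli.
\item The fifth, $\HQAC_f[D]\subseteq\QAC_f[D]$, is again trivial by gate-set inclusion. It is also already contained in the wide collapse.
\end{enumerate}
Since the cycle closes, all inclusions are equalities. Combining with \Cref{cor:real-wide-hierarchies} then yields the full stated chain, including equality with $\QTC[D]$, $\UQTC[D]$ and $\HQTC[D]$.

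The only point needing care is that the decision-class statements in steps 2 and 3 are applied with the correct error bookkeeping. Each simulation is invoked on a polynomial-size family after constant-error amplification by parallel repetition and reversible majority, as in \Cref{cor:shallow-restricted-gate-simulations}. The target errors are then constants, so the catalyst preparations have depth $O(\log\log n)$ rather than depending on inverse-polynomial accuracy. Composing two such simulations (universalization, then realification) adds two additive $O(\log\log n)$ terms, which is still $O(D)$, and keeps polynomial size. The completeness/soundness gap survives if each stage is allotted error $1/20$ after amplifying to $9/10$ versus $1/10$.

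I expect no genuine obstacle, since all the ingredients are earlier results. The step to double-check is that \Cref{cor:qnc-f-universalization} and \Cref{cor:real-qnc-hierarchies} are stated for bounded-error decision classes at general $D=\Omega(\log\log n)$, not only at $\log^k n$, so that they compose directly.
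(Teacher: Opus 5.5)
Your proposal is correct and matches the paper's proof. The paper chains \Cref{cor:fanout-hierarchy-universalization}, which packages your steps 1--2 (Takahashi--Tani plus catalytic universalization), with \Cref{cor:real-qnc-hierarchies} and \Cref{cor:real-wide-hierarchies}; your cycle of inclusions, closed by trivial gate-set containments, simply unpacks that chain. Your error-bookkeeping paragraph is harmless but unnecessary, since each step is already a bounded-error class inclusion and such inclusions compose by transitivity.
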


\begin{proof}
By \Cref{cor:fanout-hierarchy-universalization},
$\QNC_f[D]=\UQNC_f[D]$. By
\Cref{cor:real-qnc-hierarchies}, this class also equals
$\HQNC_f[D]$. Finally,
\Cref{cor:real-wide-hierarchies} identifies the remaining Fan-Out and
Threshold classes. The additive $O(\log\log n)$ preparation costs
are absorbed by the assumed depth bound.
\end{proof}
\noindent Note that the argument above does not extend the $\QNC_f$ collapse to
constant depth. The simulation of
\Cref{thm:bounded-arity-real-decision-simulation} retains a
nonconstant catalyst-preparation cost. Moreover, the separate
constant-depth argument of \Cref{prop:real-qnc-zero-decision} no
longer applies, since primitive Fan-Out allows the backward light cone of a
single output qubit to contain polynomially many input bits. Thus, the
decision function need not reduce to a constant-size truth table. We therefore obtain only
\begin{align}
    \HQNC_f^0
    \subseteq
    \UQNC_f^0
    \subseteq
    \QNC_f^0
    =
    \HQAC_f^0
    =
    \HQTC^0,
    \label{eqn:real-constant-depth-boundary}
\end{align}
where the equalities on the right follow from
\Cref{fact:takahashi-tani-simulation} and \Cref{cor:real-wide-hierarchies}.
We leave open whether the first two inclusions are equalities, or
equivalently whether
\begin{align}
    \QNC_f^0
    &=
    \UQNC_f^0
    =
    \HQNC_f^0.
\end{align}
Unlike \Cref{eqn:real-qnc-zero-decision}, the single-output light-cone
argument does not apply here, since primitive Fan-Out allows the
backward light cone of one output qubit to contain polynomially many
input qubits even at constant depth.

\section{Structured Fourier Circuits and Forrelation Normal Forms}
\label{sec:forrelation-normal-forms}

Following Aaronson and Ambainis~\cite{aaronson2015forrelation},
Forrelation describes circuits that alternate global Hadamard
transforms with diagonal sign operations. For a Boolean function
$h:\{0,1\}^m\to\{0,1\}$, define
\begin{align}
    D_h\ket z&:=(-1)^{h(z)}\ket z.
    \label{eq:forrelation-phase-operation}
\end{align}
Given $k\geq1$ phase functions $h_1,\ldots,h_k$, the corresponding
$k$-fold Forrelation circuit is
\begin{align}
    W(h_1,\ldots,h_k)
    &:=H^{\otimes m}D_{h_k}H^{\otimes m}\cdots
       D_{h_1}H^{\otimes m},
    \label{eq:structured-forrelation-circuit}
\end{align}
with associated Forrelation quantity
\begin{align}
    \Phi_m(h_1,\ldots,h_k):=\bra{0^m}W(h_1,\ldots,h_k)\ket{0^m}=\frac{1}{2^{(k+1)m/2}}
      \sum_{z_1,\ldots,z_k\in\{0,1\}^m}
      (-1)^{\sum_{j=1}^k h_j(z_j)
                  +\sum_{j=1}^{k-1}z_j\cdot z_{j+1}}.
    \label{eq:structured-forrelation-scalar}
\end{align}

We derive restricted versions of this representation for shallow-depth
computation. First, we define a structured Fourier hierarchy and
identify the phase families corresponding to the real circuit models.
The results of the preceding sections then give decision-class
characterizations for standard and universal circuits at shallow depth. Second, we convert these structured Fourier circuits
to the full-Hadamard form above and express their acceptance bias
as a scalar Forrelation quantity. We show that these last two steps preserve the
phase restrictions and increase the number of layers by only a
constant factor.

\subsection{The Structured Fourier Hierarchy}
\label{sec:structured-forrelation-definitions}
\label{sec:forrelation-gate-phases}
\label{sec:forrelation-standard-hierarchy}

To relate phase functions to the gates in the shallow-depth hierarchy,
let $V_f$ compute a Boolean function of its controls into a target,
\begin{align}
    V_f\ket{z_S,z_t}
    &:=\ket{z_S,z_t\oplus f(z_S)}.
\end{align}
Since $HXH=Z$, conjugating the target by Hadamards gives
\begin{align}
    H_tV_fH_t\ket{z_S,z_t}
    &=(-1)^{z_t f(z_S)}\ket{z_S,z_t}.
    \label{eq:fourier-gate-phase-correspondence}
\end{align}
The associated phase function is therefore $z_t f(z_S)$, including
the target bit. Toffoli gives an AND phase on all participating bits,
while a Threshold gate gives its target bit times the Threshold
predicate. When several diagonal gates are applied in parallel, their
signs multiply, so their Boolean phase functions combine by XOR.

\begin{definition}[Structured Phase Function and Layer]
\label{def:structured-phase-layer}
Let $\mathcal G$ be a collection of permitted Boolean block
functions. A \textbf{structured phase function} on $m$ bits has
the form
\begin{align}
    h(z)
    &:=\alpha_0\oplus\bigoplus_{i=1}^m\alpha_i z_i
       \oplus\bigoplus_{j=1}^p g_j(z_{B_j}),
    \label{eq:structured-phase-function}
\end{align}
where $\alpha_0,\ldots,\alpha_m\in\{0,1\}$, $p\geq0$,
the blocks $B_1,\ldots,B_p\subseteq[m]$ are pairwise disjoint,
and each $g_j\in\mathcal G$ acts on $B_j$. The corresponding
operation $D_h$ from \Cref{eq:forrelation-phase-operation} is
called a \textbf{structured phase layer}.
\end{definition}
\noindent \noindent
The affine part
$\alpha_0\oplus\bigoplus_{i=1}^m\alpha_i z_i$
gives the overall sign $(-1)^{\alpha_0}$ and the single-qubit
phases $\prod_{i=1}^m Z_i^{\alpha_i}$, while the block part
$\bigoplus_{j=1}^p g_j(z_{B_j})$ gives the multiqubit phase
operations $\prod_{j=1}^p D_{g_j}$. Thus
\begin{align}
    D_h
    &=
    (-1)^{\alpha_0}
    \left(\prod_{i=1}^m Z_i^{\alpha_i}\right)
    \left(\prod_{j=1}^p D_{g_j}\right).
    \label{eq:structured-phase-layer-factorization}
\end{align}
Because the blocks $B_j$ are pairwise disjoint, the multiqubit phase
gates $D_{g_j}$ can be applied in parallel. The XOR in
\Cref{eq:structured-phase-function} simply records multiplication of
the corresponding signs, since
$(-1)^{a\oplus b}=(-1)^a(-1)^b$.

We group together all structured phase functions whose nonlinear
block functions are drawn from the same collection $\mathcal G$, as follows.

\begin{definition}[Structured Phase Family]
\label{def:structured-phase-families}
For a collection $\mathcal G$ of permitted block functions, let
$\operatorname{Phase}(\mathcal G)$ denote the family of all
structured phase functions over $\mathcal G$.
\end{definition}
\noindent In particular, we now specialize to the multi-qubit gates in the shallow-depth
hierarchy. Let $\mathcal G_3$, $\mathcal G_{\mathrm{AND}}$,
and $\mathcal G_{\mathrm{THR}}$ denote the block-function
classes arising from bounded Toffoli, generalized Toffoli, and
Threshold, respectively, as specified in
\Cref{tab:structured-phase-families}. We write
\begin{align}
    \mathcal A_3&:=\operatorname{Phase}(\mathcal G_3),
    \notag\\
    \mathcal A&:=\operatorname{Phase}(\mathcal G_{\mathrm{AND}}),
    \notag\\
    \mathcal T&:=\operatorname{Phase}(\mathcal G_{\mathrm{THR}}).
    \label{eq:base-structured-phase-families}
\end{align}
For bounded-arity, we take CNOT and ordinary three-qubit Toffoli as
the multiqubit primitives. Any larger fixed-arity Toffoli can be
reduced to these in constant depth using constantly many clean
ancillae per gate.

Fan-Out requires conjugating all targets rather than a single target.
Write $H_S:=\prod_{i\in S}H_i$ for Hadamards applied in parallel
on $S$, with identity elsewhere. For Fan-Out $F_{c;S}$ with
control $c$ and target set $S$,
\begin{align}
    H_SF_{c;S}H_S
    &=\prod_{i\in S}\operatorname{CZ}_{c,i}
      =D_{\,z_c(\bigoplus_{i\in S}z_i)}.
    \label{eq:fourier-fanout-phase-correspondence}
\end{align}
Thus, $B=\{c\}\mathbin{\dot\cup}S$ Fan-Out contributes the controlled-Parity block function
\begin{align}
    g_{\mathrm{FO}}(z_B)
    :=
    z_c\left(\bigoplus_{i\in S}z_i\right).
\end{align}
Although this phase factors into CZ gates
$\prod_{i\in S}\operatorname{CZ}_{c,i}$ that all share the control
qubit $c$, we regard their product as a single block supported on
$B$. The disjointness requirement therefore applies between distinct
Fan-Out blocks, not between the individual CZ gates within one such
block. Let $\mathcal G_{\mathrm{FO}}$ collect these
block functions. Adjoining them gives the Fan-Out variants
\begin{align}
    \mathcal A_{3,f}
    &:=\operatorname{Phase}(\mathcal G_3\cup\mathcal G_{\mathrm{FO}}),
    \notag\\
    \mathcal A_f
    &:=\operatorname{Phase}(\mathcal G_{\mathrm{AND}}\cup\mathcal G_{\mathrm{FO}}),
    \notag\\
    \mathcal T_f
    &:=\operatorname{Phase}(\mathcal G_{\mathrm{THR}}\cup\mathcal G_{\mathrm{FO}}).
    \label{eq:fanout-structured-phase-families}
\end{align}

Mirroring the inclusions among the gate sets, the phase families
satisfy
\begin{align}
    \mathcal A_3&\subseteq\mathcal A\subseteq\mathcal T,
    \notag\\
    \mathcal A_{3,f}&\subseteq\mathcal A_f\subseteq\mathcal T_f.
    \label{eq:structured-phase-family-inclusions}
\end{align}
Bounded-width AND is a special case of arbitrary-width AND, and an
AND on $B=S\mathbin{\dot\cup}\{t\}$ equals
$z_t[\sum_{i\in S}z_i\geq|S|]$. Moreover,
$\mathcal A_3\subseteq\mathcal A_{3,f}$,
$\mathcal A\subseteq\mathcal A_f$, and
$\mathcal T\subseteq\mathcal T_f$.

We now interleave structured phase layers with parallel Hadamard
layers.

\begin{definition}[Structured Fourier Circuit]
\label{def:structured-fourier-circuit}
A \textbf{structured Fourier circuit} on $m$ qubits has the form
\begin{align}
    U&:=H_{S_\ell}D_{h_\ell}\cdots
         H_{S_1}D_{h_1}H_{S_0},
    \label{eq:structured-fourier-circuit}
\end{align}
where $S_j\subseteq[m]$ and each $h_j$ is a structured phase
function. Each $D_{h_j}$ represents all the block phases in that
layer, not one individual gate. Empty Hadamard subsets and the
identity phase $D_0=I$ are allowed.
\end{definition}

To define decision classes, we use the same bounded-error,
single-output convention as in the standard shallow-depth circuit
hierarchy. On input $x\in\{0,1\}^n$, a circuit $U_n$ acts on
$\ket{u_x}:=\ket{x}\ket{0^{m-n}}$, and a designated output
qubit $\textnormal{\textsf{out}}$ is measured in the computational
basis. Writing
$\Pi_{\textnormal{\textsf{out}}}
:=(\ket1\bra1)_{\textnormal{\textsf{out}}}\otimes I$,
the acceptance probability is
\begin{align}
    p_{U_n}(x)
    &:=\left\|\Pi_{\textnormal{\textsf{out}}}
                  U_n\ket{u_x}\right\|_2^2.
    \label{eq:forrelation-one-bit-acceptance}
\end{align}
A circuit family decides a language $L$ if $p_{U_n}(x)\geq2/3$
for $x\in L$ and $p_{U_n}(x)\leq1/3$ for $x\notin L$.

\begin{definition}[Structured Fourier Decision Hierarchy]
\label{def:structured-fourier-hierarchy}
Let $D(n)\geq1$ specify an asymptotic layer bound as a function of
the input length, and let $\mathcal F$ be a family of structured
phase functions. The class $\mathsf{FH}_{\mathcal F}[D]$
consists of the languages decided by families $\{U_n\}$ of
structured Fourier circuits with polynomial width and total
description length, $\ell_n=O(D(n))$ phase layers, and
$h_{n,j}\in\mathcal F$ for every layer $j$.
\end{definition}
\noindent For fixed $k\geq1$, let $\mathsf{FH}_{\mathcal F}^{(k)}$
impose at most $k$ phase layers. Thus
\begin{align}
    \mathsf{FH}_{\mathcal F}[1]
    &=\bigcup_{k\geq1}\mathsf{FH}_{\mathcal F}^{(k)}
\end{align}
is the constant-layer class. The circuits and their designated
outputs depend on $n$, not on the particular input $x$. More generally, the parameter $D(n)$ bounds the Hadamard--phase alternations,
while $\mathcal F$ restricts the intervening phase functions. We
focus on $\mathcal A_3$, $\mathcal A$, $\mathcal T$, and
their Fan-Out variants because they retain the multi-qubit gate
restrictions of the standard shallow-depth hierarchy.

\paragraph{Correspondence with the real circuit hierarchy.}
For these six phase families, the correspondence holds at the level
of the full implemented unitary. The model--family pairs appear in
\Cref{tab:real-structured-fourier-correspondence}. Use $\operatorname{Unitary}(\mathcal C[D])$ with the approximation convention of \Cref{sec:hierarchy-gate-sets}, and use the analogous convention for structured Fourier circuits. The circuit translations below are exact, so they also preserve these approximation-based classes.

\begin{table}[t]
\centering
\small
\setlength{\tabcolsep}{5pt}
\renewcommand{\arraystretch}{1.35}
\begin{tabular}{@{}lccc@{}}
\toprule
& \textbf{Bounded Toffoli}
& \textbf{Generalized Toffoli}
& \textbf{Threshold}\\
\midrule
\textbf{Without Fan-Out}
& $\HQNC\leftrightarrow\mathcal A_3$
& $\HQAC\leftrightarrow\mathcal A$
& $\HQTC\leftrightarrow\mathcal T$\\
\textbf{With Fan-Out}
& $\HQNC_f\leftrightarrow\mathcal A_{3,f}$
& $\HQAC_f\leftrightarrow\mathcal A_f$
& $\HQTC_f\leftrightarrow\mathcal T_f$\\
\bottomrule
\end{tabular}
\caption{Real shallow-depth circuit models and their structured
phase families. Fan-Out contributes the controlled-Parity block
functions in $\mathcal G_{\mathrm{FO}}$.}
\label{tab:real-structured-fourier-correspondence}
\end{table}

\begin{lemma}[Exact Real--Structured Fourier Correspondence]
\label{lem:forrelation-diagonal-primitives}
For every model--family pair $(\mathcal R,\mathcal F)$ in
\Cref{tab:real-structured-fourier-correspondence} and every
$D(n)\geq1$,
\begin{align}
    \operatorname{Unitary}\bigl(\mathcal R[D]\bigr)
    &=\operatorname{Unitary}\bigl(\mathsf{FH}_{\mathcal F}[D]\bigr).
    \label{eq:real-structured-fourier-unitaries}
\end{align}
\end{lemma}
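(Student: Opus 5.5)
We prove both inclusions by exact, layer-by-layer translations whose depth overhead is a constant factor. Since the translations are exact on the full unitary (up to clean ancillae that start and end in $\ket{0}$), they transfer directly to the approximation-based $\operatorname{Unitary}(\cdot)$ classes. In each case we fix one pair $(\mathcal R,\mathcal F)$ from \Cref{tab:real-structured-fourier-correspondence}; the six cases differ only in which block functions appear.

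\emph{From $\mathcal R$ to $\mathsf{FH}_{\mathcal F}$.} Take a depth-$d$ circuit in $\mathcal R$ and consider one layer. Its single-qubit gates are $H$ or $X$. Each $X$ equals $HZH$, and $Z_i$ is the affine phase $z_i$. Each multiqubit gate $V_f$ (bounded Toffoli, generalized Toffoli, Threshold) is rewritten via \Cref{eq:fourier-gate-phase-correspondence} as $H_tD_{z_tf(z_S)}H_t$. Each Fan-Out is rewritten via \Cref{eq:fourier-fanout-phase-correspondence} as $H_SD_{g_{\mathrm{FO}}}H_S$. Because gates in one layer act on disjoint supports, we can gather everything into three sublayers: a Hadamard layer $H_{S}$, one diagonal layer, and another $H_{S}$. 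Here $S$ is the set of target qubits together with the qubits carrying $X$, and we add the set of qubits carrying a bare $H$ to the final Hadamard layer. The diagonal layer is the product of the block phases on disjoint blocks $B_j$ and the affine $Z$ terms. By \Cref{def:structured-phase-layer} and the factorization \Cref{eq:structured-phase-layer-factorization}, this product is exactly $D_h$ for some $h\in\mathcal F$.

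Adjacent Hadamard layers from consecutive original layers merge into one layer $H_{S}$, since $H_A H_B=H_{A\triangle B}$. The result is a structured Fourier circuit with at most $d$ phase layers.

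One subtlety is CNOT and single-target operations in the bounded models. A CNOT is the width-$2$ AND block, which belongs to $\mathcal G_3$ (and to $\mathcal G_{\mathrm{FO}}$ with $|S|=1$). Another is that a Threshold block must have $u\ge1$, and that $\mathcal G_{\mathrm{AND}}$ requires $|B|\ge2$. We handle degenerate gates (for example a $1$-controlled Threshold with $u=1$, which is a CNOT) by noting that they coincide with allowed blocks or with affine terms.

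\emph{From $\mathsf{FH}_{\mathcal F}$ to $\mathcal R$.} Each $H_{S_j}$ is one depth-one layer of Hadamards. For a phase layer $D_h$, we use \Cref{eq:structured-phase-layer-factorization}. The global sign $(-1)^{\alpha_0}$ is realized by $(XZ)^2=-I$ on one qubit, which has constant depth with $Z=HXH$. The affine part is a parallel layer of $Z=HXH$ gates. Each block $D_{g_j}$ is realized by choosing a designated target in $B_j$ (the target bit $t$ for AND and Threshold, and all of $S$ for Fan-Out) and applying $H_tV_{f}H_t$ or $H_SF_{c;S}H_S$. The blocks are pairwise disjoint, so all gadgets run in parallel in depth $O(1)$.

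A layer mixes block Hadamards with affine $Z$'s that may share qubits, so we schedule them sequentially in constant depth: first the $Z$ layer, then the block gadgets. Concatenating gives depth $O(\ell+1)=O(D(n))$ with polynomial size.

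The main point to check carefully is the first direction. We must ensure that the blocks arising from one layer are pairwise disjoint and are exactly of the permitted forms, especially for AND on $B=S\mathbin{\dot\cup}\{t\}$ versus Threshold with target factor $z_t$. We must also ensure that merging the Hadamard layers does not break the alternation count. Both follow from disjointness of gate supports within a layer.
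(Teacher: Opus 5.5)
Your proposal is correct and follows essentially the same route as the paper. In the forward direction you Hadamard-conjugate each multiqubit gate (writing $X=HZH$) so that each layer collapses into one structured phase layer on disjoint blocks; in the reverse direction you factor each phase layer into affine $Z$'s, a $(XZ)^2=-I$ sign, and Hadamard-conjugated real gates on disjoint blocks. Merging adjacent Hadamard layers via $H_AH_B=H_{A\triangle B}$ is only slightly tighter bookkeeping than the paper's (at most $d$ phase layers instead of inserting identity layers), not a different argument.
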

\noindent More precisely, every depth-$d$ circuit over the real gate set has
an exact structured Fourier representation on the same wires with
$O(d+1)$ phase layers. Conversely, every structured Fourier circuit
over $\mathcal F$ with $\ell$ phase layers has an exact
implementation over that gate set of depth $O(\ell+1)$. Both
conversions have polynomial cost in the original width and description
length.

\begin{proof}
\textbf{$\operatorname{Unitary}(\mathcal R[D])
\subseteq\operatorname{Unitary}(\mathsf{FH}_{\mathcal F}[D])$.}
Apply
\Cref{eq:fourier-gate-phase-correspondence,eq:fourier-fanout-phase-correspondence}
to the multiqubit gates in each original layer. Their supports are
disjoint, so their diagonal forms combine into one allowed structured
phase layer. Write $X=HZH$ for single-qubit bit flips and retain
the original Hadamards. Each original layer becomes constantly many
partial Hadamard and phase layers, giving $O(d+1)$ phase layers
overall. Insert identity layers as needed to maintain the alternating
form without merging phase layers whose blocks overlap.

\noindent\textbf{$\operatorname{Unitary}(\mathsf{FH}_{\mathcal F}[D])
\subseteq\operatorname{Unitary}(\mathcal R[D])$.}
Factor each structured phase layer as in
\Cref{eq:structured-phase-layer-factorization} and reverse the
same conjugation identities. The block phases are implemented by
the associated real gates in parallel on disjoint supports. Apply
the affine $Z$ terms separately using $Z=HXH$, and realize an
overall minus sign by $(XZ)^2=-I$ on one wire. Every structured
phase layer therefore has an exact constant-depth implementation,
as does every partial Hadamard layer. This gives depth $O(\ell+1)$.
Both directions preserve the unitary exactly and have polynomial cost.
\end{proof}

The decision-class equality follows by retaining the same input and
output qubit.

\begin{corollary}[Real and Structured Fourier Decision Classes]
\label{cor:real-structured-forrelation-classes}
For every $D(n)\geq1$ and every pair $(\mathcal R,\mathcal F)$
in \Cref{tab:real-structured-fourier-correspondence},
\begin{align}
    \mathcal R[D]&=\mathsf{FH}_{\mathcal F}[D].
    \label{eq:real-structured-forrelation-classes}
\end{align}
\end{corollary}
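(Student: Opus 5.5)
The plan is to derive the decision-class equality directly from the exact unitary correspondence of \Cref{lem:forrelation-diagonal-primitives}, keeping track of the input encoding, the designated output qubit, and the acceptance probability. The key fact I will use is that both translations in that lemma are exact. They act on the same wires, or on the same wires together with clean $\ket{0}$-initialized ancillae that are returned exactly to zero. Consequently the acceptance probability is preserved exactly on every classical input $x$, and the bounded-error thresholds $2/3$ and $1/3$ carry over unchanged.

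\textbf{The inclusion $\mathcal R[D]\subseteq\mathsf{FH}_{\mathcal F}[D]$.} Take a polynomial-size depth-$O(D(n))$ family $\{C_n\}$ in $\mathcal R$ deciding $L$. I apply the forward direction of \Cref{lem:forrelation-diagonal-primitives} layer by layer. This gives a structured Fourier circuit $U_n$ on the same $m$ wires with $U_n=C_n$ exactly and $O(D(n)+1)=O(D(n))$ phase layers, since $D\geq1$. The input register $\ket{x}\ket{0^{m-n}}$ is the same in both models, so $\ket{u_x}$ coincides with the original input. Designating the same output qubit then gives $p_{U_n}(x)=\|\Pi_{\mathsf{out}}C_n\ket{u_x}\|^2=p_{C_n}(x)$. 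The description length stays polynomial because each block phase is recorded by its support together with a constant amount of data: the gate type, and for Threshold the value $u$.

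\textbf{The reverse inclusion $\mathsf{FH}_{\mathcal F}[D]\subseteq\mathcal R[D]$.} I use the converse construction. Each structured phase layer is factored as in \Cref{eq:structured-phase-layer-factorization}, and each factor is implemented exactly:
\begin{itemize}
\item block phases, by Hadamard-conjugated gates of $\mathcal R$ acting in parallel on disjoint supports;
\item affine $Z$ terms, via $Z=HXH$;
\item a global sign, via $(XZ)^2$.
\end{itemize}
If bounded-arity reductions need ancillae, for instance to realize a fixed-arity AND from ordinary Toffolis, they are clean, restored exactly, and appended after the input register. This does not change the acceptance probability.

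\textbf{Main obstacle.} The only real work is bookkeeping: checking that the exact correspondence does not disturb the input convention or the output projector, and that any auxiliary ancillae are returned exactly to zero rather than approximately. Since \Cref{lem:forrelation-diagonal-primitives} gives exact equality of unitaries, no amplification or error budgeting is needed. The argument is therefore a direct transfer of families, with polynomial overhead and a constant-factor increase in depth in each direction.
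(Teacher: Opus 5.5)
Your proposal is correct and follows the paper's proof step for step: both transfer circuit families in each direction via the exact translations of \Cref{lem:forrelation-diagonal-primitives}, keep the same computational-basis input and designated output qubit, and conclude that the acceptance probability, and hence the $2/3$ versus $1/3$ gap, is preserved on every input with constant-factor depth and polynomial size overhead. Your ancilla caveat in the reverse direction is not needed, since blocks in $\mathcal A_3$ have size at most three and are realized directly as Hadamard-conjugated CNOT or Toffoli gates, but it does no harm.
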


\begin{proof}
By \Cref{lem:forrelation-diagonal-primitives}, every depth-$O(D(n))$
circuit in $\mathcal R$ has an exact structured Fourier representation
over $\mathcal F$ with $O(D(n))$ phase layers and polynomial overhead.
Conversely, every structured Fourier circuit over $\mathcal F$ with
$O(D(n))$ phase layers has an exact depth-$O(D(n))$ implementation in
$\mathcal R$. In both directions, we retain the same computational-basis input and
designated output qubit. Since the conversion preserves the implemented
unitary exactly, it preserves the acceptance probability on every input
$x$. Hence a language is decided with bounded error by one model if and
only if it is decided with bounded error by the other. Therefore
$\mathcal R[D]=\mathsf{FH}_{\mathcal F}[D]$.
\end{proof}

\noindent Explicitly, without Fan-Out we have
$\HQNC[D]=\mathsf{FH}_{\mathcal A_3}[D]$,
$\HQAC[D]=\mathsf{FH}_{\mathcal A}[D]$, and
$\HQTC[D]=\mathsf{FH}_{\mathcal T}[D]$. With Fan-Out,
$\HQNC_f[D]=\mathsf{FH}_{\mathcal A_{3,f}}[D]$,
$\HQAC_f[D]=\mathsf{FH}_{\mathcal A_f}[D]$, and
$\HQTC_f[D]=\mathsf{FH}_{\mathcal T_f}[D]$.

\paragraph{Standard and universal circuits.}
For standard and universal circuits, we first apply the gate-set
simulations from the preceding sections to obtain real circuits with
nearly the same acceptance probabilities. Their structured Fourier
representations are then exact. Thus the following result preserves
the designated decision probability, not necessarily the original
complex unitary.

\begin{theorem}[Structured Fourier Representations of Standard Circuits]
\label{thm:standard-forrelation-normal-forms}
Let $C$ be a depth-$d$ circuit on $n$ input bits in $\QAC$,
$\QAC_f$, $\QTC$, or $\QTC_f$, with corresponding phase family
$\mathcal A$, $\mathcal A_f$, $\mathcal T$, or $\mathcal T_f$.
Let $s\geq2$ bound its gate count and width. For every
$\varepsilon\in(0,1/2)$, there is a structured Fourier circuit
$U$ over that family with $O(d+1)$ phase layers and width and
description length $\operatorname{poly}(s,1/\varepsilon)$ such that, for every $x\in\{0,1\}^n$,
\begin{align}
    |p_U(x)-p_C(x)|&\leq\varepsilon.
    \label{eq:standard-forrelation-acceptance}
\end{align}
The circuit $U$ and its designated output are independent of $x$. For circuits in $\UQNC$ and $\UQNC_f$, the corresponding phase
families are $\mathcal A_3$ and $\mathcal A_{3,f}$, respectively.
In both cases, the resulting structured Fourier circuit has $O\left(d+\log\log(s/\varepsilon)\right)$
phase layers and $\operatorname{poly}(s,1/\varepsilon)$ size and
width. The same bounds hold for circuits in $\QNC_f$, with phase
family $\mathcal A_{3,f}$.
\end{theorem}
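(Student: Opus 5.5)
The plan is to compose three earlier simulations: universalization, then realification, then the exact real-to-structured-Fourier translation of \Cref{lem:forrelation-diagonal-primitives}. Errors are budgeted as $\varepsilon/2$ per approximate stage. Because \Cref{lem:forrelation-diagonal-primitives} preserves the implemented unitary exactly and keeps the same input wires and designated output, it preserves every acceptance probability. So all the work lies in producing a real circuit of the right depth whose acceptance probabilities are $\varepsilon$-close to those of $C$. The circuits produced at each stage depend only on $C$ and $\varepsilon$, never on $x$, so the final $U$ and its output qubit are input-independent.

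\textbf{Wide-gate models.} For $C\in\QAC$, I would first apply \Cref{thm:qac-fixed-gate-simulation} with clean-input error $\varepsilon/4$. This gives a $\UQAC$ circuit of depth $O(d)$ and size $\poly(s,1/\varepsilon)$ whose output state is $\varepsilon/4$-close in Euclidean norm, hence whose acceptance probabilities are within $\varepsilon/2$. For $\QAC_f$ and $\QTC_f$, I would use \Cref{lem:direct-unitary-compilation}. For $\QTC$, the same compiler applies as in \Cref{cor:qtc-uqtc-collapse}. Next, I would invoke \Cref{thm:h-only-depth-preserving-simulation} for the matching pair $(\UQAC,\HQAC)$, $(\UQAC_f,\HQAC_f)$, $(\UQTC,\HQTC)$ or $(\UQTC_f,\HQTC_f)$ with error $\varepsilon/2$. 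That theorem takes its $s$ to be the size of the compiled circuit, which is $\poly(s,1/\varepsilon)$, so the size stays $\poly(s,1/\varepsilon)$ and the depth is $O(d+1)$. Finally, \Cref{lem:forrelation-diagonal-primitives} converts the real circuit into a structured Fourier circuit over $\mathcal A$, $\mathcal A_f$, $\mathcal T$ or $\mathcal T_f$ with $O(d+1)$ phase layers and polynomial description length.

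\textbf{Bounded-arity models.} For $C\in\UQNC$ or $\UQNC_f$, no universalization is needed. I would apply \Cref{thm:bounded-arity-real-decision-simulation} with error $\varepsilon$, obtaining an $\HQNC$ or $\HQNC_f$ circuit of depth $O(d+\log\log(s/\varepsilon))$, and then translate it with \Cref{lem:forrelation-diagonal-primitives} into $\mathcal A_3$ or $\mathcal A_{3,f}$. For $\QNC_f$, I would first apply \Cref{thm:catalytic-universalization} with error $\varepsilon/4$ to get a $\UQNC_f$ circuit. In its depth bound $O(D+\log\log((r+w+2)/\delta))$, both $r$ and $w$ are at most $s$. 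I would then apply the $\UQNC_f$ case of \Cref{thm:bounded-arity-real-decision-simulation} with error $\varepsilon/2$. The compiled size is $\poly(s,1/\varepsilon)$, so its $\log\log$ term is still $O(\log\log(s/\varepsilon))$.

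The main obstacle is bookkeeping rather than a new idea, and it sits at the interface between stages. First, the second stage must be charged against the size of the intermediate circuit, not the original one; this is harmless because $\log\log\poly(s,1/\varepsilon)=O(\log\log(s/\varepsilon))$ and $\poly\circ\poly$ is polynomial. Second, the real simulations retain auxiliary ancillae and $X$, $Z$ gates, and these must fit the structured-phase format. \Cref{lem:forrelation-diagonal-primitives} handles this: $X=HZH$ and affine $Z$ terms are allowed phases, and the zero-initialized ancillae simply widen the register. Third, the bounded-arity real circuits use CNOT, ordinary Toffoli and constant-size exact syntheses, so I would check that these reduce in constant depth to the $\mathcal G_3$ blocks, as noted after \eqref{eq:base-structured-phase-families}.
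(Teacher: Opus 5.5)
Your proposal is correct and follows essentially the same route as the paper: universalize (via the catalytic universalization of \Cref{thm:catalytic-universalization} first in the $\QNC_f$ case), realify with \Cref{thm:h-only-depth-preserving-simulation} or \Cref{thm:bounded-arity-real-decision-simulation}, and then apply the exact conversion of \Cref{lem:forrelation-diagonal-primitives}, charging the second stage against the polynomial-size intermediate circuit. The only cosmetic difference is for $\QTC$ and $\QTC_f$, where the paper directly uses the Threshold/Fan-Out pathway from the proof of \Cref{cor:real-wide-hierarchies}; your preliminary universalization step is redundant there, since \Cref{cor:real-threshold-simulation} reapplies Takahashi--Tani and recompiles anyway, but it is harmless.
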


\begin{proof}
For circuits in $\QAC$ and $\QAC_f$, first apply the fixed-gate
compilation and then
\Cref{thm:h-only-depth-preserving-simulation}, choosing the
approximation parameters so that the total change in acceptance
probability is at most $\varepsilon$. In the Fan-Out case, the
Fan-Out gates are retained exactly, as in the proof of
\Cref{cor:real-wide-hierarchies}. For circuits in $\QTC$ and
$\QTC_f$, use the corresponding Threshold and Fan-Out simulations
from the proof of \Cref{cor:real-wide-hierarchies} to obtain the
associated real circuit model, again with total acceptance error at
most $\varepsilon$. In each case, the resulting real circuit has
depth $O(d+1)$ and polynomial size. Applying
\Cref{lem:forrelation-diagonal-primitives} then gives the desired
structured Fourier representation exactly, with no additional error.

For circuits in $\UQNC$ and $\UQNC_f$, apply
\Cref{thm:bounded-arity-real-decision-simulation} with acceptance
error at most $\varepsilon$, followed by the exact structured Fourier
conversion. For circuits in $\QNC_f$, first apply
\Cref{thm:catalytic-universalization} and then
\Cref{thm:bounded-arity-real-decision-simulation}, allocating error
$\varepsilon/2$ to each step. Since the intermediate circuit size
remains polynomial in $s$ and $1/\varepsilon$, the resulting
structured Fourier circuit has
$O\!\left(d+\log\log(s/\varepsilon)\right)$ phase layers and
polynomial size.
\end{proof}

\noindent Together with the hierarchy equalities established earlier, this
gives the following characterization.

\begin{corollary}[Structured Fourier Characterizations]
\label{cor:standard-forrelation-hierarchy}
For every $D(n)\geq1$,
\begin{align}
    \mathsf{FH}_{\mathcal A}[D]
    &=\QAC[D]=\UQAC[D]=\HQAC[D].
    \label{eq:forrelation-qac-equality}
\end{align}
The Fan-Out and Threshold families satisfy
\begin{align}
    \mathsf{FH}_{\mathcal A_f}[D]
    &=\mathsf{FH}_{\mathcal T}[D]
      =\mathsf{FH}_{\mathcal T_f}[D]=\QAC_f[D]=\QTC[D]=\QTC_f[D].
    \label{eq:forrelation-wide-equality}
\end{align}
For $D(n)=\Omega(\log\log n)$,
\begin{align}
    \mathsf{FH}_{\mathcal A_3}[D]
    &=\UQNC[D]=\HQNC[D],
    \label{eq:forrelation-bounded-equality}\\
    \mathsf{FH}_{\mathcal A_{3,f}}[D]=\QNC_f[D]=\UQNC_f[D]&=\HQNC_f[D]=\mathsf{FH}_{\mathcal A_f}[D]
      =\mathsf{FH}_{\mathcal T}[D]
      =\mathsf{FH}_{\mathcal T_f}[D].
    \label{eq:forrelation-full-fanout-equality}
\end{align}
\end{corollary}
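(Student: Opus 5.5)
The plan is to derive each chain of equalities by combining the exact real--structured Fourier correspondence of \Cref{cor:real-structured-forrelation-classes} with the hierarchy collapses already proved. No new simulation is needed; the work is bookkeeping to make sure each depth regime matches the hypotheses of the results invoked.

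For \Cref{eq:forrelation-qac-equality}, I will take the pair $(\HQAC,\mathcal A)$ in \Cref{cor:real-structured-forrelation-classes}, which gives $\mathsf{FH}_{\mathcal A}[D]=\HQAC[D]$ for every $D(n)\geq1$. I will then append \Cref{cor:qac-uqac-hqac-collapse}, which gives $\QAC[D]=\UQAC[D]=\HQAC[D]$.

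For \Cref{eq:forrelation-wide-equality}, I will apply \Cref{cor:real-structured-forrelation-classes} to the three pairs $(\HQAC_f,\mathcal A_f)$, $(\HQTC,\mathcal T)$ and $(\HQTC_f,\mathcal T_f)$. This identifies the three Fourier classes with $\HQAC_f[D]$, $\HQTC[D]$ and $\HQTC_f[D]$. By \Cref{cor:real-wide-hierarchies}, these three classes are all equal to $\QAC_f[D]=\QTC[D]=\QTC_f[D]$, and this holds for every $D(n)\geq1$.

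For the bounded-arity statements I assume $D(n)=\Omega(\log\log n)$.
\begin{enumerate}
\item The pair $(\HQNC,\mathcal A_3)$ gives $\mathsf{FH}_{\mathcal A_3}[D]=\HQNC[D]$, and \Cref{cor:real-qnc-hierarchies} supplies $\UQNC[D]=\HQNC[D]$. This proves \Cref{eq:forrelation-bounded-equality}.
\item For \Cref{eq:forrelation-full-fanout-equality}, the pair $(\HQNC_f,\mathcal A_{3,f})$ gives $\mathsf{FH}_{\mathcal A_{3,f}}[D]=\HQNC_f[D]$.
\item \Cref{cor:full-real-fanout-hierarchy} then equates $\HQNC_f[D]$ with $\QNC_f[D]$, $\UQNC_f[D]$, $\HQAC_f[D]$, $\HQTC[D]$ and $\HQTC_f[D]$.
\item The correspondences used for \Cref{eq:forrelation-wide-equality} convert these last three classes back into $\mathsf{FH}_{\mathcal A_f}[D]$, $\mathsf{FH}_{\mathcal T}[D]$ and $\mathsf{FH}_{\mathcal T_f}[D]$.
\end{enumerate}

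I expect the main obstacle to be making sure the depth hypotheses line up. The corollary states \Cref{eq:forrelation-bounded-equality} without $\QNC[D]$, so the only bounded-arity inputs needed are \Cref{cor:real-qnc-hierarchies} and \Cref{cor:full-real-fanout-hierarchy}. Both require $D=\Omega(\log\log n)$ because of the one-time catalyst-preparation cost, and that condition is assumed in this part. I will also check that the structured Fourier class uses the same convention as the real classes: polynomial width and description length, a fixed designated output, and bounded error. This holds because \Cref{cor:real-structured-forrelation-classes} retains the input and output qubit and converts depth $O(D)$ into $O(D)$ phase layers in both directions, so the equalities compose.
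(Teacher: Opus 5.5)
Your proposal is correct and is the same argument as the paper's. You combine the exact real--structured Fourier correspondence with the four hierarchy collapses: the $\QAC$ and wide Fan-Out/Threshold collapses hold for every $D(n)\geq1$, and the bounded-arity ones hold when $D(n)=\Omega(\log\log n)$. Your pair-by-pair bookkeeping spells out what the paper's one-line proof leaves implicit.
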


\begin{proof}
Combine the exact correspondence in
\Cref{cor:real-structured-forrelation-classes} with the hierarchy
equalities in
\Cref{cor:qac-uqac-hqac-collapse,cor:real-wide-hierarchies,cor:real-qnc-hierarchies,cor:full-real-fanout-hierarchy}.
The first two apply for every $D(n)\geq1$, and the latter two apply
when $D(n)=\Omega(\log\log n)$, giving the respective claims.
\end{proof}

Without Fan-Out, \Cref{prop:real-qnc-zero-decision} separately gives
$\QNC^0=\mathsf{FH}_{\mathcal A_3}[1]$ under single-output
readout. This uses the constant-size backward light cone and does
not simulate the original unitary. We do not establish
$\QNC[D]=\mathsf{FH}_{\mathcal A_3}[D]$ at arbitrary positive
depths, or $\QNC_f^0=\mathsf{FH}_{\mathcal A_{3,f}}[1]$.
The latter remains the constant-depth gate-set question discussed in
\Cref{sec:real-hierarchy-consequences}.

\subsection{From Structured Fourier Circuits to Forrelation}
\label{sec:forrelation-full-hadamards}
\label{sec:forrelation-decision-bias}

The previously described structured Fourier hierarchy characterizes shallow-depth
decision computation using the usual computational-basis input and
single-output measurement convention. We now connect this
characterization to the scalar Forrelation quantity of Aaronson and
Ambainis~\cite{aaronson2015forrelation}. Specifically, for each
classical input $x$, we construct a Forrelation instance whose scalar
value encodes the acceptance bias of the corresponding structured
Fourier circuit and, therefore, shallow-depth circuit.

For $k\geq1$ Boolean phase functions $h_1,\ldots,h_k$, define\begin{align}
    W(h_1,\ldots,h_k)
    &:=
    H^{\otimes m}D_{h_k}H^{\otimes m}\cdots
    D_{h_1}H^{\otimes m},
    \label{eq:structured-forrelation-circuit-recall}
\end{align}
with associated $k$-fold Forrelation quantity\begin{align}
    \Phi_m(h_1,\ldots,h_k)
    &:=
    \bra{0^m}W(h_1,\ldots,h_k)\ket{0^m}=
    \frac{1}{2^{(k+1)m/2}}
    \sum_{z_1,\ldots,z_k\in\{0,1\}^m}
    (-1)^{
        \sum_{j=1}^k h_j(z_j)
        +
        \sum_{j=1}^{k-1}z_j\cdot z_{j+1}
    }.
    \label{eq:structured-forrelation-scalar-recall}
\end{align}
Here $k$ is the number of phase functions and the dot products are
taken modulo $2$. The only additional ingredient is the Aaronson--Ambainis
partial-to-full Hadamard conversion, which lets us replace the
partial Hadamard layers of a structured Fourier circuit by full
transforms without changing the allowed phase family. After this
conversion, we absorb the classical input into the boundary phases
and obtain a scalar Forrelation value equal to the original
single-output acceptance bias.

\begin{proposition}[Structured Fourier Acceptance as Scalar Forrelation]
\label{prop:forrelation-acceptance-bias}
Let $U$ be a structured Fourier circuit on $q\geq n\geq1$ qubits with
$d$ phase layers in one of the six families $\mathcal F$ from
\Cref{tab:real-structured-fourier-correspondence}. Let $p_U(x)$
be its designated-output acceptance probability on input
$\ket{x}\ket{0^{q-n}}$. There are $q\leq m\leq q+1$ and,
for each $x\in\{0,1\}^n$, a list $\boldsymbol g_x$ of
$O(d+1)$ phase functions in $\mathcal F$ such that
\begin{align}
    \Phi_m(\boldsymbol g_x)&=1-2p_U(x).
    \label{eq:forrelation-acceptance-bias}
\end{align}
The lists have size polynomial in the size and width of $U$.
Only their first and last phase functions depend on $x$, through
addition of the linear function $\bigoplus_{i=1}^n x_i z_i$.
\end{proposition}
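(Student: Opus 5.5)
The plan is to write the acceptance bias as an expectation of $Z_{\mathrm{out}}$, unfold it into a forward–Z–backward circuit, normalize every Hadamard layer to a full $H^{\otimes m}$, and finally absorb the input into boundary phases. Because $p_U(x)=\|\Pi_{\mathrm{out}}U\ket{u_x}\|^2$ and $\Pi_{\mathrm{out}}=(I-Z_{\mathrm{out}})/2$, we have
\[
1-2p_U(x)=\bra{u_x}U^\dagger Z_{\mathrm{out}}U\ket{u_x}.
\]
The operator $U^\dagger Z_{\mathrm{out}} U$ is itself a structured Fourier circuit. It consists of the layers of $U$, then the affine phase $D_{z_{\mathrm{out}}}$ (which lies in every family $\mathcal F$), then the layers of $U$ in reverse order. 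This uses $U^\dagger=H_{S_0}D_{h_1}\cdots D_{h_\ell}H_{S_\ell}$, since each $D_h$ and each $H_S$ is self-inverse. The result has $2\ell+1$ phase layers, and no new block functions are introduced.

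Next, I would convert every partial Hadamard layer $H_S$ into full $H^{\otimes m}$ layers using the Aaronson--Ambainis conversion. The idea is to write $H_S$ as a short product of full Hadamard transforms interleaved with diagonal phases whose only nonlinear parts are CZ-type or affine. This is where the single extra qubit ($m\le q+1$) enters. Concretely, I would try the identity $H_S=H^{\otimes m}\,(\text{phases})\,H^{\otimes m}\cdots$ restricted to the complement $\bar S$. One candidate is the relation $HZH=X$ together with realizing the identity on $\bar S$ as $H\cdot H$, so that each partial layer becomes $H^{\otimes m}D_0H^{\otimes m}$-type padding. That padding only works if the qubits in $\bar S$ can be tracked with an ancilla that swaps the roles of computational and Hadamard frames.

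\textbf{This step is the main obstacle.} I must check that the extra diagonal phases lie in $\mathcal F$ for all six families, that is, that they are affine or use only blocks already allowed (CZ is a width-2 AND, so it belongs to $\mathcal A_3$). I must also check that merging adjacent phase layers never creates overlapping blocks; where it would, I will insert identity layers to keep them separate. Either way the layer count grows by only a constant factor, so the total stays at $O(\ell+1)$ phase functions.

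Finally, I would move the input into the boundary phases. Since $\ket{u_x}=\ket{x}\ket{0^{m-n}}=H^{\otimes m}D_{\lambda_x}H^{\otimes m}\ket{0^m}$ with $\lambda_x(z)=\bigoplus_i x_iz_i$, the leading $H^{\otimes m}$ is the first Hadamard of the Forrelation form. The phase $D_{\lambda_x}$ merges with the first phase layer by XOR: adding an affine term keeps the function inside $\mathcal F$, and the block structure is unchanged.

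Symmetrically, $\bra{u_x}$ contributes $D_{\lambda_x}$ to the last phase layer. Adjacent pairs $H^{\otimes m}H^{\otimes m}$ cancel, or are kept via the identity phase $D_0$ so that the alternating form is preserved. This yields $\Phi_m(\boldsymbol g_x)=1-2p_U(x)$ exactly. Only the first and last functions depend on $x$, and only through $\lambda_x$. The size is polynomial because each step adds at most polynomially many phase terms.
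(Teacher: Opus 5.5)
Your outer structure matches the paper's proof. The echo identity $1-2p_U(x)=\bra{u_x}U^\dagger Z_{\mathrm{out}}U\ket{u_x}$ is correct. So is absorbing the input through $\ket{u_x}=H^{\otimes m}D_{\lambda_x}H^{\otimes m}\ket{0^m}$, which is equivalent to the paper's $H^{\otimes m}X^{u_x}=D_{u_x\cdot z}H^{\otimes m}$.

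The gap is the partial-to-full Hadamard conversion. You label it the main obstacle and leave it unresolved, and the mechanism you sketch cannot work:
\begin{itemize}
\item With only full transforms and diagonal phases, two Hadamard layers never suffice. If $H^{\otimes m}DH^{\otimes m}=PH_S$ for some wire permutation $P$, then $D=PH_S$, which is not diagonal unless $S=\emptyset$.
\item ``Realizing the identity on $\bar S$ as $H\cdot H$'' squares the Hadamard on $S$ as well.
\item The extra qubit does not act as a frame-tracking ancilla.
\end{itemize}
Even if you simply cite Aaronson--Ambainis, as the paper does, you still need the gadget, the parity padding, and the relabeling bookkeeping below.

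\emph{The gadget.} One gadget that works is a three-layer SWAP: one checks that $(\mathrm{CZ}\,(H\otimes H))^3=\mathrm{SWAP}$ exactly. Take three full transforms $H^{\otimes m}$ interleaved with CZ gates on disjoint pairs of $\bar S$. Each qubit of $S$ receives $H^3=H$, and each pair in $\bar S$ receives a SWAP. This is $H_S$ up to a relabeling of wires.

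\emph{The parity padding.} The gadget needs $|\bar S|$ even, and ensuring this is the only role of the padding qubit. Take $m$ odd, so $q\le m\le q+1$. If $|\bar S|$ is odd, then $|S|$ is even. In that case run the gadget on pairs of $S$ to obtain $H_{\bar S}$, then append one more $H^{\otimes m}$.

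\emph{Membership in $\mathcal F$.} The added phases are disjoint CZs. These lie in all six families: each is a width-$2$ AND, a one-target Fan-Out block, or a Threshold block with $u=1$. Overlapping neighbouring phase layers are separated by $H^{\otimes m}D_0H^{\otimes m}$, as you propose.

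\emph{Relabeling.} The induced swaps must be tracked. $Z_{\mathrm{out}}$ must sit on the relabeled output wire. Because you echo first and convert afterwards, the backward half must be converted as the exact reverse of the forward half, so that the accumulated permutation cancels. Otherwise the last boundary term becomes $\bigoplus_i x_i z_{\pi(i)}$ rather than $\bigoplus_i x_i z_i$. The paper avoids this bookkeeping by converting $U$ to $W=W(h_1,\ldots,h_r)$ first and then echoing, using $W(h_1,\ldots,h_r)^\dagger=W(h_r,\ldots,h_1)$.
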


\begin{proof}
First apply the partial-to-full Hadamard conversion of Aaronson and
Ambainis~\cite[Section~6, proof of Theorem~25]{aaronson2015forrelation}
to $U$. After padding to odd width, their two-qubit gadget can be
applied in parallel to pairs in each even-sized Hadamard subset.
For an odd-sized subset, use its complement and one additional full
transform. The added phase layers consist of disjoint CZ gates and
belong to every family under consideration. Relabel the wires to
account for the swaps, and keep adjacent phase layers separate when their blocks overlap by inserting $H^{\otimes m}D_0H^{\otimes m}=I$ between them. This gives a full-Hadamard circuit
$W=W(h_1,\ldots,h_r)$ with $r\geq1$, $r=O(d+1)$, and
$h_j\in\mathcal F$,
whose relabeled output has the same acceptance probability as $U$.

For $u_x=(x,0^{m-n})$, define
$h_1^{(u_x)}(z):=h_1(z)\oplus\bigoplus_{i=1}^n x_i z_i$ and
$X^{u_x}:=\bigotimes_{i=1}^m X_i^{(u_x)_i}$. The identity
$H^{\otimes m}X^{u_x}=D_{u_x\cdot z}H^{\otimes m}$ gives
\begin{align}
    \widetilde W_x
    :=W(h_1^{(u_x)},h_2,\ldots,h_r)&=WX^{u_x}.
    \label{eq:forrelation-input-absorption}
\end{align}
Thus $\widetilde W_x\ket{0^m}=W\ket{u_x}$. Since
$Z_{\textnormal{\textsf{out}}}=I-2\Pi_{\textnormal{\textsf{out}}}$,
\begin{align}
    1-2p_U(x)
    &=\bra{0^m}\widetilde W_x^\dagger
       Z_{\textnormal{\textsf{out}}}\widetilde W_x\ket{0^m}.
    \label{eq:forrelation-bias-expectation}
\end{align}
Writing $h_{\mathrm{out}}(z):=z_{\textnormal{\textsf{out}}}$,
the operator in this matrix element has the phase list
\begin{align}
    \boldsymbol g_x
    &:=\bigl(h_1^{(u_x)},h_2,\ldots,h_r,h_{\mathrm{out}},
             h_r,\ldots,h_2,h_1^{(u_x)}\bigr).
    \label{eq:forrelation-echo-list}
\end{align}
Indeed, reversing $\widetilde W_x$ reverses its phase order, so
$W(\boldsymbol g_x)=\widetilde W_x^\dagger
Z_{\textnormal{\textsf{out}}}\widetilde W_x$.
For $r=1$, the list is
$(h_1^{(u_x)},h_{\mathrm{out}},h_1^{(u_x)})$.
The list contains $2r+1=O(d+1)$ phases. Its new terms are affine,
so every phase remains in $\mathcal F$. Taking the matrix element
between $\ket{0^m}$ states proves the claim.
\end{proof}

For each input $x$, this construction produces a scalar Forrelation
instance by incorporating $x$ into the first phase of a fixed circuit.
That phase appears twice in the final list because the construction
uses both the circuit and its inverse. All nonlinear blocks remain
independent of $x$.

\begin{corollary}[Constant-Gap Scalar Representation]
\label{cor:forrelation-scalar-normal-form}
Let $C$ be a depth-$d$ circuit on $n$ input bits in $\QAC$,
$\QAC_f$, $\QTC$, or $\QTC_f$, with associated phase family
$\mathcal A$, $\mathcal A_f$, $\mathcal T$, or $\mathcal T_f$,
respectively. Let $s\geq2$ bound its size and width. For every
$\varepsilon\in(0,1/2)$, each input $x$ determines a list
$\boldsymbol g_x$ of $O(d+1)$ phase functions in that family
such that
\begin{align}
    \left|\Phi_m(\boldsymbol g_x)-(1-2p_C(x))\right|
    &\leq2\varepsilon.
    \label{eq:approximate-forrelation-bias}
\end{align}
The width and phase-list size are
$\operatorname{poly}(s,1/\varepsilon)$. All lists are obtained
from one fixed list by adding $\bigoplus_{i=1}^n x_i z_i$ to
its first and last phases.
\end{corollary}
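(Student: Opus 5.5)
The plan is to compose two earlier results: \Cref{thm:standard-forrelation-normal-forms} gives an approximate structured Fourier circuit, and \Cref{prop:forrelation-acceptance-bias} turns that circuit into an exact scalar Forrelation instance. First, I apply \Cref{thm:standard-forrelation-normal-forms} to $C$ with accuracy parameter $\varepsilon$. This yields a structured Fourier circuit $U$ over the family $\mathcal A$, $\mathcal A_f$, $\mathcal T$, or $\mathcal T_f$ matching the model of $C$, with the following properties:
\begin{itemize}
\item it has $O(d+1)$ phase layers;
\item its width and description length are $\operatorname{poly}(s,1/\varepsilon)$;
\item its designated output is independent of $x$;
\item it satisfies $|p_U(x)-p_C(x)|\leq\varepsilon$ for every $x\in\{0,1\}^n$.
\end{itemize}
The independence of $U$ from $x$ matters, because the later step requires one fixed circuit whose only input dependence enters through the computational-basis input.

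Next, I apply \Cref{prop:forrelation-acceptance-bias} to $U$. For each $x$, it gives a width $m\in\{q,q+1\}$ and a list $\boldsymbol g_x$ of $O(d+1)$ phase functions in the same family with $\Phi_m(\boldsymbol g_x)=1-2p_U(x)$ exactly. The list size is polynomial in the size and width of $U$, hence $\operatorname{poly}(s,1/\varepsilon)$. The proof of that proposition builds $\boldsymbol g_x$ from one fixed echo list $(h_1,\ldots,h_r,h_{\mathrm{out}},h_r,\ldots,h_1)$. The only change is adding $\bigoplus_{i=1}^n x_iz_i$ to the copies of $h_1$ in the first and last positions. This gives the structural claim about how the lists depend on $x$.

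Finally, the error bound follows from the triangle inequality:
\[
\bigl|\Phi_m(\boldsymbol g_x)-(1-2p_C(x))\bigr| = 2\,|p_U(x)-p_C(x)| \leq 2\varepsilon .
\]

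The step needing the most care is checking that the phase-layer count stays $O(d+1)$ through both conversions. \Cref{thm:standard-forrelation-normal-forms} gives $O(d+1)$ phase layers. The partial-to-full Hadamard conversion inside \Cref{prop:forrelation-acceptance-bias} adds only a constant factor, and the echo construction then doubles the count and adds one. A second point to confirm is that the affine boundary modifications, and the CZ layers introduced by the Hadamard conversion, stay inside each of the four families. The proposition already states both facts, so I expect no real obstacle.
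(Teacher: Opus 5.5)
Your proposal is correct and follows the paper's proof: take the fixed circuit $U$ from \Cref{thm:standard-forrelation-normal-forms}, apply \Cref{prop:forrelation-acceptance-bias} to obtain $\Phi_m(\boldsymbol g_x)=1-2p_U(x)$ with $x$ entering only through the first and last phases, and conclude $\bigl|\Phi_m(\boldsymbol g_x)-(1-2p_C(x))\bigr|=2|p_U(x)-p_C(x)|\leq2\varepsilon$. One minor point: that last step is a direct identity followed by the bound from the theorem, so no triangle inequality is needed.
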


\begin{proof}
By \Cref{thm:standard-forrelation-normal-forms}, there is a fixed
structured Fourier circuit $U$ with
$|p_U(x)-p_C(x)|\leq\varepsilon$ for every $x$. Apply
\Cref{prop:forrelation-acceptance-bias} to $U$. Then
$\Phi_m(\boldsymbol g_x)=1-2p_U(x)$, so the error is
$2|p_U(x)-p_C(x)|\leq2\varepsilon$.
\end{proof}

For a bounded-error computation, $\varepsilon\leq1/12$ gives
$\Phi_m(\boldsymbol g_x)\leq-1/6$ on yes-inputs and
$\Phi_m(\boldsymbol g_x)\geq1/6$ on no-inputs. In particular,
a structured Fourier circuit $U$ computes Parity with error at most
$1/3$ exactly when its lists from
\Cref{prop:forrelation-acceptance-bias} satisfy, for every input $x \in \{0,1\}^n$,
\begin{align}
    (-1)^{\operatorname{Parity}(x)}\Phi_m(\boldsymbol g_x)
    &\geq\frac13.
    \label{eq:forrelation-parity-sign-condition}
\end{align}
Together with \Cref{eq:forrelation-qac-equality}, this expresses
the $\QAC^0$ Parity question as a constant-gap sign condition on
circuit-generated Forrelation quantities with disjoint AND blocks.
The circuit and its nonlinear blocks are fixed at each input length,
while the input enters only through the two boundary phase layers.

\bibliographystyle{alphaurl}
\bibliography{references}

\end{document}